\DeclarePairedDelimiter{\ceil}{\lceil}{\rceil}
\DeclarePairedDelimiter{\floor}{\lfloor}{\rfloor}
\newtheorem{Corollary}{Corollary}
\newtheorem{Definition}{Definition}
\date{}
\begin{document}
	%
	% paper title
	
	\title{Network Calculus-based Timing Analysis of AFDX networks incorporating multiple TSN/BLS traffic classes}
	
	\author{Ana\"\i s FINZI, Ahlem MIFDAOUI, Fabrice FRANCES, Emmanuel LOCHIN\\
	\email{firstname.lastname@isae-supaero.fr}	\\}
	
	\institute {University of Toulouse,ISAE-SUPAERO, France}
	% make the title area
	%\maketitle

%\renewcommand{\shortauthors}{A. FINZI et al.}
\authorrunning{A. FINZI et al.}
\titlerunning{Timing analysis of AFDX networks with multiple TSN/BLS}

\setcounter{tocdepth}{3}
\maketitle

\begin{abstract}
	We propose a formal timing analysis of an extension of the AFDX standard, incorporating the TSN/BLS shaper, to homogenize the avionics communication architecture, and enable the interconnection of different avionics domains with mixed-criticality levels, e.g., current AFDX traffic, Flight Control and In-Flight Entertainment. Existing Network Calculus models are limited to three classes, but applications with heterogeneous traffic require additional classes. Hence, we propose to generalize an existing Network Calculus model to do a worst-case timing analysis of an architecture  with multiple BLS on  multi-hop networks, to infer real-time bounds. Then, we conduct the performance analysis of such a proposal. First  we evaluate the model on a simple 3-classes single-hop network to assess the sensitivity and tightness of the model, and compare it to existing models (CPA and Network Calculus). Secondly, we study a realistic AFDX configuration with six classes and two BLS. Finally, we compute a real use-case to add A350 flight control traffic to the AFDX. Results show the good properties of the generalized Network Calculus model compared to the CPA model and the efficiency of the extended AFDX to noticeably enhance the medium priority level delay bounds, while respecting the higher priority level constraints, in comparison with the current AFDX standard. 
\end{abstract}

\newpage
\tableofcontents
\newpage
\begin{table}[h!]

	\caption{Notations}
	\begin{center}
		\begin{tabularx}{\linewidth} {l X}
			\hline
			$C$ & Link speed\\
			$MFS_i$ & Maximum Frame Size of flow or class $i$\\
			$J_f,Dl_f$,$BAG_f$ & Jitter, deadline and BAG of flow $f$\\				
			$L_M^k, L_R^k$ & BLS maximum and resume credit levels of class $k$\\	
			$L_M, L_R$ & BLS maximum and resume credit levels when considering a single shaped class\\
			$BW^k$ & BLS reserved bandwidth of class $k$\\
			$BW$ & BLS reserved bandwidth when considering a single shaped class\\
			$I_{idle}^k,I_{send}^k$ & BLS idle and sending slopes of class $k$, defined in Eq. (\ref{iidle}) and Eq. (\ref{isend})\\
			$I_{idle},I_{send}$ & BLS idle and sending slopes  when considering a single shaped class\\
			$p(k)$ & Priority level of a class $k$ \\ 
			$p_H(k)$ & BLS high priority of class $k$\\
			$p_L(k)$ & BLS low priority of class $k$\\
			$HC(k)$ & Set of flows or classes with a priority strictly higher than $p_H(k)$, i.e., $\forall j$ such as: $p_H(k)>p(j)$\\
			$LC(k)$ & Set of flows or classes with a priority strictly lower than $p_L(k)$, i.e., $\forall j$ such as: $p_L(k)<p(j)$\\
			$MC(k)$ & Set of flows or classes with a priority strictly between $p_L(k)$ and $p_H(k)$\\
			$UR_{k}$& Utilisation rate of a class $k$ at the input of a output port\\
			$UR^{bn}_{k}$& The bottleneck network utilisation rate of a class $k$ \\
			$n_{k}^{es}$ & Number of flows of class $k$ generated per node $es$\\			
			$\gamma_{k}^{n}$ & Maximum service curve guaranteed for the traffic class $k$ within node $n$\\
			$\gamma_k^{bls,fluid}$ & Maximum service curve of class $k$ in the BLS node when considering fluid traffics\\
			$\beta_{k,f}^{n}$ & Strict minimum service curve guaranteed to a flow $f$ of class $k$  in a node $n  \in \{es, mux\}$\\
			$\beta_{k}^{n}$ & Strict minimum service curve guaranteed for the traffic class $k$ in a node $n  \in \{es, mux \}$ (end-system or output multiplexer) or component $n \in \{bls, sp \}$\\			
			$\beta_{k\in BLS, p}^{sp}$ & Strict minimum service curve guaranteed to BLS class $k$ when having the priority level $p$ in a $sp$ component \\
			$\beta_k^{bls,fluid}$ & Strict minimum service curve of class $k$ in the BLS node when considering fluid traffics\\		
			%$mFTI$ & Minimum fluid traffic instance\\
			$\alpha_{k,f}^{n}$ & Input arrival curve of the flow $f$ of class $k$ in the node $n  \in \{es, mux \}$ or component $n  \in \{bls, sp \}$ in its path \\
			$\alpha_k^n$ & Input arrival curve of the aggregated flows of class $k$ in a node $n  \in \{es, mux\}$ or component $n  \in \{bls, sp \}$ \\
			%$r_k,b_k$ & rate and burst of $\alpha_k$ such as $\alpha_k=r_k\cdot t+b_k$\\
			$\alpha_{k,f}^{*,n}$ & Output arrival curve of the flow f of class $k$ from the node $n  \in \{es, mux\}$ or component $n  \in \{bls, sp \}$ in its path\\
			$\alpha_k^{*,n}$ & Output arrival curve of the aggregated flows of class $k$ from a node $n  \in \{es, mux \}$\ or a component $n  \in \{bls, sp \}$\\
			$Deadline_{k,f}^{end2end}$ & End-to-end  deadline of flow $f$ of class $k$\\
			$delay_{k,f}^{end2end}$ & End-to-end delay of flow $f$ of class $k$\\
			$delay_{k,f}^{n}$ & Delay of flow $f$ of class $k$ in a node $n\in \{es,sw,mux\}$\\
			$delay_{k,f}^{prop}$ & Propagation delay of flow $f$ of class $k$\\
			\hline												
		\end{tabularx}
	\end{center}
	\label{notations1}
\end{table}

 \newpage
\section{Introduction}
\label{intro}
%context et motivations : what is the problem ; why is it interesting or difficult to solve
The growing number of interconnected end-systems and the expansion of exchanged data in avionics have led to an increase in complexity of the communication architecture. To cope with this trend, a first communication solution based on a high rate backbone network, i.e., the AFDX (Avionics Full Duplex Switched Ethernet) \cite{ARINC664}, has been implemented by Airbus in the A380, to interconnect critical subsystems. Moreover, some low rate data buses, e.g., CAN \cite{CAN}, are still used to handle some specific avionics domains, such as the I/O process and the Flight Control Management. Although this architecture reduces the time to market, it conjointly leads to inherent heterogeneity and new challenges to guarantee the real-time requirements.
	
To cope with these emerging issues, with the maturity and reliability progress of the AFDX after a decade of successful use, a homogeneous avionic communication architecture based on such a technology to interconnect different avionics domains may bring significant advantages, such as quick installation and maintenance and reduced weight and costs. 

Furthermore, this new communication architecture needs to support, in addition to the current AFDX traffic profile, called Rate Constrained (RC) traffic, at least two extra profiles. The first, denoted by Safety-Critical Traffic (SCT), is specified to support flows with hard real-time constraints and the highest criticality level, e.g., flight control data; whereas the second is for Best-Effort (BE) flows with no delivery constraint and the lowest criticality level, e.g., In-Flight Entertainment traffic. 

Various fair solutions exist to solve this issues, for instance implementing well-known scheduling schemes, e.g., Deficit-Round-Robin (DRR) \cite{hua2012scheduling}, Weighted-Round-Robin (WRR) \cite{tianran2012design}. However, these solutions are notably hard to tune.

More recently, the Audio Video Bridging (AVB) Task Group proposed a new shaper: the Credit-Based Shaper (CBS) \cite{diemer2012formal}. However, it has a severe limitation: it is a blocking shaper, which may cause undue delays. To fix this issue, the Time Sensitive Networking (TSN) task group has proposed the Burst Limiting Shaper (BLS). %However, the term of shaper may not quite fit as BLS is necessary implemented alongside a Static Priority (SP) Scheduler and their association results in a priority switching scheduler. 

As these are promising solutions, they have gathered interest from the automotive \cite{thiele2016formal}, avionics \cite{schneele2012comparison}  and internet communities \cite{globefinzi17}. In this last work, simulations showed the interest of the BLS as it offers a new, more predictable service. However, for time sensitive traffic in automotive, avionics or satellite applications, a formal analysis to compute worst-case latencies is needed. Additionally, for an avionics use, worst-case bounds are required to obtain the certification. A first work in \cite{erts2finzi17} proposed a detailed analysis of these mixed-criticality solutions.

Additionally, Urgency Based Scheduler (UBS) \cite{specht2016urgency} is a novel fair scheduler with good modularity and predictability, that had not been considered in \cite{erts2finzi17}. In \cite{gavrilut2017fault}\cite{specht2016urgency}, they conclude that the implementation complexity is low, in part because they assume the queue selection process is already implemented in the switches thanks to the standardization of 802.1Qci-Per-Stream Filtering and Policing. But while implementing it in higher layer is simple, implementing at the hardware level for avionics is much {more complex}.

Finally, in \cite{erts2finzi17}, we concluded that in the avionics case, the BLS is the most promising solution offering high modularity, fairness, predictability and low complexity.

There are several existing works of BLS formal analysis, the most prominent one being a CPA analysis for automotive applications \cite{thiele2016formal}. However while very complete, this modelisation is complex to implement and requires computation power. Moreover, as we will show, the model can be optimistic. Additionally,  a formal analysis based on the Network Calculus is more scalable compared to one based on CPA \cite{Perathoner08}, and  Network Calculus has already been used to certified the AFDX \cite{grieu2004analyse}. Thus, we have proposed two Network Calculus models in \cite{Finzi-sies-18}\cite{Finzi-wfcs-18}. Both these models are limited to three classes. Hence, in this paper we propose an generalization of the tightest of these two Network Calculus models to multiple classes, multiple BLS and multi-hop networks.

%main contributions
Therefore, our main contributions in this paper are two-fold: (i) \textbf{first}  an appropriate system modeling and timing analysis, based on the Network Calculus framework, generalizing the Continuous Credit-based Approach to multiple BLS in multi-hop networks; (ii) \textbf{second} the evaluation of the proposed model, including a use-case with two BLS and a used-case with the concrete challenge of adding the A350 flight control traffic on the AFDX.

%sections
First in Section~\ref{RelatedWork} we present the background and related work on the BLS. Then, in Section~\ref{NCframework} we present the Network Calculus framework. Next, in Sections~\ref{systModel} and~\ref{WFA} we define the system model and the associated worst-case timing analysis. Finally, in Section~\ref{PA} we present a performance analysis of the Extended AFDX.

\section{Background and Related Work}
	\label{RelatedWork}
First in this section, we detail the Burst Limiting Shaper (BLS). Then, we present the existing worst-case timing analysis of this solution.

\subsection{Burst Limiting Shaper}
\label{basicconcepts}
The BLS belongs to the credit-based shaper class. Each shaped queue is associated to a class $k$ and has been defined in \cite{Gotz2012} by an upper threshold $L_M^k$, a lower threshold $L_R^k$, such as $0\leqslant L_R^k < L_M^k$, and a reserved bandwidth $BW^k$. Additionally, the priority of a class $k$ shaped by BLS, denoted $p(k)$, can vary between a high and a low value , denoted $p_H(k)$ and $p_L(k)$ (with priority 0 the highest priority, and $p_L(k)>p_H(k)$). The low value is usually below the lowest priority of the unshaped traffic. In the avionic context, to guarantee the safety isolation level between the different traffic profiles, the low value associated to the SCT is set to be lower than the RC priority level, but higher than the BE priority. Therefore, when considering one class for each traffic type, i.e. three classes, SCT queue priority oscillates between 0 (the highest) and 2 (see Fig. \ref{fig:BLSshaper}), RC priority is 1 (see Fig. \ref{fig:BLSshaper}) and BE has the priority 3 (the lowest, see Fig. \ref{fig:BLSshaper}). Thus, when SCT traffic is enqueued, BE traffic can never be sent no matter the state of BLS. In this case, RC is the only traffic that can be sent and this only happens when the SCT priority is 2. As a consequence, BE traffic is isolated from SCT and RC traffics.

%we consider that during the sending windows the SCT has the highest priority 0, and the priority below RC (2) during the idle windows. In addition, RC has the medium priority (1), whereas BE admits the lowest priority (3).
%we do not believe it is acceptable for Safety Critical traffic to have a priority lower the BE traffic. So in the avionic case, we propose to set the lower priority below the essential traffic RC, and above BE. 
\
The credit counter varies as follows:
\begin{itemize} \label{fctBLS}
	\item	 initially, the credit counter starts at 0 and the priority of the queue of the burst limited flows is high (\#0);
	\item the main feature of the BLS is the change of priority $p(k)$ of the  queue of the shaped class, which occurs in two contexts: 1) if $p(k)$ is high and  credit reaches $L_M^k$;  2) if $p(k)$ is low and credit reaches $L_R^k$; \\
	\item when a frame is transmitted, the credit increases (is consumed) with a rate of $I_{send}^k$, else the credit decreases (is gained) with a rate of $I_{idle}^k$;
	\item when the credit reaches $L_M^k$, it stays at this level until the end of the transmission of the current frame;	
	\item when the credit reaches $0$, it stays at this level until the end of the transmission of the current frame (if any). The credit remains at 0 until a new BLS frame is transmitted.
\end{itemize}

\begin{figure}[h]
	\centering
	\includegraphics[width=0.6\columnwidth]{archi_new}
	%\resizebox{0.6\columnwidth}{!}{\input{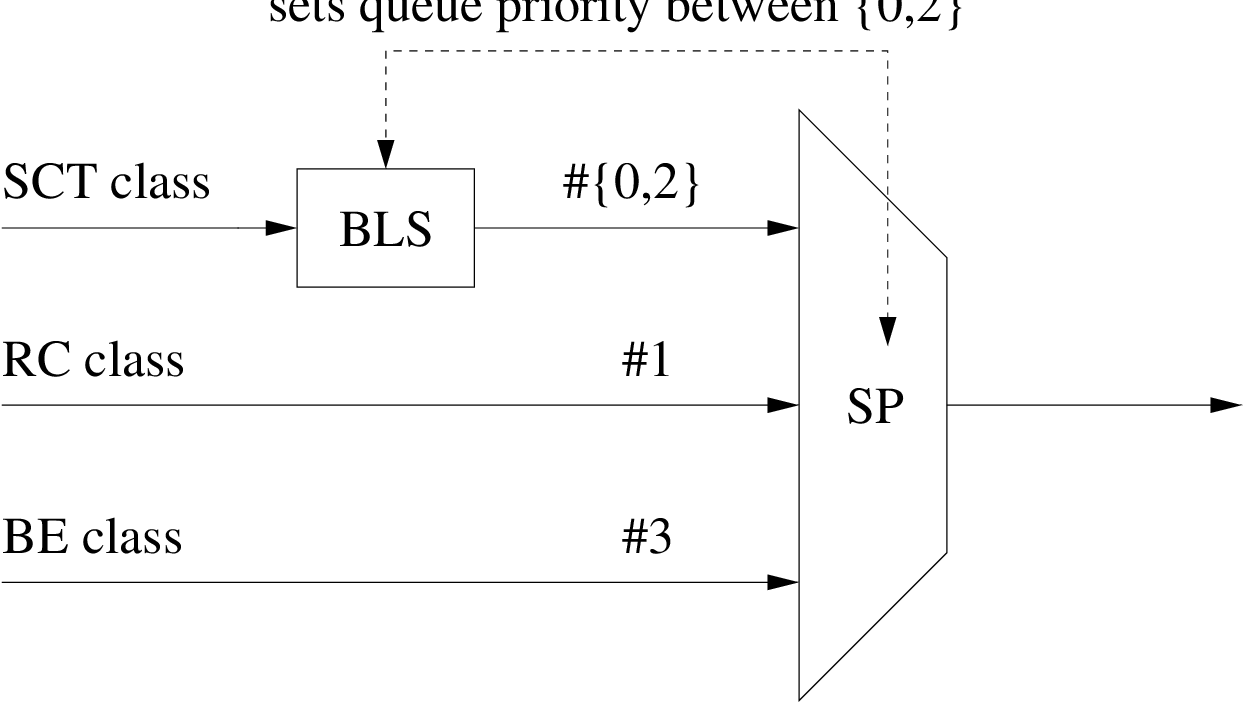}}
	\caption{Burst Limiting Shaper on top of NP-SP at the output port with 3 classes }
	\label{fig:BLSshaper}
\end{figure}
%==================================================================

%=======================================	

The behavior of the BLS is illustrated in Fig. \ref{fig:BLScredit}. As shown, the credit is always between 0 and $L_M^k$. The credit rates of the BLS shaper are defined as follows:\\
\begin{itemize}
	\item the decreasing rate is: 
	\begin{equation}\label{iidle}
	I_{idle}^k = BW^k\cdot C
	\end{equation}
	where $C$ is the link speed and $BW^k$ is the percentage of bandwidth reserved for BLS frames.\\
	\item the increasing rate is: 
	\begin{equation}\label{isend}
	I_{send}^k = C - I_{idle}^k
	\end{equation}
	
\end{itemize}

\begin{figure}[h]
	\centering
		\includegraphics[width=0.7\columnwidth]{blsschematics}
	%\resizebox{0.70\columnwidth}{!}{\input{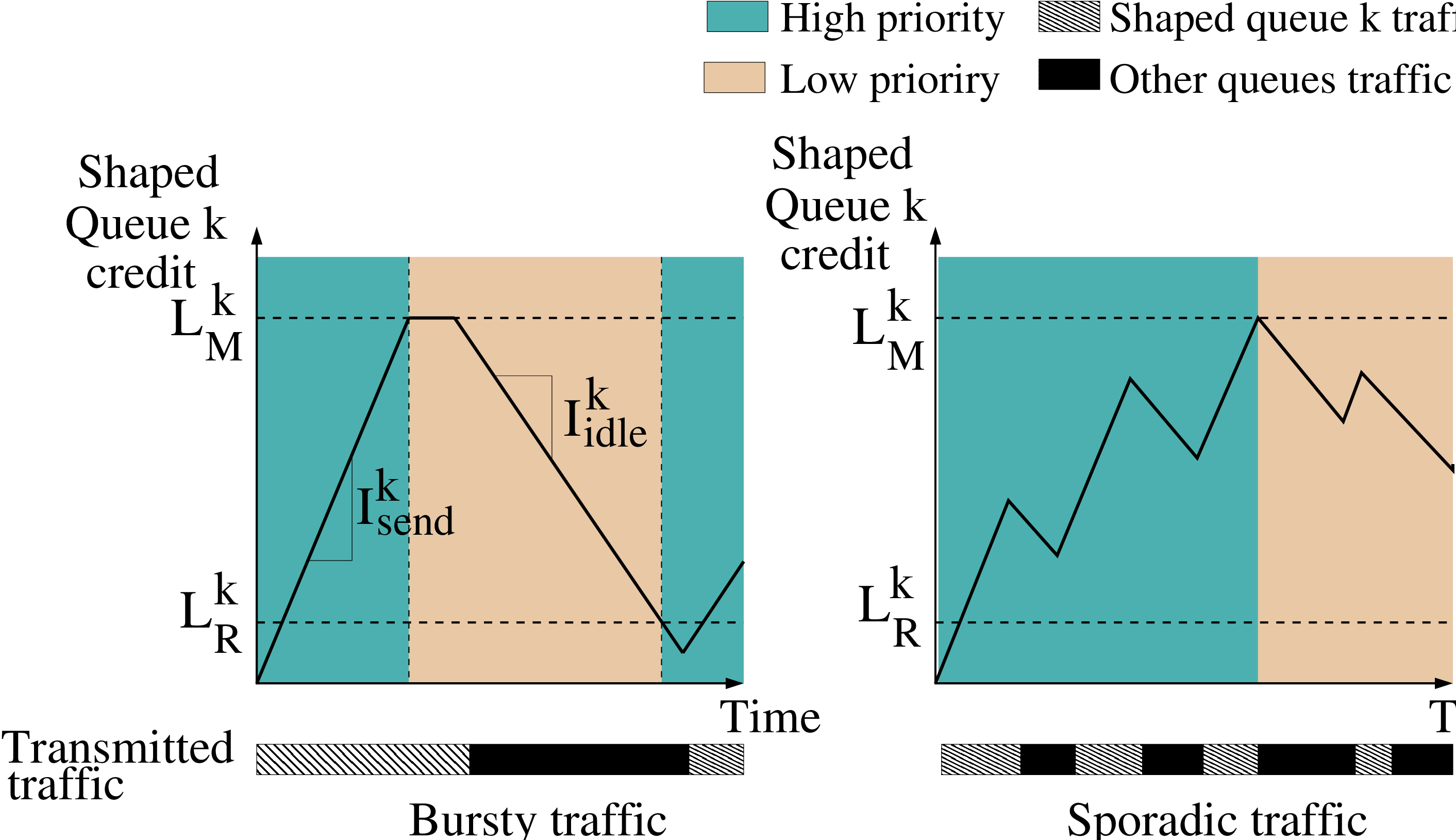}}
	\footnotesize \caption{BLS credit evolution}
	\label{fig:BLScredit}
\end{figure}

It is worth noting that with the BLS, both the priority of the queue of the shaped class and the state of all the queues, i.e., empty or not, define whether the credit is gained or lost. This aspect is depicted in Fig. \ref{fig:BLScredit} for two arrival scenarios. The first one (left figure) shows the case of a bursty traffic, where the maximum of traffic shaped by the BLS is sent when its priority is the highest. Consequently, the other priorities send as much traffic as possible when the priority of the  BLS class has the low value. The second one (right figure) is for sporadic traffic, where we can see that when the shaped-class priority is highest but no frame is available, then the credit is regained. However, when the priority is at the low value and the other queues are empty, then shaped-class frames can be transmitted and the credit is consumed.

\subsection{Existing Worst-case Timing Analyses of TSN/BLS Shaper}
\label{WCTA}
In this section, we present the existing work on the BLS formal analysis. Then, we detail the limitations of the main ones, the Compositional Performance Analysis (CPA)  and Network Calculus (NC) models.
There are some interesting approaches in the literature concerning the worst-case timing analysis of TSN network, and more particularly BLS shaper. The first and seminal one in \cite{Kerschbaum2013} introduces a first service curve model to deduce worst-case delay computation. However, this presentation published by the TSN task group has never been extended in a formal paper. The second one has detailed a more formal worst-case timing analysis in \cite{thangamuthu2015analysis}, which also has some limitations. Basically, the proposed model does not take into account the impact of either the same priority flows or the higher ones, which will clearly induce optimistic worst-case delays. Then, a formal analysis of TSN/BLS shaper, based on a CPA method has been proposed in \cite{thiele2016formal}. This approach has handled the main limitations of the model presented in \cite{thangamuthu2015analysis}; and interesting results for an automotive case study have been detailed. However, this method necessitates extensive computation power to solve two maximization problems, an Integral Linear Programming (ILP) problem and a fixed point problem.  
Additionally, contrary the definition in \cite{Gotz2012}, the CPA model considers the BLS as a blocking shaper, i.e. no BLS frame can be sent if their associated credit has reached $L_M$, until the credit has decreased to $L_R$. We will show that this fact may provide optimistic delay bounds and thus false guarantees for messages that will actually miss their deadline in the worst-case.

%in some situations that will be detailed in the next section,
Finally, two Network Calculus models have been proposed: a Window-based approach (WbA) modelisation in \cite{Finzi-sies-18}, and a Continuous Credit-based Approach (CCbA) modelisation in \cite{Finzi-wfcs-18}. They are both limited to a specific 3-classes architecture. Next, we compare them to identify the tightest one, before generalizing it in Section\ref{WFA}.

%In the next section, we present the CPA model, more particularly its limitations. We explain in details why the CPA model can be optimistic.

\hfill\\
\textbf{CPA model of BLS}
%\label{CPA limitations}

%	\subsection{CPA model limitations}
The CPA model \cite{thiele2016formal} computes the impact of the other flows by dividing them in four categories: the lower-priority blocking, the same-priority blocking, the higher priority blocking, and the BLS shaper blocking. %The issues we have are with this last blocking effect. We found three limitations, two of them cause pessimism, while the last one causes optimism.
%The details of our analysis of the CPA limitation is presented in the next section.
The latter is defined as follows for a flow of class $I$: $$I^{SB}_i(\delta t)=\lceil\frac{\delta t}{t^{S-}_I}\rceil \cdot t^{R+}_{I}$$ with:
\begin{itemize}		
	\item $L_R^I$, $L_M^I$ and $I_{idle}^I$ BLS parameters of class $I$;
	\item $t^{R+}_I=\lceil \frac{L_M^I-L_R^I}{I_{idle}^I}\rceil+\max_{j\in lp(I)}\frac{MFS_{j}}{C}$, with $MFS_{j}$ the Maximum Frame Size of flow $j$, $lp(I)$ the streams with a priority lower than $I$: the maximum blocking time, called the replenishment interval
	\item $t^{S-}_{I}=\max\Big\{\floor[\bigg]{\frac{L_M^I-L_R^I}{I_{send}^I}},\max_{j\in I}\frac{MFS_{j}}{C}\Big\}$ the shortest service interval for class $I$.
\end{itemize} 

We have identified three main limitations in the CPA model, which may lead to over-pessimistic delay bounds, or worse, optimistic delay bounds, when considering the definition of the BLS proposed in \cite{Gotz2012}.
The first limitation causes pessimism and  concerns the maximum replenishment interval $t^{R+}_I$. The additional frame transmission $\max_{j\in lp(I)}\frac{MFS_{j}}{C}$ considers all the priorities lower than $I$. This computation means two implicit assumptions, which are not necessarily fulfilled in the general case. The first implicit hypothesis is to consider that the priority for I is the BLS high priority. The second implicit hypothesis is the fact that the low BLS priority is the lowest one. The delay caused by $\max_{j\in lp(I)}\frac{MFS_{j}}{C}$ is due to the transmission of a frame while the BLS priority is low, just before the credit reaches the resume level. But only classes with a priority higher than the low BLS priority can be transmitted while BLS frames are enqueued, thanks to the Static Priority Scheduler. Thus,  CPA model considers that all the flows are in $lp(I)$ and the BLS low priority is the lowest one. This may not be the case, especially when multiple BLS are considered. As a consequence, the shaper blocking effect may be overestimated, depending on the maximum frame sizes.

The second limitation also causes pessimism and concerns again the replenishment interval $t^{R+}_I$. The definition of $t^{R+}_I$ is completely independent from the lower priority traffic rates and bursts. As a consequence, if the replenishment intervals are too large in comparison to the traffic load, the shaper blocking is again overestimated: when no lower priority traffic is available, the BLS flows can be sent no matter the state of the credit. Contrary to the hypothesis set in \cite{thiele2016formal}, the BLS \cite{Gotz2012} is actually a non-blocking shaper: only the state of the queues and their respective priorities matter.

Finally, the third limitation is due to the blocking shaper hypothesis as a whole, stating that BLS frames cannot be sent after their associated credit has reached $L_M$, until the credit has decreased to $L_R$. We will show now that this can lead to optimistic bounds when considering the definition of the BLS from \cite{Gotz2012}.

%\hfill\\
%\textbf{Impact of busy periods}
%\label{impactbusy}

To assess the CPA model optimism,  we consider herein a 3-classes case study illustrated in Fig. \ref{fig:BLSshaper}, where the SCT class is shaped by a BLS with the priorities switching between 0 and 2, RC has the priority 1 and BE the priority 3.

To compute the worst-case delay, an usual assumption is to consider all the traffics are backlogged. In the case of the BLS however, we  show that this may lead to optimistic bounds.
%\subsection{Output curves}

To compute the worst-case delay, we first detail the case where all classes are backlogged. The resulting credit evolution and the SCT output traffic are visible in Fig. \ref{fig:SCTbeta1} in plain line (1). We can see that this is equivalent to the hypothesis in \cite{thiele2016formal} stating that the BLS blocks the BLS-frames while the credit is decreasing between $L_M$ and $L_R$.  

\begin{figure}[h]
	\centering	
	\includegraphics[width=0.65\columnwidth]{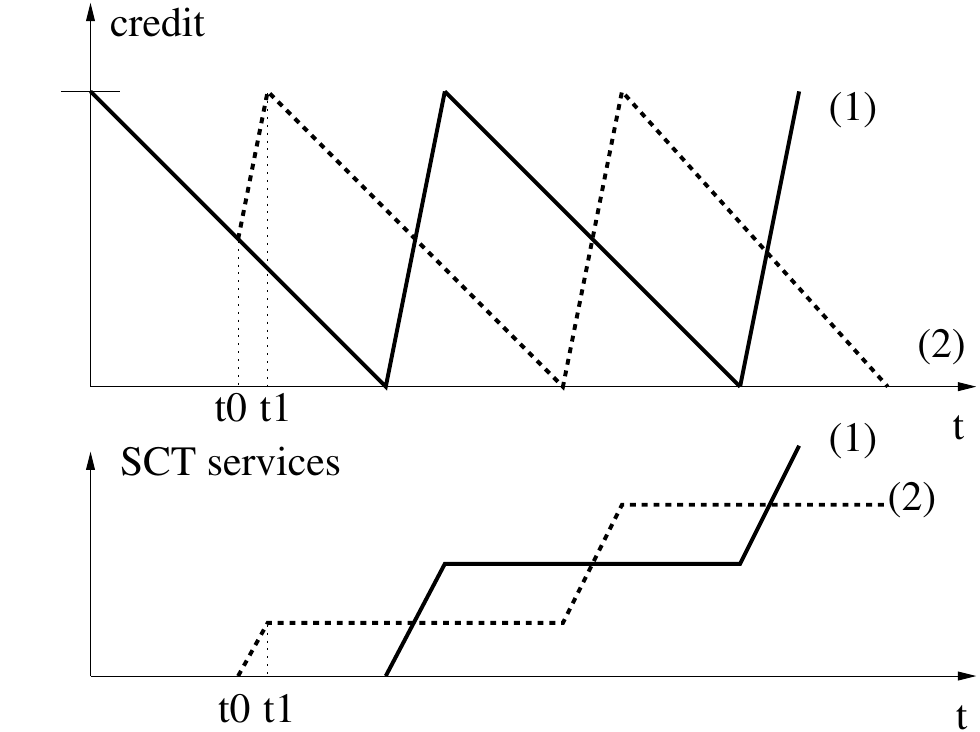}
	%\resizebox{0.65\columnwidth}{!}{\input{}}
	\caption{Two examples of worst-case  BLS behaviour}
	\label{fig:SCTbeta1}
\end{figure}

Then, we consider the case where RC traffic is not backlogged between two times $t0$ and $t1$ (dotted lines (2) in Fig. \ref{fig:SCTbeta1}):  
\begin{itemize}
	\item the credit starts at $L_M$ at $ti$ and decreases until it reaches $\frac{L_M}{2}$ at $t0$;	
	\item then it increases until $t1$ when the credit reaches $L_M$;	
	\item finally it decreases until reaching $L_R$ at $t2$.
\end{itemize}
We see in Fig. \ref{fig:SCTbeta1} that in this particular case, the SCT output corresponding to the dotted line (2) can be below the one corresponding to the plain line (1).  
\textbf{This shows that the most intuitive worst-case SCT output, i.e., all traffic are backlogged, is not actually the worst-case SCT output. As a consequence, the shaper blocking hypothesis of the CPA model causes optimism when considering the BLS as defined in \cite{Gotz2012}.}

From both scenarios presented in Fig. \ref{fig:SCTbeta1}, we have computed in Appendix~\ref{AWCs} two \textbf{\textit{Achievable Worst-Case}} delay bounds for SCT. In Section~\ref{Use-case1}, we show the optimism of the CPA model in reference to these achievable worst-case delays.

\hfill\\
\textbf{Existing 3-classes Network Calculus models}
%\label{CompaNCmodel}

%\subsection{Identification of WbA limitations}
%\label{identifLimiWbA}
The inherent idea of the WbA \cite{Finzi-sies-18} is based on the different possible combinations of idle and sending BLS windows to model the minimum and maximum service curves, i.e. availability of the traversed node. However, when taking a closer look at the credit behavior of the BLS covering the worst-case scenario of the minimum service curve, $\beta_{k}^{bls}(t)$, we found a pessimism inherent to the WbA model. We illustrate this behavior in Fig. \ref{fig:discontinuities2}, where the minimum sending window $\Delta_{send}^{k,min}$ and the maximum idle window $\Delta_{idle}^{k,max}$ introduce a credit discontinuity, which is not a realistic behavior. Moreover, we have also noticed a similar discontinuity when studying the best-case scenario of the maximum service curve.

\begin{figure}[h]
	\centering	
	\resizebox{0.9\columnwidth}{!}{\setlength{\unitlength}{4144sp}%
\begingroup\makeatletter\ifx\SetFigFont\undefined%
\gdef\SetFigFont#1#2#3#4#5{%
  \reset@font\fontsize{#1}{#2pt}%
  \fontfamily{#3}\fontseries{#4}\fontshape{#5}%
  \selectfont}%
\fi\endgroup%
\begin{picture}(9522,4801)(5026,-7550)
\thinlines
{\color[rgb]{0,0,0}\put(7651,-6811){\vector( 1, 0){6885}}
}%
\thicklines
{\color[rgb]{0,0,0}\put(7666,-6796){\line( 1, 2){1778.200}}
\put(9429,-3232){\line( 1,-2){1571}}
}%
\thinlines
{\color[rgb]{0,0,0}\multiput(11020,-5461)(0.00000,-90.00000){10}{\line( 0,-1){ 45.000}}
\put(11020,-6316){\vector( 0,-1){0}}
\put(11020,-5461){\vector( 0, 1){0}}
}%
{\color[rgb]{0,0,0}\multiput(9451,-3211)(0.00000,-9.00000){401}{\makebox(1.5875,11.1125){\small.}}
}%
{\color[rgb]{0,0,0}\multiput(11026,-6811)(0.00000,9.00000){51}{\makebox(1.5875,11.1125){\small.}}
}%
{\color[rgb]{0,0,0}\multiput(9451,-7036)(116.66667,0.00000){14}{\line( 1, 0){ 58.333}}
\put(11026,-7036){\vector( 1, 0){0}}
\put(9451,-7036){\vector(-1, 0){0}}
}%
{\color[rgb]{0,0,0}\put(7426,-6361){\line( 1, 0){450}}
}%
{\color[rgb]{0,0,0}\put(7651,-6811){\vector( 0, 1){4050}}
}%
{\color[rgb]{0,0,0}\put(7426,-3211){\line( 1, 0){450}}
}%
{\color[rgb]{0,0,0}\multiput(11026,-5461)(-9.00000,0.00000){376}{\makebox(1.5875,11.1125){\small.}}
}%
{\color[rgb]{0,0,0}\multiput(11071,-6361)(-9.00000,0.00000){376}{\makebox(1.5875,11.1125){\small.}}
}%
{\color[rgb]{0,0,0}\put(7381,-5461){\line( 1, 0){450}}
}%
\thicklines
{\color[rgb]{0,0,0}\put(11023,-5484){\line( 1, 2){1141.400}}
\put(12128,-3183){\line( 1,-2){1122}}
}%
\thinlines
{\color[rgb]{0,0,0}\multiput(12151,-3211)(0.00000,-9.00000){401}{\makebox(1.5875,11.1125){\small.}}
}%
{\color[rgb]{0,0,0}\multiput(11026,-7036)(118.42105,0.00000){10}{\line( 1, 0){ 59.211}}
\put(12151,-7036){\vector( 1, 0){0}}
\put(11026,-7036){\vector(-1, 0){0}}
}%
\put(14446,-7216){\makebox(0,0)[lb]{\smash{{\SetFigFont{20}{24.0}{\rmdefault}{\mddefault}{\updefault}{\color[rgb]{0,0,0}t}%
}}}}
\put(9901,-7486){\makebox(0,0)[lb]{\smash{{\SetFigFont{25}{30.0}{\rmdefault}{\mddefault}{\updefault}{\color[rgb]{0,0,0}$\Delta_{idle}^{k,max}$}%
}}}}
\put(11251,-7486){\makebox(0,0)[lb]{\smash{{\SetFigFont{25}{30.0}{\rmdefault}{\mddefault}{\updefault}{\color[rgb]{0,0,0}$\Delta_{send}^{k,min}$}%
}}}}
\put(6751,-3301){\makebox(0,0)[lb]{\smash{{\SetFigFont{20}{24.0}{\rmdefault}{\mddefault}{\updefault}{\color[rgb]{0,0,0}$L_M^k$}%
}}}}
\put(6751,-5551){\makebox(0,0)[lb]{\smash{{\SetFigFont{20}{24.0}{\rmdefault}{\mddefault}{\updefault}{\color[rgb]{0,0,0}$L_R^k$}%
}}}}
\put(8026,-3061){\makebox(0,0)[lb]{\smash{{\SetFigFont{20}{24.0}{\rmdefault}{\mddefault}{\updefault}{\color[rgb]{0,0,0}class-k credit}%
}}}}
\put(5041,-6406){\makebox(0,0)[lb]{\smash{{\SetFigFont{20}{24.0}{\rmdefault}{\mddefault}{\updefault}{\color[rgb]{0,0,0}$L_R^{k}-\frac{MFS}{C}\cdot I^k_{idle}$}%
}}}}
\end{picture}%}
	\caption{WbA \cite{Finzi-sies-18}: discontinuities with $\beta_{k}^{bls}(t)$ windows }
	\label{fig:discontinuities2}
\end{figure}

We notice that the discontinuities of the BLS credit happen between the end of the idle window and the start of the sending window for both the minimum and maximum service curves. This issue is situated around $L_R^k$. This highlights the fact that $L_R^k$ is not taken into account in an accurate way by the WbA model. This led  to the second model, CCbA \cite{Finzi-wfcs-18}, which is based on the continuity of the credit. The results presented in \cite{Finzi-wfcs-18} show the tightness of the CCbA.

Hence, in this paper, we have selected the CCbA model to be generalized.

\section{Network Calculus framework}

\label{NCframework}
The timing analysis used here is based on Network Calculus theory \cite{leboudecthiran12} providing upper bounds on delays and backlogs. Delay bounds depend on the traffic arrival described by the so called \textit{arrival curve} $\alpha$, and on the availability of the traversed node described by the so called minimum \textit{service curve} $\beta$. The definitions of these curves are explained as following.
%The definitions of these curves are explained as following.
\begin{Definition}[Arrival Curve]
	\label{def:arrivalCurve}
	\cite{leboudecthiran12} A function $\alpha(t)$ is an arrival curve for a data flow with an input cumulative function $R(t)$,i.e., the number of bits received until time $t$, iff:
	\begin{displaymath}
	\forall t, R(t) \leq  R \otimes\footnote{$f \otimes g (t) = \inf_{0 \leq s \leq t}\{f(t-s) + g(s)\}$} \alpha(t)
	\end{displaymath}
\end{Definition}
%%-----------------
\begin{Definition}[Strict minimum service curve]
	\label{def:strict-min-service-curve}
	\cite{leboudecthiran12} The function $\beta $ is the minimum \emph{strict} service curve for a data flow with an output cumulative function $R^*$, if for any backlogged period $]s,t]$\footnote{ $]s,t]$ is called backlogged period if $R(\tau) -R^*(\tau) >0, \forall \tau \in ]s,t] $}, $R^*(t) - R^*(s) \geq \beta(t-s)$.
\end{Definition}
%===============================================================
\begin{Definition}[Maximum service curve]\label{def:max-service-curve}
	\cite{leboudecthiran12} The function $\gamma(t)$ is the maximum  service curve for a data flow with an input cumulative function $R(t)$ and output cumulative function $R^*(t)$ iff:
	\begin{displaymath}
	\forall t,	R^*(t) \leq R \otimes \gamma (t)
	\end{displaymath}
\end{Definition}
%===========================================================================

To compute end-to-end delay bounds of individual traffic flows, we need the following Theorem.

\begin{theorem}
	\label{th:blind2flows}
	(Blind Multiplex of two flows) \cite{bouillard2009service} Consider two flows $f_1$,$f_2$ crossing a system $n$ with the strict minimum service $\beta(t)$, and with the flows $f_j$ $\alpha_j$-constrained, $j\in\{1,2\}$. Then, the residual minimum service curve offered to $f_1$ is:
	\begin{displaymath}
	\beta_{1}^{n}(t)= (\beta(t) - \alpha_2(t))_\uparrow
	\end{displaymath}
	
	%	\alpha_l^{sp}(t)-\max_{l \neq j} l_{l,max})_\uparrow
\end{theorem}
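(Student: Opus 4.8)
The plan is to reason on the aggregate flow and to exploit the \emph{strict} service hypothesis together with the monotonicity of the cumulative departure of $f_1$. Write $R_1,R_2$ for the input cumulative functions of $f_1,f_2$ and $R_1^*,R_2^*$ for their outputs, and set $R=R_1+R_2$, $R^*=R_1^*+R_2^*$ for the aggregate traversing node $n$. Fix a time $t$ and let $s$ be the start of the backlogged period of the aggregate that contains $t$, i.e.\ the last instant $u\le t$ with $R(u)=R^*(u)$. The first thing to establish is that at such an instant every \emph{individual} backlog vanishes: since $R_j^*\le R_j$ by causality and $(R_1-R_1^*)+(R_2-R_2^*)=R-R^*=0$ at $s$, both non-negative terms must be zero, so $R_1^*(s)=R_1(s)$ and $R_2^*(s)=R_2(s)$. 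Anchoring the argument at this aggregate-empty instant is what will let me control flow $2$ cleanly.

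Next I would slide a variable endpoint inside the backlogged period. For every $v\in[0,t-s]$ the subinterval $(s,s+v]$ is still backlogged for the aggregate, so the strict minimum service curve property applied with $\beta$ gives $R^*(s+v)-R^*(s)\ge\beta(v)$; this is the step where the \emph{strictness} of $\beta$ is indispensable. For flow $2$, combining $R_2^*(s+v)\le R_2(s+v)$ (causality), the identity $R_2^*(s)=R_2(s)$, and the arrival-curve bound $R_2(s+v)-R_2(s)\le\alpha_2(v)$ yields $R_2^*(s+v)-R_2^*(s)\le\alpha_2(v)$. Subtracting the two and using $R^*(s)=R(s)$, $R_1(s)=R(s)-R_2(s)$ gives, for \emph{every} $v\in[0,t-s]$,
\[
R_1^*(s+v)\ \ge\ R_1(s)+\beta(v)-\alpha_2(v).
\]

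The final step — the one that actually produces the non-decreasing closure in the statement — uses that $R_1^*$ is non-decreasing. Since $s+v\le t$, we have $R_1^*(t)\ge R_1^*(s+v)$, so the inequality above holds with its left-hand side replaced by $R_1^*(t)$ for every admissible $v$; taking the supremum over $v\in[0,t-s]$ gives
\[
R_1^*(t)\ \ge\ R_1(s)+\sup_{0\le v\le t-s}\bigl(\beta(v)-\alpha_2(v)\bigr)\ =\ R_1(s)+\bigl(\beta-\alpha_2\bigr)_\uparrow(t-s).
\]
Bounding $R_1(s)+\beta_1^n(t-s)$ below by the infimum defining the min-plus convolution then yields $R_1^*(t)\ge (R_1\otimes\beta_1^n)(t)$, which is exactly the service-curve property $R_1^*\ge R_1\otimes\beta_1^n$ with $\beta_1^n=(\beta-\alpha_2)_\uparrow$.

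I expect the crux to be the interplay of the two monotonicities: anchoring flow $2$'s departures at the aggregate-empty instant $s$ in order to bound $R_2^*(s+v)-R_2^*(s)$ by $\alpha_2(v)$, and then using the monotonicity of $R_1^*$ to slide the endpoint up to $t$. It is precisely this sliding that upgrades the raw bound $\beta-\alpha_2$ to its non-decreasing closure $(\beta-\alpha_2)_\uparrow$. The main pitfall to avoid is trying to get away with only a \emph{weak} (convolution-type) service curve for the aggregate: that would block the subinterval argument and deliver only $R_1^*\ge R_1\otimes(\beta-\alpha_2)$, which is strictly weaker than the claimed $R_1^*\ge R_1\otimes(\beta-\alpha_2)_\uparrow$.
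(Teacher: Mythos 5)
Your argument is correct: the paper does not prove this statement at all (it is quoted from \cite{bouillard2009service}), and your proof is essentially the standard one from that reference — anchoring at the start $s$ of the aggregate backlogged period where all individual backlogs vanish, applying the strict service property to every subinterval $]s,s+v]$, and using the monotonicity of $R_1^*$ to obtain the non-decreasing closure $(\beta-\alpha_2)_\uparrow$ rather than just $\beta-\alpha_2$. The only cosmetic remark is that $\sup_{0\le v\le t-s}\bigl(\beta(v)-\alpha_2(v)\bigr)$ can be negative (e.g.\ when $\alpha_2(0)>0$), in which case the closure equals $0$; your bound still holds there via $R_1^*(t)\ge R_1^*(s)=R_1(s)$, so the conclusion is unaffected.
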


Then, to compute the main performance metrics, we need the following results. 

%%%%%%%%%%%%%%%%%%%%%%%%%%%%%%%%%%%%%%%%%%%%%%%%%%%%%%%%%%%
\begin{Corollary}(Left-over service curve - NP-SP Multiplex)\cite{bouillard2009service}
	\label{cor:residual-service-curve}
	Consider a system with the strict service $\beta(t)$ and $m$ flows crossing it, $f_1$,$f_2$,..,$f_m$. The maximum frame size of $f_i$ is $MFS_{i}$, its priority is $p(i)$, $f_i$ is $\alpha_i$-constrained, and $\forall i$, the priorities strictly higher than $p(i)$, are $\forall j$, $p(j)<p(i)$. The flows are scheduled by the NP-SP policy. For each $i \in \{1,..,m \}$,  the strict service curve offered to $f_i$ is given by\footnote{$g_{\uparrow}(t) = \max\{0,\sup_{0 \leqslant s \leqslant t} g(s)\}$}:
	%\begin{equation}
	$$\beta_i(t)= \left( \beta(t) - \sum_{\forall j,p(j) <p(i)} \alpha_j(t) - \max_{\forall l, p(l) \geqslant p(i)} MFS_{l}\right) _{\uparrow}$$
\end{Corollary}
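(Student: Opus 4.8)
The plan is to check Definition~\ref{def:strict-min-service-curve} directly: I fix an arbitrary backlogged period $]s,t]$ of $f_i$ and show that the $f_i$-service over it, $R_i^*(t)-R_i^*(s)$, is at least the claimed expression at $t-s$. Conceptually this generalises the blind multiplex of Theorem~\ref{th:blind2flows}: the strictly higher priority flows play the role of the single interfering flow $f_2$ and are removed through their aggregate arrival curve $\sum_{p(j)<p(i)}\alpha_j$, while the extra ingredient is that the NP-SP policy lets me replace the \emph{full} arrival curves of the lower-priority flows by a single maximal non-preemptable frame. So the decomposition I aim for is: total service $\ge\beta(t-s)$, minus what the strictly higher priority flows take, minus one frame stolen by non-preemption.

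Concretely I would proceed in four steps. First, since $f_i$ is backlogged on $]s,t]$ the whole server is backlogged as well, so the aggregate strict curve gives $R^*(t)-R^*(s)\ge\beta(t-s)$, where $R^*$ is the total output. Second, I would bound the service granted to the strictly higher priority flows: using that each $f_j$ with $p(j)<p(i)$ is $\alpha_j$-constrained, their aggregate service over the busy period is at most $\sum_{p(j)<p(i)}\alpha_j(t-s)$. Third, I would invoke the NP-SP rule for the remaining competitors, namely the flows of priority strictly lower than $f_i$: once $f_i$ is backlogged the scheduler never \emph{starts} a lower-priority frame ahead of $f_i$, hence the only lower-priority service delivered during $]s,t]$ is that of at most one frame already in transmission at the onset, whose size I bound conservatively by $\max_{p(l)\ge p(i)}MFS_l$ (the maximum also ranging over $f_i$ itself, which only loosens the bound). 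Fourth, subtracting these two quantities from $\beta(t-s)$ lower-bounds $R_i^*(t)-R_i^*(s)$; since a served volume is non-negative and a strict service curve may be replaced by its non-decreasing, non-negative closure, applying $(\cdot)_\uparrow$ yields exactly $\beta_i$.

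The hard part will be the bookkeeping at the onset of the backlogged period, where non-preemption enters and where the two interference terms must not be double counted. The delicate point is that the higher priority service must be captured by $\sum_{p(j)<p(i)}\alpha_j$ evaluated over the \emph{same} length $t-s$, rather than over an interval inflated by the higher priority backlog residing in the queue at $s$. I would resolve this with the usual busy-period technique, anchoring the computation at the last instant before $t$ at which the higher priority queue is empty or a frame begins transmission, so that the arrival-curve bound is tight and exactly one non-preemptable frame is exposed; the residual backlog this leaves on short intervals is harmless because the constant $-\max_{p(l)\ge p(i)}MFS_l$ forces $\beta_i(t-s)=0$ there, where the bound $R_i^*(t)-R_i^*(s)\ge 0$ holds trivially. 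Establishing that such an anchor exists with both properties, and that no further lower-priority frame can intervene afterwards, is the crux; the remaining manipulations are routine.
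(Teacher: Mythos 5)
The paper never proves this corollary: it is imported as a known result from \cite{bouillard2009service}, so there is no in-paper argument to compare yours against and I can only assess the attempt on its own merits. Your decomposition is the standard one, and both the lower-priority step (once $f_i$ is backlogged no lower-priority frame starts, so at most one non-preempted frame of size at most $\max_{p(l)\geqslant p(i)}MFS_l$ is served) and the final closure step are sound. The gap is the higher-priority term, and it sits exactly where you place the ``crux'' without resolving it. Over an arbitrary backlogged period $]s,t]$ of $f_i$, the service actually delivered to a higher-priority flow $f_j$ is only bounded by $B_j(s)+R_j(t)-R_j(s)$, and the backlog $B_j(s)$ can strictly exceed the burst $b_j$ already accounted for in $\alpha_j(t-s)$ --- for instance when a lower-priority frame was in non-preemptive transmission just before $s$, during which queue $j$ accumulated up to $b_j+r_j\cdot MFS/C$. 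So the unconditional claim of your second step, that the higher-priority aggregate receives at most $\sum_{p(j)<p(i)}\alpha_j(t-s)$ over $]s,t]$, is false in general; the corollary survives only because this excess must be shown to be absorbed by the very same $\max_{p(l)\geqslant p(i)}MFS_l$ constant that pays for the directly blocking frame, and proving that the two effects cannot together exceed that single constant is the actual content of the result.

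Your proposed repair --- anchoring at the last instant $u$ before $t$ at which the higher-priority queues are empty --- does not deliver the two properties you need from it simultaneously. If $u<s$, then $\beta$ and the arrival curves are evaluated over $t-u>t-s$ while the quantity to bound is still $R_i^*(t)-R_i^*(s)$, so the service granted to $f_i$ itself on $]u,s]$ must be subtracted, which reintroduces $\alpha_i$ into the formula even though it does not appear in the statement; if instead $u\in\,]s,t]$, the higher-priority service on $]s,u]$ is again controlled only through $B_j(s)$ and the gap reappears. Nor is it automatic that ``exactly one non-preemptable frame is exposed'' at such an anchor: the lower-priority frame responsible for the excess higher-priority backlog at $s$ may have completed before $s$, and excluding a second such frame requires tracking the state of queue $i$ as well --- this is precisely the double-counting you promise to avoid but never rule out. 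To close the argument you need a quantitative lemma of roughly the following form: for any $f_i$-backlogged period $]s,t]$, the lower-priority output on $]s,t]$ plus the higher-priority excess $\sum_{p(j)<p(i)}\bigl(R_j^*(t)-R_j^*(s)-\alpha_j(t-s)\bigr)^+$ is at most $\max_{p(l)\geqslant p(i)}MFS_l$. That lemma is the heart of the proof in \cite{bouillard2009service}, and it is missing here.
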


\
%Knowing the arrival and service curves, one may compute the upper bounds on performance metrics for a data flow.
\begin{theorem} [Performance Bounds]\cite{leboudecthiran12}
	\label{PerformanceBounds}
	Consider a flow $F$ constrained by an arrival curve $\alpha$ crossing a system $\mathcal{S}$ that offers a minimum service curve $\beta$ and a maximum service curve $\gamma$. The performance bounds obtained at any time $t$ are:\\
	Backlog\footnote{v: maximal vertical distance}: $ \forall~t:~q(t)\leq v(\alpha,\beta)$\\
	Delay\footnote{h: maximal horizontal distance}: 	$ \forall~t:~d(t)\leq h(\alpha,\beta)$\\
	Output arrival curve : $\alpha^*(t) =\alpha \oslash\footnote{$f \oslash g(t) = \sup_{s \geq 0}\{f(t+s) - g(s)\}$ } \beta (t)$\\
	Tight Output arrival curve: $\alpha^*(t) =\left( (\gamma\otimes\alpha) \oslash\beta\right)(t)$
\end{theorem}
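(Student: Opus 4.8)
The plan is to argue directly at the level of the cumulative functions. Let $R$ and $R^{*}$ be the input and output cumulative functions of the flow $F$, so that the arrival-curve hypothesis (Definition~\ref{def:arrivalCurve}) reads $R(t)-R(s)\le\alpha(t-s)$ for $0\le s\le t$, the service hypothesis (Definition~\ref{def:strict-min-service-curve}, in its min-plus form) reads $R^{*}\ge R\otimes\beta$, and the maximum-service hypothesis (Definition~\ref{def:max-service-curve}) reads $R^{*}\le R\otimes\gamma$. I would establish the four bounds in increasing order of difficulty, each one reusing the manipulations of the previous, since all of them come down to combining a lower bound on $R^{*}$ (from $\beta$) with an upper bound on $R$ or on $R^{*}$ (from $\alpha$, causality, or $\gamma$).

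For the \textbf{backlog}, writing $q(t)=R(t)-R^{*}(t)$ and inserting $R^{*}(t)\ge\inf_{0\le s\le t}\{R(s)+\beta(t-s)\}$, I would convert the infimum into a supremum with a sign change to get $q(t)\le\sup_{0\le s\le t}\{R(t)-R(s)-\beta(t-s)\}$, then replace $R(t)-R(s)$ by $\alpha(t-s)$ and enlarge the range to all $u=t-s\ge0$, which yields $q(t)\le\sup_{u\ge0}\{\alpha(u)-\beta(u)\}=v(\alpha,\beta)$. For the \textbf{delay}, with the virtual delay $d(t)=\inf\{\tau\ge0:R(t)\le R^{*}(t+\tau)\}$ it suffices to show $R(t)\le R^{*}(t+h)$ for $h=h(\alpha,\beta)$, which by definition satisfies $\alpha(u)\le\beta(u+h)$ for every $u\ge0$. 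I would lower-bound $R^{*}(t+h)$ by the service infimum over $0\le s\le t+h$ and split the range: for $s\le t$ combine $\beta(t+h-s)\ge\alpha(t-s)\ge R(t)-R(s)$, and for $s>t$ use monotonicity of $R$ together with $\beta\ge0$; in both cases every term of the infimum is at least $R(t)$, so $R^{*}(t+h)\ge R(t)$ and $d(t)\le h$.

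For the \textbf{output arrival curve} $\alpha^{*}=\alpha\oslash\beta$, I would bound $R^{*}(t)-R^{*}(s)$ for $s\le t$. Using causality $R^{*}(t)\le R(t)$ and, for arbitrary $\epsilon>0$, an $\epsilon$-minimizer $a^{*}\le s$ of the service infimum at $s$, I obtain $R^{*}(t)-R^{*}(s)\le R(t)-R(a^{*})-\beta(s-a^{*})+\epsilon$; bounding $R(t)-R(a^{*})\le\alpha(t-a^{*})$ and setting $v=s-a^{*}\ge0$ gives $\alpha(t-s+v)-\beta(v)+\epsilon$, whose supremum over $v\ge0$ is exactly $\alpha\oslash\beta(t-s)+\epsilon$, and letting $\epsilon\to0$ finishes. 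The \textbf{tight} version replaces causality $R^{*}\le R$ by the maximum-service bound $R^{*}(t)\le R(b)+\gamma(t-b)$ valid for every $b\in[a^{*},t]$; keeping the same $a^{*}$, bounding $R(b)-R(a^{*})\le\alpha(b-a^{*})$, and reparametrizing with $\tau=s-a^{*}\ge0$ and $\theta=b-a^{*}$ lets me take the infimum over $\theta$ to reconstruct $(\gamma\otimes\alpha)((t-s)+\tau)$, and then the supremum over $\tau\ge0$ delivers $(\gamma\otimes\alpha)\oslash\beta(t-s)+\epsilon$.

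The main obstacle is precisely this last step. One must verify that the range of the free variable $b$ (equivalently $\theta=b-a^{*}$) supplied by the maximum-service infimum coincides exactly with the range $0\le\theta\le(t-s)+\tau$ of the convolution $\gamma\otimes\alpha$, and that the arrival-curve inequality is invoked only when $b\ge a^{*}$. Confirming that the admissible choices of $b$ fill this range — and that restricting to $b\ge a^{*}$ does not weaken the resulting upper bound — is what makes the refined curve genuinely tighter than $\alpha\oslash\beta$ rather than merely a correct but looser estimate; the earlier three bounds, by contrast, are routine once the infimum/supremum bookkeeping is set up.
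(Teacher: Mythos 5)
The paper does not prove this theorem; it imports it verbatim from \cite{leboudecthiran12}, and your argument is a correct reconstruction of the standard proofs given there (backlog and delay via the deconvolution/horizontal-deviation manipulations, output curve via an $\epsilon$-minimizer $a^{*}\le s$ of the service infimum, and the tight version by substituting the maximum-service bound for causality and checking that $b\in[a^{*},t]$ sweeps exactly the range of the convolution $\gamma\otimes\alpha$ at $(t-s)+\tau$). The only point worth tightening is the delay step, where $\alpha(u)\le\beta(u+h)$ for all $u$ requires the horizontal-deviation infimum to be attained (otherwise run the argument with $h+\epsilon$ and let $\epsilon\to0$); this is the standard caveat and does not affect correctness.
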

The computation of these bounds is greatly simplified in the case of i)~leaky bucket arrival curve $\alpha(t) = b + rt$, with $b$ the maximal burst and $r$ the maximum rate, i.e., the flow is $(b,r)$-constrained, and ii)~Rate-Latency service curve $\beta_{R,T}(t) = R \cdot (t-T)^+$ ( $(x)^+$ is the maximum between $x$ and 0), with latency $T$ and rate $R$. In this case, the delay is bounded by $h(\alpha,\beta)=\frac{b}{R} + T$, and the backlog bound is $v(\alpha,\beta)=b + r\cdot T$. Moreover, the output arrival curve is $\alpha^*(t)=b+r (t + T)$.

In the case of a piecewise linear input arrival curve and a piecewise linear minimum service curve, we can compute the delay bound as follows:
\begin{Corollary} [Maximum delay bound under a piecewise arrival curve and piecewise minimum service curve]	\label{OutputArrivalcurve}
	%\label{Th:delayLinear}
	Consider a flow $f$ constrained by a piecewise linear arrival curve $ \alpha$ such as: $\alpha_f(t)=\underset{i}{\min} (\alpha_{r_i,b_i}(t)) $, with  $\alpha_{r_i,b_i}(t)=r_i\cdot t+b_i$, $ i\in[1,n]$ and  minimum service curve such as: $\beta_f(t)=\underset{j}{\max} \left( \beta_{R_j,T_j}(t)\right)$, with  $\beta_{R_j,T_j}(t)=R_j\cdot(t-T_j)^+$. The maximum latency of flow f is: 
	\begin{eqnarray}   	
	%\inf_{t\geqslant 0} \left\lbrace   (\alpha_{i}\oslash\beta_{i})(-t)\leqslant0  \right\rbrace&=&
	delay^{max}_f=\underset{j}{\min}\Big(\frac{y_k}{R_j}+T_j-x_k\Big)\nonumber \textnormal{, with: }k=\min\{i | r_i\leqslant R_j\}
	\end{eqnarray}
	and:
	
	\begin{equation}
	\left\{ \begin{array}{ll}
	x_1=0, y_1=b_1\nonumber\\
	x_k=\frac{b_k-b_{k-1}}{r_{k-1}-r_k},y_k=b_k+r_{k}\cdot x_k\textnormal{, for }2\leqslant k \leqslant n\nonumber\\
	x_{n+1}=+\infty, y_{n+1}=\infty\nonumber\\	
	\end{array} \right.
	\end{equation} 
	
\end{Corollary}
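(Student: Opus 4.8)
\emph{Proof sketch (plan).}
The plan is to read the delay bound off as a horizontal deviation and then to exploit the fact that $\beta_f$ is a \emph{convex} maximum of rate-latency curves while $\alpha_f$ is a \emph{concave} minimum of affine curves. By Theorem~\ref{PerformanceBounds} the delay of $f$ is bounded by $h(\alpha_f,\beta_f)$. The first step is to decompose the multi-segment service curve into its individual pieces: for every index $j$ one has $\beta_{R_j,T_j}\leq\beta_f$, and a lower bound of a minimum service curve is again a minimum service curve (if $R^*\geq R\otimes\beta_f$ and $\beta_{R_j,T_j}\leq\beta_f$ then $R\otimes\beta_{R_j,T_j}\leq R\otimes\beta_f\leq R^*$). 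Hence each $\beta_{R_j,T_j}$ is itself a valid service curve for $f$, so $delay^{max}_f\leq h(\alpha_f,\beta_{R_j,T_j})$ for every $j$, and therefore $delay^{max}_f\leq\min_j h(\alpha_f,\beta_{R_j,T_j})$. This reduces the task to computing, one service segment at a time, the horizontal distance between the concave arrival curve and a single rate-latency curve.

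The core step is thus to show $h(\alpha_f,\beta_{R_j,T_j})=\frac{y_k}{R_j}+T_j-x_k$ with $k=\min\{i\mid r_i\leq R_j\}$. I would parametrise the horizontal gap by the height $y$. Since the affine pieces are ordered by strictly decreasing rate $r_1>r_2>\dots>r_n$ (which is exactly what makes the $\min$ concave and non-redundant) and increasing burst, the points $(x_k,y_k)$ of the statement are precisely the break-points of $\alpha_f$, with slope $r_i$ on the segment joining $(x_i,y_i)$ to $(x_{i+1},y_{i+1})$. At height $y$ the arrival curve sits at abscissa $(y-b_i)/r_i$ on segment $i$, while the service line reaches height $y$ at $y/R_j+T_j$; hence on segment $i$ the gap varies with $y$ at the constant rate $\tfrac{1}{R_j}-\tfrac{1}{r_i}$, which is positive exactly when $r_i>R_j$ and negative exactly when $r_i<R_j$. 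Because the $r_i$ are strictly decreasing, the gap increases across the corners as long as $r_i>R_j$ and decreases once $r_i$ drops below $R_j$, so its maximum over all heights is attained at the first corner where the arrival slope falls to or below $R_j$, i.e. at $(x_k,y_k)$ with $k=\min\{i\mid r_i\leq R_j\}$. Evaluating the distance from that corner to the line $R_j(t-T_j)$, which reaches height $y_k$ at time $\tfrac{y_k}{R_j}+T_j$, yields $\tfrac{y_k}{R_j}+T_j-x_k$.

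Finally I would collect the pieces and verify the boundary conventions carried by the sentinel values. If $R_j\geq r_1$ then $k=1$ and the binding corner is $(x_1,y_1)=(0,b_1)$, recovering the familiar $\tfrac{b_1}{R_j}+T_j$; if $R_j<r_n$ the set $\{i\mid r_i\leq R_j\}$ is empty, so $k=n+1$ and the convention $(x_{n+1},y_{n+1})=(+\infty,\infty)$ makes the corresponding term infinite, reflecting that a service branch whose rate lies below the asymptotic arrival rate carries no finite bound. Taking the minimum over $j$ then selects the binding branch and gives the announced formula. The main obstacle is the second paragraph: one must justify rigorously that the height-parametrised gap is maximised precisely at the corner where the arrival slope crosses $R_j$, i.e. that it suffices to test the arrival-curve corners, and handle the degenerate cases $k=1$ and $k=n+1$ cleanly; the reduction of the first paragraph and the arithmetic of the third are routine.
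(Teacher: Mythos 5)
Your proposal is correct and arrives at the stated formula, but it routes around the paper's proof in two places. For the per-segment computation $h(\alpha_f,\beta_{R_j,T_j})=\frac{y_k}{R_j}+T_j-x_k$, the paper simply invokes Lemma~1 of \cite{boyer2010halfmodeling}, whereas you re-derive it from first principles with the height-parametrised gap argument (on segment $i$ the gap grows at rate $\frac{1}{R_j}-\frac{1}{r_i}$, so it peaks at the first corner where $r_i\leqslant R_j$); this makes your proof self-contained, at the cost of having to make explicit the ordering hypotheses $r_1>r_2>\dots>r_n$ and increasing $b_i$ that both you and the paper leave implicit in the breakpoint formulas. For the reduction over $j$, the paper computes the deconvolution identity $(\alpha\oslash\max_j\beta_{R_j,T_j})(t)=\min_j(\alpha\oslash\beta_{R_j,T_j})(t)$ and reads the delay off as $\inf\{t\geqslant 0 : (\alpha\oslash\beta)(-t)\leqslant 0\}$, which yields the minimum over $j$ as an equality of horizontal deviations; you instead observe that each $\beta_{R_j,T_j}$, being a lower bound of $\beta_f$, is itself a valid service curve for $f$, which delivers the same $\min_j$ expression as an upper bound on the delay. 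Both arguments are sound: the paper's deconvolution route is marginally sharper (it shows the deviation of the combined curve equals the minimum rather than merely being dominated by it), while your monotonicity-of-service-curves argument is more elementary and avoids any $\oslash$ manipulation. I see no gap in either step; the only thing to tighten if you write this up fully is the justification that the height-parametrised gap need only be tested at the corners of $\alpha_f$, which follows from its piecewise linearity.
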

\begin{proof}
	From Theorem~\ref{PerformanceBounds}, we know that the maximum delay bound of flow $f$ is the maximal horizontal distance between $\alpha_f(t)$ and $\beta_f(t)$. Moreover, from Lemma~1 in \cite{boyer2010halfmodeling}, we know that the maximum horizontal distance between $\alpha_f(t)=\underset{i}{\min} (\alpha_{r_i,b_i}(t)) $ and  $\beta_{R_j,T_j}(t)=R_j\cdot(t-T_j)^+$ is:  $$delay^{max}=\frac{y_k}{R_j}+T_j-x_k \textnormal{, with: }k=\min\{i | r_i\leq R_j\}$$
	
	\begin{equation}
	\textnormal{and with: }\left\{ \begin{array}{ll}
	x_1=0, y_1=b_1\nonumber\\
	\alpha_k(x_1)=\alpha_{k+1}(x_1)=y_k\textnormal{, for }1\leqslant k \leqslant n\nonumber\\
	x_{n+1}=+\infty, y_{n+1}=\infty\nonumber\\	
	\end{array} \right.
	\end{equation} 
	
	With a few algebraic considerations, we deduce that $x_k=\frac{b_k-b_{k-1}}{r_{k-1}-r_k},y_k=b_k+r_{k}\cdot\frac{b_k-b_{k-1}}{r_{k-1}-r_k}\textnormal{, for } 2\leqslant k \leqslant n$
	%As we are considering linear function so the minimum is at one of the intersection of the linear piece of the piecewise function. For the piecewise linear arrival curve, the intersection points are defined by $t_i=\frac{b_i-b_{i-1}}{r_{i-1}-r_i}$. Thus, the value of the arrival curve at this point $t_i$ is: $b_i+r_{i}\cdot t_i$. Hence, the maximum  horizontal distance between the input arrival curve and a leaky bucket service curve $\beta_{R_j,T_j}$ is:  $$\underset{i}{\max}\Big[\frac{b_i+r_{i}\cdot t_i}{R_j}+T_j-t_i\Big]$$ 
	
	Finally, we consider a service curve  $\beta_f(t)=\underset{j}{\max} (\beta_{R_j,T_j}(t)) $.
	From Network Calculus concepts in \cite{leboudecthiran12}, we know that $delay^{max}_f=\inf_{t\geqslant 0} \left\lbrace   (\alpha_{f}\oslash\beta_{f})(-t)\leqslant0  \right\rbrace$. Thus, if we consider a piecewise service curve we have:
	
	\begin{eqnarray}
	(\alpha\oslash\beta)(t)&=&\sup_{s\geqslant0}\left\lbrace \alpha(t+s)-\beta(s) \right\rbrace\nonumber\\
	&=&\sup_{s\geqslant0}\left\lbrace \alpha(t+s)-\underset{j}{\max} (\beta_{R_j,T_j}(s)) \right\rbrace\nonumber\\
	&=&\sup_{s\geqslant0}\left\lbrace \underset{j}{\min} \left[ \alpha(t+s)-\beta_{R_j,T_j}(s)\right]  \right\rbrace\nonumber\\
	&=& \underset{j}{\min} \left[\sup_{s\geqslant0}\left\lbrace \alpha(t+s)-\beta_{R_j,T_j}(s)  \right\rbrace\right]\nonumber
	\end{eqnarray}
	Hence, to compute the maximum delay we can compute the maximum delay for every $\beta_{R_j,T_j}(t)$ and keep the minimum value.
\end{proof}

\begin{theorem} [Concatenation-Pay Bursts Only Once]\cite{leboudecthiran12}
	\label{ConcatenationOfNodes}
	Assume a flow crossing two servers with respective service curves  $\beta_1$ and $\beta_2$. The system consisting of the concatenation of the two servers offers a  service curve $\beta_1 \otimes \beta_2$.
\end{theorem}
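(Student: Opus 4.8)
The plan is to reduce the claim to two algebraic properties of the min-plus convolution $\otimes$: its \emph{isotonicity} and its \emph{associativity}. First I would fix notation for the cumulative functions along the path. Let $R$ be the flow entering the first server, let $R'$ denote the flow leaving the first server (equivalently, entering the second), and let $R^*$ be the flow leaving the second server. By the min-plus service-curve characterization underlying Definition~\ref{def:strict-min-service-curve}, the two servers guarantee
\begin{displaymath}
R' \geq R \otimes \beta_1 \qquad \text{and} \qquad R^* \geq R' \otimes \beta_2 ,
\end{displaymath}
where the inequalities hold pointwise in $t$. The goal is to show $R^* \geq R \otimes (\beta_1 \otimes \beta_2)$, which is precisely the statement that $\beta_1 \otimes \beta_2$ is a service curve for the concatenated system.

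The core chaining step substitutes the first inequality into the second. I would invoke the isotonicity of the convolution: if $f \geq g$ pointwise then $f \otimes h \geq g \otimes h$, which is immediate from the definition $f \otimes h(t) = \inf_{0 \leq s \leq t}\{f(t-s) + h(s)\}$, since replacing $f$ by a larger function can only raise each term inside the infimum. Applying this with $f = R'$, $g = R \otimes \beta_1$ and $h = \beta_2$ yields
\begin{displaymath}
R^* \geq R' \otimes \beta_2 \geq (R \otimes \beta_1) \otimes \beta_2 .
\end{displaymath}

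It remains to rewrite the right-hand side, and here I would use the associativity of $\otimes$, namely $(R \otimes \beta_1) \otimes \beta_2 = R \otimes (\beta_1 \otimes \beta_2)$. This is the only genuinely technical point: expanding both sides as iterated infima, each equals $\inf\{R(t-u-v) + \beta_1(u) + \beta_2(v) : u,v \geq 0,\ u+v \leq t\}$, so the identity reduces to interchanging the order of the two infima over a product domain, which is always valid. Combining the chain with associativity gives $R^* \geq R \otimes (\beta_1 \otimes \beta_2)$, the desired conclusion. I expect the associativity verification to be the main obstacle in a fully rigorous write-up, whereas isotonicity and the substitution are routine; I would also remark that the argument relies only on the plain service-curve inequality $R^* \geq R \otimes \beta$ rather than its strict version, so the concatenation is obtained as an ordinary minimum service curve.
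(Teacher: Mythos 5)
Your proof is correct and is essentially the canonical argument from the cited reference \cite{leboudecthiran12} (the paper itself states this theorem by citation without reproving it): chain $R^* \geq R'\otimes\beta_2 \geq (R\otimes\beta_1)\otimes\beta_2$ via isotonicity and then apply associativity of $\otimes$. Your closing remark that only the ordinary min-plus service-curve inequality is needed (not the strict version) is also accurate and worth keeping.
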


%Hence, the burst of the higher priority traffic $b_{H,k}$ is part of the computation of the minimum service rate, which it intuitively should not. As a consequence, when this $b_H$ is too large, the rate can become negative or null, rendering it very pessimistic and useless.

%\begin{figure}[h]
%	\centering
%	\resizebox{0.7\columnwidth}{!}{\input{figures/discontinuity.eps}}
%	\caption{WbA \cite{Finzi-sies-18}: discontinuities with $\gamma_{k}^{bls}$ windows}
%	\label{fig:discontinuities}
%\end{figure}

\section{System Model}
%\label{Spec}
\label{systModel}

In this section, we present the system model, with first the network model, then the traffic model.

\subsection{Network model}

We consider multi-hop networks, with traffics generated in End-systems and transmitted through one or several switches before reaching the destination end-systems, as illustrated in Fig. \ref{fig:multihopgene}. 

\begin{figure}[htbp]
	%\begin{minipage}[b]{0.4\linewidth}
	\centering
	\includegraphics[width=0.7\textwidth]{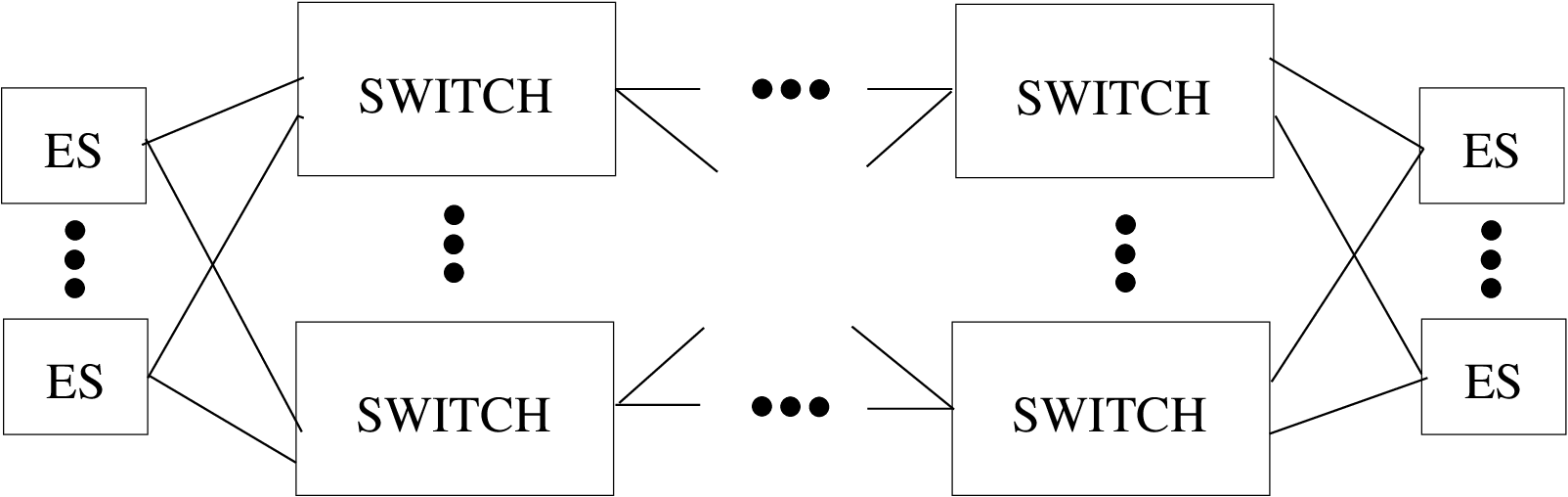}
	\footnotesize \caption{Multi-hop networks}
	\label{fig:multihopgene}
\end{figure}

Hence, in the networks we consider several types of nodes: the end-systems $es$, the output port multiplexers $mux$, composed of  BLS nodes $bls$ and a NP-SP node $sp$, as illustrated in Fig. \ref{fig:outputscheme}.

To assess the performance of the BLS, we use the delay bounds of SCT and RC as a metric, since they both have deadlines contrary to BE. To compute the delays bounds within each node $n \in \{es, bls, sp, mux\}$ we use Th.\ref{PerformanceBounds} under the following assumptions:

(i) leaky-bucket arrival curves for the traffic flows at the input of node $n$, i.e. $\alpha_{k}^n(t)=r_k^n\cdot t+b_k^n$, with $r_k^n$ is the rate and $b_k^n$ is the burst of flow $k$. 	

(ii) the offered service curve by node $n$ to the traffic class $k$ is a rate-latency curve: $\beta_k^n(t)=R_k^n\cdot (t-T_k^n)^+$. %If $r>R$ then the delay is infinite since the minimum service is insufficient to sent all the input traffic, leading to drops.

\begin{figure}[htbp]
	%\begin{minipage}[b]{0.4\linewidth}
	\centering
	\includegraphics[width=0.5\textwidth]{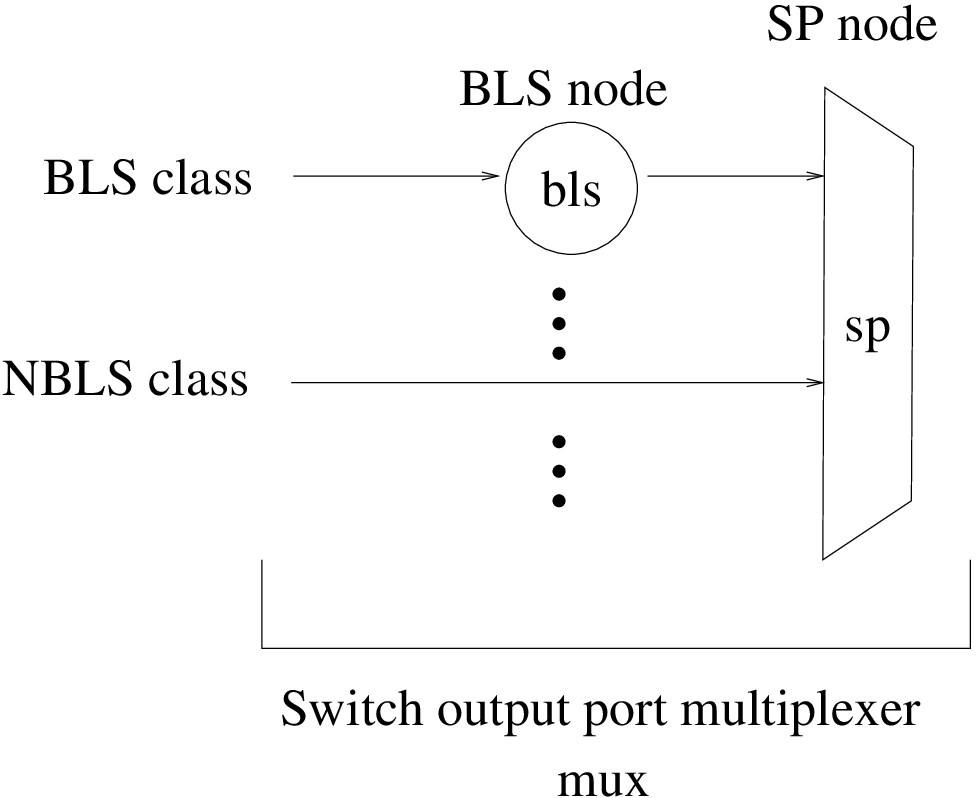}
	\footnotesize \caption{Output port multiplexer node nomenclature}
	\label{fig:outputscheme}
\end{figure}

\hfill\\
\textbf{End-System model}
%\label{endsystmodel}

For the end-systems, they are implementing a NP-SP scheduler. This scheduler has been already modeled in the literature \cite{bouillard2009service} through Corollary~\ref{cor:residual-service-curve}, and the defined strict minimum service curve guaranteed to a traffic class $k\in \{SCT,RC,BE\}$ within an end-system $es$ is as follows:	
$$\beta_{k}^{es}(t) =\left[ C\cdot t- \sum\limits_{ \forall (i,f), f \in i, p(i) < p(k)}\alpha_{i,f}^{es}(t) - \max\limits_{\forall (i,f), f \in i, p(i) \geqslant p(k)} MFS_f \right]_ \uparrow $$

\hfill\\
\textbf{Switch model}

The AFDX standard manages the exchanged data through the Virtual Link (VL) concept. This concept provides a way to reserve a guaranteed bandwidth for each traffic flow. Furthermore, the AFDX supports a NP-SP scheduler based on two priority levels within switches to enable the QoS features.

For the new  extended AFDX, we consider that depending on the constraints of the flows, the different traffics can be separated in several classes: $\{SCT_1, ... SCT_n \}$,$\{RC_1, ... RC_m \}$, and $\{BE_1, ... BE_l \}$, with $n$, $m$, $l$, the number of classes for each type of traffic. Additionally, also depending on the different constraints, any class can be shaped by a BLS at the output port as shown in Fig. \ref{fig:outputport} to reduce the impact of the considered class on lower priorities.

\begin{figure}[htbp]
	%\begin{minipage}[b]{0.4\linewidth}
	\centering
	\includegraphics[width=0.58\textwidth]{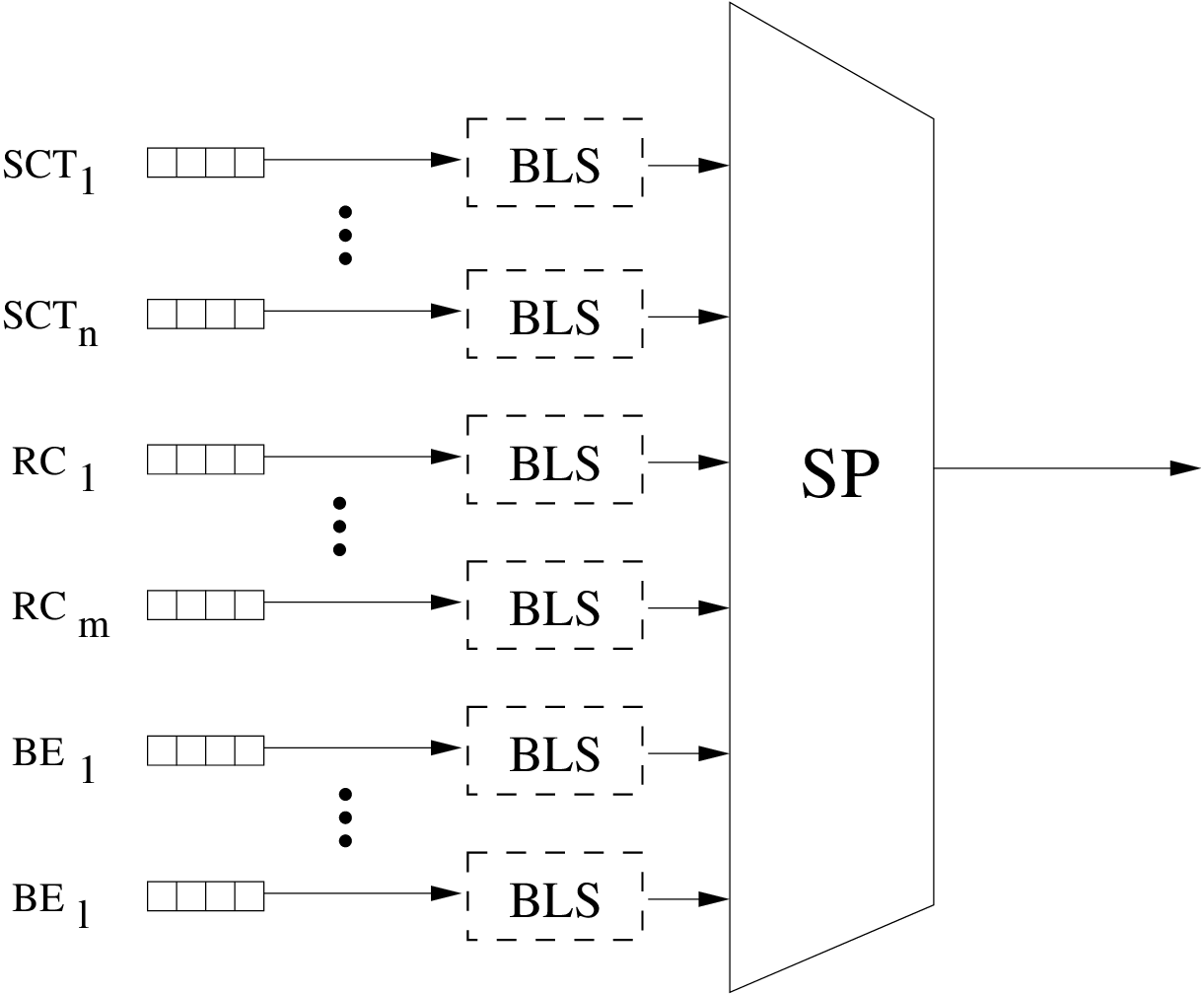}
	%\label{fig:SCTSCT}
	% \end{minipage}
	%\begin{minipage}[b]{0.4\linewidth}
	\caption{The output port of an extended AFDX switch}
	\label{fig:outputport}
\end{figure}

In Fig. \ref{fig:sw_archifgene}, we illustrate the architecture of our extended AFDX switch in the case of 3 classes. 
It consists of: (i) store and forward input ports to verify each frame correctness before sending it to the corresponding output port; (ii) a static configuration table to forward the received frames to the correct output port(s) based on their VL identifier; (iii) the output ports can handle $k=m+n+l$ priority queues, multiplexed with a NP-SP scheduler, as illustrated in Fig. \ref{fig:outputport}.

As a consequence, for each queue, we associate a class $k$, and we can set two different priorities: $p_L(k)$ and $p_H(k)$. The BLS is only activated if $p_L(k)>p_H(k)$ (because the priority increases when p(k) decreases, i.e., priority 0 is the highest priority). In this case, a credit counter monitoring the SP dequeuing process is attributed to this queue. The credit manages the selection of the priority of the queue viewed by the NP-SP as described in Fig. \ref{fig:outputport2}.

The resulting architecture is very flexible and offers many opportunities to manage each class as needed.
For example for homogeneous classes, we can only consider one queue by type of traffic and we can shape the SCT as proposed in Fig. \ref{fig:BLSshaper}. Or, for more heterogeneous classes, we can use two queues by type of traffic and only shape the second queue of both SCT and RC-type traffic classes, leaving the first ones for tighter deadlines, as illustrated in Fig. \ref{fig:outputport6}.

\begin{figure}[h]
	\centering
	\includegraphics[width=0.68\textwidth]{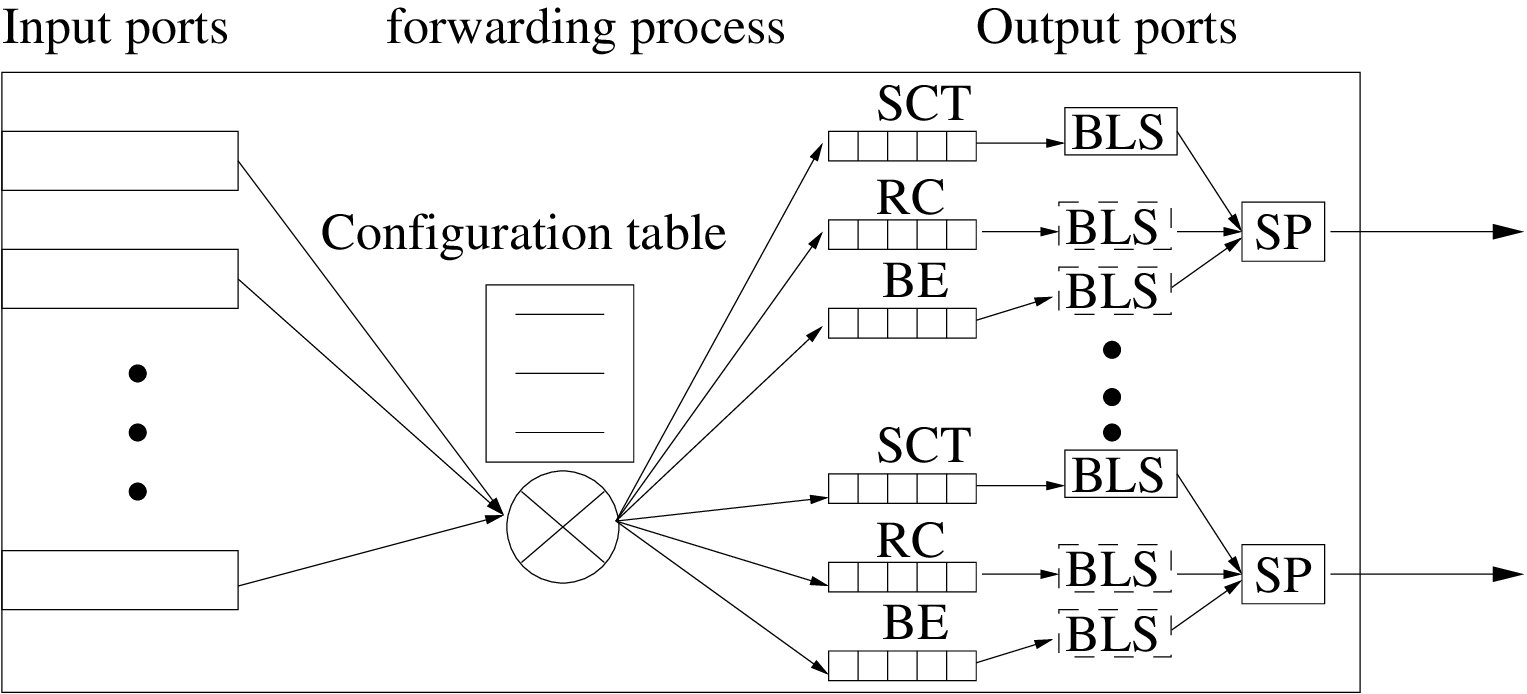}
	\footnotesize \caption{An extended AFDX switch architecture with 3 classes}
	\label{fig:sw_archifgene}
\end{figure}	

\begin{figure}[htbp]
%\begin{minipage}[b]{0.4\linewidth}
\centering
\includegraphics[width=0.7\textwidth]{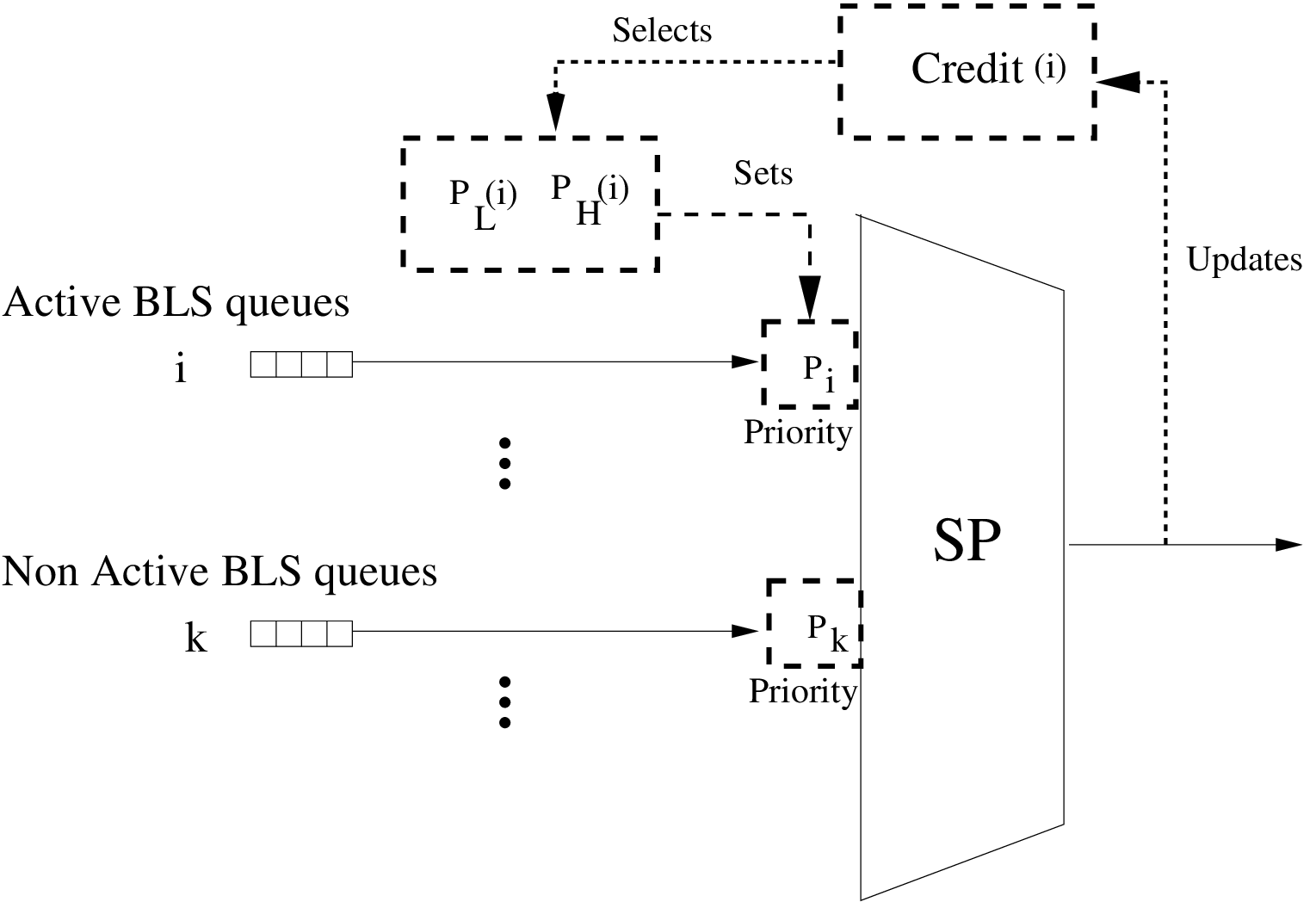}
%\label{fig:SCTSCT}
% \end{minipage}
%\begin{minipage}[b]{0.4\linewidth}
\caption{BLS behaviour in an output port of an extended AFDX switch}
\label{fig:outputport2}
\end{figure}

\begin{figure}[htbp]
%\begin{minipage}[b]{0.4\linewidth}
\centering
\includegraphics[width=0.5\textwidth]{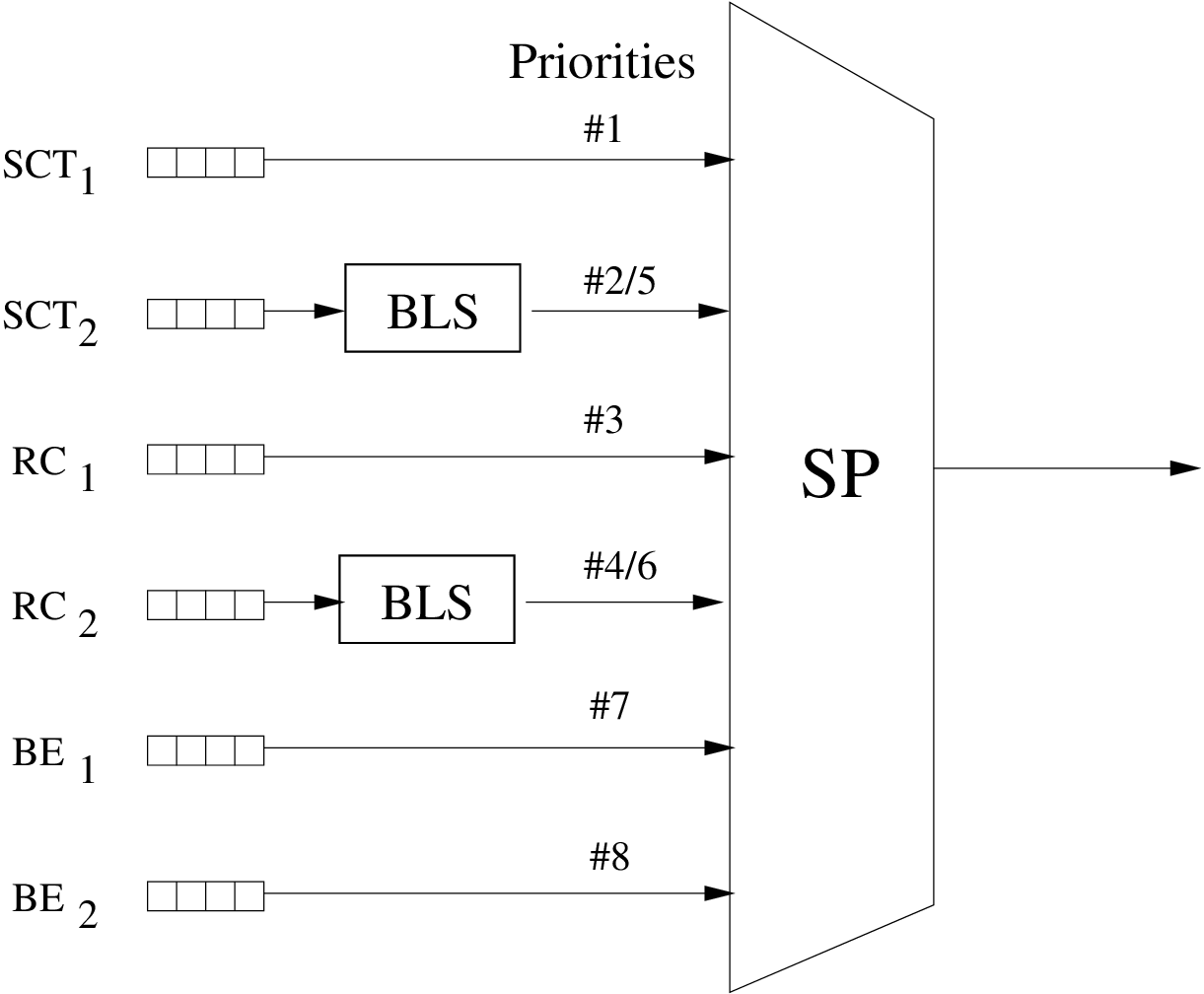}
%\label{fig:SCTSCT}
% \end{minipage}
%\begin{minipage}[b]{0.4\linewidth}
\caption{Example of an output port of an extended AFDX switch}
\label{fig:outputport6}
\end{figure}

\subsection{Traffic Model}	

To compute upper bounds on end-to-end delays of different traffic classes using Network Calculus, we need to model each message flow to compute its maximum arrival curve.

The arrival curve of each flow $f$ in class $k$ at the input of the node $n \in \{es, sw\}$ or a component $n \in \{bls, sp\}$ along its path is a leaky-bucket curve with a burst $b_{k,f}^{n}$ and a rate $ r_{k,f}^{n}$: $$\alpha_{k,f}^{n}(t) = b_{k,f}^{n} + r_{k,f}^{n} \cdot t$$

Therefore, the arrival curve of the aggregate traffic in class $k$ at the input (resp. output) of the node $n \in \{es, sw\}$ or a component $n \in \{bls, sp\}$ is: $\alpha_{k}^{n}(t) = \sum\limits_{f \in k}\alpha_{k,f}^{n}(t)$ (resp. $\alpha_{k}^{*,n}(t) = \sum\limits_{f \in k}\alpha_{k,f}^{*,n}(t)$ based on Theorem~\ref{PerformanceBounds}).

Each traffic flow $f$ of class $k$, generated by an end-system, is characterized by $\left( BAG_{f}, MFS_{f}, J_{f}\right)$ for respectively the minimum inter-arrival time, the maximum frame size integrating the protocol overhead, and the jitter.

Hence, the arrival curve of traffic class $k$ in the end-system $es$, based on a leaky bucket model, is as follows:
\begin{eqnarray}
\alpha_{k}^{es}(t) &=& \sum\limits_{f \in k}\alpha_{k,f}^{es}(t)= \sum\limits_{f \in k}MFS_{f}+\frac{MFS_{f}}{BAG_{f}}\left( t+J_{f}\right)\nonumber\\
&=&  b_k+r_kt \text{ with }
\begin{cases}
b_k=\sum\limits_{f \in k}MFS_{f}+\frac{MFS_{f} }{BAG_{f}}J_{f}\\
r_k=\sum\limits_{f \in k}\frac{MFS_{f} }{BAG_{f}}
\end{cases}\nonumber
\end{eqnarray}

\section{Worst-case Timing Analysis}
\label{WFA}

In this section, we present the worst-case Timing Analysis done using the Network Calculus framework. First, we present the schedulability condition to identify the needed modelisations. Then, we detail the BLS node modelisation, followed by the output port multiplexer modelisation. Finally, we present the computation of the end-to-end delay bounds and we discuss the nature of the BLS as a shaper.

\subsection{Schedulability Conditions}

\label{scheduTest}
To infer the real-time guarantees of our proposed solution, we need first to define a \textbf{necessary schedulability condition}. This consists in respecting the stability condition within the network, where the sum of maximum arrival rates of the input traffic flows $i$ %, denoted $r_i$, 
at any crossed node $n$ has to be lower than its minimum guaranteed service rate within the node $n$
%, denoted $R_n$
. This constraint is denoted as \textbf{\textit{rate constraint}}:
$$\forall \text{node }n\in\text{ network, } \sum_{\forall i\ni n}r_i\leqslant R_n$$

Then, we define a \textbf{sufficient schedulability condition} to infer the traffic schedulability, which consists in comparing the upper bound on end-to-end delay of each traffic flow $f$ of a class $k$ to its deadline, denoted $Deadline_f^{end2end}$. This constraint is called \textbf{\textit{deadline constraint}}:	
$$\forall \text{class }k, \forall \text{flow }f\in k,delay_{k,f}^{end2end}\leqslant Deadline_{k,f}^{end2end}$$

For this sufficient schedulability condition, we detail the end-to-end delay expression of a flow $f$ in the class $k$, $delay_{k,f}^{end2end}$, along its path $path_f$ as follows:
\begin{equation}\label{Dend-2-end}
delay_{k,f}^{end2end}= delay_{k,f}^{es}+ delay_{k,f}^{prop}+\sum\limits_{sw\in path_f}delay_{k,f}^{sw}
\end{equation}

with $delay_{k,f}^{es}$ the delay within the source end-system $es$ to transmit the flow $f$ of class $k$ and $delay_{k,f}^{prop}$ the propagation delay along the path, which is generally negligible in an avionics network. 

The last delay $delay_{k,f}^{sw}$ represents the upper bound of \textbf{the delay within  each intermediate switch} along the flow path, and it consists of several parts as shown in Fig. \ref{fig:sw_archifgene}: 
%\begin{figure}[h]
%	\centering
%	\includegraphics[width=0.7\textwidth]{sw_archigene}
%	\footnotesize \caption{An extended AFDX switch architecture with 3 classes}
%	\label{fig:sw_archifgene2}
%\end{figure}	

\begin{itemize}
	\item \textbf{the store and forward delay} at the input port, equal to $\frac{MFS}{C}$, with $MFS$ the length of the frame and $C$ the capacity;
	\item \textbf{the technological latency} due to the forwarding process, upper-bounded by 1$\mu$s in the pre-specification of the AFDX next generation; 		
	\item \textbf{the output port multiplexer delay} due to the BLS and NP-SP scheduler, denoted $delay_{k,f}^{mux}$.
\end{itemize}

 Hence, the only unknown is the delays in the output port multiplexer of the switch. To enable the computation of upper bounds on these delays, we need to model the different parts of the multiplexer, and more particularly the BLS node.
 
\subsection{BLS node model: generalized Continuous Credit-based Approach (gCCbA)}
\label{SC-NC}

In this section, we describe our proposed BLS node model.%Each window and its utilisation are described in detail in this section.
%We start by presenting the strict minimum and maximum service curves, then we finish with the maximum arrival curve. %Additionally, thanks to Section \ref{SectionfluidInstance}, we know that the minimum and maximum output curves are best rarrivalepresented by linear curves. Hence, we will search linear service curves.

We detail here the computation of BLS service curves offered to a BLS class $k$. The main idea is to compute the consumed and gained credits. Knowing that the credit is continuous and always between 0 and $L_M^k$, we use the sum of the consumed and gained credits to compute the minimum and maximum service curves of the BLS node. The main difficulty consists in computing the traffic sent during saturation times, i.e., when the credit is neither gained nor consumed due to the minimum and maximum levels, 0 and $L_M^k$, respectively.

The strict minimum and maximum service curves offered to a BLS class k by a BLS node are defined in Theorem~\ref{Th:k-minGene} and in Theorem~\ref{Th:k-maxGene}, respectively.

\begin{theorem}[Strict Minimum Service Curve offered to a BLS class k by a BLS node] 
	\label{Th:k-minGene}
	Consider a server with a constant rate $C$, implementing BLS shapers. The traffic of class $k$ crosses this server and is shaped by the BLS. Class $k$ has a high priority denoted $p_H(k)$ and a low priority denoted $p_L(k)$ (with $p_L(k)>p_H(k)$).
	We call $HC(k)$ the traffic classes with a priority strictly higher than $p_H(k)$ and MC(k) the classes with a priority between $p_L(k)$ and $p_H(k)$.	 
	The strict minimum service curve guaranteed to the BLS class $k$ is as follows:
	
	\begin{equation*}
	\beta_{k}^{bls}(t) = \left( C-\sum\limits_{h\in HC(k)} r_h -\frac{MFS_{k}^{sat} }{\Delta^{k,\beta}_{inter}}\right) \cdot \frac{I_{idle}^k}{C} \cdot \left( t-\Delta_{idle}^{k,\beta}\right)^+ 
	\end{equation*}		
	
	where 
	\begin{eqnarray}
	MFS_{k}^{sat}&=&\max(\max_{j\in MC(k)}MFS_{j}-\frac{C}{I_{idle}^k}\cdot L_R^k,0)\nonumber\\	
	\Delta^{k,\beta}_{inter}&=&\frac{L_M^k-L_R^{k,min}}{I_{send}^k}+\frac{L_M^k-L_R^k}{I_{idle}^k} +\frac{\max_{j\in MC(k)}MFS_{j}}{C}\nonumber\\		
	L_R^{k,min}&=&\max\left( L_R^k-\frac{\max_{j\in MC(k)}MFS_{j}}{C}\cdot I_{idle}^k,0\right) \nonumber\\	
	\Delta_{idle}^{k,\beta} &=& \frac{L_M^k-L_R^k}{I_{idle}^k} +\frac{\max_{j\in MC(k)}MFS_{j}}{C}\nonumber
	\end{eqnarray}
	
\end{theorem}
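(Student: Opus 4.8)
The plan is to work straight from the credit dynamics and the strict-service-curve Definition~\ref{def:strict-min-service-curve}: I fix an arbitrary class-$k$ backlogged period $]s,t]$ and lower-bound the output $R_k^*(t)-R_k^*(s)$ by exhibiting the least favourable credit trajectory. The one structural fact I would lean on throughout is that, by the BLS rules together with Eq.(\ref{iidle})--(\ref{isend}), the credit is continuous, confined to $[0,L_M^k]$, and rises at slope $I_{send}^k$ exactly while a class-$k$ frame is in service (link rate $C$) and falls at slope $I_{idle}^k$ otherwise. Consequently the class-$k$ service time $\tau_s$ and the credit-losing time $\tau_d$ obey the balance $I_{send}^k\,\tau_s-I_{idle}^k\,\tau_d=\mathrm{credit}(t)-\mathrm{credit}(s)$, which is the engine converting a bound on the credit excursion into a bound on the transmitted volume.

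First I would pin down the latency. The worst case is that class $k$ becomes backlogged just as its priority has toggled to $p_L(k)$, i.e. with the credit at $L_M^k$. While the priority stays low, every class in $HC(k)\cup MC(k)$ out-ranks class $k$ under NP-SP (Corollary~\ref{cor:residual-service-curve}), so class $k$ gets nothing and the credit falls at $I_{idle}^k$ until it hits $L_R^k$, taking $\tfrac{L_M^k-L_R^k}{I_{idle}^k}$. At that switch a middle-priority frame may just have started; by non-preemption class $k$ is delayed by at most $\tfrac{\max_{j\in MC(k)}MFS_j}{C}$, during which the credit keeps dropping to the clamped level $L_R^{k,min}$. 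Summing yields the service-free prefix $\Delta_{idle}^{k,\beta}$, which is exactly the latency of the announced curve.

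Next I would read off the rate from one full credit cycle of length $\Delta_{inter}^{k,\beta}$. Starting from the fluid BLS behaviour—where the balance above forces a backlogged class-$k$ flow to be served at the reserved rate $I_{idle}^k=BW^k\!\cdot\!C$, and the cycle splits into the replenishment part $\tfrac{L_M^k-L_R^{k,min}}{I_{send}^k}$ plus the latency part $\Delta_{idle}^{k,\beta}$—I would then charge the two effects a genuine multi-class setting adds. (i) The classes of $HC(k)$ pre-empt class $k$ even in its high-priority phase; bounding their volume by the aggregate rate $\sum_{h\in HC(k)}r_h$ withdraws that share of the capacity feeding the replenishment, equivalently running the fluid shaper on an effective link of rate $C-\sum_h r_h$ and thus scaling the fluid rate by $(C-\sum_h r_h)/C$. (ii) When $L_R^k$ is too small to absorb the non-preemptive middle-priority frame the credit saturates at $0$ and an extra volume $MFS_k^{sat}=\max(\max_{MC}MFS_j-\tfrac{C}{I_{idle}^k}L_R^k,0)$ is pushed through each cycle; amortising it over $\Delta_{inter}^{k,\beta}$ removes a further $MFS_k^{sat}/\Delta_{inter}^{k,\beta}$ from the effective capacity. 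Multiplying the effective capacity $\big(C-\sum_h r_h-\tfrac{MFS_k^{sat}}{\Delta_{inter}^{k,\beta}}\big)$ by $\tfrac{I_{idle}^k}{C}$ recovers the rate of the stated rate-latency curve.

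The main obstacle I anticipate is the saturation at credit $0$, the single place where the clean balance breaks: once the credit is clamped, $I_{send}^k\tau_s=I_{idle}^k\tau_d$ fails and the surplus middle-priority volume must be tracked by hand, which is the very origin of the $\max(\cdot,0)$ guards in $L_R^{k,min}$ and $MFS_k^{sat}$ and of the choice to amortise over $\Delta_{inter}^{k,\beta}$ rather than over a fixed window. The secondary difficulty is upgrading this per-cycle accounting into a genuinely \emph{strict} bound valid on every backlogged $]s,t]$: one must argue that cutting $]s,t]$ in mid-cycle can only discard service the envelope has not yet promised, so that the alignment starting at $L_M^k$ analysed above is indeed worst and the rate-latency curve stays below the output on all intervals at once.
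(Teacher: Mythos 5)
Your plan reproduces the paper's own argument essentially step for step: the same worst-case latency scenario (credit at $L_M^k$ draining to $L_R^k$ plus one non-preempted $MC(k)$ frame, giving $\Delta_{idle}^{k,\beta}$), the same credit-continuity/balance engine on $[0,L_M^k]$, and the same two corrections to the fluid rate $I_{idle}^k$ --- subtracting the $HC(k)$ aggregate rate and amortising the saturated volume $MFS_k^{sat}$ over the minimum inter-saturation period $\Delta_{inter}^{k,\beta}$ --- which is exactly the content of the paper's Lemmas on credit saturation at $0$ and at $L_M^k$. The only cosmetic difference is that the paper extracts the rate by taking $\lim_{\delta\to\infty}\Delta R_k^*(\delta)/\delta$ in the balance inequality rather than by per-cycle accounting, and it leaves the ``strict on every backlogged interval'' upgrade you rightly flag just as implicit as you do.
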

\begin{proof} 
	\label{sketch1}
	We present here only a sketch of proof, the complete proof is available in Appendix~\ref{proofGenemin}.  We search a strict minimum service curve defined by a rate-latency curve, i.e.,  $\beta_{k}^{bls}(t)=\rho\cdot (t-\tau)^+$ with rate $\rho$ and latency $\tau$.  

	\begin{figure}[h]
		\centering
		\includegraphics[width=0.5\linewidth]{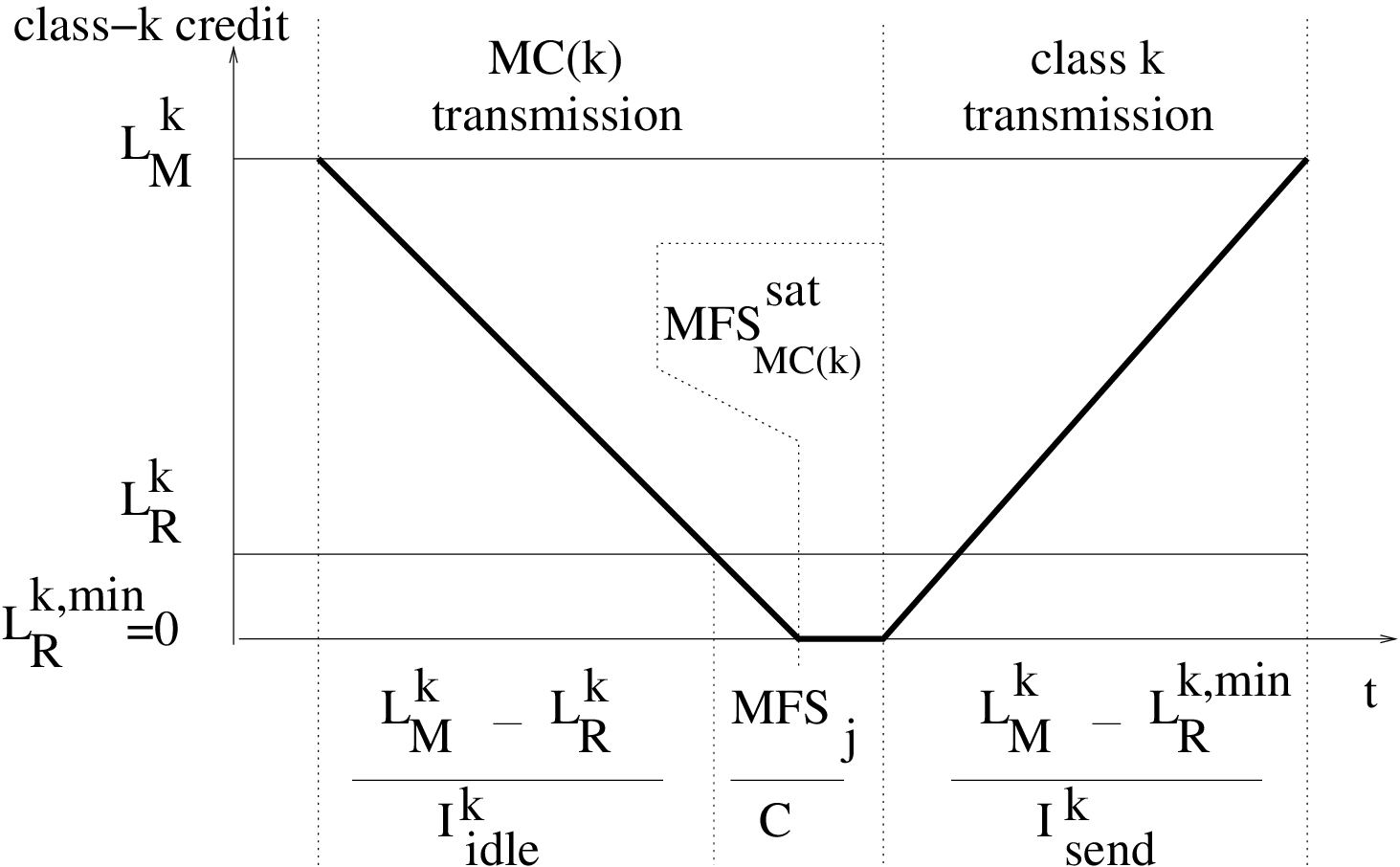}		
		\caption{Computing $\beta_{k}^{bls}(t)$}		
		\label{fig:minComputesketch}
	\end{figure}
	
	First, to compute $\tau$, we consider the maximum latency caused by the BLS.

	Secondly, to compute $\rho$, we consider the fact the credit is a continuous function with values between 0 and $L_M^k$. Consequently, the sum of the gained and consumed credits is upper bounded by $L_M^k$, and the credit can saturate at 0 or $L_M^k$. Thus, the credit consumed during a period $\delta$ is not simply the product of the credit increasing rate (denoted $I_{send}^k$)  and the transmission time of the output traffic (denoted $\Delta R^*(\delta)$). It is actually the product of $I_{send}^k$ and the  output traffic transmitted while the credit is not saturated. The same is true for the gained credit. Hence, the main issue of the proof is the computation of these saturation times. In particular, we compute the maximum saturation for the MC(k) and HC(k) classes, and the minimum saturation for the class $k$, as illustrated in Fig. \ref{fig:minComputesketch}. After this, we use the definition of $\beta(t)$ and the limit of $\frac{\Delta R^*(\delta)}{\delta}$ toward infinity to compute $\rho$.

\end{proof}

\begin{Corollary}[Strict Minimum Service Curve offered to SCT by a BLS node] 
	\label{cor:k-minCbA}
	Consider a server with a constant rate $C$, implementing a BLS shaping the SCT traffic. The SCT, RC and BE traffics cross this server with the following priorities: $p(SCT)\in\{p_H(SCT)=0,p_L(SCT)=2\}$, $p(RC)=1$, $p(BE)=3$, as illustrated in Fig. \ref{fig:sw_archifgene}. 
	
	The strict minimum service curve guaranteed to SCT class is as follows:
	
	\begin{equation*}
	\beta_{SCT}^{bls}(t) = \left( C-\frac{MFS_{SCT}^{sat} }{\Delta^{\beta}_{inter}}\right) \cdot \frac{I_{idle}}{C} \cdot \left( t-\Delta_{idle}^{\beta}\right)^+ 
	\end{equation*}		
	
	where 
	\begin{eqnarray}
	MFS_{SCT}^{sat}&=&\max(\max_{j\in RC}MFS_{j}-\frac{C}{I_{idle}}\cdot L_R,0)\nonumber\\	
	\Delta^{\beta}_{inter}&=&\frac{L_M-L_R^{min}}{I_{send}}+\frac{L_M-L_R}{I_{idle}} +\frac{\max_{j\in RC}MFS_{j}}{C}\nonumber\\		
	L_R^{min}&=&\max\left( L_R-\frac{\max_{j\in RC}MFS_{j}}{C}\cdot I_{idle},0\right) \nonumber\\	
	\Delta_{idle}^{\beta} &=& \frac{L_M-L_R}{I_{idle}} +\frac{\max_{j\in RC}MFS_{j}}{C}\nonumber
	\end{eqnarray}
	
\end{Corollary}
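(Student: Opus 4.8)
The plan is to obtain this corollary as a direct specialization of Theorem~\ref{Th:k-minGene} to the concrete three-class configuration, taking $k = SCT$. The only substantive work is to correctly identify the priority sets $HC(SCT)$, $MC(SCT)$ and $LC(SCT)$ induced by the given priority assignment, and then to substitute them into the general formula while dropping the class superscripts; since there is a single shaped class, I use the notation $L_M, L_R, I_{idle}, I_{send}$ of Table~\ref{notations1}.

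First I would determine the class memberships from the assignment $p(SCT) \in \{p_H(SCT)=0, p_L(SCT)=2\}$, $p(RC)=1$, $p(BE)=3$. Since $p_H(SCT)=0$ is the highest possible priority, no class can have a strictly higher one, so $HC(SCT) = \emptyset$. The class RC has $p(RC)=1$, which satisfies $p_H(SCT)=0 < 1 < 2 = p_L(SCT)$, so RC lies strictly between the two BLS priorities and therefore $MC(SCT) = \{RC\}$. Finally $p(BE)=3 > p_L(SCT)=2$, so $BE \in LC(SCT)$; because $LC$ does not appear in the strict minimum service curve expression of Theorem~\ref{Th:k-minGene}, BE plays no role here.

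Next I would perform the substitution. Because $HC(SCT)=\emptyset$, the higher-priority interference term $\sum_{h\in HC(k)} r_h$ vanishes. Because $MC(SCT)=\{RC\}$, every occurrence of $\max_{j\in MC(k)} MFS_{j}$ in the general statement is replaced by $\max_{j\in RC} MFS_{j}$. Carrying these two replacements through the four auxiliary quantities $MFS_{k}^{sat}$, $\Delta^{k,\beta}_{inter}$, $L_R^{k,min}$ and $\Delta_{idle}^{k,\beta}$, as well as through the rate-latency expression for $\beta_{k}^{bls}(t)$, yields exactly the stated formulas for $MFS_{SCT}^{sat}$, $\Delta^{\beta}_{inter}$, $L_R^{min}$, $\Delta_{idle}^{\beta}$ and $\beta_{SCT}^{bls}(t)$.

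Since the argument is a mechanical instantiation of an already-proved theorem, there is no genuine analytical obstacle. The one point requiring care is the verification of the set memberships, in particular confirming that $HC(SCT)$ is \emph{empty} (so that the general formula's higher-priority rate term disappears) and that it is RC, rather than BE, that is captured by $MC(SCT)$. Once these memberships are settled, the corollary follows immediately from Theorem~\ref{Th:k-minGene}.
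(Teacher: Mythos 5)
Your proposal is correct and follows exactly the paper's own proof, which likewise obtains the corollary by instantiating Theorem~\ref{Th:k-minGene} with $k=SCT$, $HC(SCT)=\emptyset$ and $MC(SCT)=\{RC\}$. Your identification of the priority sets and the resulting substitutions are all accurate; you simply spell out the mechanical details that the paper leaves implicit.
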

\begin{proof}
	We apply Theorem~\ref{Th:k-minGene} in the particular 3-classes case study presented in Fig. \ref{fig:BLSshaper}, with SCT as class $k$.; thus $HC(k)=\emptyset$ and MC(k)=RC.
\end{proof}

%	\subsection{Maximum Service Curve}
\begin{theorem}[Maximum Service Curve offered to a BLS class k by a BLS node] 
	\label{Th:k-maxGene}
	Consider a server with a constant rate $C$, implementing BLS shapers. The traffic of class $k$ crosses this server and is shaped by the BLS. Class $k$ has a high priority denoted $p_H(k)$ and a low priority denoted $p_L(k)$ (with $p_L(k)>p_H(k)$).
	We call MC(k) the classes with a priority between $p_L(k)$ and $p_H(k)$. 
	
	The maximum service curve offered to the class k traffic is as follows. In the absence of backlogged MC(k) traffic: $\gamma_{k}^{bls} (t) = C\cdot t$;	
	otherwise, during a backlogged period of MC(k):
	\begin{equation*}
	\gamma_{k}^{bls}(t) = \frac{\Delta^{k,\gamma}_{send}}{\Delta^{k,\gamma}_{inter}}\cdot C \cdot t + b^{max}_{k} \cdot\frac{\Delta^{k,\gamma}_{idle}}{\Delta^{k,\gamma}_{inter}}
	\end{equation*}
	
	%============================================
	
	where 
	$$b^{max}_{k}=\frac{C}{I_{send}^k}\cdot L_M^k+MFS_{k}$$
	$$\Delta^{k,\gamma}_{send}= \frac{MFS_{k}}{C}+\frac{L_M^k-L_R^k}{I_{send}^k}$$		
	$$\Delta^{k,\gamma}_{idle}=\frac{L_M^k-L_R^k}{I_{idle}^k}$$		
	and 
	$$\Delta^{k,\gamma}_{inter}=\Delta^{k,\gamma}_{send}+\Delta^{k,\gamma}_{idle}$$		
	
	%============================================
\end{theorem}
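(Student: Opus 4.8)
The plan is to establish an affine maximum service curve of the claimed rate-plus-burst form $\gamma_k^{bls}(t)=\rho\, t+\sigma$ by bounding the cumulative class-$k$ output increment $\Delta R^*_k(\delta)=R^*_k(s+\delta)-R^*_k(s)$ over an arbitrary window, and then invoking the characterization $R^*\leq R\otimes\gamma$ of Definition~\ref{def:max-service-curve}. I would first dispatch the trivial regime: whenever MC(k) is not backlogged nothing forces class $k$ to yield the link, so the only binding constraint is the physical link rate and $\gamma_{k}^{bls}(t)=C\cdot t$ is immediate. For the non-trivial regime I would fix a maximal backlogged period of MC(k) and, exactly as in the proof of Theorem~\ref{Th:k-minGene}, exploit that the credit is continuous and confined to $[0,L_M^k]$. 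Note that since higher-priority traffic can only reduce the class-$k$ output, the output-maximizing scenario has $HC(k)$ empty, which is precisely why only $MC(k)$ enters the bound.

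Next I would identify the credit trajectory that maximizes class-$k$ output during this backlogged period, namely a periodic send/idle pattern. In a send window class $k$ holds the high priority and transmits, driving the credit from $L_R^k$ up to $L_M^k$ at rate $I_{send}^k$ and then, because the credit freezes at $L_M^k$ until the in-flight frame completes, squeezing in one last frame of size $MFS_k$; this fixes the send duration $\Delta^{k,\gamma}_{send}=\frac{MFS_k}{C}+\frac{L_M^k-L_R^k}{I_{send}^k}$ and a per-window output $C\cdot\Delta^{k,\gamma}_{send}$. In the following idle window class $k$ drops to the low priority and is preempted by the backlogged MC(k), so the credit falls from $L_M^k$ back to $L_R^k$ at rate $I_{idle}^k$, giving $\Delta^{k,\gamma}_{idle}=\frac{L_M^k-L_R^k}{I_{idle}^k}$ with no class-$k$ output. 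Over one period $\Delta^{k,\gamma}_{inter}=\Delta^{k,\gamma}_{send}+\Delta^{k,\gamma}_{idle}$ the asymptotic maximum rate is therefore $\rho=\lim_{\delta\to\infty}\frac{\Delta R^*_k(\delta)}{\delta}=\frac{\Delta^{k,\gamma}_{send}}{\Delta^{k,\gamma}_{inter}}\cdot C$, matching the stated slope.

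It then remains to compute the burst term $\sigma$. The key observation is that the credit may start as low as $0$ (class $k$ having been idle long enough to drain it), so the very first send window spans the full range $L_M^k$ rather than $L_M^k-L_R^k$ and yields the maximal burst $b^{max}_{k}=\frac{C}{I_{send}^k}\,L_M^k+MFS_k$. I would then take $\sigma$ to be the tightest offset such that $\rho\,\delta+\sigma$ lies above the resulting output staircase for every window length; a short computation shows the supremum of $\Delta R^*_k(\delta)-\rho\,\delta$ is attained at the end of each send window, is identical for the initial window and every subsequent one, and equals $b^{max}_{k}\cdot\frac{\Delta^{k,\gamma}_{idle}}{\Delta^{k,\gamma}_{inter}}$, which is exactly the claimed burst. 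Combining $\rho$ and $\sigma$ gives $\Delta R^*_k(\delta)\leq\gamma_{k}^{bls}(\delta)$ for every window, hence $R^*_k\leq R_k\otimes\gamma_{k}^{bls}$.

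The main obstacle I anticipate is the rigorous bookkeeping of the credit saturation at $L_M^k$: one must justify that the frozen-credit final frame genuinely contributes the $MFS_k$ terms to both $\Delta^{k,\gamma}_{send}$ and $b^{max}_{k}$, and, more delicately, that the transient opened by starting from credit $0$ produces a uniform offset across all send-window endpoints so that no window of any length forces a larger $\sigma$. Establishing this uniformity — equivalently, that the initial maximal burst does not accumulate with the later cycles — is what pins the burst down at $b^{max}_{k}\,\Delta^{k,\gamma}_{idle}/\Delta^{k,\gamma}_{inter}$ rather than at some larger value.
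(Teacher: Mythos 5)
Your computations reproduce the stated rate and burst, and your burst argument (largest burst $b^{max}_k$ obtained by starting from credit $0$, then fitting the affine curve so that it dominates the output at the end of every send window) is essentially the one the paper uses in Appendix~\ref{proofGenemax}. The gap is in the rate. You obtain $\rho=\frac{\Delta^{k,\gamma}_{send}}{\Delta^{k,\gamma}_{inter}}\cdot C$ by \emph{positing} that the output-maximizing behaviour is the periodic trajectory ``send from $L_R^k$ to $L_M^k$ plus one frozen frame, then idle from $L_M^k$ to $L_R^k$,'' and evaluating the output of that one scenario. A maximum service curve must bound $\Delta R^*_k(\delta)$ over \emph{every} admissible credit trajectory and every window, so the extremality of your scenario is precisely what has to be proved, not assumed; your own closing paragraph names this obstacle but does not resolve it. The delicate point is that the excess of $\rho$ over the fluid rate $I_{idle}^k$ comes entirely from the non-preempted frames sent while the credit is frozen at $L_M^k$, so one must show that \emph{no} trajectory can realize such saturations more often than once per $\Delta^{k,\gamma}_{inter}$ — e.g.\ rule out irregular patterns in which the credit repeatedly grazes $L_M^k$ without descending all the way to $L_R^k$ in between.

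The paper avoids selecting a worst-case trajectory altogether. It writes, for an arbitrary interval $\delta$, the exact credit balance (Lemma~\ref{minmaxsum}): the consumed credit is $I_{send}^k$ times the transmission time of the \emph{non-saturating} output, the gained credit is $-I_{idle}^k$ times the non-saturating idle time, and continuity plus confinement to $[0,L_M^k]$ bounds their sum by $\pm L_M^k$. Rearranging gives an upper bound on $\Delta R^*_k(\delta)$ in terms of the saturation quantities $\Delta R^{*}_{L_M^k,sat}(\delta)$ and $\Delta R^{*}_{0,sat}(\delta)$, and Lemma~\ref{lemmasatlm} then bounds $\Delta R^{*}_{L_M^k,sat}(\delta)\leqslant MFS_k\cdot(\delta/\Delta^{k,\gamma}_{inter}+1)$ by a minimum-recurrence-time argument valid for all trajectories (the credit must travel $L_M^k\to L_R^k\to L_M^k$ between two consecutive frozen frames). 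Passing to the limit $\delta\to\infty$ yields $r=I_{idle}^k+\frac{MFS_k}{\Delta^{k,\gamma}_{inter}}\cdot\frac{I_{send}^k}{C}$, which algebra shows equals your $\rho$. To repair your proof you would need to import this saturation-counting lemma (or an equivalent), at which point your periodic scenario serves only as the tightness witness, not as the proof.
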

\begin{proof} 
	
	\label{sketch2}
	We present here only a sketch of proof, the complete proof is available in Appendix~\ref{proofGenemax}. We search a maximum service curve defined by a leaky-bucket curve, i.e.,  $\gamma_{k}^{bls}(t)=r\cdot t+b$ with rate $r$ and burst $b$.   First, for $r$ we use the fact that the sum of the gained and consumed credits is lower bounded by $-L_M^k$. Then, we calculate the bounds of saturation times during a period $\delta$. In particular, we calculate the minimum and maximum saturations, as illustrated in Fig. \ref{fig:maxComputesketch}. Finally, we use the definition of $\gamma(t)$ and the limit toward infinity of $\frac{\Delta R^*(\delta)}{\delta}$ to compute $r$.
	
	\begin{figure}[h]
		\centering
		\includegraphics[width=0.6\linewidth]{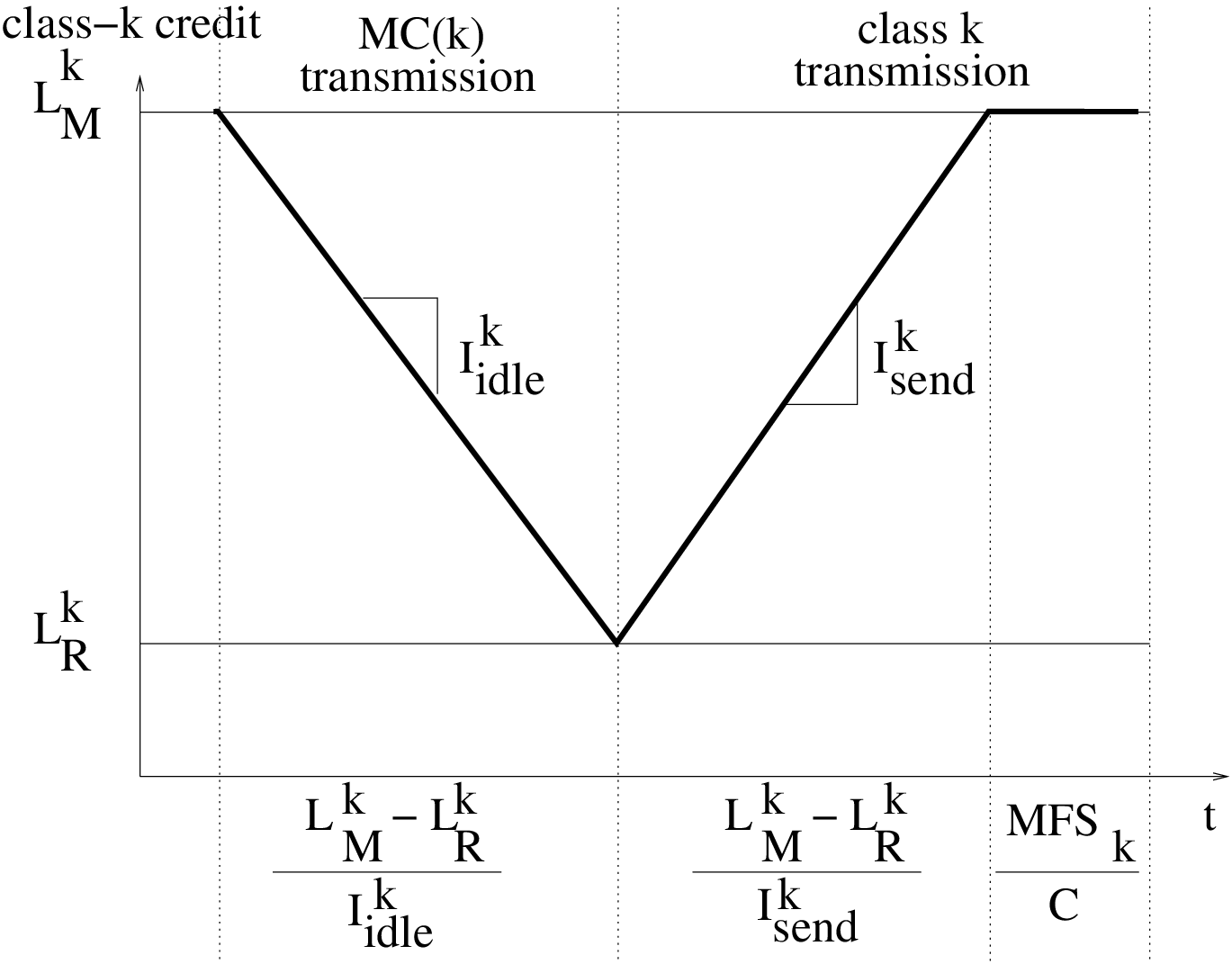}		
		\caption{Computing $\gamma_{k}^{bls}(t)$}		
		\label{fig:maxComputesketch}
	\end{figure}	
	
\end{proof}
\begin{Corollary}[Maximum Service Curve offered to SCT by a BLS node] 
	\label{cor:k-maxCCbA}
	Consider a server with a constant rate $C$, implementing a BLS shaping the SCT traffic. The SCT, RC and BE traffics cross this server with the following priorities: $p(SCT)\in\{p_H(SCT)=0,p_L(SCT)=2\}$, $p(RC)=1$, $p(BE)=3$, as illustrated in Fig. \ref{fig:sw_archifgene}. 
	
	The maximum service curve offered to the SCT traffic class is as follows. In the absence of backlogged RC traffic: $\gamma_{SCT}^{bls} (t) = C\cdot t$;	
	otherwise, during a backlogged period of RC:
	\begin{equation*}
	\gamma_{SCT}^{bls}(t) = \frac{\Delta^{\gamma}_{send}}{\Delta^{\gamma}_{inter}}\cdot C \cdot t + b^{max}_{SCT} \cdot\frac{\Delta^{\gamma}_{idle}}{\Delta^{\gamma}_{inter}}
	\end{equation*}
	
	%============================================
	
	where 
	$$b^{max}_{SCT}=\frac{C}{I_{send}}\cdot L_M+MFS_{SCT}$$
	$$\Delta^{\gamma}_{send}= \frac{MFS_{SCT}}{C}+\frac{L_M-L_R}{I_{send}}$$		
	$$\Delta^{\gamma}_{idle}=\frac{L_M-L_R}{I_{idle}}$$		
	and 
	$$\Delta^{\gamma}_{inter}=\Delta^{\gamma}_{send}+\Delta^{\gamma}_{idle}$$		
	
	%============================================
\end{Corollary}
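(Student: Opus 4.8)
The plan is to derive this corollary as a direct instantiation of Theorem~\ref{Th:k-maxGene}, taking SCT as the shaped class $k$ in the three-classes configuration of Fig.~\ref{fig:sw_archifgene}, exactly mirroring the way Corollary~\ref{cor:k-minCbA} was obtained from Theorem~\ref{Th:k-minGene}. The essential observation is that the maximum service curve of Theorem~\ref{Th:k-maxGene} depends on the surrounding classes \emph{only} through $MC(k)$: neither $HC(k)$ nor $LC(k)$ appears in $b^{max}_k$, $\Delta^{k,\gamma}_{send}$, $\Delta^{k,\gamma}_{idle}$ or $\Delta^{k,\gamma}_{inter}$. Hence the whole argument reduces to correctly identifying $MC(\text{SCT})$ and then rewriting the four defining quantities.

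First I would fix the class membership from the priority assignment $p_H(\text{SCT})=0$, $p_L(\text{SCT})=2$, $p(\text{RC})=1$ and $p(\text{BE})=3$. Since $MC(k)$ collects the classes whose priority lies strictly between $p_L(k)$ and $p_H(k)$, and RC has priority $1$ with $0<1<2$, we obtain $MC(\text{SCT})=\{\text{RC}\}$. By contrast, BE has priority $3>p_L(\text{SCT})=2$ and therefore belongs to $LC(\text{SCT})$, while no class has priority strictly higher than $0$, so $HC(\text{SCT})=\emptyset$; neither of these two sets enters the maximum service curve, so they play no role beyond confirming that RC is the sole member of $MC(\text{SCT})$.

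Second, I would specialize the notation to the single shaped class as recorded in the notation table, replacing $L_M^k\to L_M$, $L_R^k\to L_R$, $I_{send}^k\to I_{send}$, $I_{idle}^k\to I_{idle}$ and $MFS_k\to MFS_{\text{SCT}}$. Substituting these together with $MC(\text{SCT})=\{\text{RC}\}$ into the four quantities of Theorem~\ref{Th:k-maxGene} reproduces verbatim the stated $b^{max}_{\text{SCT}}$, $\Delta^{\gamma}_{send}$, $\Delta^{\gamma}_{idle}$ and $\Delta^{\gamma}_{inter}$, and hence the claimed $\gamma_{\text{SCT}}^{bls}$ in both regimes: the free case (no backlogged RC, giving $C\cdot t$) and the case of a backlogged RC period.

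I do not anticipate a genuine obstacle, as this is a mechanical specialization rather than a new argument. The single point worth a sanity check is that, unlike the minimum service curve of Theorem~\ref{Th:k-minGene}, the maximum service curve carries \emph{no} $\max_{j\in MC(k)}MFS_j$ term; consequently the substitution is even cleaner than in the minimum-service case, and verifying that $MC(\text{SCT})=\{\text{RC}\}$ is the only modelling input actually consumed by Theorem~\ref{Th:k-maxGene} is what completes the proof.
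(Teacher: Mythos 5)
Your proposal is correct and matches the paper's own proof, which likewise obtains Corollary~\ref{cor:k-maxCCbA} by instantiating Theorem~\ref{Th:k-maxGene} with SCT as class $k$, $HC(k)=\emptyset$ and $MC(k)=\{\text{RC}\}$, then dropping the class superscripts for the single shaped class. Your additional observation that the maximum service curve consumes only the identification of $MC(k)$ (and carries no $\max_{j\in MC(k)}MFS_j$ term) is accurate and only makes the specialization more transparent.
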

\begin{proof}
	We apply Theorem~\ref{Th:k-maxGene} in the particular 3-classes case study presented in Fig. \ref{fig:BLSshaper}, with SCT as class $k$; thus $HC(k)=\emptyset$ and MC(k)=RC.
\end{proof}

%\textbf{Maximum Output Arrival curve}

The maximum output arrival curve of a BLS class $k$ is detailed in the following Corollary:

\begin{Corollary} [Maximum Output Arrival Curve of a BLS class]
	\label{cor:SCT-max-arrival}
	Consider a BLS class $k$ with a maximum leaky-bucket arrival curve $\alpha$ at the input of a BLS shaper, guaranteeing a minimum rate-latency service curve $\beta_{k}^{bls}$ and a maximum service curve $\gamma_{k}^{bls}$. The maximum output arrival curve is:
	\begin{equation*}
	\alpha^{*,bls}_{k} (t)=\min(\gamma_{k}^{bls}(t),\alpha \oslash \beta_{k}^{bls}(t) )
	\end{equation*}
\end{Corollary}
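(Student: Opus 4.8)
The plan is to show that each of the two functions appearing inside the minimum is, on its own, a valid arrival curve for the output flow $R^*_k$ of class $k$, and then to observe that the pointwise minimum of two arrival curves of the \emph{same} flow is again an arrival curve. So I would split the argument into ``min-service contribution'', ``max-service contribution'', and ``combination''.

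First I would dispose of the term $\alpha \oslash \beta_{k}^{bls}$. This is immediate from Theorem~\ref{PerformanceBounds}: a flow constrained by the input arrival curve $\alpha$ that crosses a system offering the (strict) minimum service curve $\beta_{k}^{bls}$ of Theorem~\ref{Th:k-minGene} admits $\alpha \oslash \beta_{k}^{bls}$ as an output arrival curve. Hence $R^*_k(t)-R^*_k(s)\le (\alpha \oslash \beta_{k}^{bls})(t-s)$ for all $0\le s\le t$, and this term needs no further work beyond instantiating the performance bounds with $\beta=\beta_{k}^{bls}$.

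Second, and this is where the real content lies, I would argue that $\gamma_{k}^{bls}$ is itself an arrival curve for $R^*_k$. The delicate point is that the abstract defining inequality of Definition~\ref{def:max-service-curve}, namely $R^*\le R\otimes\gamma_{k}^{bls}$, does not by itself yield the interval bound $R^*_k(t)-R^*_k(s)\le \gamma_{k}^{bls}(t-s)$ for a general node: taking the convolution point $u=t-s$ gives $R^*_k(t)\le R_k(s)+\gamma_{k}^{bls}(t-s)$, and to replace $R_k(s)$ by $R^*_k(s)$ one would need $R_k(s)\le R^*_k(s)$, whereas causality $R^*_k\le R_k$ points the wrong way. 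The way around this is to use the \emph{construction} of $\gamma_{k}^{bls}$ in Theorem~\ref{Th:k-maxGene}, not merely its abstract defining inequality: that curve was obtained precisely as an upper bound on the cumulative class-$k$ traffic the BLS node can emit over any window, built from the credit dynamics (the bounded saturation times together with the send/idle window ratio $\Delta^{k,\gamma}_{send}/\Delta^{k,\gamma}_{inter}$). Consequently $R^*_k(t)-R^*_k(s)\le \gamma_{k}^{bls}(t-s)$ holds for every $s\le t$, which is exactly the statement that $R^*_k$ is $\gamma_{k}^{bls}$-smooth, i.e.\ that $\gamma_{k}^{bls}$ is an arrival curve for the output in the sense of Definition~\ref{def:arrivalCurve}.

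Finally I would combine the two bounds. Having $R^*_k(t)-R^*_k(s)\le (\alpha \oslash \beta_{k}^{bls})(t-s)$ and $R^*_k(t)-R^*_k(s)\le \gamma_{k}^{bls}(t-s)$ for all $0\le s\le t$, the output satisfies both constraints simultaneously, hence it satisfies the constraint given by their pointwise minimum, so $\min(\gamma_{k}^{bls},\,\alpha \oslash \beta_{k}^{bls})$ is an arrival curve for $R^*_k$, which is the claim. The main obstacle is the middle step: making rigorous that the maximum service curve genuinely shapes the output. I expect to lean on the per-interval output bound established inside the proof of Theorem~\ref{Th:k-maxGene} (so that the output is $\gamma_{k}^{bls}$-smooth), rather than attempting to deduce output smoothness from the min-plus inequality of Definition~\ref{def:max-service-curve} alone, which is not valid for an arbitrary server.
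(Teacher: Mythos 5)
Your proposal is correct, but it follows a genuinely different route from the paper. The paper does not argue term by term: it starts from the \emph{tight} output arrival curve $\left( (\gamma_k^{bls}\otimes\alpha)\oslash\beta_k^{bls}\right)(t)$ of Theorem~\ref{PerformanceBounds} and then proves, by generalizing rule~13 of \cite{leboudecthiran12} (a chain of $\sup$/$\inf$ manipulations valid because $\alpha\oslash\beta_k^{bls}$ is a leaky-bucket curve, hence in $\mathcal{F}$), that $(\gamma_k^{bls}\otimes\alpha)\oslash\beta_k^{bls}\leq \gamma_k^{bls}\otimes(\alpha\oslash\beta_k^{bls})\leq\min(\gamma_k^{bls},\alpha\oslash\beta_k^{bls})$; since any function dominating an output arrival curve is itself one, the corollary follows. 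That derivation is purely min-plus-algebraic: it uses only the abstract defining inequality of Definition~\ref{def:max-service-curve} and therefore sidesteps precisely the obstacle you identify (that $R^*\leq R\otimes\gamma$ does not by itself make the output $\gamma$-smooth), and it even yields the tighter intermediate bound $\gamma_k^{bls}\otimes(\alpha\oslash\beta_k^{bls})$, of which the stated $\min$ is a further relaxation. Your route instead re-opens the construction in Appendix~\ref{proofGenemax}, where the bound $\Delta R^*_k(\delta)\leq \frac{\Delta^{k,\gamma}_{send}}{\Delta^{k,\gamma}_{inter}}\cdot C\cdot\delta+b^{max}_k\cdot\frac{\Delta^{k,\gamma}_{idle}}{\Delta^{k,\gamma}_{inter}}$ is indeed established per interval, i.e.\ as output smoothness; combined with the standard output bound $\alpha\oslash\beta_k^{bls}$ and the closure of arrival curves under pointwise minimum, this is a valid and more elementary argument. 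What you lose is modularity (your step for $\gamma_k^{bls}$ depends on the internals of Theorem~\ref{Th:k-maxGene} rather than on its statement, and inherits its restriction to $MC(k)$-backlogged periods, a caveat the paper's abstract derivation shares but does not make visible); what you gain is that you correctly expose, rather than silently bridge, the gap between a maximum service curve and an output arrival curve.
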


\begin{proof}
	To prove Corollary~\ref{cor:SCT-max-arrival}, we generalize herein the rule 13 in p. 123 in \cite{leboudecthiran12}, i.e., $(f\otimes g)\oslash g \leq f \otimes (g\oslash g) $, to the case of three functions $f$, $g$ and $h$ when $g\oslash h \in \mathcal{F}$, where $\mathcal{F}$ is the set of non negative and wide sense increasing functions:
	\begin{displaymath}
	\mathcal{F}=\{ f: \mathbb{R}^+ \rightarrow \mathbb{R}^+ \mid f(0)=0, \forall t \geq s: f(t) \geq f(s) \}
	\end{displaymath}
	According to Theorem~\ref{PerformanceBounds}, we have $\alpha^*(t)= (\gamma_{k}^{bls}\otimes\alpha)\oslash \beta_{k}^{bls}$. Moreover, in the particular case of a leaky-bucket arrival curve $\alpha$ and a rate-latency service curve $\beta_{k}^{bls}$, $\alpha\oslash \beta_{k}^{bls}$ is a leaky-bucket curve, which is in $\mathcal{F}$. Hence, we have the necessary condition to prove the following:		
	$$(\alpha\otimes\gamma)\oslash\beta(t)\leq  \gamma\otimes(\alpha\oslash\beta)(t)  \leq \min(\gamma(t),\alpha\oslash\beta(t))   $$
	\begin{eqnarray}
	&&(\alpha\otimes\gamma)\oslash\beta(t) \nonumber\\
	&& =\sup_{u\geq0}\left\lbrace (\gamma\otimes\alpha)(t+u)-\beta(u)\right\rbrace \nonumber\\
	&&=\sup_{u\geq0}\left\lbrace \inf_{-u\leq s'\leq t}\left\lbrace \gamma(t-s')+\alpha(s'+u)-\beta(u) \right\rbrace\right\rbrace \nonumber\\
	&& \leq \sup_{u\geq0}\left\lbrace \inf_{0\leq s'\leq t}\left\lbrace \gamma(t-s')+\alpha(s'+u)-\beta(u) \right\rbrace\right\rbrace \nonumber\\
	&& \leq \sup_{u\geq0}\left\lbrace \inf_{0\leq s'\leq t}\left\lbrace \gamma(t-s')+\sup_{v\geq0}\left\lbrace\alpha(s'+v)-\beta(v) \right\rbrace\right\rbrace\right\rbrace \nonumber\\
	%&& = \inf_{0\leq s'\leq t}\left\lbrace \gamma(t-s')+(\alpha\oslash\beta)(s')\right\rbrace \nonumber\\
	&& = \gamma\otimes(\alpha\oslash\beta)(t)  \leq \min(\gamma(t),\alpha\oslash\beta(t)) \nonumber
	\end{eqnarray}
\end{proof}

\subsection{Output port multiplexer service curves}
\label{multiplexermodel}

In this section, we compute the strict minimum service curves offered by a switch output port multiplexer $mux$. Such a multiplexer $mux$ consists of $sp$ and $bls$ nodes as illustrated in Fig. \ref{fig:outputscheme}.

%We denote $BLS$ classes, the traffic classes shaped by a BLS, and $NBLS$ classes the classes not shaped by a BLS. %Thus, the strict minimum service curve is denoted $\beta_{k\in l}^{mux}$ with $l\in \{BLS, NBLS\}$, and $k$ the considered class.
The strict minimum service curve offered by each output port multiplexer to a $BLS$ class  is defined in Theorem~\ref{Th:blssp}, and the strict minimum service curves offered by each output port multiplexer to a $NBLS$ class, i.e., not shaped by a BLS, is defined in Theorem~\ref{Th:nblssp}. 

%============================================================================================================

\begin{theorem}[Strict Minimum Service Curve offered by an output port multiplexer to a BLS class] 
	\label{Th:blssp}
	%Consider a BLS flow of class $k$ crossing a router implementing the BLS. The strict minimum service curve of the BLS traffic of class k crossing the router is:
	Consider a system implementing a BLS with the strict minimum service $\beta$. % and $m$ aggregate flows crossing it, $f_1$,$f_2$,..,$f_m$. %The maximum frame size of $f_j$ is $MFS_{j}$ and $f_j$ is $\alpha_j$-constrained. The flows are scheduled by the NP-SP policy. They can also by shaped by a BLS. We call BLS the flows shaped by a BLS, and NBLS the flows not shaped by a BLS.	
	
	The strict minimum service curve offered to a BLS class $k$ by an  output port multiplexer is:		
	\begin{equation*}
	\label{betaRateLatency}
	\beta_{k\in BLS}^{mux}(t) = \max\left( \beta_{k\in BLS,p_L(k)}^{sp},\beta_{k\in BLS}^{bls}\otimes\beta_{k\in BLS,p_H(k)}^{sp}\right)  (t)
	\end{equation*}	
	
	with:
	\begin{itemize}
		\item $\beta_{k\in BLS,p_L(k)}^{sp}(t)=(\beta(t) - \sum_{j\in MC(k)\cup HC(k)} \alpha_j^{sp}(t) - \max_{j\in LC(k)\cup k}MFS_{j})_{\uparrow}$ the strict minimum service curve offered by the NP-SP when the class-k BLS priority is low;\\	
		\item $\beta_{k\in BLS}^{bls}(t)$ the strict minimum service curve offered by the BLS node to class k, defined in Theorem~\ref{Th:k-minGene};\\
		\item $\beta_{k\in BLS,p_H(k)}^{sp}(t)=(\beta(t) - \sum_{j\in HC(k)} \alpha_j^{sp}(t) - \max_{j\notin HC(k)}MFS_{j})_{\uparrow}$ the strict minimum service curve offered by the NP-SP when the class-k BLS priority is high;\\
		\item $\alpha_j^{sp}(t)$ the input arrival curve of flow $j$ at the $sp$ node, such as: \[\begin{cases}
		\text{if } j\in NBLS\text{, }\alpha_j^{sp}(t)=\alpha_j(t) \\
		\text{if } j\in BLS\text{, } \alpha_j^{sp}(t)=\alpha^{*,bls}_j(t)\\
		\quad  \quad  \quad  \quad  \quad \quad \quad \text{ }
		=\min(\gamma_j^{bls},\alpha_j^{bls}\oslash\beta_{j}^{bls})(t), \text{defined in Corollary~\ref{cor:SCT-max-arrival}}
		\end{cases}
		\]		
	\end{itemize}		
\end{theorem}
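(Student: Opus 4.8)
The plan is to exhibit the two curves appearing inside the maximum as two separate, independently valid strict minimum service curves for class $k$, and then invoke the elementary fact that the pointwise maximum of two strict minimum service curves is again a strict minimum service curve. This last fact is immediate from Definition~\ref{def:strict-min-service-curve}: if for every backlogged period $]s,t]$ we have both $R^*(t)-R^*(s)\geq\beta_1(t-s)$ and $R^*(t)-R^*(s)\geq\beta_2(t-s)$, then $R^*(t)-R^*(s)\geq\max(\beta_1,\beta_2)(t-s)$. So all the real work is to establish each of the two bounds on its own.

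First I would treat the low-priority bound $\beta_{k\in BLS,p_L(k)}^{sp}$. The key observation is that, whatever the credit does, the NP-SP never serves class $k$ at a priority worse than $p_L(k)$; during the sub-intervals where the BLS raises the priority to $p_H(k)$ the class can only receive \emph{more} service, because fewer higher-priority classes then compete with it. Hence the service accumulated over any backlogged period is at least the one obtained in the fictitious scenario where class $k$ stays permanently at $p_L(k)$. In that scenario Corollary~\ref{cor:residual-service-curve} applies directly: the classes strictly above $p_L(k)$ are exactly $MC(k)\cup HC(k)$, their arrival curves seen at the SP node are the $\alpha_j^{sp}$ (post-BLS output curves for shaped interferers, original curves otherwise, as fixed in the statement), and the non-preemptive blocking term is the largest frame among the classes of priority $\geq p_L(k)$, namely $LC(k)\cup\{k\}$. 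This reproduces $\beta_{k\in BLS,p_L(k)}^{sp}$ and shows it is a valid strict minimum service curve.

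Next I would treat the BLS-plus-high-priority bound. Here the multiplexer is read as a tandem of two servers crossed in sequence by class $k$: the BLS node, whose own strict minimum service curve $\beta_{k\in BLS}^{bls}$ is given by Theorem~\ref{Th:k-minGene} and encodes the credit gating together with the $HC(k)$ and $MC(k)$ interference, followed by the NP-SP node seen by class $k$ at its high priority $p_H(k)$. For the latter, Corollary~\ref{cor:residual-service-curve} with interfering set $HC(k)$ and blocking term $\max_{j\notin HC(k)}MFS_j$ yields $\beta_{k\in BLS,p_H(k)}^{sp}$. Applying the concatenation theorem (Theorem~\ref{ConcatenationOfNodes}, pay-bursts-only-once) to this tandem delivers the service curve $\beta_{k\in BLS}^{bls}\otimes\beta_{k\in BLS,p_H(k)}^{sp}$ for class $k$. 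Taking the maximum of the two bounds then gives the stated formula.

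The main obstacle is the second bound, and specifically the justification that the multiplexer really factors as a concatenation of the BLS node and the high-priority NP-SP node without double-counting the underlying rate $C$ or the competing traffic. One has to argue that $\beta_{k\in BLS}^{bls}$ of Theorem~\ref{Th:k-minGene} models precisely the credit-driven gating of class $k$, while $\beta_{k\in BLS,p_H(k)}^{sp}$ models the residual capacity the NP-SP leaves once the strictly higher classes $HC(k)$ and one non-preemptive lower/equal frame are removed, so that the two factors describe genuinely distinct, serially composed effects. Equally delicate is the consistency of the interfering arrival curves $\alpha_j^{sp}$: each BLS-shaped interferer must enter the SP node through its own maximum output curve $\alpha_j^{*,bls}=\min(\gamma_j^{bls},\alpha_j^{bls}\oslash\beta_j^{bls})$ from Corollary~\ref{cor:SCT-max-arrival}, which is what keeps the bound both correct and tight. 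Once these points are settled, the proof is merely the assembly of Corollary~\ref{cor:residual-service-curve}, Theorem~\ref{Th:k-minGene} and Theorem~\ref{ConcatenationOfNodes}, followed by the maximum.
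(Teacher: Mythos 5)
Your proof is correct and follows essentially the same route as the paper's: the low-priority NP-SP residual curve via Corollary~\ref{cor:residual-service-curve}, the concatenation $\beta^{bls}_{k}\otimes\beta^{sp}_{k,p_H(k)}$ via Theorem~\ref{ConcatenationOfNodes}, and the pointwise maximum of the two. If anything, your explicit justification that each curve is a valid strict minimum service curve in \emph{all} cases (so that taking the maximum is legitimate) is more careful than the paper's informal ``two scenarios'' phrasing.
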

\begin{proof}		
	The idea is to model the impact of a BLS implemented on top of the NP-SP scheduler on $BLS$ class $k$. To achieve this aim, we distinguish two possible scenarios. The first one covers the particular case where the class-$k$ priority  remains low, i.e., the other queues are empty; whereas the second one covers the general case where the priority of class $k$ oscillates between $p_L(k)$ and $p_H(k)$, as explained in Section~\ref{basicconcepts}. Firstly, the minimum service curve guaranteed within $mux$ in the first scenario is due to the NP-SP scheduler and denoted $\beta^{sp}_{k\in BLS,p_L(k)}$, which is computed via Corollary~\ref{cor:residual-service-curve} when considering the class-k priority is $p_L(k)$. Secondly, the minimum service curve guaranteed within $mux$ in the second scenario is computed via Theorem~\ref{ConcatenationOfNodes}, through the concatenation of the service curves within the BLS node $\beta_{k\in BLS}^{bls}$ (computed in Theorem~\ref{Th:k-minGene}) and the NP-SP node $\beta_{k\in BLS,p_H(k)}^{sp}$ (computed via Corollary~\ref{cor:residual-service-curve} when class-$k$ priority is high).		
\end{proof}

\begin{Corollary}[Strict Minimum Service Curve offered by an output port multiplexer to SCT in the case of three traffic classes] 
	\label{cor:SCTblssp}
	%Consider a BLS flow of class $k$ crossing a router implementing the BLS. The strict minimum service curve of the BLS traffic of class k crossing the router is:
	%	Consider a system implementing a BLS with the strict service $\beta$ and $m$ aggregate flows crossing it, $f_1$,$f_2$,..,$f_m$.		
	Consider a server with a constant rate $C$, implementing a BLS shaping the SCT traffic. The SCT, RC and BE traffics cross this server with the following priorities: $p(SCT)\in\{p_H(SCT)=0,p_L(SCT)=2\}$, $p(RC)=1$, $p(BE)=3$, as illustrated in Fig. \ref{fig:sw_archifgene}. 	
	
	The strict minimum service curve offered to SCT by an output port multiplexer is:		
	\begin{equation*}
	\label{betaRateLatency}
	\beta_{SCT}^{mux}(t) = \max\left( \beta_{SCT,2}^{sp},\beta_{SCT}^{bls}\otimes\beta_{SCT,0}^{sp}\right)  (t)
	\end{equation*}	
	
	with:
	\begin{itemize}
		\item $\beta_{SCT,2}^{sp}(t)=(C\cdot t -\alpha_{RC}(t) - \max_{j\in BE\cup SCT}MFS_{j})_{\uparrow}$ the strict minimum service curve offered by the NP-SP when the class-k BLS priority is low;\\	
		\item $\beta_{SCT}^{bls}(t)$ the strict minimum service curve offered by the BLS node to SCT, defined in Corollary~\ref{cor:k-minCbA}
		\item $\beta_{SCT,0}^{sp}(t)=(C\cdot t - \max_{j\in\{SCT,RC,BE\}} MFS_{j})_{\uparrow}$ the strict minimum service curve offered by the NP-SP when the BLS priority is high;		
	\end{itemize}		
\end{Corollary}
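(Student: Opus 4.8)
The plan is to obtain this Corollary as a direct specialization of Theorem~\ref{Th:blssp} to the three-class configuration of Fig.\ref{fig:BLSshaper}, in exactly the same spirit as Corollaries~\ref{cor:k-minCbA} and~\ref{cor:k-maxCCbA} specialize their parent theorems. First I would instantiate $k=SCT$ and read off the priority sets from the given assignment $p_H(SCT)=0$, $p_L(SCT)=2$, $p(RC)=1$, $p(BE)=3$. Since priority $0$ is the highest, no class sits strictly above $p_H(SCT)$, so $HC(SCT)=\emptyset$; the only class with priority strictly between $p_L(SCT)=2$ and $p_H(SCT)=0$ is RC, hence $MC(SCT)=\{RC\}$; and the only class with priority strictly below $p_L(SCT)=2$ is BE, hence $LC(SCT)=\{BE\}$.

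Next I would substitute these sets into the three building blocks of Theorem~\ref{Th:blssp}, taking the server rate $\beta(t)=C\cdot t$. For the low-priority NP-SP curve, $MC(k)\cup HC(k)=\{RC\}$ and $LC(k)\cup\{k\}=\{BE,SCT\}$, which yields $\beta_{SCT,2}^{sp}(t)=(C\cdot t-\alpha_{RC}^{sp}(t)-\max_{j\in BE\cup SCT}MFS_j)_{\uparrow}$. Here I would invoke the case distinction in Theorem~\ref{Th:blssp}: RC is an NBLS (unshaped) class, so $\alpha_{RC}^{sp}(t)=\alpha_{RC}(t)$, recovering the stated form. For the high-priority NP-SP curve, $HC(k)=\emptyset$ makes the summation vanish and $\{j\notin HC(k)\}=\{SCT,RC,BE\}$, giving $\beta_{SCT,0}^{sp}(t)=(C\cdot t-\max_{j\in\{SCT,RC,BE\}}MFS_j)_{\uparrow}$. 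The middle term $\beta_{SCT}^{bls}$ is simply the BLS-node curve of Corollary~\ref{cor:k-minCbA}, itself the specialization of Theorem~\ref{Th:k-minGene} to this same case.

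Assembling the maximum $\beta_{SCT}^{mux}(t)=\max(\beta_{SCT,2}^{sp},\beta_{SCT}^{bls}\otimes\beta_{SCT,0}^{sp})(t)$ then reproduces the claimed expression verbatim. There is no genuine obstacle here beyond bookkeeping: the one point that merits care is verifying the correct membership in $HC$, $MC$ and $LC$ — specifically that RC falls in $MC(SCT)$ (so it contributes an arrival curve to the low-priority NP-SP residual, not a blocking term) while BE falls in $LC(SCT)$ (contributing only its maximum frame size as low-priority blocking) — together with the observation that RC being unshaped makes $\alpha_{RC}^{sp}=\alpha_{RC}$, which is why the statement can write the plain arrival curve $\alpha_{RC}$.
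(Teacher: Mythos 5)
Your proposal is correct and follows exactly the paper's own route: the paper proves this corollary by instantiating Theorem~\ref{Th:blssp} with $k=SCT$, $HC(SCT)=\emptyset$, $MC(SCT)=\{RC\}$ and $\beta(t)=C\cdot t$, which is precisely your specialization. Your additional bookkeeping (identifying $LC(SCT)=\{BE\}$ and noting that RC, being unshaped, gives $\alpha_{RC}^{sp}=\alpha_{RC}$) is a correct and slightly more explicit rendering of the same one-line argument.
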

\begin{proof}
	We apply Theorem~\ref{Th:blssp} in the particular 3-classes case study presented in Fig. \ref{fig:BLSshaper}, with SCT as class $k$. Thus $HC(k)=\emptyset$, MC(k)=RC, and $\beta(t)=C\cdot t$.
\end{proof}		
\
\begin{theorem}[Strict Minimum Service Curve offered to a NBLS class by an output port multiplexer] 
	\label{Th:nblssp}
	%Consider a BLS flow of class $k$ crossing a router implementing the BLS. The strict minimum service curve of the BLS traffic of class k crossing the router is:
	Consider a system implementing a BLS with the strict service $\beta$ and $m$ flows crossing it, $f_1$,$f_2$,..,$f_m$.		
	The strict minimum service curve offered to a NBLS class $k$ by an output port multiplexer is:
	$$\beta_{k\in NBLS}^{mux}(t)=\max\left( \beta_{k\in NBLS}^{sp},\beta_{k\in NBLS}^{bls}\right) (t) $$
	
	%	\vfil\penalty-200\vfilneg
	
	with:
	\begin{itemize}
		\item $\beta_{k\in NBLS}^{sp}=(\beta - \sum_{p_H(j)<p(k),j\in BLS} \alpha_j\oslash\beta_{j}^{bls} - \sum_{p(j)<p(k), j\in NBLS} \alpha_j- \max_{p(j) \geqslant p(k)}MFS_j)_{\uparrow}$;\newline	
		\item $\beta_{j}^{bls}$, with $j\in BLS$, the strict minimum service curve offered by the BLS node to class $j$, defined in Theorem~\ref{Th:k-minGene};
		\item $\beta_{k\in NBLS}^{bls}=(\beta - \sum_{p_H(j)<p(k),j\in BLS} \gamma_{j}^{bls} - \sum_{p(j) <p(k), j\in NBLS} \alpha_j\\
	\quad  \quad  \quad  \quad  \quad \quad \quad \text{ }
	\quad  \quad  \quad  \quad   \quad
	- \max_{p(j) \geqslant p(k)} MFS_j)_{\uparrow}$;\newline			
		\item $\gamma_{j}^{bls}$, with $j\in BLS$, the maximum service curve offered by the BLS node to class $j$, defined in Theorem~\ref{Th:k-maxGene}.
	\end{itemize}	
	
	%With, for BLS queues $j$: 
	
	%	$\alpha_k=\alpha^{*,bls}_k=\min(\gamma_k^{bls},\alpha_k^{bls}\oslash$\footnote{$f \oslash g(t) = \sup_{s \geq 0}\{f(t+s) - g(s)\}$ }$\beta_k^{bls})$, as detailed in Corollary~\ref{cor:SCT-max-arrival}.
\end{theorem}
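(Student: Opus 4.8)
The plan is to establish the two candidate curves $\beta_{k\in NBLS}^{sp}$ and $\beta_{k\in NBLS}^{bls}$ as \emph{separately} valid strict minimum service curves for class $k$, and then to conclude by the elementary remark that the pointwise maximum of two strict minimum service curves is again a strict minimum service curve. This last fact follows directly from Definition~\ref{def:strict-min-service-curve}: if, over every backlogged period $]s,t]$, both $R^*(t)-R^*(s)\geqslant\beta_1(t-s)$ and $R^*(t)-R^*(s)\geqslant\beta_2(t-s)$ hold, then $R^*(t)-R^*(s)\geqslant\max(\beta_1,\beta_2)(t-s)$. Hence it suffices to bound, in two different ways, the cumulative service stolen from class $k$ by the higher-priority interferers during a backlogged period of $k$ at the $sp$ node, whose total strict service is $\beta$.

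For the first curve I would model the $sp$ node as fed by the outputs of the upstream $bls$ nodes and apply Corollary~\ref{cor:residual-service-curve}. The interferers of the fixed-priority NBLS class $k$ are of two kinds: the NBLS classes $j$ with $p(j)<p(k)$, constrained at the $sp$ input by $\alpha_j$; and the BLS classes $j$ whose high priority already exceeds $p(k)$, i.e. $p_H(j)<p(k)$, whose traffic entering the $sp$ node is the \emph{output} of their $bls$ node and is therefore constrained by $\alpha_j\oslash\beta_j^{bls}$ (Theorem~\ref{PerformanceBounds}, with $\beta_j^{bls}$ from Theorem~\ref{Th:k-minGene}). The single-frame non-preemptive blocking by a class of priority no higher than $p(k)$ contributes $\max_{p(j)\geqslant p(k)}MFS_j$. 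Substituting these bounds into the NP-SP residual formula yields exactly $\beta_{k\in NBLS}^{sp}$.

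For the second curve I would keep the same skeleton but replace the bound on each higher-priority BLS interferer $j$ by its maximum service curve $\gamma_j^{bls}$ from Theorem~\ref{Th:k-maxGene}: over a backlogged sub-interval the amount of class-$j$ traffic the server can push through its $bls$ node, and hence inject ahead of $k$ at the $sp$ node, is capped by $\gamma_j^{bls}$ of the interval length (Definition~\ref{def:max-service-curve}). The NBLS interferers and the blocking term are unchanged, so the NP-SP residual now gives $\beta_{k\in NBLS}^{bls}$. Taking the maximum of the two is precisely what Corollary~\ref{cor:SCT-max-arrival} makes natural, since the true output of each BLS interferer is $\min(\gamma_j^{bls},\alpha_j\oslash\beta_j^{bls})$; bounding each interferer by one of the two terms and then keeping the best of the two resulting residuals recovers the stated curve.

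The main obstacle is the rigorous justification that a BLS class $j$ with $p_H(j)<p(k)$ may be treated as a \emph{fixed} higher-priority interferer even though its priority oscillates between $p_H(j)$ and $p_L(j)$. The argument is that charging \emph{all} of $j$'s admissible output to the higher-priority term can only over-estimate the interference seen by $k$ --- when $j$ sits at its low priority $p_L(j)\geqslant p(k)$ it in fact competes with, or falls below, $k$, so the true interference is smaller --- and over-estimating the subtracted interference produces a pointwise smaller, hence still valid, strict service curve. What the full proof must then discharge carefully is that this over-approximation does not conflict with the single-frame blocking term, and that $\gamma_j^{bls}$ is a legitimate cap on the served interference over an \emph{arbitrary} sub-interval of the backlogged period of $k$, not merely over an interval starting at the origin.
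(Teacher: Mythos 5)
Your proposal is correct and follows essentially the same route as the paper: the paper's (one-line) proof likewise instantiates the NP-SP left-over service curve of Corollary~\ref{cor:residual-service-curve} with each BLS interferer's arrival curve replaced by its output arrival curve $\min(\gamma_j^{bls},\alpha_j\oslash\beta_j^{bls})$ from Corollary~\ref{cor:SCT-max-arrival}, which yields the maximum of the two residual curves. Your version is merely more explicit, separately validating each candidate curve and then invoking the (correct) fact that the pointwise maximum of strict service curves is a strict service curve, and it usefully flags the point the paper leaves implicit, namely that the cap via $\gamma_j^{bls}$ over arbitrary sub-intervals is justified through the output arrival curve of Corollary~\ref{cor:SCT-max-arrival} rather than through Definition~\ref{def:max-service-curve} alone.
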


\begin{proof}
	The proof of Theorem~\ref{Th:nblssp} is straightforward.
	Theorem~\ref{Th:nblssp} is obtained through replacing within the equation of Corollary~\ref{cor:residual-service-curve} the arrival curve of higher priority traffic than class $k\in NBLS$ by the curves computed in Corollary~\ref{cor:SCT-max-arrival}.		
	%	$$ (\beta - \sum_{j <i} \alpha_j - \max_{k \geq i} l_{k,max})_{\uparrow}$$
	%Concerning the NBLS flows $j$ of higher priority than class $k$, their input arrival curve is defined by $\alpha_j$.
	%Concerning BLS flows, the input arrival curve of an aggregate flow of class $j$ is the arrival curve at the BLS node output and is defined in Corollary~\ref{cor:SCT-max-arrival} as  $\alpha_j^{sp}=\alpha^{*,bls}_j=\min(\gamma_j^{bls},\alpha_j^{bls}\oslash\beta_{j}^{bls})$. The BLS flows able to impact flow $k$ are all the BLS flows of the classes $j$ such as priority $p_H^j$ is higher than priority of flow $k$.
	%Finally, Theorem~\ref{Th:nblssp} is obtained by replacing $\alpha_j^{sp}$ by their value in the equation defined in Corollary~\ref{cor:residual-service-curve}.
\end{proof}

\begin{Corollary}[Strict Minimum Service Curve offered to RC by an output port multiplexer in the case of three traffic classes] 
	\label{cor:RCblssp}
	%Consider a BLS flow of class $k$ crossing a router implementing the BLS. The strict minimum service curve of the BLS traffic of class k crossing the router is:
	%	Consider a system implementing a BLS with the strict service $\beta$ and $m$ aggregate flows crossing it, $f_1$,$f_2$,..,$f_m$.		
	Consider a server with a constant rate $C$, implementing a BLS shaping the SCT traffic. The SCT, RC and BE traffics cross this server with the following priorities: $p(SCT)\in\{p_H(SCT)=0,p_L(SCT)=2\}$, $p(RC)=1$, $p(BE)=3$, as illustrated in Fig. \ref{fig:sw_archifgene}. 
	
	The strict minimum service curve offered to RC by an output port multiplexer is:
	$$\beta_{RC}^{mux}(t)=\max\left( \beta_{RC}^{sp},\beta_{RC}^{bls}\right) (t) $$
	
	%	\vfil\penalty-200\vfilneg
	
	with:
	\begin{itemize}
		\item $\beta_{RC}^{sp}(t)=(C\cdot t - \alpha_{SCT}(t)\oslash\beta_{SCT}^{bls}(t) - \max_{j\in\{SCT,RC,BE\}}MFS_j)_{\uparrow}$;\newline	
		\item $\beta_{SCT}^{bls}(t)$ the strict minimum service curve offered by the BLS node to SCT, defined in Corollary~\ref{cor:k-minCbA};
		\item $\beta_{RC}^{bls}(t)=(C\cdot t -  \gamma_{SCT}^{bls}(t) - \max_{j\in\{SCT,RC,BE\}} MFS_j)_{\uparrow}$;\newline			
		\item $\gamma_{SCT}^{bls}(t)$ the maximum service curve offered by the BLS node to SCT, defined in Corollary~\ref{cor:k-maxCCbA}.
	\end{itemize}	
\end{Corollary}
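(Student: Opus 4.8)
The plan is to obtain Corollary~\ref{cor:RCblssp} as a direct specialization of the general result Theorem~\ref{Th:nblssp}, instantiated on the 3-classes configuration of Fig.~\ref{fig:sw_archifgene}. Here RC plays the role of the NBLS class $k$ with priority $p(RC)=1$, SCT is the single BLS class with $p_H(SCT)=0$ and $p_L(SCT)=2$, and BE is the remaining NBLS class with priority $3$. Since the raw server is a constant-rate link, I set $\beta(t)=C\cdot t$, and the whole argument then reduces to evaluating the index sets that appear in the two component curves $\beta_{k\in NBLS}^{sp}$ and $\beta_{k\in NBLS}^{bls}$ of Theorem~\ref{Th:nblssp}.

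First I would determine the three index sets for $k=RC$. The BLS classes $j$ with $p_H(j)<p(RC)=1$ reduce to the single class SCT, because $p_H(SCT)=0<1$ while no other class is BLS-shaped; hence the sum $\sum_{p_H(j)<p(k),\,j\in BLS}$ collapses to a single SCT term. The NBLS classes $j$ with $p(j)<p(RC)=1$ form the \emph{empty} set, since RC at priority $1$ has no unshaped class above it, so the sum $\sum_{p(j)<p(k),\,j\in NBLS}\alpha_j$ vanishes. Finally the blocking term $\max_{p(j)\geqslant p(k)}MFS_j$ ranges over all classes of priority at least $1$, namely RC (priority $1$), SCT viewed at its low priority $2$, and BE (priority $3$); it therefore equals $\max_{j\in\{SCT,RC,BE\}}MFS_j$.

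Substituting these into the two expressions of Theorem~\ref{Th:nblssp} then yields exactly the stated curves. In the SP-dominated scenario the higher-shaped contribution $\alpha_j\oslash\beta_j^{bls}$ specializes to $\alpha_{SCT}\oslash\beta_{SCT}^{bls}$, with $\beta_{SCT}^{bls}$ taken from Corollary~\ref{cor:k-minCbA}, giving $\beta_{RC}^{sp}(t)=(C\cdot t-\alpha_{SCT}(t)\oslash\beta_{SCT}^{bls}(t)-\max_{j\in\{SCT,RC,BE\}}MFS_j)_{\uparrow}$. In the scenario where the SCT priority oscillates, the maximum-service contribution $\gamma_j^{bls}$ specializes to $\gamma_{SCT}^{bls}$ from Corollary~\ref{cor:k-maxCCbA}, giving $\beta_{RC}^{bls}(t)=(C\cdot t-\gamma_{SCT}^{bls}(t)-\max_{j\in\{SCT,RC,BE\}}MFS_j)_{\uparrow}$, and the multiplexer curve is the $\max$ of the two, as claimed.

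Since the derivation is a pure specialization, there is no genuine analytic obstacle; the only point requiring care is the correct reading of the priority sets under BLS oscillation. In particular one must note that SCT, although its high priority $0$ places it \emph{above} RC in the shaped scenario, still sits \emph{below} RC at its low priority $2$, and therefore (i) contributes through $\alpha_{SCT}\oslash\beta_{SCT}^{bls}$ respectively $\gamma_{SCT}^{bls}$ rather than through a plain arrival curve, and (ii) simultaneously enters the lower-priority blocking term $\max_{j\in\{SCT,RC,BE\}}MFS_j$. Getting both roles of SCT right is the crux; everything else follows mechanically from Theorem~\ref{Th:nblssp}.
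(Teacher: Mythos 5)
Your proposal is correct and follows exactly the paper's route: the paper's own proof is a one-line invocation of Theorem~\ref{Th:nblssp} specialized to the 3-classes configuration, and you simply spell out the instantiation of the index sets (SCT as the sole BLS class with $p_H(SCT)<p(RC)$, empty higher-priority NBLS set, and the blocking max over all three classes), which you get right. Note that you correctly identify RC as the NBLS class $k$ of Theorem~\ref{Th:nblssp}, whereas the paper's proof text says ``with SCT as class $k$ and RC as MC(k)''---an apparent copy-paste slip from the preceding corollary that your reading quietly corrects.
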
	
\begin{proof}
	We apply Theorem~\ref{Th:nblssp} to the particular case of the 3-classes case study presented in Fig. \ref{fig:BLSshaper}, with SCT as class $k$ and RC as MC(k).
\end{proof}

\subsection{Computing End-to-End Delay Bounds}

%In the Extended AFDX, there are several types of nodes: the end-systems $es$, and the switch, $sw$, composed of the BLS $bls$, and the SP $sp$ nodes (see Fig. \ref{fig:BLSshaper}). To compute the service of the $SCT$ flows in $sw$, the services curves of the BLS and SP must be concatenate(see section {Background}).
With the BLS  node and the output port modeled, we are now able to compute the end-to-end delay bounds. The computation of the end-to-end delay upper bounds follows four main steps:

\begin{enumerate}
	\item computing the strict minimum service curve guaranteed to each traffic class $k$ in each node $n$ of type $\{es,mux\}$, $\beta_{k}^{n}$, will infer the computation of the residual service curve, guaranteed to each individual flow $f$ of class $k$, $\beta_{k,f}^{n}$ with Theorem~\ref{th:blind2flows};
	\item knowing the residual service curve guaranteed to each flow within each crossed node allows the propagation of the arrival curves along the flow path, using Theorem~\ref{PerformanceBounds}. We can compute the output arrival curve, based on the input arrival curve and the minimum service curve, which will be in its turn the input of the next node;
	\item the computation of the minimum end-to-end service curve of each flow $f$ in class $k$, based on Theorem~\ref{ConcatenationOfNodes}, is simply the concatenation of the residual service curves, $\beta_{k,f}^{n}$, $\forall n$ along its path $path_f$;
	\item given the minimum end-to-end service curve of each flow $f$ in class $k$ along its $path_f$ and its maximum arrival curve at the initial source, the end-to-end delay upper bound $delay_{k,f}^{end2end}$ is the maximum horizontal distance between the two curves using Theorem~\ref{PerformanceBounds} and Corollary~\ref{OutputArrivalcurve}.
\end{enumerate}

Now that we have modeled the proposed network, we use this model to  answer the question whether "shaper" is the correct qualifier for the BLS.

\subsection{Discussion: is the BLS really a shaper?}

\label{isshaper}

The most common kind of shapers is the greedy shaper, which has been detailed in \cite{leboudecthiran12}.
According to \cite{leboudecthiran12}, a $shaper$ with a shaping curve $\sigma$ is a bit processing device  that forces its output to have $\sigma$ as an output arrival curve. A $greedy$ $shaper$ is a shaper that delays the input bits in a buffer, whenever sending a bit would violate the constraint $\sigma$, but outputs them as soon as possible. A consequence of this definition is that, for an input flow $R$, the output flow $R^*$ is defined by $R^*=R\otimes\sigma$. Moreover, as the service curve $\beta$ and maximum service curve $\gamma$ are defined by $R^*\geqslant R\otimes\beta$ and $R^*\leqslant R\otimes\gamma$, this means that $\sigma=\beta=\gamma$ in the case of a greedy shaper. Obviously, this is not the case for the BLS. Another property of the greedy shaper is that the difference between the fluid model and the packetized model is bounded by the maximum sized packet. %With the BLS, we showed in that several additional frames can be sent.

From the definition of the BLS gCCbA model in Section~\ref{SC-NC}, we can easily compute the corresponding fluid (bit-per-bit) gCCbA model: we do not consider an additional frame due to non-preemption. As a consequence, the defined saturation times are null.

So, when considering the 3-classes case study presented in Fig. \ref{fig:BLSshaper}, we have: 	
$$\gamma_{SCT}^{bls, fluid}(t)=I_{idle}\cdot t+L_M$$

$$\beta_{SCT}^{bls, fluid}(t)=I_{idle}\cdot\left( t-\frac{L_M-L_R}{I_{idle}}\right) ^+$$

This shows that the difference between the fluid and packetized models, i.e., $\beta_{k}^{bls,fluid}$ and $\beta_{k}^{bls}$, is larger than a single maximum sized frame: an additional MC(k) frame is considered in every $\Delta_{inter}^{k,\beta}$, and an  additional  frame of class k is considered in each $\Delta_{inter}^{k,\gamma}$.

Finally, the BLS functioning itself shows that the BLS is not a greedy shaper: if a frame is enqueued and there is no higher priority frame enqueued, then the frame is dequeued no matter the state of the BLS credit. Hence, the BLS is non-blocking contrary to the definition of a greedy shaper. Moreover, if no higher priority traffic is present, then the BLS does not force the output to conform to a certain $\sigma$, unlike a shaper.

So, if the BLS is not a shaper, what is its nature? The BLS changes the priority of a queue through reordering the priority of the different queues, and it cannot be used without a Static Priority Scheduler. So trying to characterize it on its own is futile. Together with the NP-SP however, they are able to reorganize the output traffic according to the BLS parameters. Because of this, BLS+SP is much closer to schedulers such as Deficit Round Robin (DRR) than shapers.  %In section \ref{fairness}, we will study in detail the fairness of the extended AFDX, to see how close the BLS is to WFQ.

\section{Performance Analysis}	
\label{PA}

In this section, we start with a first 3-classes single-hop use-case to evaluate the tightness and sensitivity of our model, in reference to Achievable Worst-Cases (AWCs) described in Appendix~\ref{AWCs}. Next, we compare the CPA and NC models (WbA and CCbA) under different scenarios. In a second use-case, we consider an multiple-BLS multi-hop architecture to highlight the delay bound reductions and schedulability increases. We finish with a third used-case to add the A350 flight control traffic to the AFDX.

\subsection{Use-case 1: sensitivity, tightness and comparison with CPA}
\label{Use-case1}
In this section, we start by presenting a case study.
Then, we analyze our model by assessing, first the impact of the BLS parameters and the utilization rates of SCT and RC on the delay bounds, then its tightness in reference to AWC\footnote{Since there is no strict order between the two achievable worst-cases (see Appendix~\ref{AWCs}), we will use the maximum value, denoted AWC=max(AWC-1, AWC-2), as a reference to assess our model tightness.}.

\newpage
\noindent
\textbf{Case study}

Our first case study is based on a single-hop Gigabit network described in Fig. \ref{fig:singlehoparchi2}, with the 3-class output port presented in Fig. \ref{fig:BLSshaper2} and the traffic profiles presented in Table~\ref{table:alltrafficprofiles}.

	\begin{figure}[h]
		\centering
			\includegraphics[width=0.6\columnwidth]{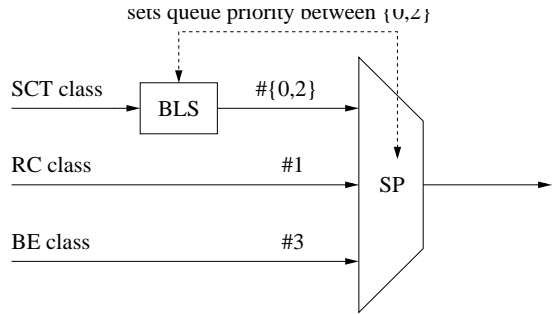}
		%\resizebox{0.6\columnwidth}{!}{\input{figures/archi_new.eps}}
		\caption{Burst Limiting Shaper on top of NP-SP at the output port with 3 classes }
		\label{fig:BLSshaper2}
	\end{figure}

\begin{table}[h!]
	\footnotesize
	\centering
	\begin{tabular}{|c|c|c|c|c|c|}
		\hline
		Priority & Traffic type & MFS & BAG & deadline& jitter  \\
		&   & (Bytes)          & (ms)&(ms) &(ms)\\
		\hline
		0/2 & SCT & 64 & 2 & 2 & 0 \\  
		\hline
		1 &  RC & 320 & 2 & 2 & 0\\
		\hline
		3 &	BE & 1024 & 8 & none & 0.5 \\ 
		\hline
	\end{tabular}
	\footnotesize \caption{Avionics flow Characteristics}
	\label{table:alltrafficprofiles}
\end{table}

As there is only one shaped class: SCT, we use  $k=\emptyset$ to simplify, for the non-ambiguous notations, such as $L_M^k$ or $I_{idle}^k$.

\begin{figure}[htbp]
	%\begin{minipage}[b]{0.4\linewidth}
	\centering
	\includegraphics[width=0.95\textwidth]{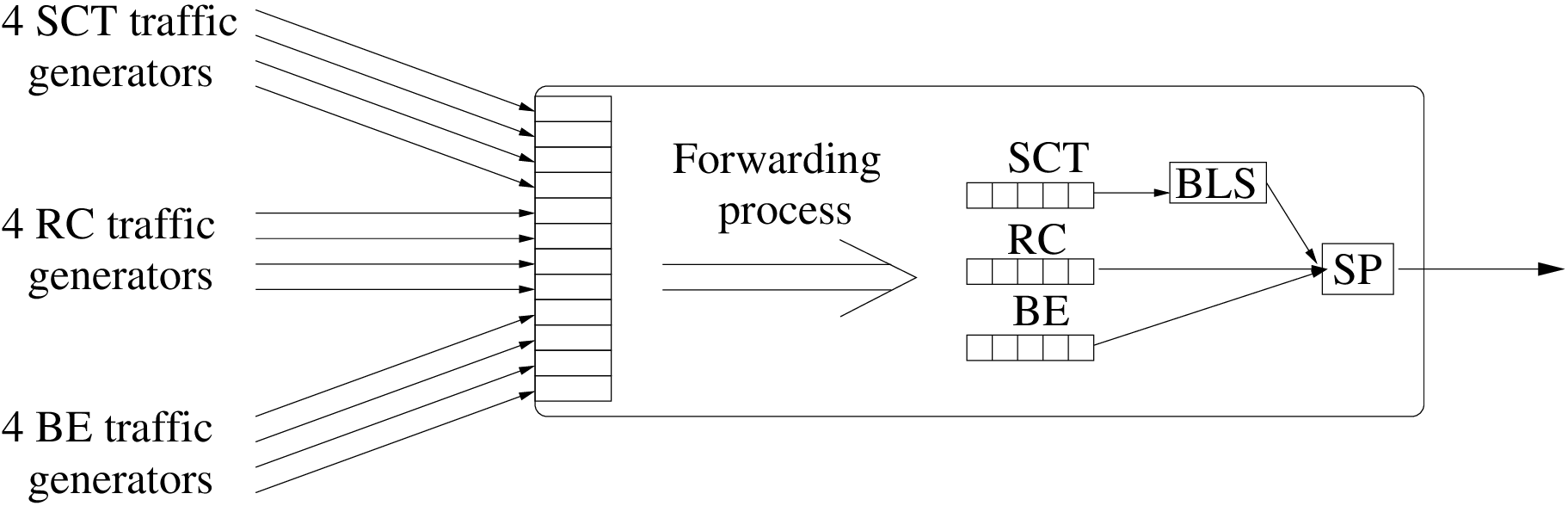}
	\footnotesize \caption{Considered extended AFDX network}
	\label{fig:singlehoparchi2}
\end{figure}

To evaluate our model, we conduct tightness and sensitivity analyses using different scenarios when varying the input rates of SCT and RC and the BLS parameters.
 	
The five scenarios are described by the following vectors:
\begin{eqnarray}
Scenario_{SCT}= &&( UR_{SCT}\in\left[0.1:78\right], UR_{RC}=20,L_M=22~118,L_R=0,\nonumber\\\nonumber &&BW=0.46)\\
Scenario_{RC}= &&(UR_{SCT}=20, UR_{RC}\in\left[0.5:72\right],L_M=22~118,L_R=0,\nonumber\\\nonumber && BW=0.46)\\
Scenario_{LM}=&&(UR_{SCT}=20, UR_{RC}=20,L_M\in\left[1382.4..216~830\right],\nonumber\\\nonumber && L_R=1177.6, BW=0.46)\\
Scenario_{LR}=&&(UR_{SCT}=20, UR_{RC}=20,L_M=22~118,\nonumber\\\nonumber && L_R\in\left[0..0.99\right]\cdot L_M,BW=0.46)\\
Scenario_{BW}=&&(UR_{SCT}=20, UR_{RC}=20,L_M=22~118,L_R=1177.6,\nonumber\\\nonumber &&  BW\in\left[0..0.99\right])
\end{eqnarray}

\hfill\\
\textbf{Sensitivity Analysis}
	
	In this section, we analyze the sensitivity of the BLS model when varying the BLS parameters and utilization rates, i.e., $UR_{SCT}$, $UR_{RC}$, $L_M$, $L_R$, $BW$. The results of the different scenarios are reported in Fig. \ref{fig:SCTImpactCM}, Fig. \ref{fig:RCImpactCM} Fig. \ref{fig:LMImpactCM}, Fig. \ref{fig:BWImpactCM}, and Fig. \ref{fig:LRImpactCM}.

	From our modelisation of the output port multiplexer and the BLS node, we notice that in  $\beta_{SCT}^{mux}(t)$ (see Corollary~\ref{cor:SCTblssp}) and $\beta_{RC}^{mux}(t)$ (see Corollary~\ref{cor:RCblssp}) the evolution of the strict minimum service curves of SCT and RC follows the maximum of two linear curves: one is the SP part, the other is the BLS part. Consequently, the delay bounds of SCT and RC evolve also following two parts under the different scenarios.
	
\hfill\\
\textit{Impact of $UR_{SCT}$}
	
	\label{modelbehavour}
	%	We use Corollaries~\ref{cor:SCT_min_CM}, \ref{cor:SCT_max_CM}, \ref{cor:RCblssp}, and \ref{cor:SCTblssp} to compute the two parts of $\beta_{SCT}^{mux}$ and $\beta_{RC}^{mux}$ such as:
	%	
	%	\begin{itemize}
	%		\item SCT class, the SP part of the minimum service curve is $\beta_{SCT,2}^{sp}(t)$
	%		\item SCT class, the BLS part of the minimum service curve is $(\beta_{SCT,0}^{sp}\otimes\beta_{SCT}^{bls})(t)$ 
	%		\item RC class, the SP part of the minimum service curve is $\beta_{RC}^{sp}(t)$
	%		\item RC class, the BLS part of the minimum service curve is $\beta_{RC}^{bls}(t)$ 
	%	\end{itemize}
	%	
	%	For each service curve $\beta_{x}^{y}$, we define $R_{\beta,x}^{y}$, and  $T_{\beta,x}^{y}$ such as $y\in\{sp,bls\}$ and $x\in\{SCT,2;SCT;RC\}$ $\beta_{x}^{y}=R_{\beta,x}^{y}\cdot \left(t-T_{\beta,x}^{y} \right)^+ $
	%	
	%	When considering $Scenario_{SCT}=( UR_{SCT}\in\left[0.1:78\right], UR_{RC}=20,L_M=22~118,L_R=0,$ $BW=0.46)$, we obtain the following SCT service curves:
	%	\begin{itemize}
	%		\item $R_{\beta,SCT,2}^{sp}=0.8\cdot C$;
	%		\item $T_{\beta,SCT,2}^{sp}=0.4$ms;
	%		\item $R_{\beta,SCT}^{bls}=0.45\cdot C$;
	%		\item $T_{\beta,SCT}^{bls}+\frac{MFS_{j\in\{SCT\cup RC \cup BE\}}}{R_{\beta,SCT}^{bls}}=0.05$ms.
	%	\end{itemize} 
	%	
	
	In Fig. \ref{fig:SCTImpactCM}(a), when the SCT utilization rate increases, we observe an increase of the SCT delay bounds starting close to 0 thanks to a low initial latency and high rate of the guaranteed minimum service curve due to the  BLS part. Then, at $UR_{SCT}=20\%$, the delay bounds due to the BLS part reaches the delay bounds due to the SP part. After this point, the delay follows the maximum rate according to Corollary~\ref{cor:SCTblssp}, i.e., the SP part.
	%	Still considering $Scenario_{SCT}$, we obtain the following RC service curves:
	%	\begin{itemize}
	%		\item $R_{\beta,RC}^{sp}=[0.2:1]\cdot C$;
	%		\item $T_{\beta,RC}^{sp}=[0.01:1.6]$ms;
	%		\item $R_{\beta,RC}^{bls}=0.53\cdot C$;
	%		\item $T_{\beta,RC}^{bls}=0.06$ms.
	%	\end{itemize} 
	%	
	Furthermore, in Fig. \ref{fig:SCTImpactCM}(b), when the SCT utilization rate increases, we observe a noticeable increase of the RC delay bounds following the increasing guaranteed rate of the service curve due to the SP part. Then, after $UR_{SCT}=18\%$, the delay bounds becomes constant since it depends on the strict minimum service curve due to the BLS part, which is constant when the RC utilization rate is constant.
	\begin{figure}[htbp]
		%\begin{minipage}[b]{0.4\linewidth}
		\centering
		\subfigure[]{\includegraphics[width=0.345\columnwidth, angle=270]{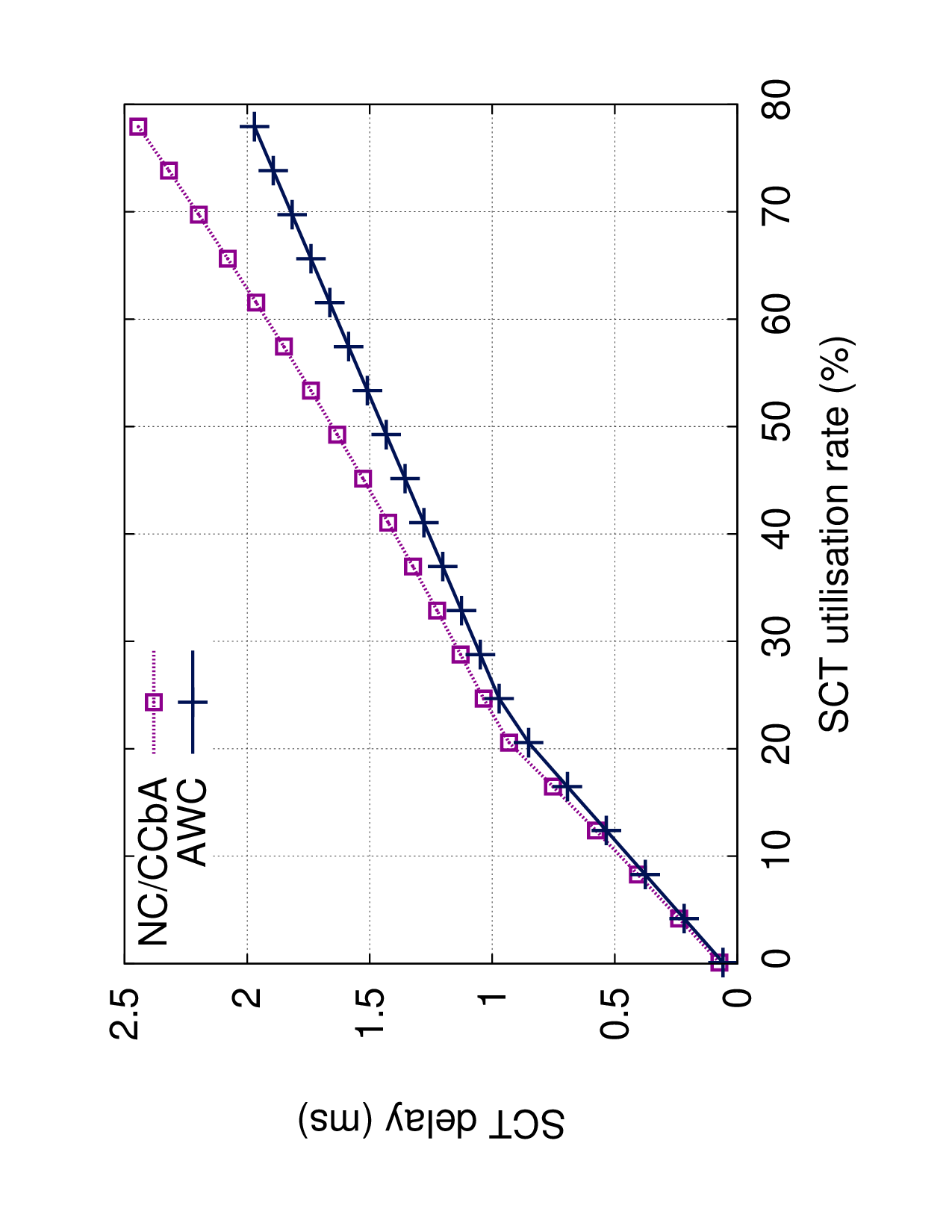}}
		%\label{fig:SCTSCT}
		% \end{minipage}
		%\begin{minipage}[b]{0.4\linewidth}
		\centering
		\subfigure[]{\includegraphics[width=0.345\columnwidth, angle=270]{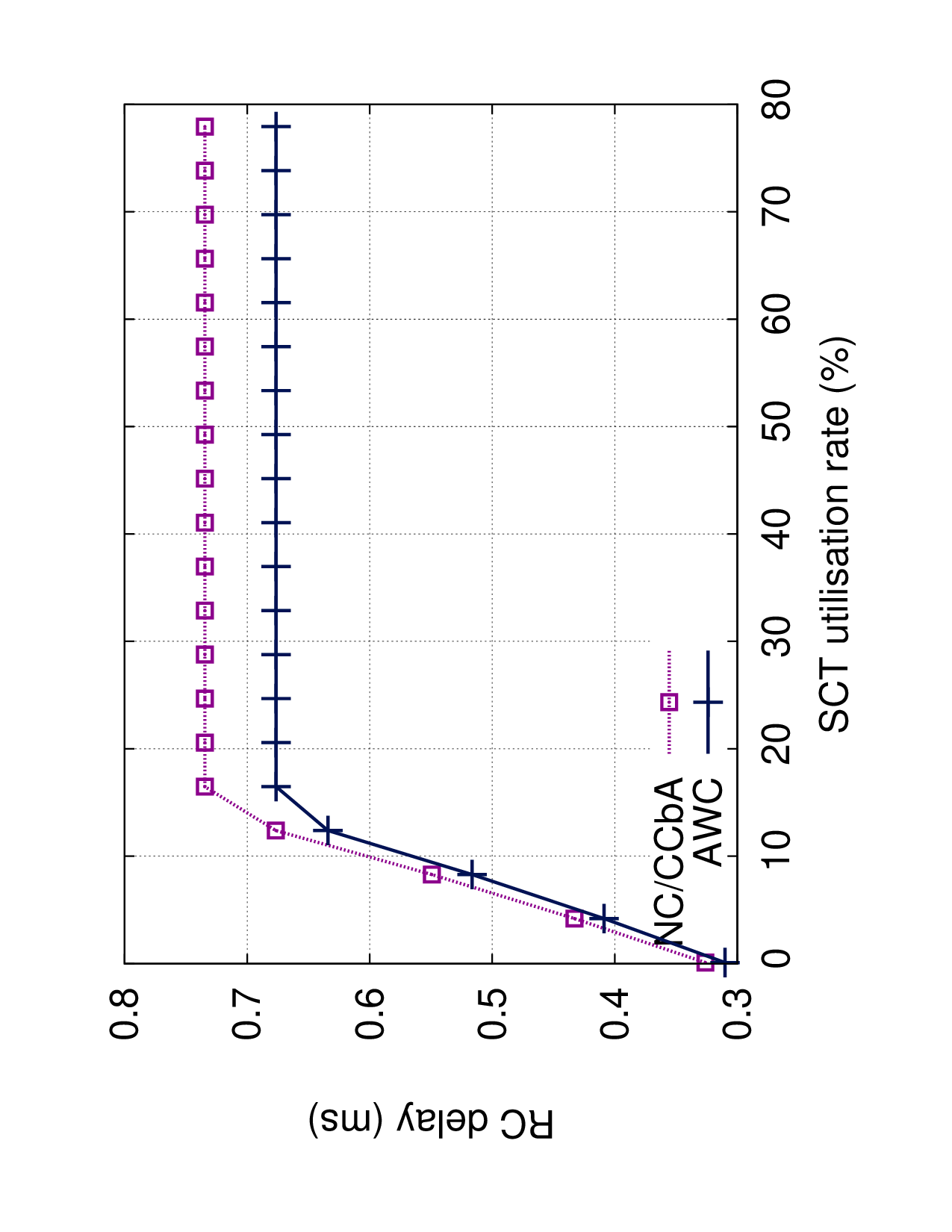}}
		%\label{fig:SCTAFDX}
		%\end{minipage}
		\footnotesize \caption[NC vs AWC: impact of SCT maximum utilisation rate on delay bounds]{NC vs AWC - impact of SCT maximum utilization rate on: (a) SCT delay bounds; (b) RC delay bounds, with $Scenario_{SCT}=\left( UR_{SCT}\in\left[0.1:78\right], UR_{RC}=20,L_M=22~118,L_R=0,BW=0.46\right)$}
		\label{fig:SCTImpactCM}
	\end{figure}	
	
	Hence, this analysis shows that the SCT utilization rate has an inherent impact on SCT and RC delay bounds, where:	
	\begin{itemize}
		\item the SCT delay bound is ruled below $UR_{SCT}=20\%$ by the strict minimum service curve due to the BLS part, $(\beta_{SCT,0}^{sp}\otimes\beta_{SCT}^{bls})(t)$; whereas after $UR_{SCT}=20\%$, it is ruled by the strict minimum service curve due to the SP part, $\beta_{SCT,2}^{sp}(t)$;
		\item the RC delay bound is ruled below $UR_{SCT}=18\%$ by the strict minimum service curve due to the SP part, $\beta_{RC}^{sp}(t)$; whereas after $UR_{SCT}=18\%$, it is ruled by the strict minimum service curve due to the BLS part, $\beta_{RC}^{bls}(t)$.
	\end{itemize}
	These results infer that the variation of $UR_{SCT}$ has a large impact on both the SCT (resp. RC) delay bounds with a maximum variation of 2.5ms (resp. 0.4ms), i.e., the delay bound is multiplied by 24 (resp. 2.3).
	
	%\newpage	
	%\noindent
\hfill\\
\textit{Impact of $UR_{RC}$}

	Similar analysis conducted for $Scenario_{RC}=(UR_{SCT}=20, UR_{RC}\in\left[0.5:72\right],$ \\$L_M=22~118,L_R=0,BW=0.46)$ in Fig. \ref{fig:RCImpactCM}  shows that the RC utilization rate has an inherent impact on SCT and RC delay bounds. We observe a behavior symmetrical to the one noticed in $Scenario_{SCT}$:	
	\begin{itemize}
		\item the SCT delay bound is ruled below $UR_{RC}=20\%$ by the strict minimum service curve due to the SP part; whereas after $UR_{RC}=20\%$, it is ruled by the strict minimum service curve due to the BLS part;
		\item the RC delay bound is ruled below $UR_{RC}=30\%$ by the strict minimum service curve the BLS part; whereas after $UR_{RC}=30\%$, it is ruled by the strict minimum service curve  the SP part.
	\end{itemize} 
	
	These results show that the variation of $UR_{RC}$ has a large impact on both the SCT (resp. RC) delay bounds with a maximum variation of 0.58ms (resp. 2.2ms), i.e., the delay bound is multiplied by 2.8 (resp. 23).

	\begin{figure}[htbp]
		%\begin{minipage}[b]{0.4\linewidth}
		\centering
		\subfigure[]{\includegraphics[width=0.345\columnwidth, angle=270]{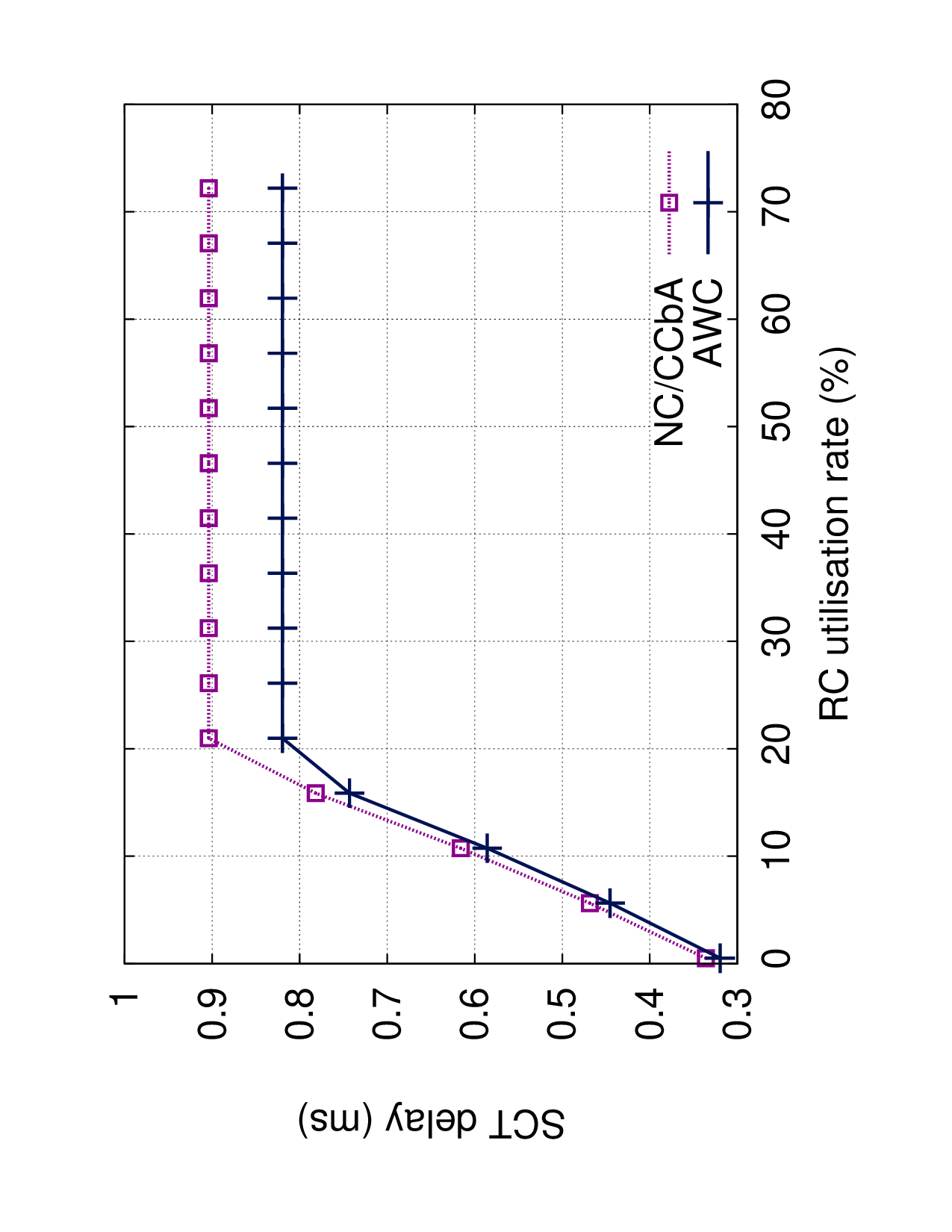}}
		%\label{fig:SCTSCT}
		% \end{minipage}
		%\begin{minipage}[b]{0.4\linewidth}
		\centering
		\subfigure[]{\includegraphics[width=0.345\columnwidth, angle=270]{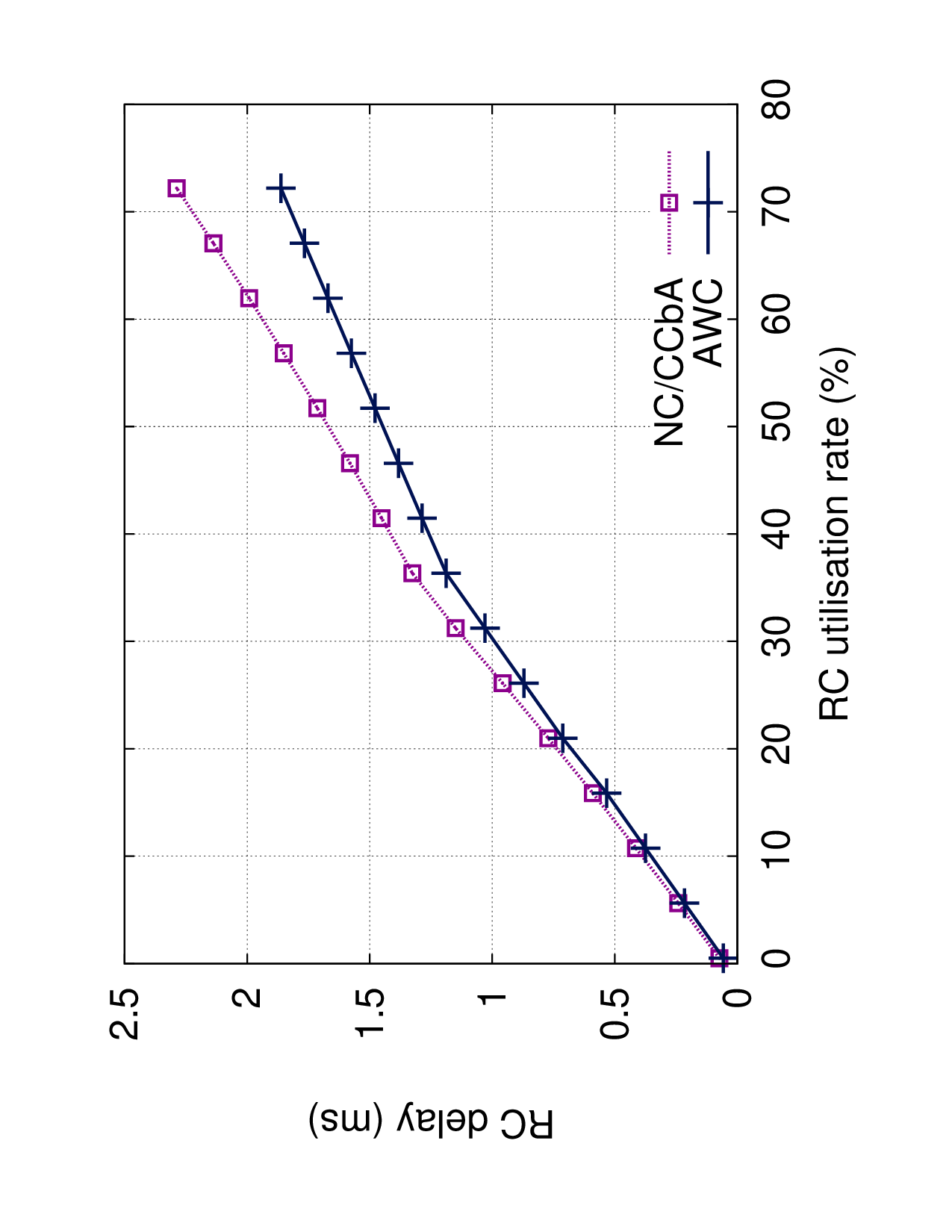}}
		%\label{fig:SCTAFDX}
		%\end{minipage}
		\footnotesize \caption[NC vs AWC: impact of RC maximum utilisation rate on delay bounds]{NC vs AWC - impact of RC maximum utilisation rate on: (a) SCT delay bounds; (b) RC delay bounds, with $Scenario_{RC}=\left(UR_{SCT}=20, UR_{RC}\in\left[0.5:72\right],L_M=22~118,L_R=0,BW=0.46\right)$}
		\label{fig:RCImpactCM}
	\end{figure}

%	Finally, it is worth noting that our model has the same behaviour for SCT and RC (two linear curves) as the simulations in Fig. \ref{fig:SCTImpactPreli} and Fig. \ref{fig:RCImpactPreli} from Chapter~\ref{chapter3}. 
	
	\hfill\\
	\textit{Impact of $L_M$}
	
	Concerning SCT delay bounds, in Fig. \ref{fig:LMImpactCM}(a), before $L_M=50~000$~bits (which represents sending windows allowing the transmission of 200 consecutive SCT frames)
	they are ruled by $\beta^{bls}_{SCT}$. When $L_M$ decreases, the minimum service rate of $\beta_{SCT}^{bls}$ increases and its initial latency decreases. Consequently, the SCT delay bounds decrease with CCbA when $L_M$ decreases toward $L_R$. After $L_M=50~000$~bits, the SCT delay bound is constant, because it is ruled by a constant $\beta^{sp}_{SCT,0}$. 
	
	Concerning RC delay bounds, in Fig. \ref{fig:LMImpactCM}(b), they are ruled by $\beta_{RC}^{bls}(t)$ (see Corollary~\ref{cor:RCblssp}). When increasing $L_M$, both  the rate and initial latency increase.
	Before $L_M=7~000$~bits (which represents sending windows allowing the transmission of 24 consecutive SCT frames), the impact of the increasing rate is stronger, resulting in the delay bound decrease; whereas after $L_M=7~000$~bits, the impact of the initial latency takes over, resulting in a delay bound increase.
	
	These results show that the variation of $L_M$ has a limited impact on the SCT delay bounds with a maximum variation of 5\%, and a larger effect on RC delay bounds with a variation of  35\%.
	
	\begin{figure}[htbp]
		%\begin{minipage}[b]{0.4\linewidth}
		\centering
		\subfigure[]{\includegraphics[width=0.345\columnwidth, angle=270]{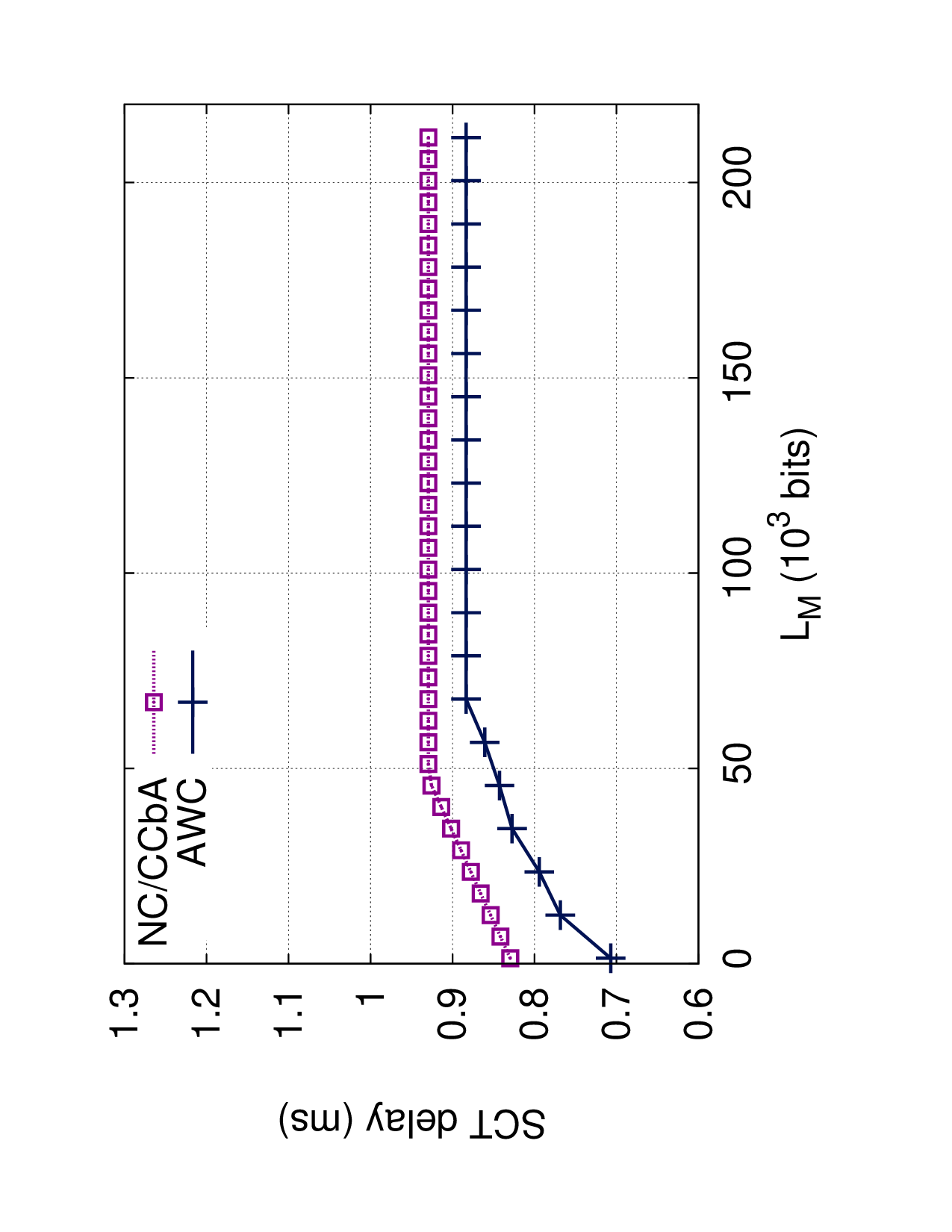}}
		%\label{fig:SCTSCT}
		% \end{minipage}
		%\begin{minipage}[b]{0.4\linewidth}
		\centering
		\subfigure[]{\includegraphics[width=0.345\columnwidth, angle=270]{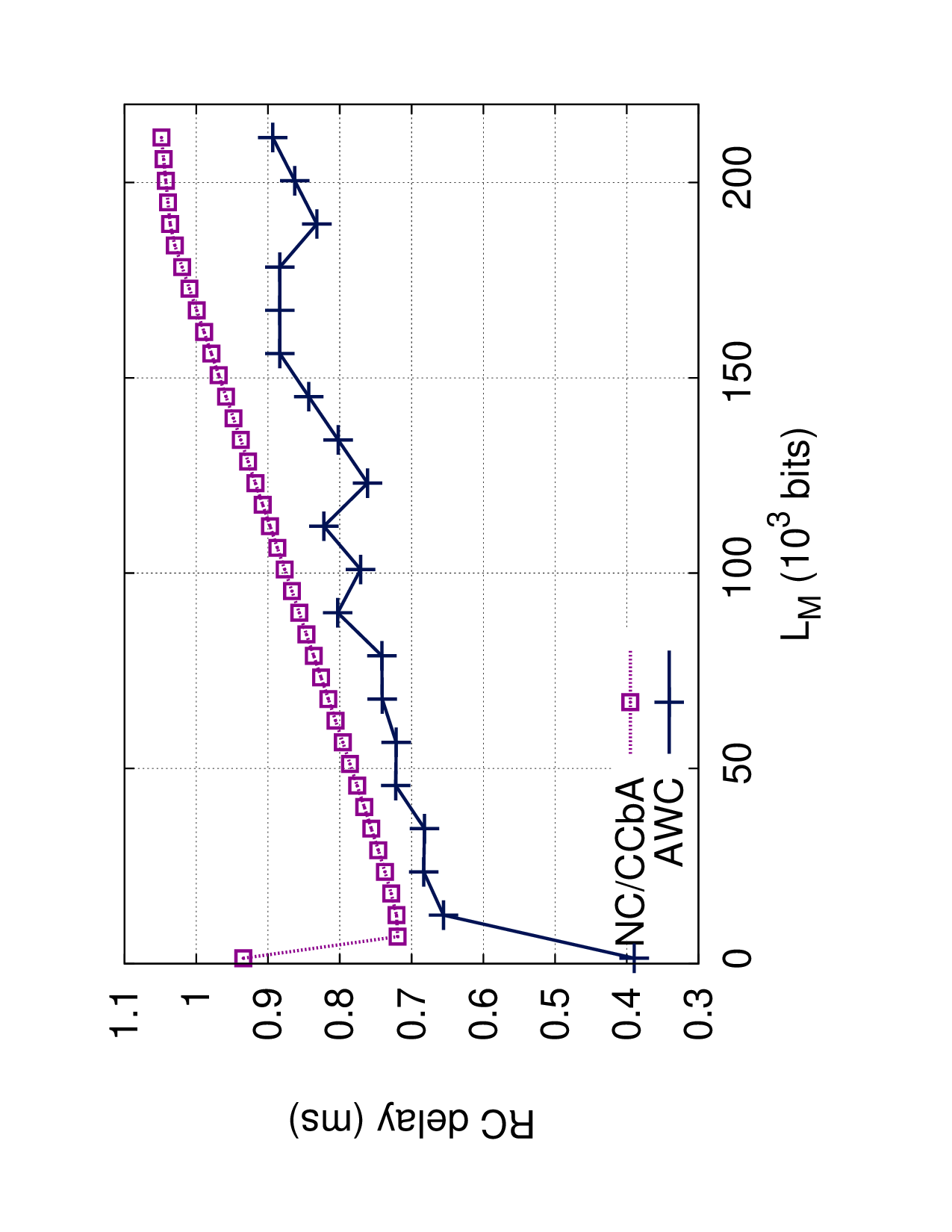}}
		%\label{fig:SCTAFDX}
		%\end{minipage}
		\footnotesize \caption[NC vs AWC: impact of $L_M$ on delay bounds]{NC vs AWC - impact of $L_M$ on: (a) SCT delay bounds; (b) RC delay bounds, with $Scenario_{LM}=\left(UR_{SCT}=20, UR_{RC}=20,L_M\in\left[1382.4..216~830\right],L_R=1177.6 ,BW=0.46\right)$}
		\label{fig:LMImpactCM}
	\end{figure}
	\newpage				
	\noindent\textit{Impact of $BW$}
	
	As shown in Fig. \ref{fig:BWImpactCM}(a), when $BW$ is below 40\%, the SCT delay bound is constant since it is ruled by $\beta_{SCT,2}^{sp}$ (see Corollary~\ref{cor:SCTblssp}): the bandwidth allocated by the BLS is not sufficient to send the SCT traffic. As a consequence, the SCT traffic also uses the bandwidth left by the RC traffic. However, for BW higher than 40\%, SCT delay bound decreases. This is due to the fact that $I_{idle}=BW\cdot C$, thus the guaranteed rate of the SCT minimum service curve $\beta_{SCT}^{bls}$ increases while its initial latency  decreases.

	\begin{figure}[htbp]
		%\begin{minipage}[b]{0.4\linewidth}
		\centering
		\subfigure[]{\includegraphics[width=0.345\columnwidth, angle=270]{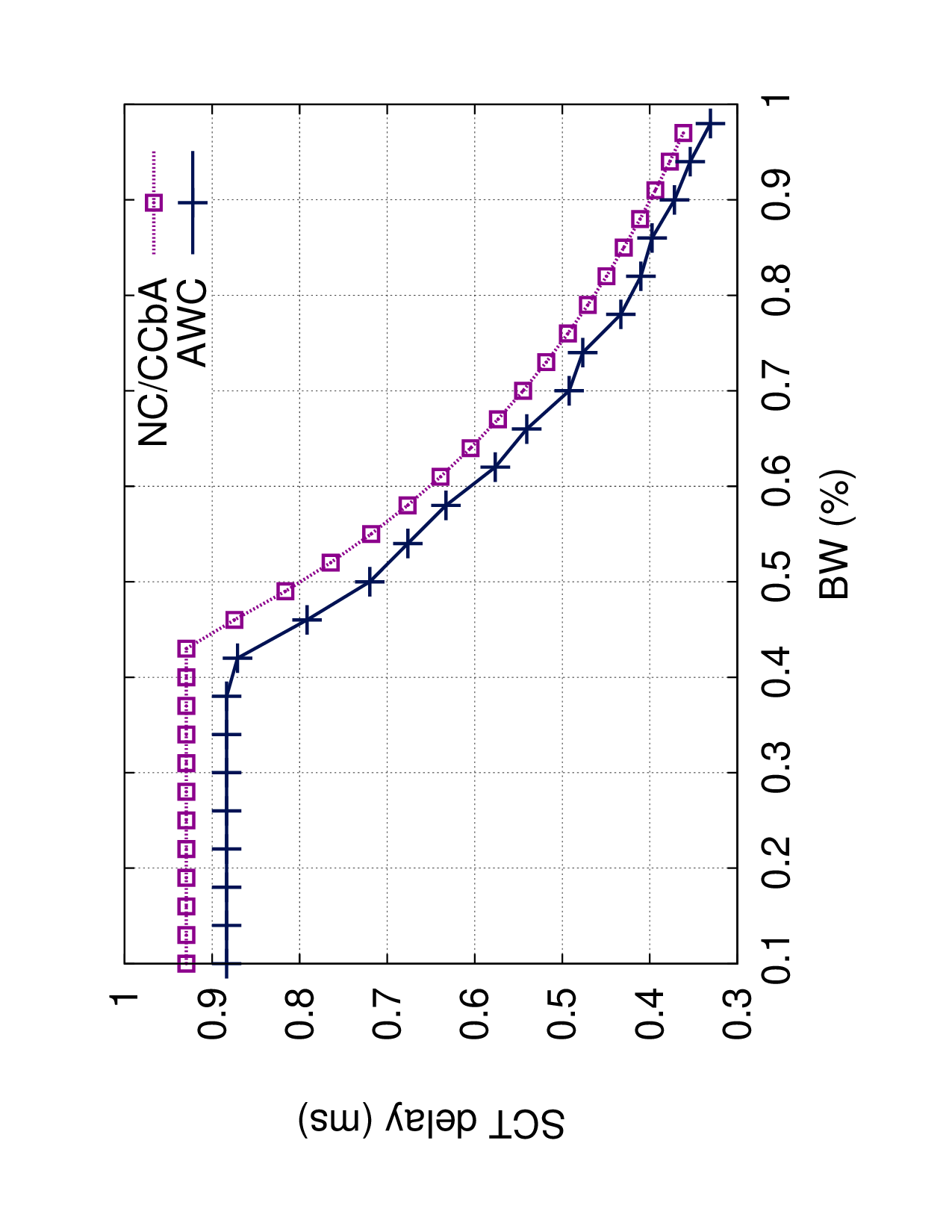}}
		%\label{fig:SCTSCT}
		% \end{minipage}
		%\begin{minipage}[b]{0.4\linewidth}
		\centering
		\subfigure[]{\includegraphics[width=0.345\columnwidth, angle=270]{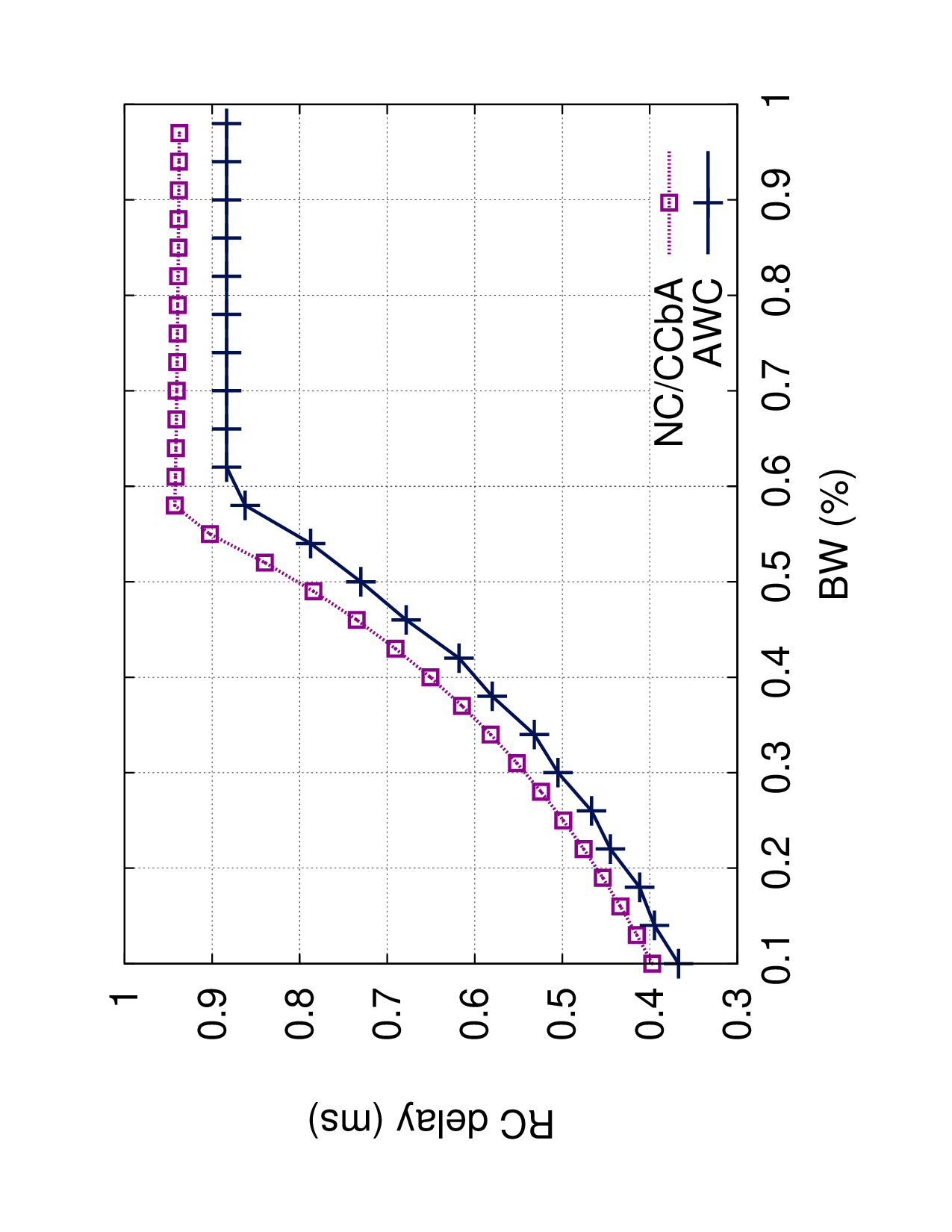}}
		%\label{fig:SCTAFDX}
		%\end{minipage}
		\footnotesize \caption[NC vs AWC: impact of $BW$ on delay bounds]{NC vs AWC - impact of $BW$ on: (a) SCT delay bounds; (b) RC delay bounds, with $Scenario_{BW}=\left(UR_{SCT}=20, UR_{RC}=20,L_M=22~118,L_R=1177.6 ,BW\in\left[0..0.99\right]\right)$}
		\label{fig:BWImpactCM}
	\end{figure}
		
	Concerning the RC delay bounds, we observe in Fig. \ref{fig:BWImpactCM}  the opposite behaviors: for BW higher than 55\%, RC delay bound is constant and ruled by $\beta_{RC}^{sp}$  (see Corollary~\ref{cor:RCblssp}); whereas for BW lower than 55\%, RC delay bound is ruled by $\beta_{RC}^{bls}$. Thus, when increasing  $BW$, $I_{idle}$ increases. This leads to decreasing the guaranteed rate of $\beta_{RC}^{bls}$, and consequently to the delay bounds increase.

	These results  show that the variation of $BW$ has a high impact on both SCT and RC delay bounds with an increase of 0.60ms (resp. 0.55ms) for SCT (resp. RC), representing an increase of  170\% (resp. 137\%).

	\hfill\\
	\textit{Impact of $L_R$}

SCT and RC delay bounds when varying  $L_R$ are shown in Fig. \ref{fig:LRImpactCM}. We notice that the SCT delay bound with CCbA remains firmly below the limit set by 
$\beta_{SCT,2}^{sp}$, and it is always ruled by $\beta_{SCT}^{bls}\otimes\beta_{SCT,0}^{sp}$. When $L_R$ increases, $L_M-L_R$ decreases, leading to the slow decrease of the initial latency of $\beta_{SCT}^{bls}$. Additionally,	
$MFS_{RC}^{sat}$ decreases until it hits 0 at $L_R=MFS_{RC}\cdot \frac{I_{idle}}{C}$ (according to Theorem~\ref{cor:k-maxCCbA}). This happens  in Fig. \ref{fig:LRImpactCM} at $L_R=0.053\cdot L_M$.

	\begin{figure}[htbp]
		%\begin0{minipage}[b]{0.4\linewidth}
		\centering
		\subfigure[]{\includegraphics[width=0.345\columnwidth, angle=270]{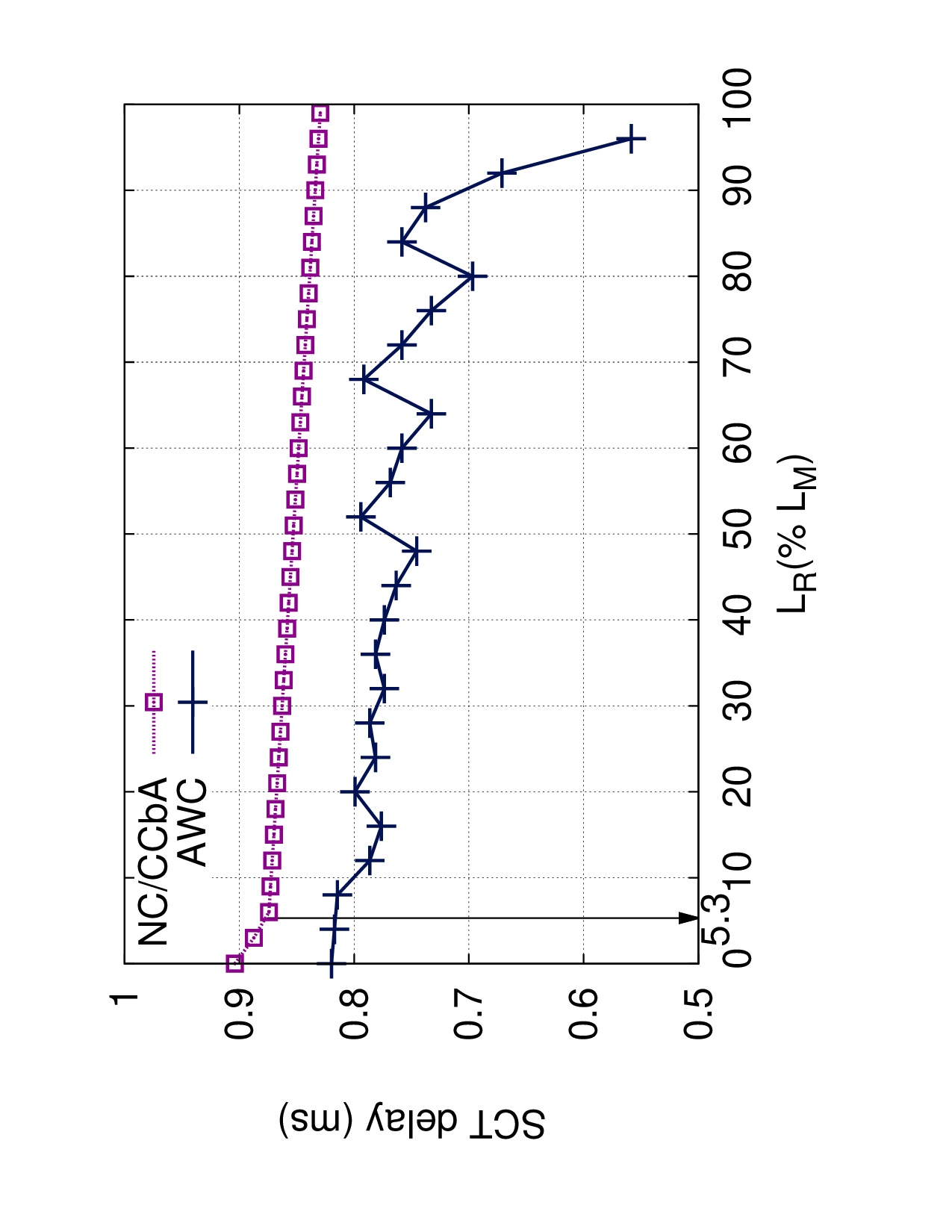}}
		%\label{fig:SCTSCT}
		% \end{minipage}
		%\begin{minipage}[b]{0.4\linewidth}
		\centering
		\subfigure[]{\includegraphics[width=0.345\columnwidth, angle=270]{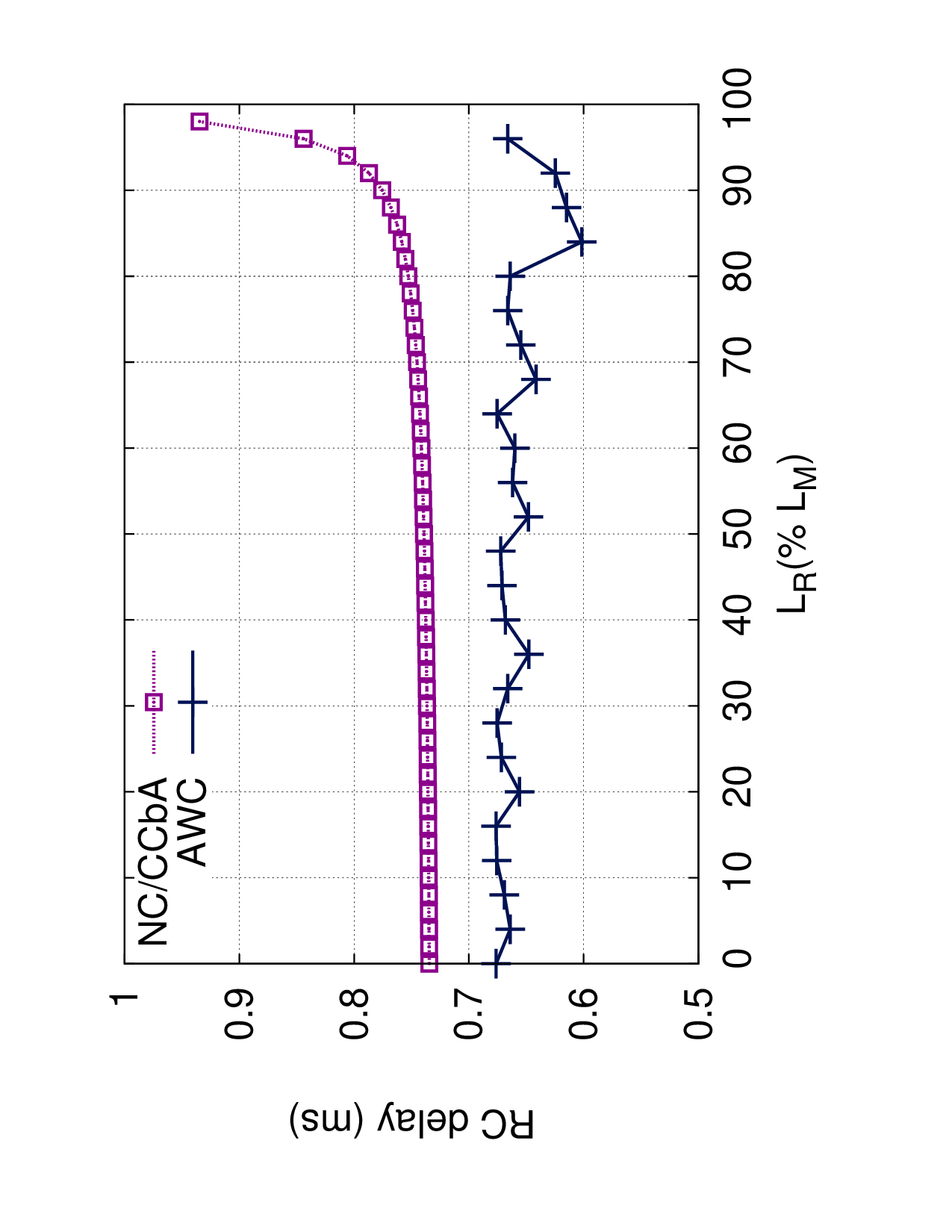}}
		%\label{fig:SCTAFDX}
		%\end{minipage}
		\footnotesize \caption[NC vs AWC: impact of $L_R$ on delay bounds]{NC vs AWC - impact of $L_R$ on: (a) SCT delay bounds; (b) RC delay bounds, with $Scenario_{LR}=\left(UR_{SCT}=20, UR_{RC}=20,L_M=22~118,L_R\in\left[0..0.99\right]\cdot L_M,BW=0.46\right)$}
		\label{fig:LRImpactCM}
	\end{figure}
	
In Fig. \ref{fig:LRImpactCM}(b), the RC delay bound is ruled by $\beta_{RC}^{bls}$. Hence, the RC delay bounds increase when $L_R$ increases.

	These results show that $L_R$  has a limited impact on the SCT delay with a maximum variation of 9\%; whereas its impact is higher on RC delay bounds, with a variation of 28\%.
	\hfill\\
	
	\textbf{\textit{To conclude the sensitivity analysis of the CCbA model, both $L_R$ and $L_M$ have limited impact on the SCT delay bound (under 10\%), but a large one on RC delay bounds (around 30\%). Moreover, the largest impact is due to the the SCT and RC utilization rates and $BW$ parameter, with delay bound increases up to 137\% and 170\% for RC and SCT respectively.}}			
	
	\hfill\\
	\textbf{Tightness Analysis}	
	
	Thanks to the modelisation of both the BLS and SP parts of the output port multiplexer, the SCT and RC delay bounds are very tight in reference to AWC, under the different considered scenarios. For instance, as shown in Fig. \ref{fig:SCTImpactCM}(a), when varying the SCT utilization rate, the maximum gap between the AWC and the NC delay bounds of SCT is $0.5ms$, which represents an increase of $33\%$. 
	Moreover, when varying the RC utilization rate, we have similar results: %the largest gap between NC and AWC-1 is $0.07ms$, which represents an increase of $11\%$ in Fig. \ref{fig:SCTImpactCM}(b). T
	the largest percentage increase of the RC delay bounds happens for a gap of $0.5ms$ and represents $27\%$ in Fig. \ref{fig:RCImpactCM}(b). When varying $BW$, $L_M$ and $L_R$, we also have similar results in Fig. \ref{fig:LMImpactCM}, Fig. \ref{fig:LRImpactCM} and Fig. \ref{fig:BWImpactCM}, for both SCT and RC delay bounds.

	Finally, with a gap between AWC and the NC model usually below 15\% under the various scenarios, the proposed model can be considered  an accurate one.	
	\hfill\\
	
	\textbf{\textit{Thus, the tightness of the model is very high: the gap between the NC model and AWC is usually less than 15\%, with a peak at 28\%. }}

\hfill\\
\textbf{Comparing NC/CCbA to CPA and NC/WbA}
	
	In this section, we compare our proposed model to the CPA and NC/WbA models. We start by comparing the computation times, before studying the SCT and RC delay bounds.	These comparison are done on the single-hop case study.
	
	Concerning the NC models, i.e., WbA and CCbA, they lead to identical delay bounds at $L_R=0$ (due to identical service curves) and are presented as a single curve for $Scenario_{SCT}$ and $Scenario_{RC}$.	
	
	\hfill\\
	\textit{Computation times}
	
	For each scenario, we consider the computation time necessary to obtain all the delays in the corresponding scenario.

	With both NC models, we compute each delay bound through simple linear computations. Their computation delays are the same since only the rate and initial latency of the BLS node are different, which does not impact the computation time.
	With CPA however, the computation is much more complex:
	
	\begin{itemize}
		\item SCT delay bounds necessitate finding a maximum using a while-loop;
	
		\item RC delay bounds necessitate solving:
		\begin{itemize}
			\item a maximization problem;
			\item fixed point problems;
			\item ILP problems.
			
		\end{itemize}		
	\end{itemize}
	
	We can see clearly in Table \ref{table:computationTime} that the NC models necessitate much less computation power than the CPA model.  In fact, we can notice that the NC models are between 20~000 and 100~000 times faster than the CPA model.

	\begin{table}[h!]
		\footnotesize
		\centering
		\begin{tabular}{|c|c|c|c|}
			\hline
			scenario & CPA  & NC &  CPA/NC\\
			& (s)& (s)&\\
			\hline		
			varying SCT & 97.2 & 0.0051 & 19~058\\
			\hline		
			varying RC &  71.4 & 0.0032 & 22~187 \\	
			\hline		
			varying $L_M$ & 384.4&  0.0072 & 53~388 \\
			\hline		
			varying $L_R$ &   1059 & 0.0100 & 105~900\\					
			\hline		
			varying $BW$ &  390 & 0.0095 & 41~052\\					
			
			\hline
		\end{tabular}
		\footnotesize \caption{CPA and NC models computation times for the different scenarios}
		\bigskip
		\label{table:computationTime}
	\end{table}

	\hfill\\
	\textit{SCT delay bounds}

	We can see in Fig. \ref{fig:SCTImpactCPA}(a) that the SCT delay bounds of the three models are overlapping for low values of $UR_{SCT}$. Then, they diverge at $UR_{SCT}=20\%$. The NC curve starts to follow a linear curve with a lower increase rate. The CPA model however, keeps the same rate.
		\begin{figure}[htbp]
			%\begin{minipage}[b]{0.4\linewidth}
			\centering
			\subfigure[]{\includegraphics[width=0.345\columnwidth, angle=270]{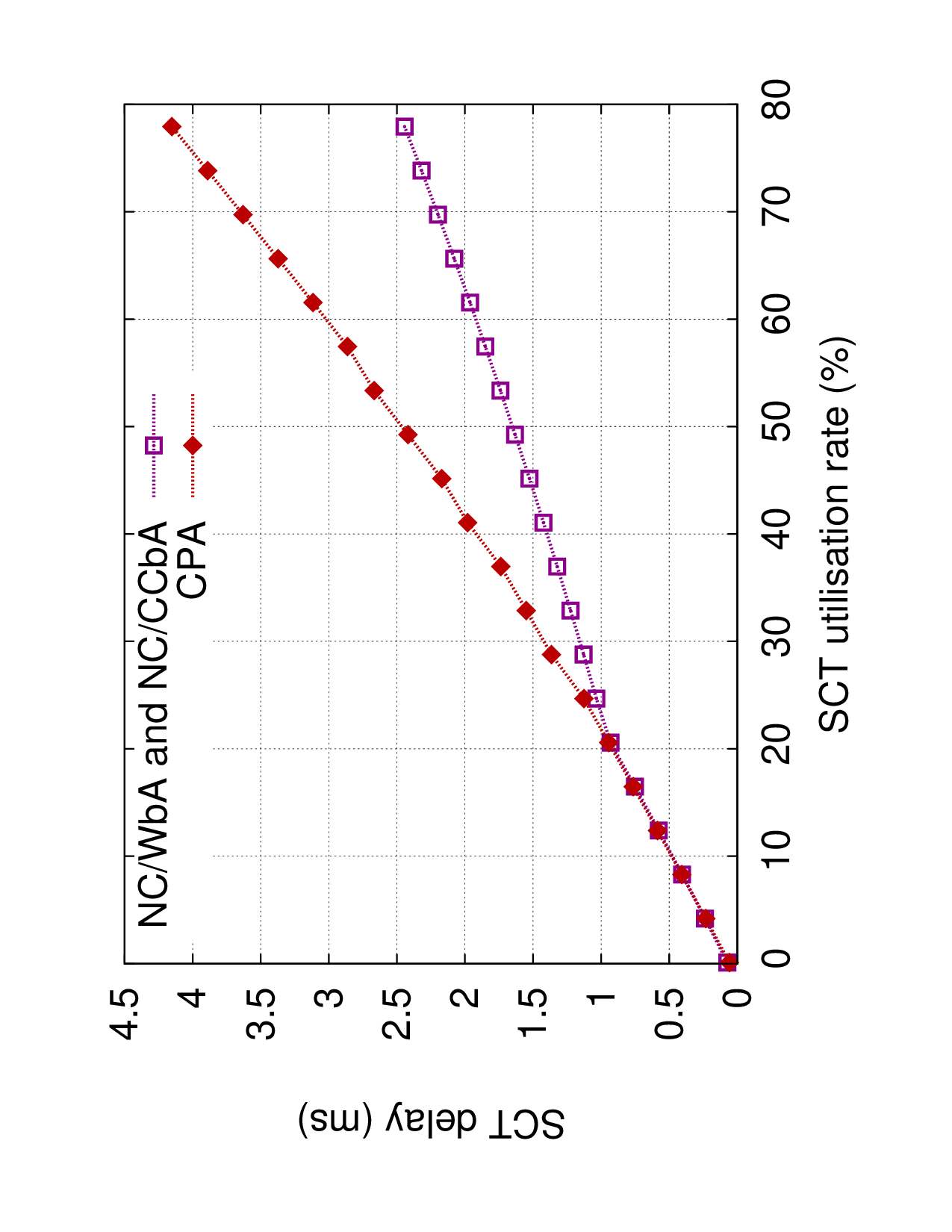}}
			%\label{fig:SCTSCT}
			% \end{minipage}
			%\begin{minipage}[b]{0.4\linewidth}
			\centering
			\subfigure[]{\includegraphics[width=0.345\columnwidth, angle=270]{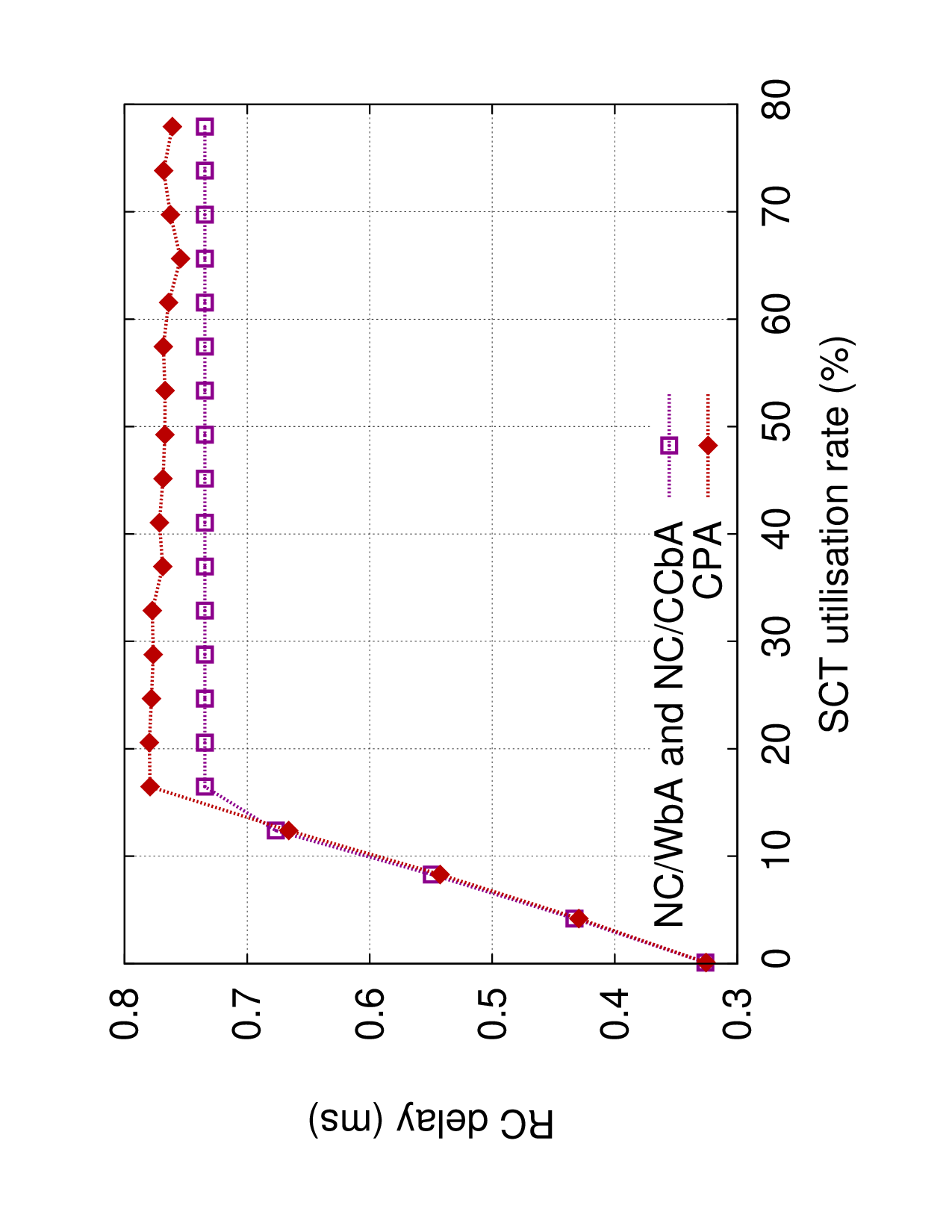}}
			%\label{fig:SCTAFDX}
			%\end{minipage}
			\footnotesize \caption[NC vs CPA: impact of SCT maximum utilisation rate on delay bounds]{NC vs CPA - impact of SCT maximum utilisation rate on: (a) SCT delay bounds; (b) RC delay bounds, with $Scenario_{SCT}=\left( UR_{SCT}\in\left[0.1:78\right], UR_{RC}=20,L_M=22~118,L_R=0,BW=0.46\right)$}
			\label{fig:SCTImpactCPA}
		\end{figure}
		 		
	As a consequence, the gap between CPA and NC curves increases (up to 70\%), which shows the increasing pessimism of CPA under high SCT utilisation rates.

	%Additionally concerning the pessimism of CPA, a small part is due to the pessimism of the replenishment interval itself described in Section~\ref{CPA limitations}, but it is difficult to evaluate. 

	\begin{figure}[htbp]
		%\begin{minipage}[b]{0.4\linewidth}
		\centering
		\subfigure[]{\includegraphics[width=0.345\columnwidth, angle=270]{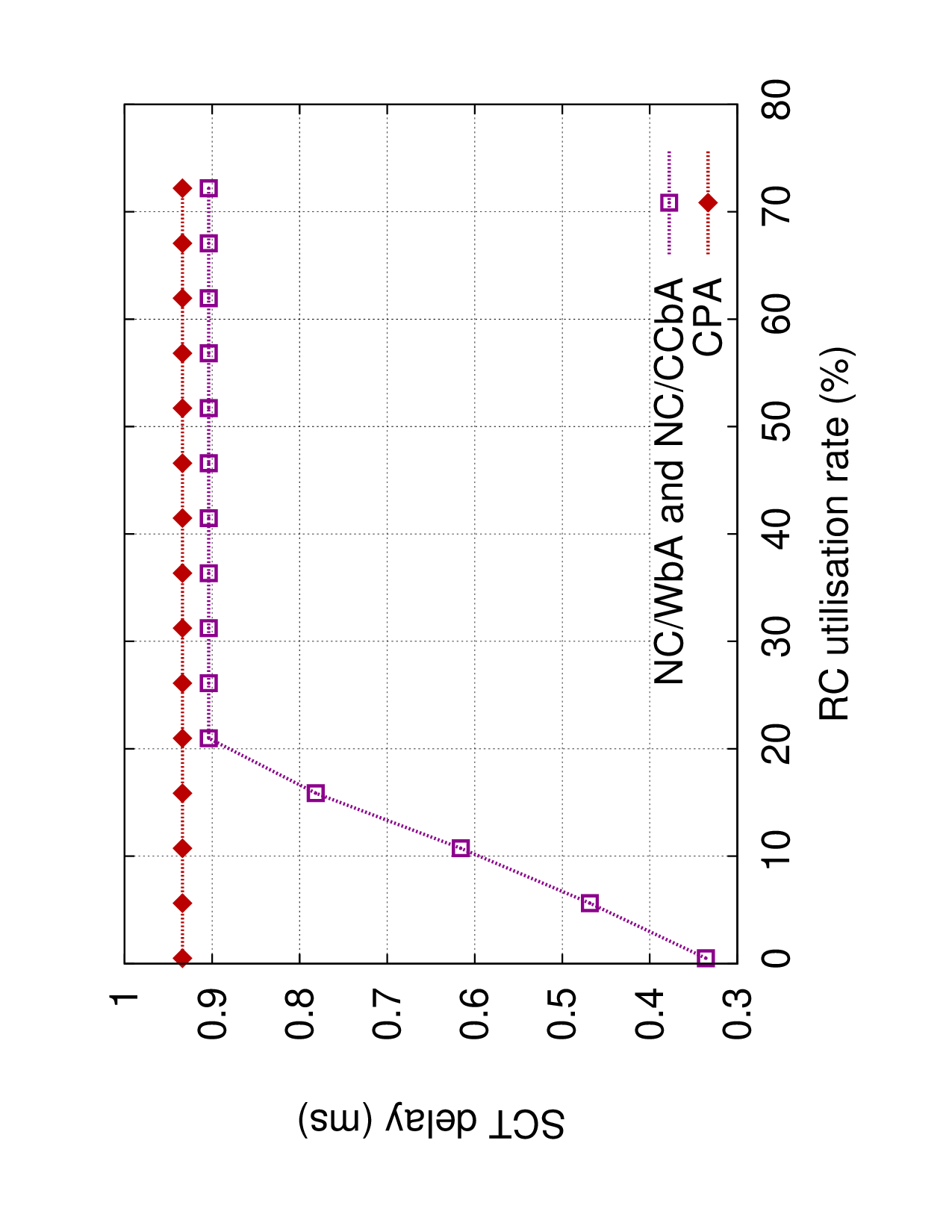}}
		%\label{fig:SCTSCT}
		% \end{minipage}
		%\begin{minipage}[b]{0.4\linewidth}
		\centering
		\subfigure[]{\includegraphics[width=0.345\columnwidth, angle=270]{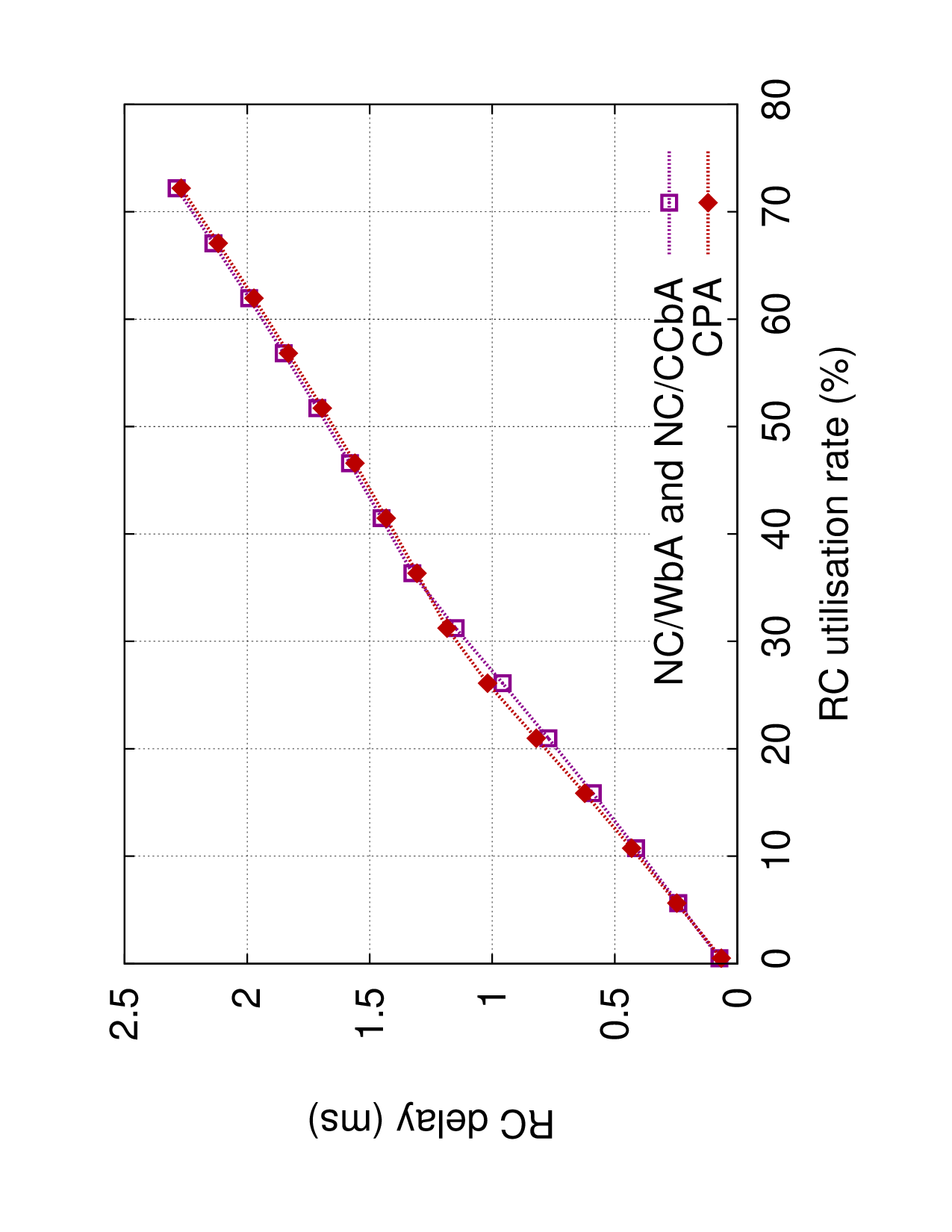}}
		%\label{fig:SCTAFDX}
		%\end{minipage}
		\footnotesize \caption[NC vs CPA: impact of RC maximum utilisation rate on delay bounds]{NC vs CPA - impact of RC maximum utilisation rate on: (a) SCT delay bounds; (b) RC delay bounds, with $Scenario_{RC}=\left(UR_{SCT}=20, UR_{RC}\in\left[0.5:72\right],L_M=22~118,L_R=0,BW=0.46\right)$}
		\label{fig:RCImpactCPA}
	\end{figure}
	The main cause of this pessimism is due to the fact that the so-called CPA shaper blocking impact (see Section~\ref{WCTA}) does not take into account the RC rate. As $BW$ is close to 50\%, the idle and send slopes are very close: the replenishment and service intervals are very similar. So, when the SCT utilization rate becomes visibly larger than the RC one (over $UR_{SCT}=20\%$), the replenishment intervals are not completely filled: SCT traffic is sent even-though the SCT priority is low. This causes the decreasing SCT delay bounds under the NC models. Similar results are visible when varying the different parameters in Fig. \ref{fig:RCImpactCPA}(a), Fig. \ref{fig:LMImpactCPA}(a),  Fig. \ref{fig:LRImpactCPA}(a), and Fig. \ref{fig:BWImpactCPA}(a), where the delay bounds are generally more pessimistic with CPA model than the ones with NC models.
	
	\begin{figure}[htbp]
		%\begin{minipage}[b]{0.4\linewidth}
		\centering
		\subfigure[]{\includegraphics[width=0.345\columnwidth, angle=270]{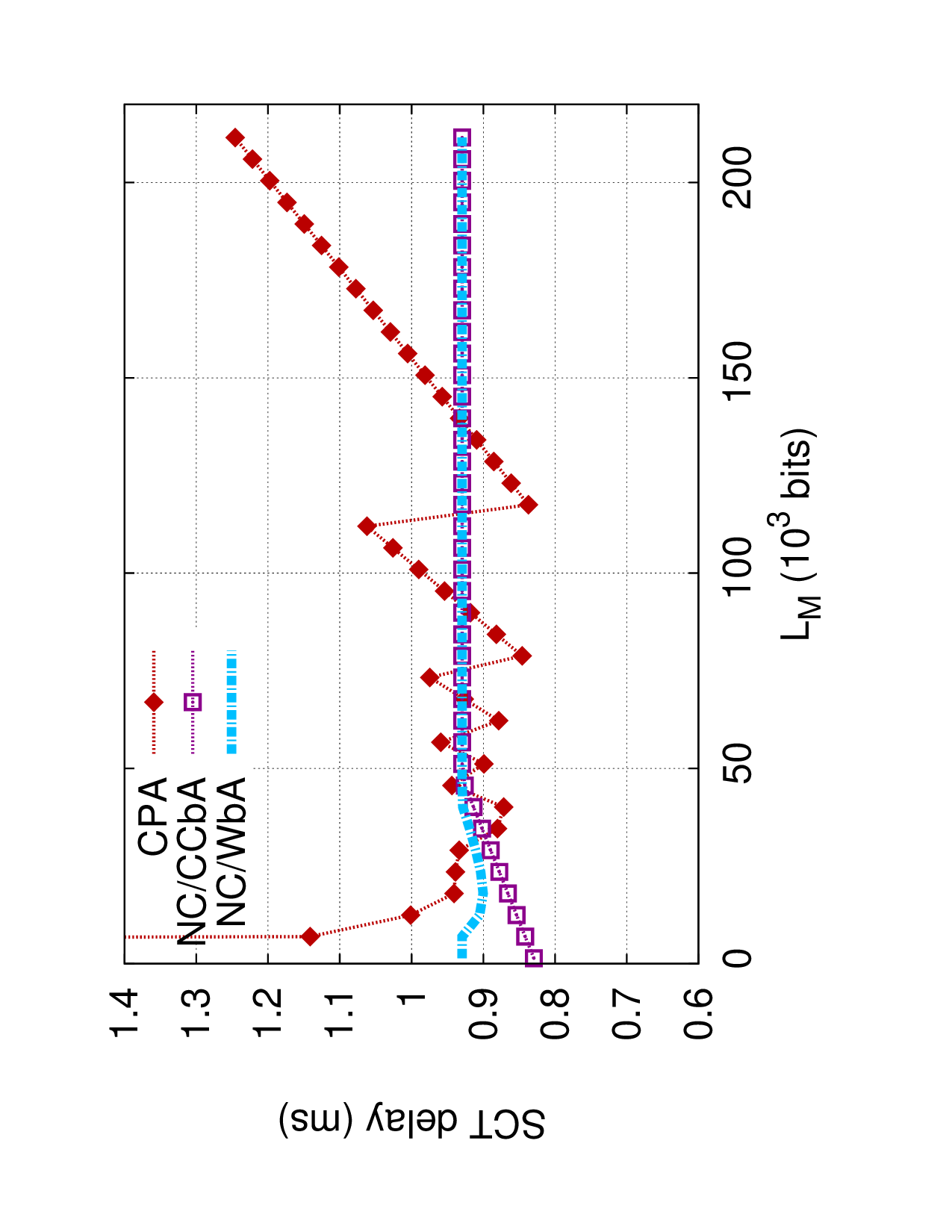}}
		%\label{fig:SCTSCT}
		% \end{minipage}
		%\begin{minipage}[b]{0.4\linewidth}
		\centering
		\subfigure[]{\includegraphics[width=0.345\columnwidth, angle=270]{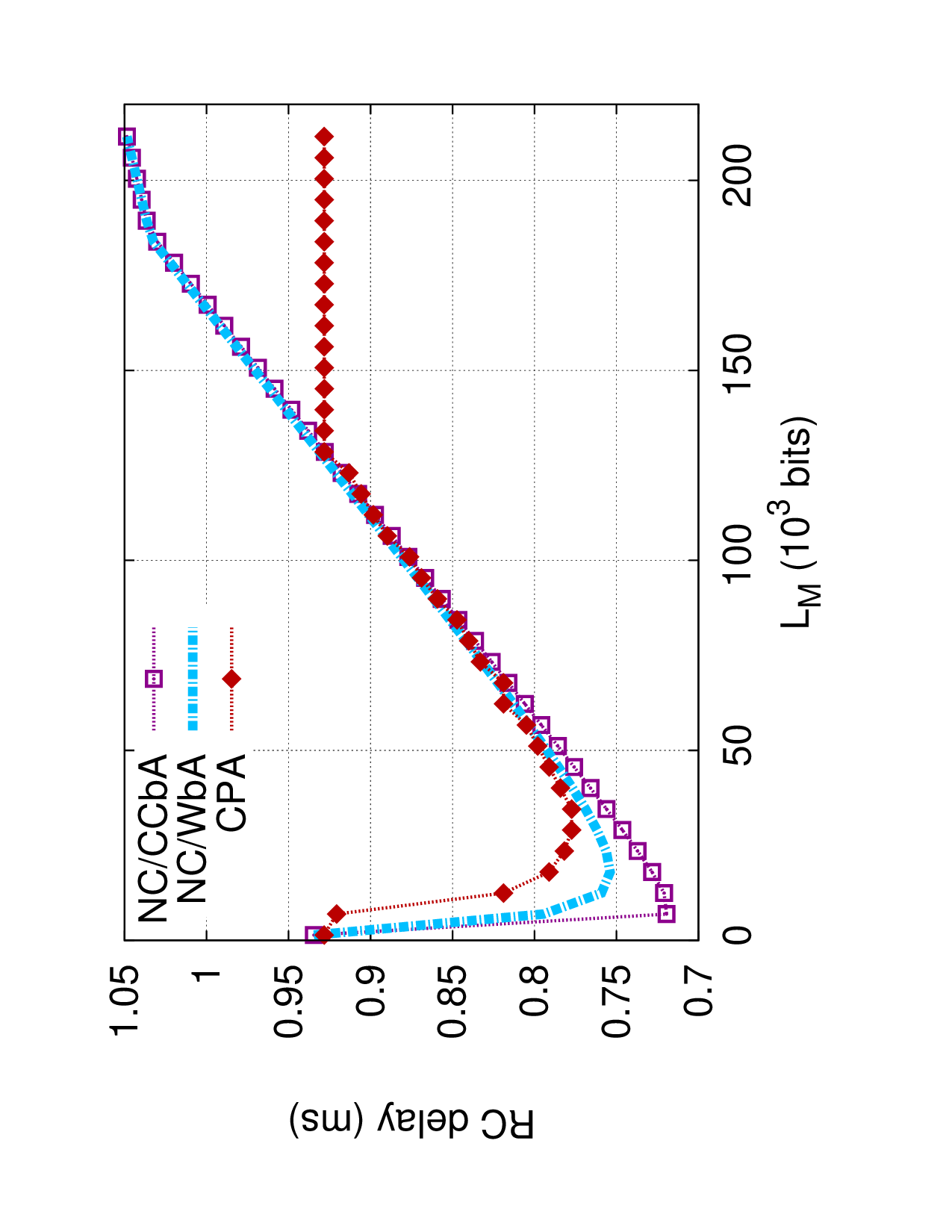}}
		%\label{fig:SCTAFDX}
		%\end{minipage}
		\footnotesize \caption[NC vs CPA: impact of $L_M$ on delay bounds]{NC vs CPA - impact of $L_M$ on: (a) SCT delay bounds; (b) RC delay bounds, with $Scenario_{LM}=\left(UR_{SCT}=20, UR_{RC}=20,L_M\in\left[1382.4..216~830\right],L_R=1177.6 ,BW=0.46\right)$}
		\label{fig:LMImpactCPA}
	\end{figure}	
	
	However,  in  Fig. \ref{fig:LMImpactCPA}(a), we can see that SCT delay bounds with CPA are sometimes lower than the ones with NC models.  This fact confirms our conclusions in Section~\ref{WCTA} about the CPA model optimism. %Concerning RC delay bounds, the CPA model has better results when $L_M$ is large i.e. when Burst Limiting Shaper allows large bursts resulting in large RC delays.

	Concerning the comparison of NC/WbA and NC/CCbA, we can see in all figures that the SCT delay bounds with CCbA is consistently equal of lower than with WbA. In particular in Fig. \ref{fig:LRImpactCPA}(a), we can see that, as expected in Section~\ref{WCTA}, the impact of $L_R$ is better taken into account with CCbA, with the SCT  delay bounds under CCbA below the delay bounds with WbA for $L_R>0$.

	\hfill\\
	\textit{RC delay bounds}
	
	%Concerning the RC delays, we can see in Fig. \ref{fig:SCTImpactCPA}(b), Fig. \ref{fig:RCImpactCPA}(b), Fig. \ref{fig:BWImpactCPA}(b) and Fig. \ref{fig:LRImpactCPA}(b), that the CPA model is usually larger when the behavior is ruled by the BLS (for instance see in Fig. \ref{fig:SCTImpactCPA}(b) with $UR_{SCT}\geqslant 15\%$). When the behavior is ruled by the SP both models have very similar delay bounds.
	
	Concerning the RC traffic, in Fig. \ref{fig:SCTImpactCPA}(b)) and Fig. \ref{fig:RCImpactCPA}(b) all three analytical models have the same shape. In NC, this is again thanks to the association of the BLS and SP parts, $\beta_{RC}^{bls}$ and $\beta_{RC}^{sp}$. In the CPA model, this is thanks to solving an ILP problem, which takes into account the maximum available SCT traffic.

	\begin{figure}[htbp]
		%\begin{minipage}[b]{0.4\linewidth}
		\centering
		\subfigure[]{\includegraphics[width=0.345\columnwidth, angle=270]{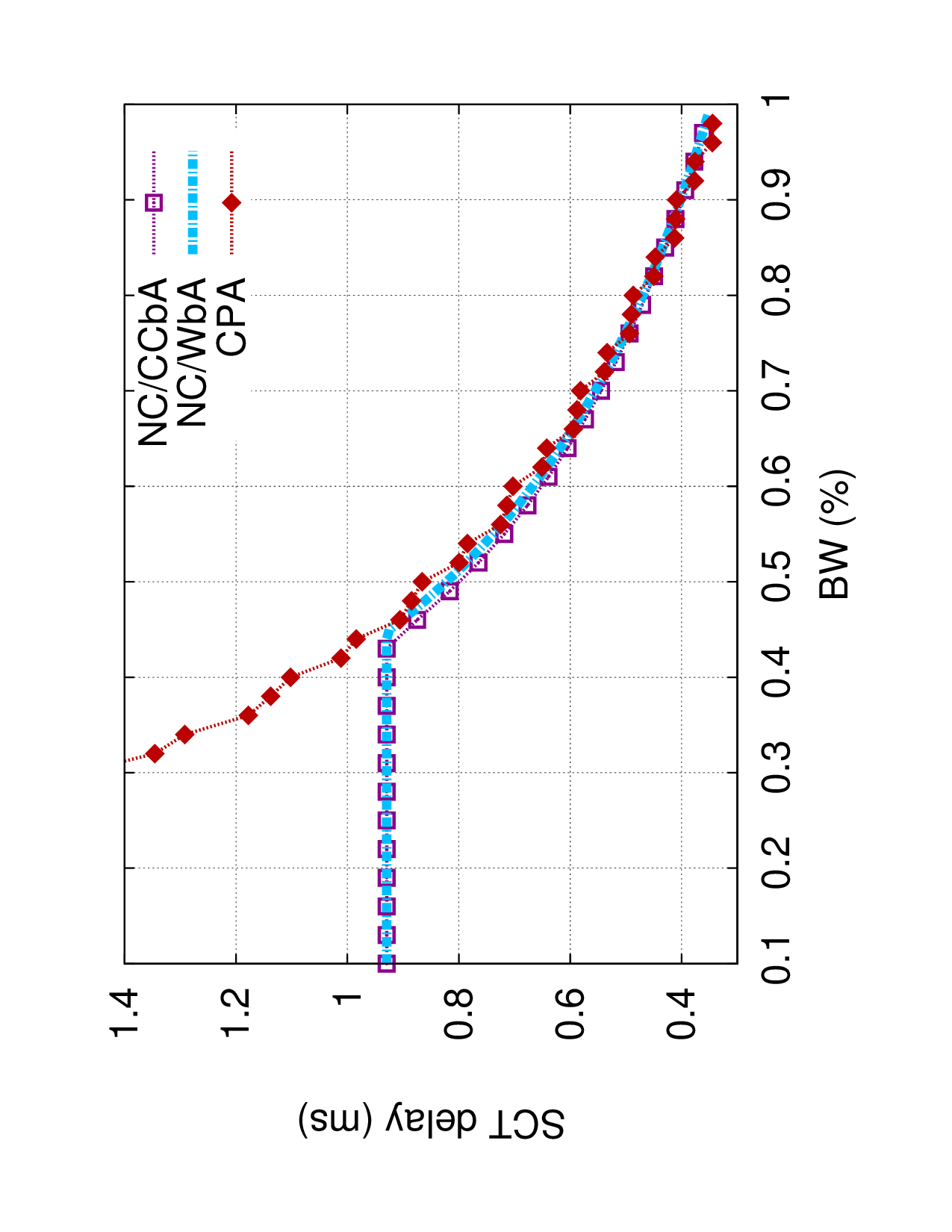}}
		%\label{fig:SCTSCT}
		% \end{minipage}
		%\begin{minipage}[b]{0.4\linewidth}
		\centering
		\subfigure[]{\includegraphics[width=0.345\columnwidth, angle=270]{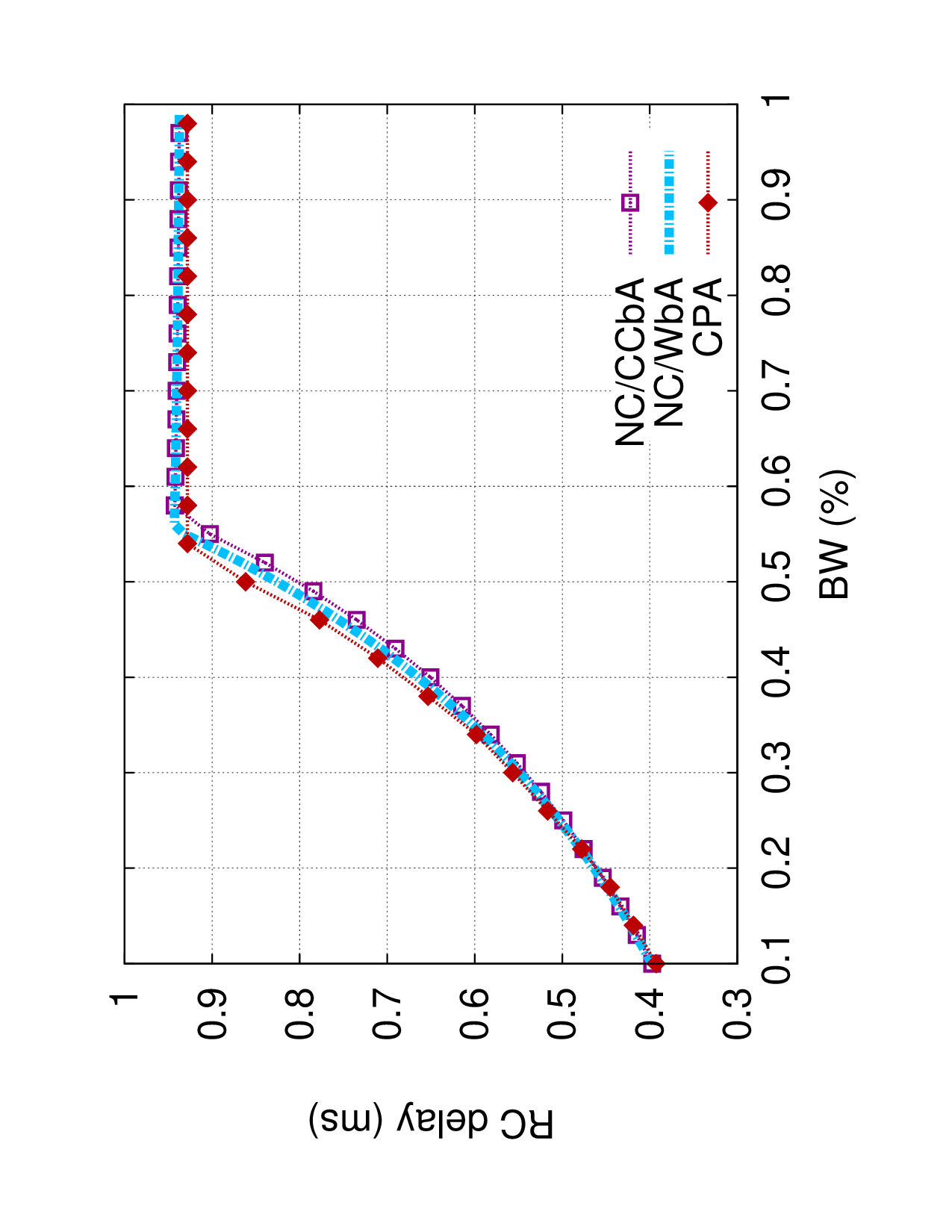}}
		%\label{fig:SCTAFDX}
		%\end{minipage}
		\footnotesize \caption[NC vs CPA: impact of $BW$ on delay bounds]{NC vs CPA - impact of $BW$ on: (a) SCT delay bounds; (b) RC delay bounds, with $Scenario_{LR}=\left(UR_{SCT}=20, UR_{RC}=20,L_M=22~118,L_R\in\left[0..0.99\right]\cdot L_M,BW=0.46\right)$}
		\label{fig:BWImpactCPA}
	\end{figure}

	It is worth  noting that in the part of the curve ruled by the $sp$ node, NC and CPA have very similar delay bounds; whereas in the part ruled by the $bls$ node, CPA leads to slightly larger delay bounds. This can be explained by the pessimism of maximum replenishment intervals $t_{SCT}^{R+}$ in the ILP problem of CPA: the credit replenishment is over-evaluated as explained in Section~\ref{WCTA}. Hence, the resulting sending interval is also over-evaluated, which adds pessimism to the RC delay bounds.
	
	An exception of this general behavior is visible in Fig. \ref{fig:LMImpactCPA}(b) where for large $L_M$, the CPA model is less pessimistic than the NC models (up to 10\%).  %The CPA model has better results when $L_M$ is large, i.e., when Burst Limiting Shaper allows large bursts, resulting in large RC delays.	

	\begin{figure}[htbp]
		%\begin0{minipage}[b]{0.4\linewidth}
		\centering
		\subfigure[]{\includegraphics[width=0.345\columnwidth, angle=270]{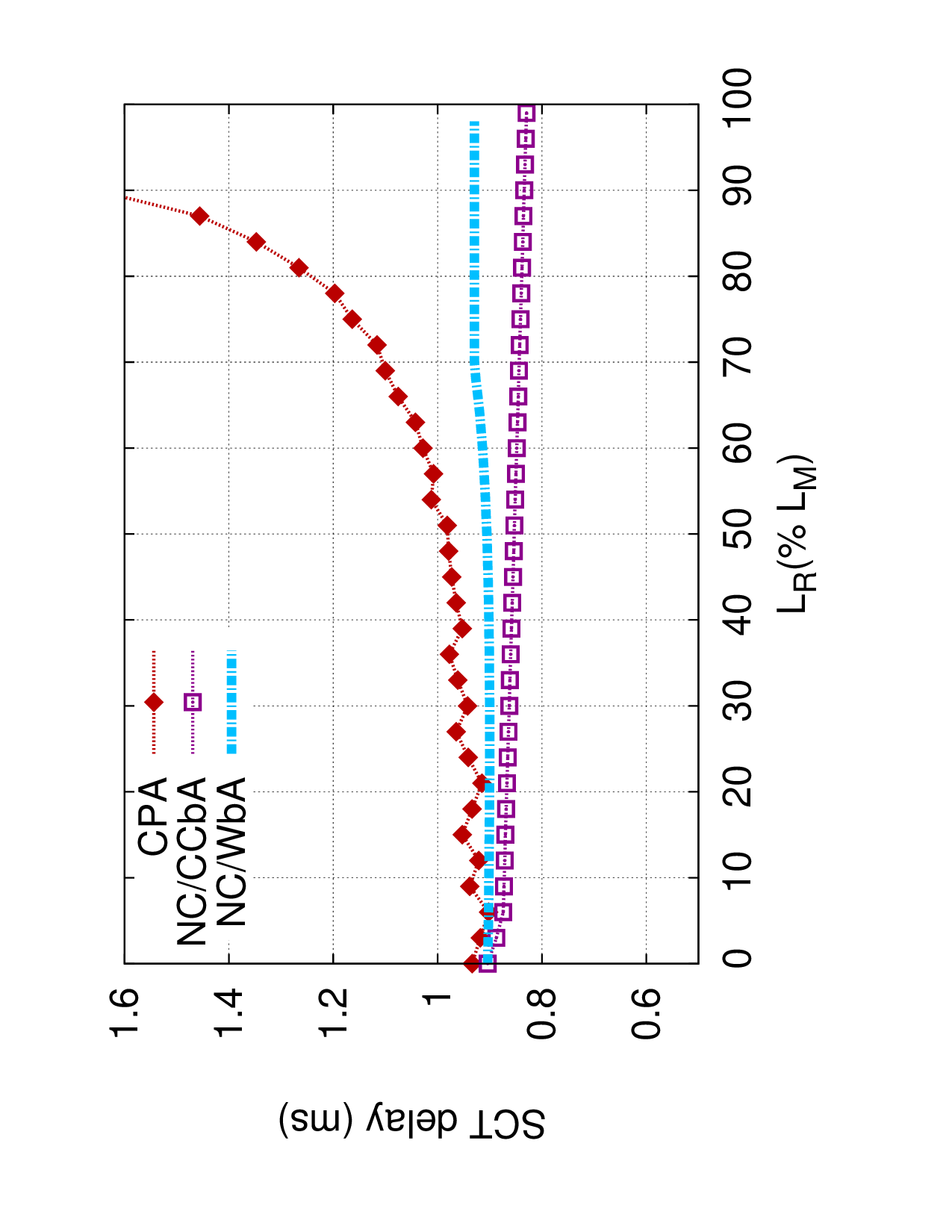}}
		%\label{fig:SCTSCT}
		% \end{minipage}
		%\begin{minipage}[b]{0.4\linewidth}
		\centering
		\subfigure[]{\includegraphics[width=0.345\columnwidth, angle=270]{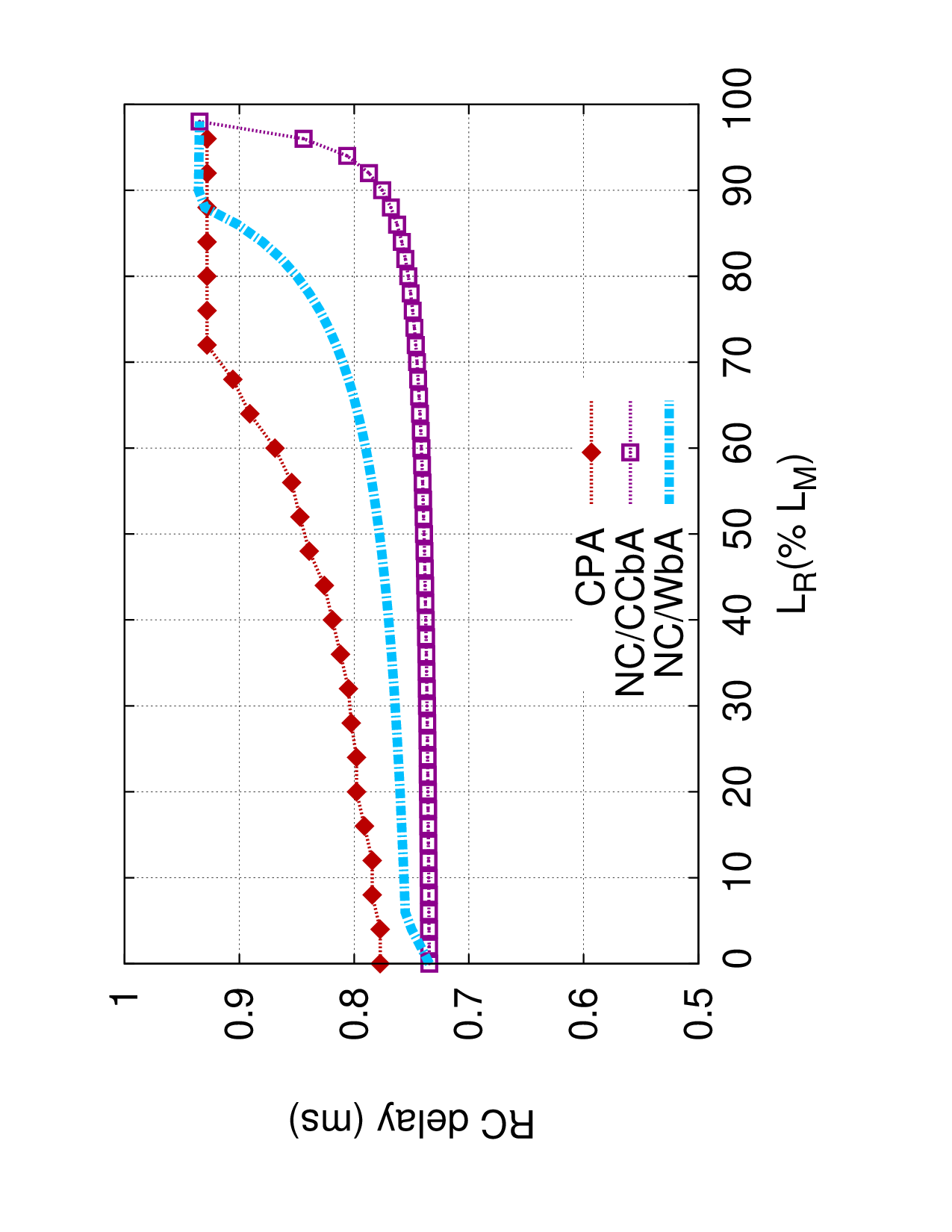}}
		%\label{fig:SCTAFDX}
		%\end{minipage}
		\footnotesize \caption[NC vs CPA: impact of $L_R$ on delay bounds]{NC vs CPA - impact of $L_R$ on: (a) SCT delay bounds; (b) RC delay bounds, with $Scenario_{BW}=\left(UR_{SCT}=20, UR_{RC}=20,L_M=22~118,L_R=1177.6 ,BW\in\left[0..0.99\right]\right)$}
		\label{fig:LRImpactCPA}
	\end{figure}

	%In  Fig. \ref{fig:LMImpactCPA}(b), we see that CPA delay bounds are sometimes lower than NC delay bounds. However,  we have showed in Section~\ref{CPA limitations} that the SCT delay bound can be optimistic. Concerning RC delay bounds, the CPA model has better results when $L_M$ is large i.e. when Burst Limiting Shaper allows large bursts resulting in large RC delays.	

	\hfill\\

	\textit{\textbf{From this analysis, we can point out the low complexity of our model compared to the CPA model. Moreover, concerning the SCT traffic, we have solved the optimism and pessimism issues of the CPA model. Finally, between the two NC models, i.e., WbA and CCbA, we have shown that CCbA is the clear choice as it better takes into account the impact of $L_R$, which results in a better tightness.}}
	
\subsection{Use-case 2: multiple BLS for 6 classes}
\label{Use-case2}
	%\subsection{Avionics Case Study: 6 Priorities}
	In this part, we implement a 6 classes output port architecture to compute the delay in the network presented in Fig. \ref{fig:rlfullnetwork}. We use gCCbA to compute the delay bounds of SCT and RC traffics.
	
	\begin{figure}[htbp]
		%\begin{minipage}[b]{0.4\linewidth}
		\centering
		\subfigure[]{\includegraphics[width=0.40\linewidth]{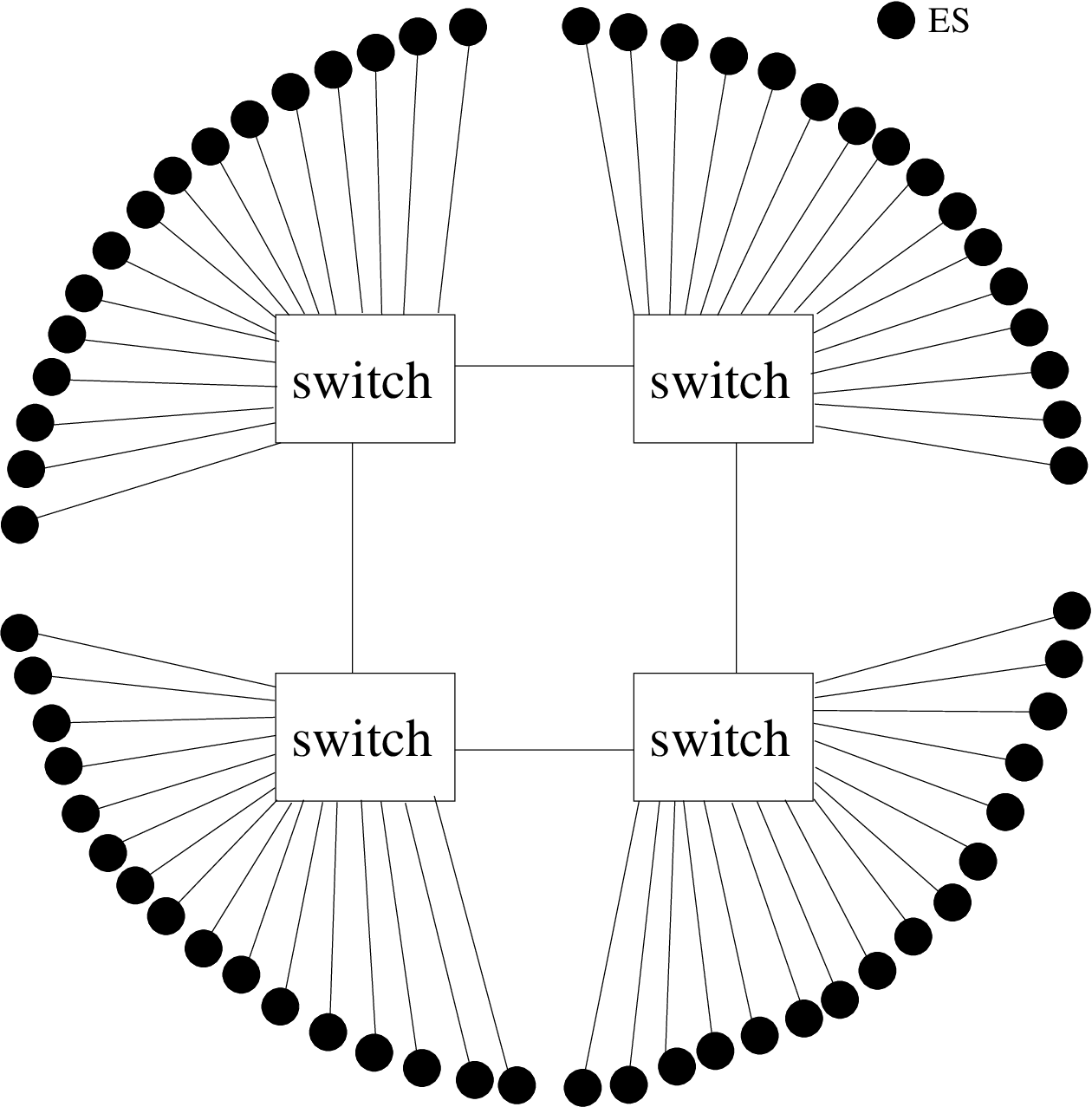}}
		%\label{fig:SCTSCT}
		% \end{minipage}
		%\begin{minipage}[b]{0.4\linewidth}
		\centering
		\subfigure[]{\includegraphics[width=0.50\linewidth]{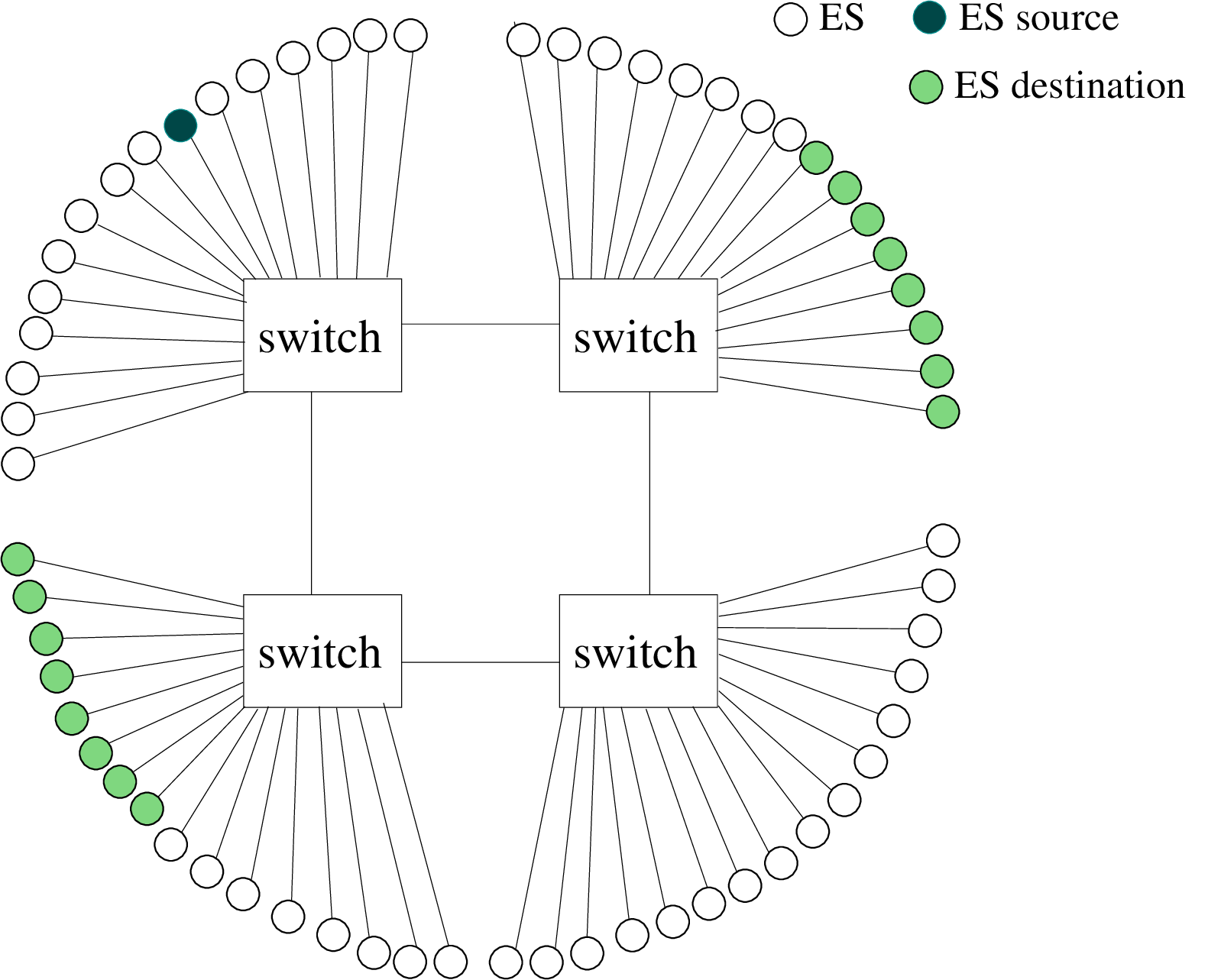}}
		%\label{fig:SCTAFDX}
		%\end{minipage}
		\footnotesize \caption[Representative AFDX network]{Representative AFDX network: (a) Architecture; (b) Traffic communication patterns}
		\label{fig:rlfullnetwork}
	\end{figure}

	Then, we do a timing analysis of this network with different traffic and different BLS parameters. In particular, we show the gains in terms of delay bounds and schedulability when using the extended AFDX instead of current AFDX. As BE traffic does not have timing constraint and only impacts the other flow by a single maximum sized frame transmission, we only present here results for SCT and RC traffic.
	
	\hfill\\
	\textbf{Output port architecture and traffic}
	
	We study an output port architecture with 6 classes: 2 for each type of traffic. As shown in Fig. \ref{fig:arch6}, two BLS are activated: one for the class $SCT_2$ and one for the class $RC_2$. The aim is to give a class for very low SCT deadlines: $SCT_1$. A second class $SCT_2$ regroups the rest of the SCT traffic with larger deadline requirements. Concerning RC, a first class without shaper, denoted $RC_1$ is used for low RC deadlines, and a second class shaped by a BLS is set for larger deadlines. Finally, the BE traffic can also be separated in two classes if necessary.  As the current AFDX standard uses a SP scheduler and defines 2 classes, we compare the  architecture proposed in Fig. \ref{fig:arch6} with a 6-classes architecture scheduled with a simple Static Priority. This way, we can compare the proposed architecture with a natural extension of the current AFDX standard.

	\begin{figure}[htbp]
		%\begin{minipage}[b]{0.4\linewidth}
		\centering
		\includegraphics[width=0.7\textwidth]{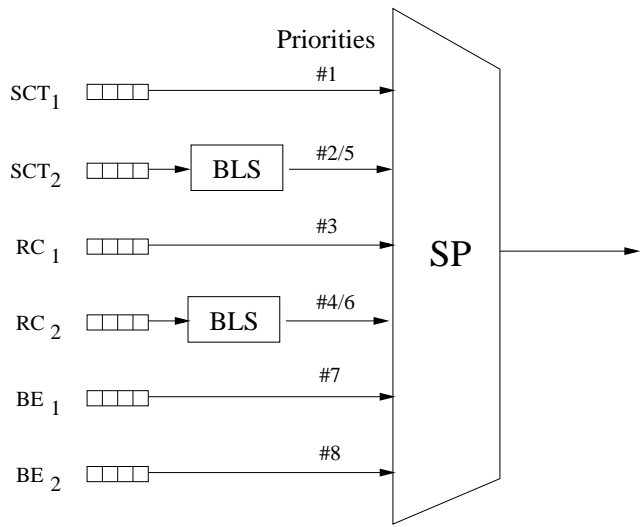}
		\caption{Architecture with 6 priorities and 2 BLS}
		\label{fig:arch6}
	\end{figure}
	The flows we consider for each class of traffic are detailed in Table \ref{table:p6multitrafficprofiles}. We set for each BLS class $k$ $p_L(k)$ higher than the BE priority. Additionally, we kept the order between the high and low priority: $SCT_2$ low priority (i.e. 4) is higher than $RC_2$ low priority (i.e. 5). 	Concerning the classification of the different classes in $HC(k)$, $MC(k)$, and $LC(k)$, here is a practical example for class $SCT_2$: $HC(SCT_2)=\{SCT_1\}$; $MC(SCT_2)=\{RC_1, RC_2\}$; $LC(SCT_2)=\{RC_2, BE_1, BE_2\}$.
	
	A consequence of this BLS configuration is that the impact $BW_{RC_2}$ on $SCT_2$ is only due to $\alpha_{RC_2}^{sp}=\alpha_{RC_2}\oslash\beta_{RC_2}^{bls}$ in $\beta_{SCT_2,5}^{sp}$ in Th.\ref{Th:blssp}. Hence, the impact of $BW_{RC_2}$ is null for the non-BLS classes, i.e., $SCT_1$ and $RC_1$, and not noticeable for $SCT_2$.
	
	\begin{table}[h!]
		\footnotesize
		\centering
		\begin{tabular}{|c|c|c|c|c|c|c|}
			\hline
			Name &	Priorities & Traffic  & MFS & BAG & deadline& jitter  \\
			& & type  & (Bytes)          & (ms)&(ms) &(ms)\\
			\hline
			$SCT_1$& 1 & SCT & 64 & 2 & 1 & 0 \\  
			\hline
			$SCT_2$ & 2/5 &  SCT & 128 & 4 & 4 & 0\\
			\hline
			$RC_1$ & 3 &	RC & 256 & 4 & 4 & 0 \\ 
			\hline
			$RC_2$& 4/6 & RC & 512 & 8 & 8 & 0 \\  
			\hline
			$BE_1$& 7 &  BE & 1024 & 2 & none & 0.5\\
			\hline
			$BE_2$ & 8 & BE & 1500 & 8 & none & 0.5 \\ 
			\hline
		\end{tabular}
		\footnotesize \caption{6 priorities: Avionics flow Characteristics}
		\label{table:p6multitrafficprofiles}
	\end{table}	
	
	Each ES generates $n_{k}^{es}$ flows ($k$ the class of the flow). So the bottleneck utilization rate for a class $k$ in the network presented in Fig. \ref{fig:rlfullnetwork} is $UR^{bn}_{k}=16\cdot n_{k}^{es}\cdot \frac{MFS_k}{BAG_k}$.
	
	In Table \ref{table34}, we present the different scenarios considered: in $Scenario_{SCT_2}$, we vary the utilization rate of the $SCT_2$ class;in $Scenario_{RC_1}$, we vary the utilization rate of the $RC_1$ class;in  $Scenario_{RC_2}$, we vary the utilization rate of the $RC_2$ class.
	% and finally, in Scenario 9, we vary the utilisation rate of the $SCT_1$ class. 
	We call the $default configuration$ the constant rates used in the different scenarios: 5\% for $SCT_1$, $RC_1$ and $RC_2$, 25\% for $SCT_2$, 15\% for $BE_1$ and 20\% for $BE_2$. We consider that the SCT classes represent 30\% of the traffic. Concerning RC classes, which are used by the traffic currently on the AFDX, we considered a future where the current traffic has tripled, going from a maximum rate of 30~Mbps to 100~Mbps for the RC type of traffic. Finally, the BE classes represent 35\% of the traffic.

	\begin{table}[h!]
		\footnotesize
		\begin{center}
			\begin{tabular}{|c|c|c|c|}
				\hline
				Scenarios &  $Scenario_{SCT_2}$ & $Scenario_{RC_1}$ & $Scenario_{RC_2}$  \\
				\hline
				$UR^{bn}_{SCT_1} (\%)$ & $5$ & $5$& $5$\\
				\hline
				$UR^{bn}_{SCT_2} (\%)$ & $[0.4..50]$ & $25$&$25$\\
				\hline
				$UR^{bn}_{RC_1} (\%)$ & $5$ & $[0.8..30 ]$ & $5$\\
				\hline
				$UR^{bn}_{RC_2} (\%)$ & $5$ & $5$&$[0.8..30 ]$\\
				\hline
				$UR^{bn}_{BE_1} (\%)$ & $15$ & $15$&$15$\\
				\hline
				$UR^{bn}_{BE_2} (\%)$ & $20$ & $20$ & $20$\\
				\hline
				\hline
				$n_{SCT_1}^{es}$&  $12$ & $12$ & $12$ \\
				\hline
				$n_{SCT_2}^{es}$&  $[1:5:122]$ & $36$ &  $36$  \\
				\hline
				$n_{RC_1}^{es}$&  $6$ &$[1:2:36]$ & $6$ \\
				\hline
				$n_{RC_2}^{es}$&  $6$ & $6$ &  $[1:2:36]$  \\
				\hline
				$n_{BE_1}^{es}$& $18$ & $18$ &  $18$  \\
				\hline
				$n_{BE_2}^{es}$&  $33$ &  $33$&   $33$  \\
				\hline
				
			\end{tabular}
		\end{center}
		\footnotesize \caption{Considered Test Scenarios with 6 classes}
		\label{table34}
	\end{table}
	
	The BLS parameters are identical is every output ports. For a class $k$, We set the $L_R^k$ as described in Section \ref{Use-case1},$L_R^k=\max_{j\in MC(k)}MFS_{j}\cdot BW^k$. Also following the conclusion of Section \ref{Use-case1},  we set a low value for $L_M^k$: $L_M^{k}=5\cdot MFS_k$. Finally, we saw in Section \ref{Use-case1} that $BW^k$ is the BLS parameters that influences most the delay bounds, so we will vary this parameter from $0.1$ to $0.9$ in the 3 scenarios.

	We chose not to present results of $SCT_1$  variations as their conclusions only confirm previous scenarios' conclusions. As $SCT_1$ traffic is not impacted by the variations of other class traffic, the $SCT_1$ delay bound is constant for all scenarios and equal to $2.45\cdot 10^{-4}$s, well below its 1ms deadline.
	
	\hfill\\
	\textbf{Impact of the variation of $SCT_2$ traffic}
	
	\label{ImpactSCT2}
	In this 1st scenario, we vary $SCT_2$ traffic and the reserved bandwidth of $SCT_2$, and $RC_2$, respectively denoted $BW^{SCT_2}$ and  $BW^{RC_2}$. The results are presented in Fig. \ref{fig:P2Impact}.	As the variation of $BW^{RC_2}$ does not impact the delay bounds of $SCT_2$ and $RC_1$ traffic, we present in Fig. \ref{fig:P2Impact}(a)(b) the impact of the variation of $BW^{SCT_2}$ on $SCT_2$ and $RC_1$. In Fig. \ref{fig:P2Impact}(c)(d) we present the impact of the variations of $BW^{SCT_2}$ and  $BW^{RC_2}$ on $RC_2$.

	\hfill\\
	\textit{$SCT_2$ delay bounds}
	
	First in Fig. \ref{fig:P2Impact}(a), as expected the BLS increases the delay bounds of the $SCT_2$ traffic compared to SP. This increase is minimal pour high values of $BW^{SCT_2}$, i.e., when the behavior is close to the SP behavior; and maximal  for low values of  $BW^{SCT_2}$, i.e., when the BLS is most effective. For $BW^{SCT_2}=\{0.10,0.50\}$, we are able to distinguish the two parts of the service offered, e.g., the BLS parts $\beta_{SCT_2}^{bls}$ for low values of $UR_{SCT_2}$, and the SP part $\beta_{SCT_2,5}^{sp}$ (identical for any values of $BW^{SCT_2}$)for higher values of $UR_{SCT_2}$. With $BW^{SCT_2}=\{0.74,0.90\}$, the SP part is not reached before $UR_{SCT_2}=50\%$.
	
	\hfill\\
	\textit{$RC_1$ delay bounds}
	
	Secondly in Fig. \ref{fig:P2Impact}(b), we find again a BLS behavior similar to Section~\ref{Use-case1}: the BLS limits the $RC_1$ delay bound. For low values of $UR_{SCT_2}$, the $RC_1$ under the extended AFDX (BLS) is ruled by $\beta_{RC_1}^{sp}$ and increases steadily. Then, the $RC_1$ delay bound is limited by $\beta_{RC_1}^{bls}$. The $RC_1$ delay bound limitation occurs at different $UR_{SCT}$  depending on the value of $BW^{SCT_2}$, i.e., it increases with $BW^{SCT_2}$. As a result, the gain in terms of $RC_1$ delay bound, decreases when $BW^{SCT_2}$ increases, compared to SP. For example, at $UR_{SCT_2}=25\%$ the $RC_1$ delay bounds is divided by 2.4 for $BW^{SCT_2}=0.5$, and by 5.25 for $BW^{SCT_2}=0.1$.

	\hfill\\
	\textit{$RC_2$ delay bounds}
	
	Thirdly, when varying $BW^{SCT_2}$ in Fig. \ref{fig:P2Impact}(c), the behavior of the $RC_2$ delay bound is similar to $RC_1$ delay bound, i.e., after a certain point, the $RC_2$ delay bound is limited by the BLS part $\beta_{RC_2}^{bls}$. Hence, the impact of the increase of $UR_{SCT_2}$ is mitigated by the BLS compare to SP. For example, at $UR_{SCT_2}=25\%$ the $RC_2$ delay bounds is decreases by 8.3\% for $BW^{SCT_2}=0.5$, and by 13.3\% for $BW^{SCT_2}=0.1$. The $RC_2$ delay bounds  can be decreased by up to 33\%.
	
	Fourthly, when varying $BW^{RC_2}$ in Fig. \ref{fig:P2Impact}(d),the impact of higher priorities on $RC_2$ is again mitigated by the extended AFDX (BLS). For low values of $BW^{RC_2}$, the behavior is ruled by $\beta_{RC_2,6}^{sp}$ and so the $RC_2$ delay bounds under BLS are close to those under the current AFDX (SP). For higher values of $BW^{RC_2}$ and $UR_{SCT_2}>18\%$, the behavior under BLS is ruled by $\beta_{RC_2}^{bls}$ and the $RC_2$ delay bound increases are mitigated with a $RC_2$ delay bound decrease up to 50\% compared to the delay bounds under the current AFDX (SP).

	\hfill\\
	\textit{$SCT_2$ schedulability}
	
	Finally, the extended AFDX (BLS) has a strong impact on the schedulability (when all the deadlines are fulfilled) of the $SCT_2$ class.  The maximum $UR_{SCT_2}$ are presented in Table~\ref{scheduSCT2}. The results show that the extended AFDX increases the schedulability compared to a standard SP. In particular, at  $BW^{SCT_2}=0.74$, i.e.,$RC_1$ delay bound close to its deadline, the $SCT_2$ schedulability is increased by 40\%.

	\begin{table}[h!]
		\footnotesize
		\begin{center}
			\begin{tabular}{|c|c|}
				\hline
				 $Scenario_{SCT_2}$ & maximum $UR_{SCT_2}$ (\% C) \\
				\hline
			     SP & 23 \\
			     \hline
			     BLS, $BW^{SCT_2}=0.90$,  $BW^{RC_2}=0.50$ & 23\\
				\hline
				 BLS, $BW^{SCT_2}=0.10$,  $BW^{RC_2}=0.50$ & 25\\
 				\hline
				BLS, $BW^{SCT_2}=0.50$,  $BW^{RC_2}=0.50$ & 25\\	
				\hline
				BLS, $BW^{SCT_2}=0.74$,  $BW^{RC_2}=0.50$ & 32\\
				\hline									
			\end{tabular}
		\end{center}
		\footnotesize \caption{$SCT_{2}$ schedulability limits}
		\label{scheduSCT2}
	\end{table}

	\textbf{These results show the impact of the extended AFDX (BLS) when $SCT_2$ varies: with good parameters, the delay bounds of lower priorities can be divided by up to 5.4 times for $RC_1$, and up to 2 for $RC_2$ compared to current AFDX (SP). The schedulability of $SCT_2$ is also enhanced, up to 40\%.}
	
	\begin{figure}[htbp]
		%\begin{minipage}[b]{0.4\linewidth}
		\centering
		\subfigure[Varying $BW^{SCT_2}$]{\includegraphics[width=0.4\textwidth, angle=270]{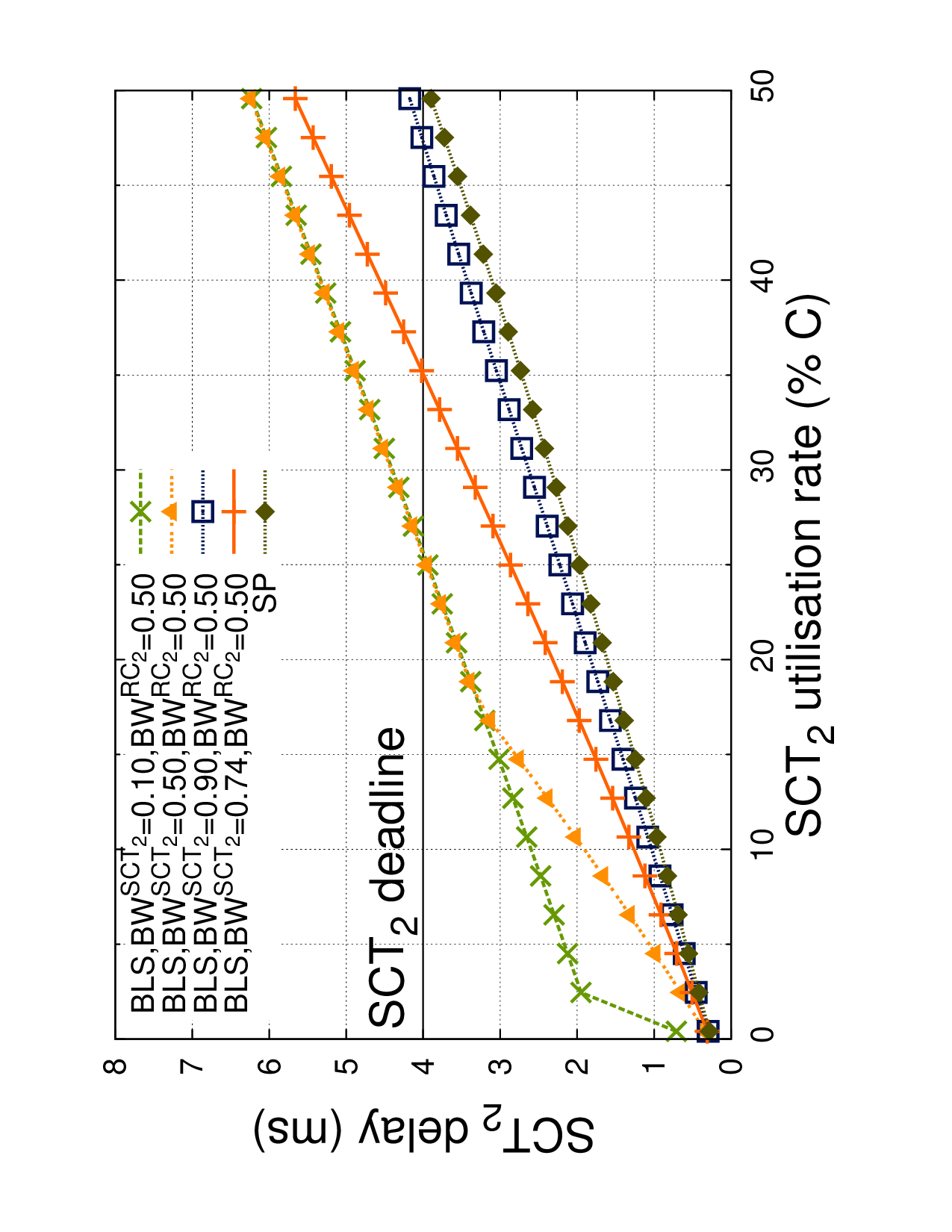}}
		%\label{fig:SCTSCT}
		% \end{minipage}
		%\begin{minipage}[b]{0.4\linewidth}
		\centering
		\subfigure[Varying $BW^{SCT_2}$]{\includegraphics[width=0.4\textwidth, angle=270]{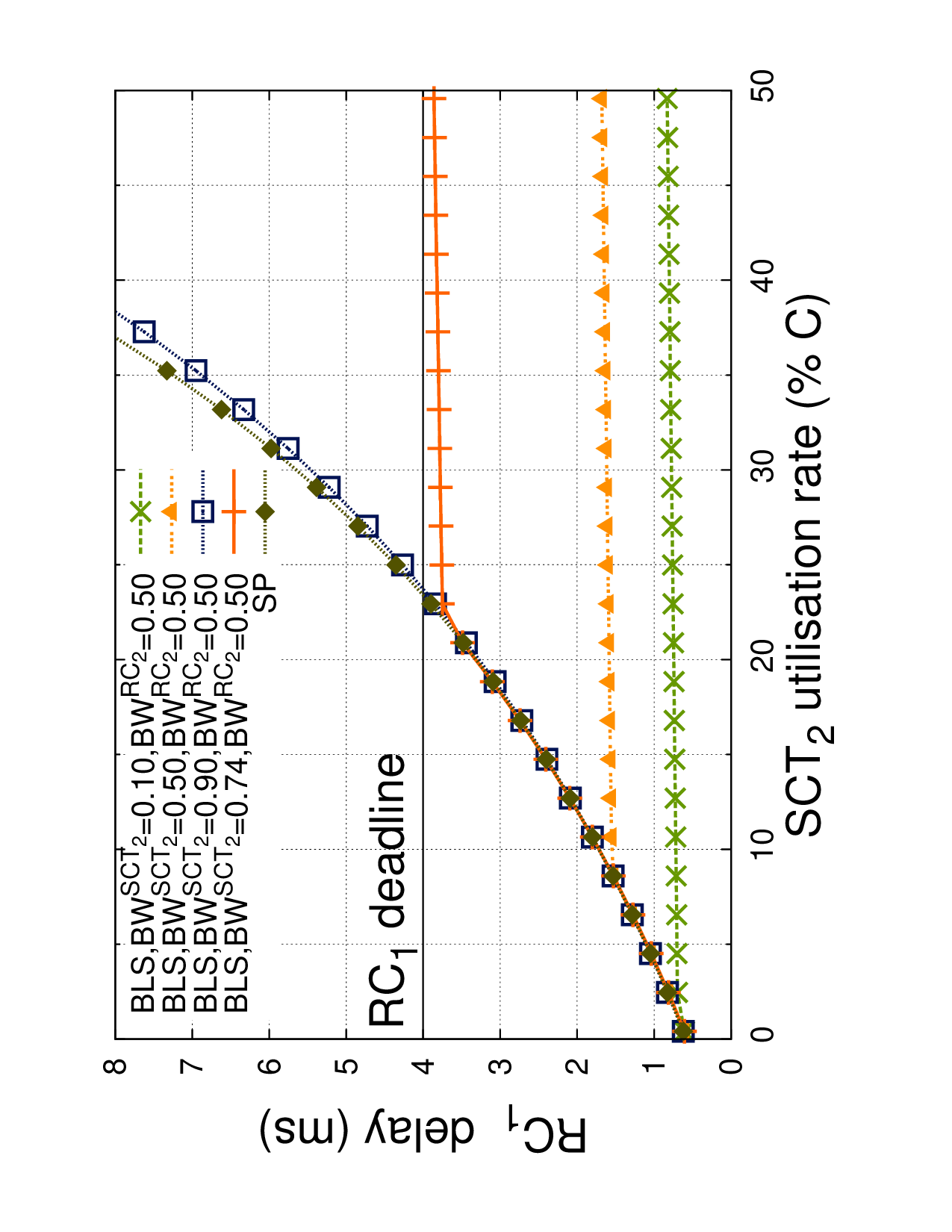}}
		%\label{fig:SCTAFDX}
		%\end{minipage}
		\centering
		\subfigure[Varying $BW^{SCT_2}$]{\includegraphics[width=0.4\textwidth, angle=270]{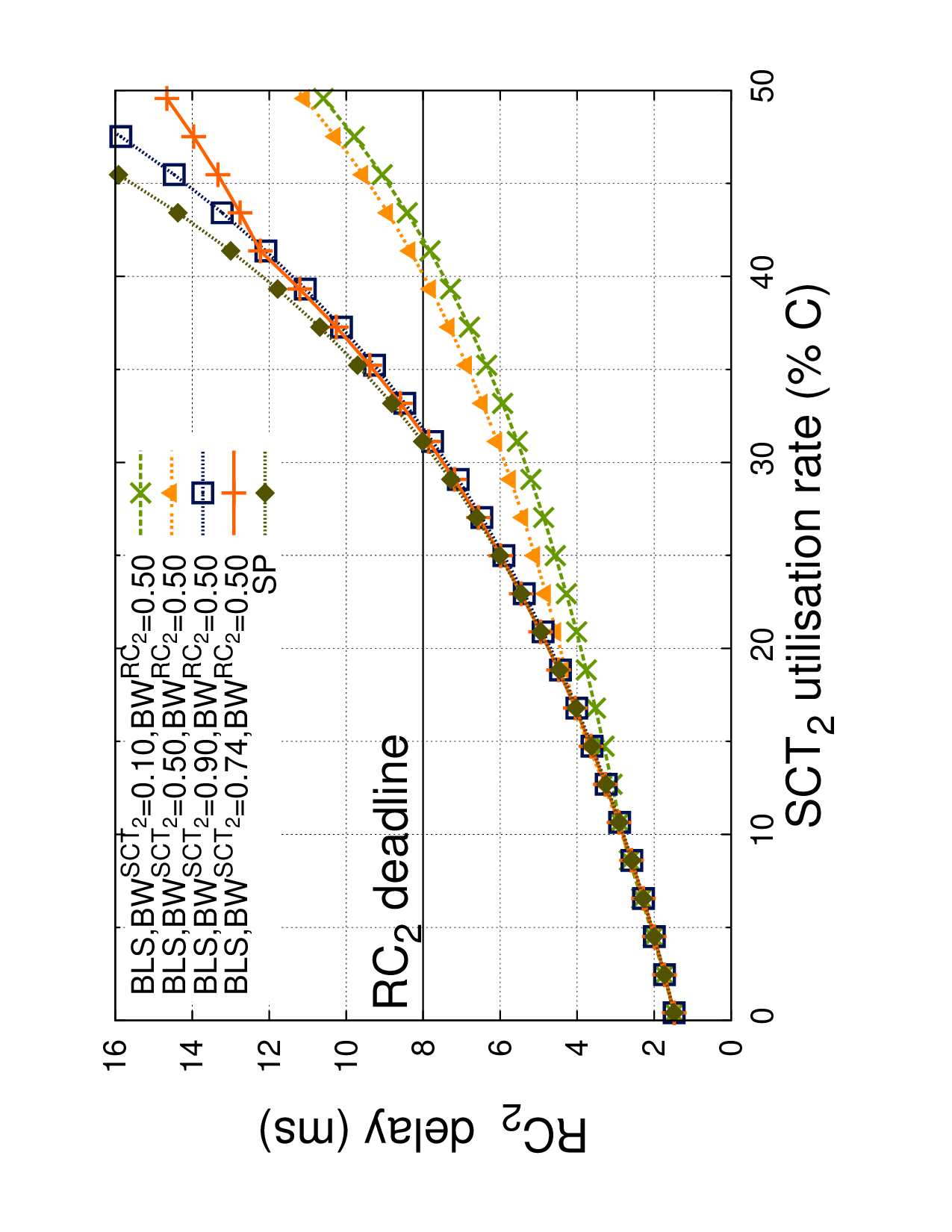}}
		\centering
		\subfigure[Varying $BW^{RC_2}$]{\includegraphics[width=0.4\textwidth, angle=270]{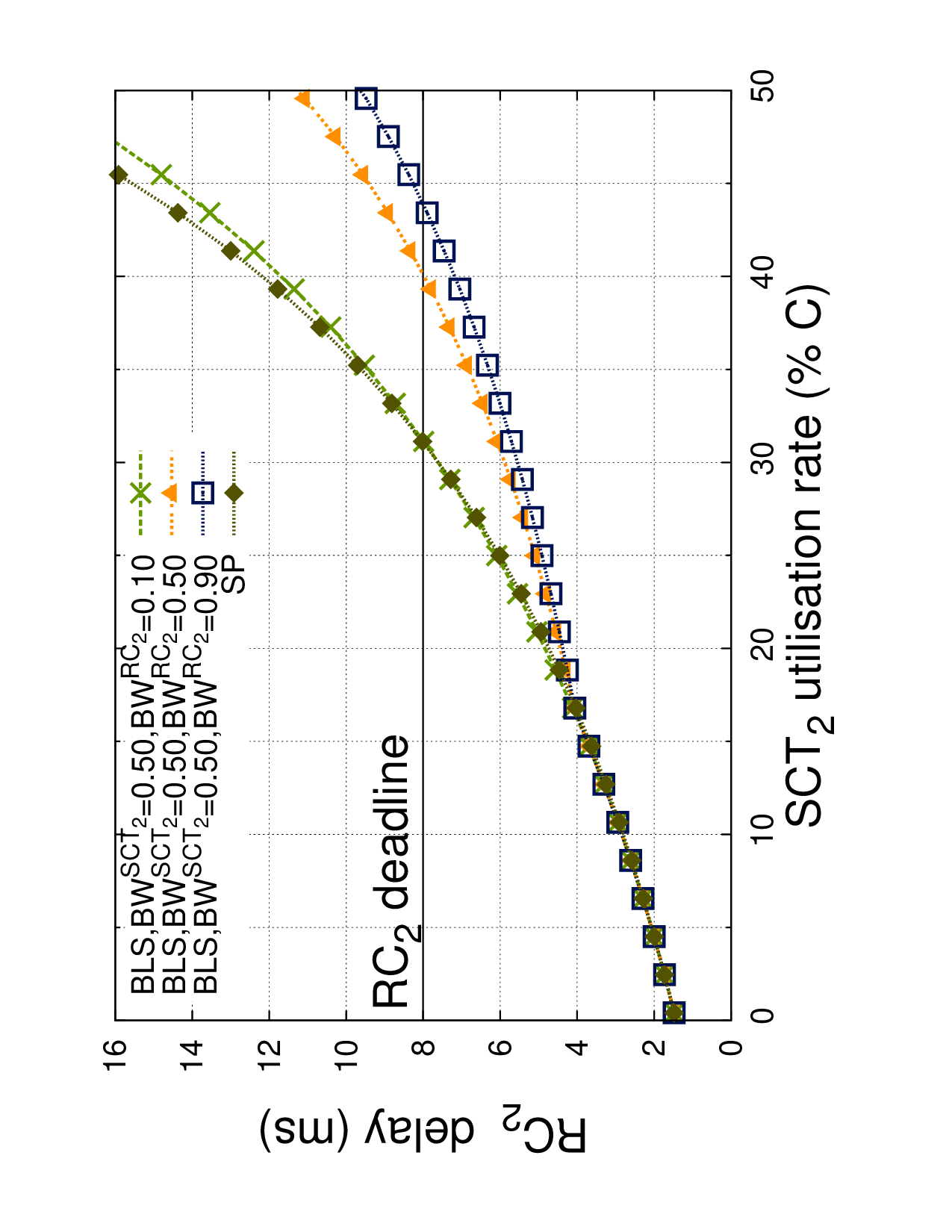}}
		
		\footnotesize \caption{Impact of $SCT_2$ utilisation rate on: (a) $SCT_2$ delays; (b) $RC_1$ delays; (c) and (d) $RC_2$ delays}
		\label{fig:P2Impact}
	\end{figure}

\newpage
\noindent
\textbf{Impact of the variation of $RC_1$ traffic}

	\label{ImpactRC1}
	In this 2nd scenario, we vary $RC_1$ traffic and the reserved bandwidth of $RC_2$,  denoted $BW^{RC_2}$. The results are presented in Fig. \ref{fig:P3Impact}.	As we have extensively studied the impact of variation of $BW^{SCT_2}$, we will not vary $BW^{SCT_2}$ here. Also, as the variation of $BW^{RC_2}$ does not impact the delay bounds of $SCT_2$ and $RC_1$ traffic, we present in Fig. \ref{fig:P3Impact}(a) and (b) the impact of the variation of $RC_1$ utilization rate  on $SCT_2$ and $RC_1$ delays for any $BW^{RC_2}$. In Fig. \ref{fig:P3Impact}(c) we present the impact of the variations of  $RC_1$ utilization rate and $BW^{RC_2}$ on $RC_2$ delays.	
	In this scenario, we set $BW^{SCT_2}=0.58$ so that $SCT_2$ delay bound is just below its 4ms deadline in Fig. \ref{fig:P3Impact}(a).

	\hfill\\
	\textit{$RC_1$ delay bounds}
	
	In Fig. \ref{fig:P3Impact}(b), the $RC_1$ delay bound is largely decreased by the extended AFDX (BLS). At the current utilization rate of the AFDX, i.e., $UR_{RC_1}=3\%$, the $RC_1$ delay bound under the extended AFDX (BLS) is divided by 2.7, compared to the bounds under the current AFDX (SP).
	
	\hfill\\
	\textit{$RC_2$ delay bounds}
	
	In Fig. \ref{fig:P3Impact}(c), the impact of the extended AFDX (BLS) on the $RC_2$ delay bounds is less visible. At the current utilization rate of the AFDX, i.e., $UR_{RC_1}=3\%$, the $RC_2$ delay bound under the extended AFDX (BLS) is decreased up to 17\%, compared to the bounds under SP. The maximum decrease of the $RC_2$ delay bound under extended AFDX (BLS) is 20\% at $UR_{RC_1}=1\%$, compared to the $RC_2$ delay bound under the current AFDX (SP).
		
	\hfill\\
	\textit{$RC_1$ schedulability}
	
	The schedulability of $RC_1$ is largely enhanced by the extended AFDX (BLS) compared to the current AFDX (SP). The maximum $RC_1$ utilization rate under SP is limited at $UR_{RC_1}=2.5\%$ in Fig. \ref{fig:P3Impact}(b) by the $RC_1$ deadline. With the extended AFDX  (BLS), the $RC_2$ deadline limits the maximum $RC_1$ utilization rate at $UR_{RC_1}=12\%$. Hence, the $RC_1$ schedulability is multiplied by 4.8 with the extended AFDX compared to the current AFDX (SP).

	\textbf{These results show the impact of the extended AFDX (BLS) when $RC_1$ varies: the delay bounds of lower priorities can be divided by 2.7 for $RC_1$ and decreased by up to 20\% for $RC_2$, compared to the current AFDX (SP).}
	
	\begin{figure}[htbp]
		%\begin{minipage}[b]{0.4\linewidth}
		\centering
		\subfigure[for any $BW^{RC_2}$]{\includegraphics[width=0.33\textwidth, angle=270]{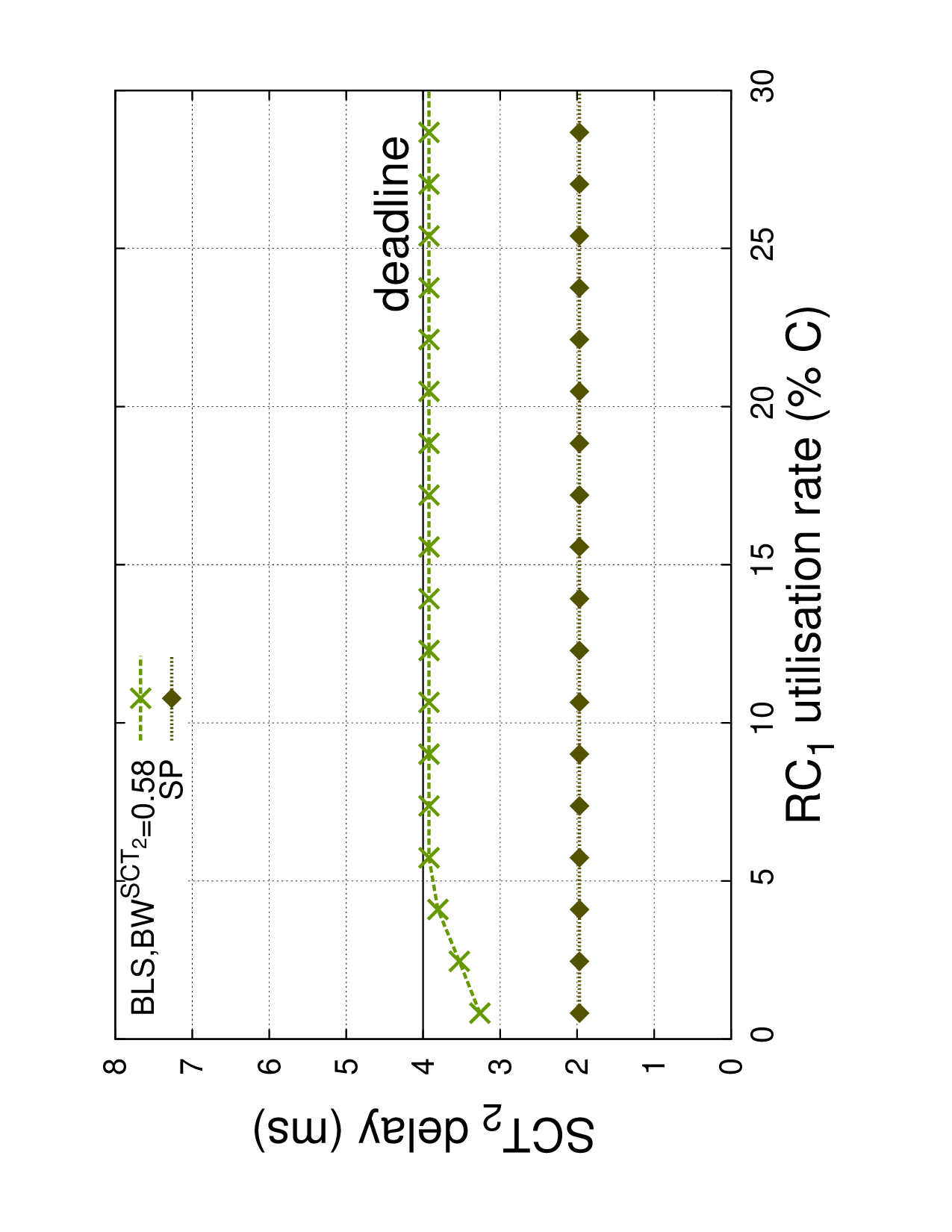}}
		\subfigure[for any $BW^{RC_2}$]{\includegraphics[width=0.33\textwidth, angle=270]{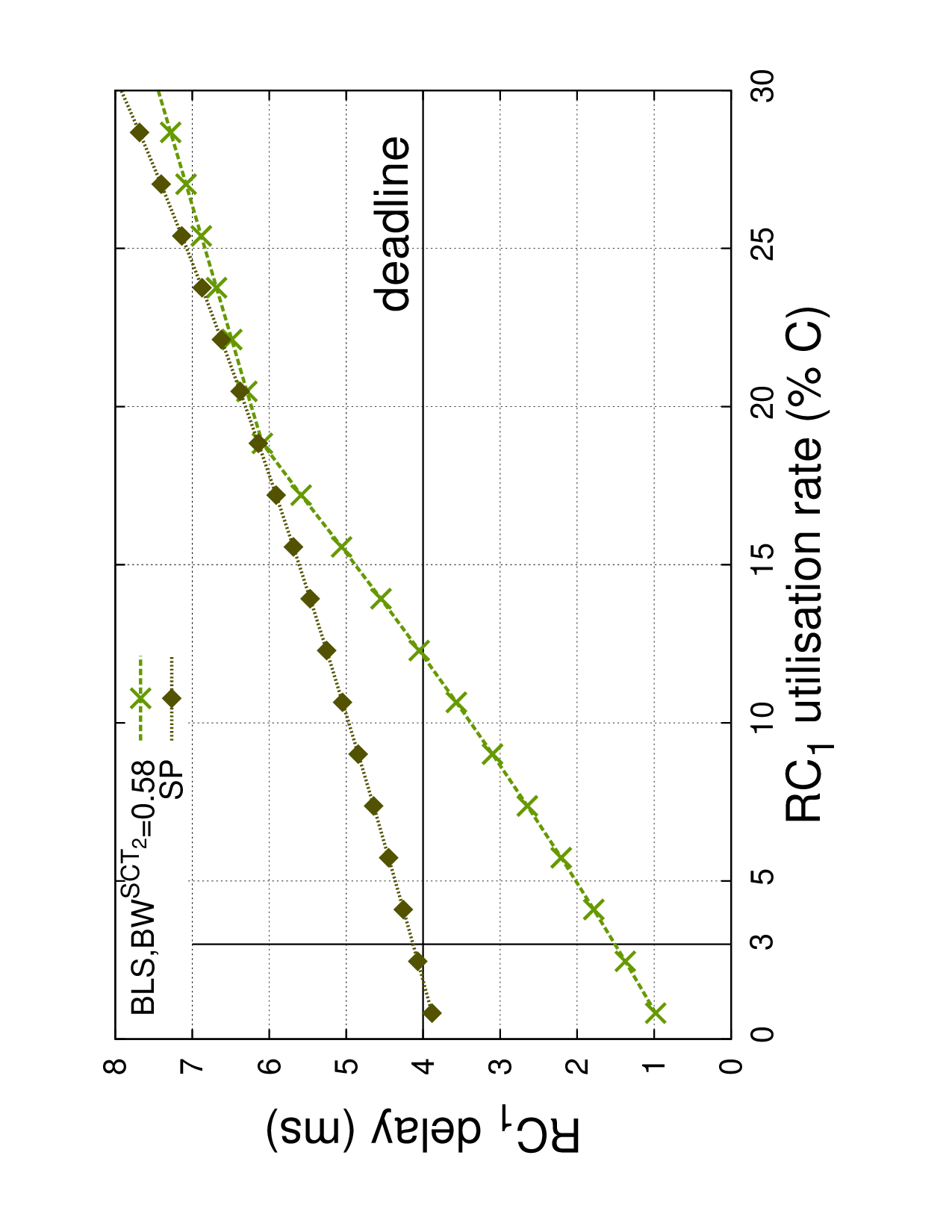}}
		\centering
		\subfigure[]{\includegraphics[width=0.4\textwidth, angle=270]{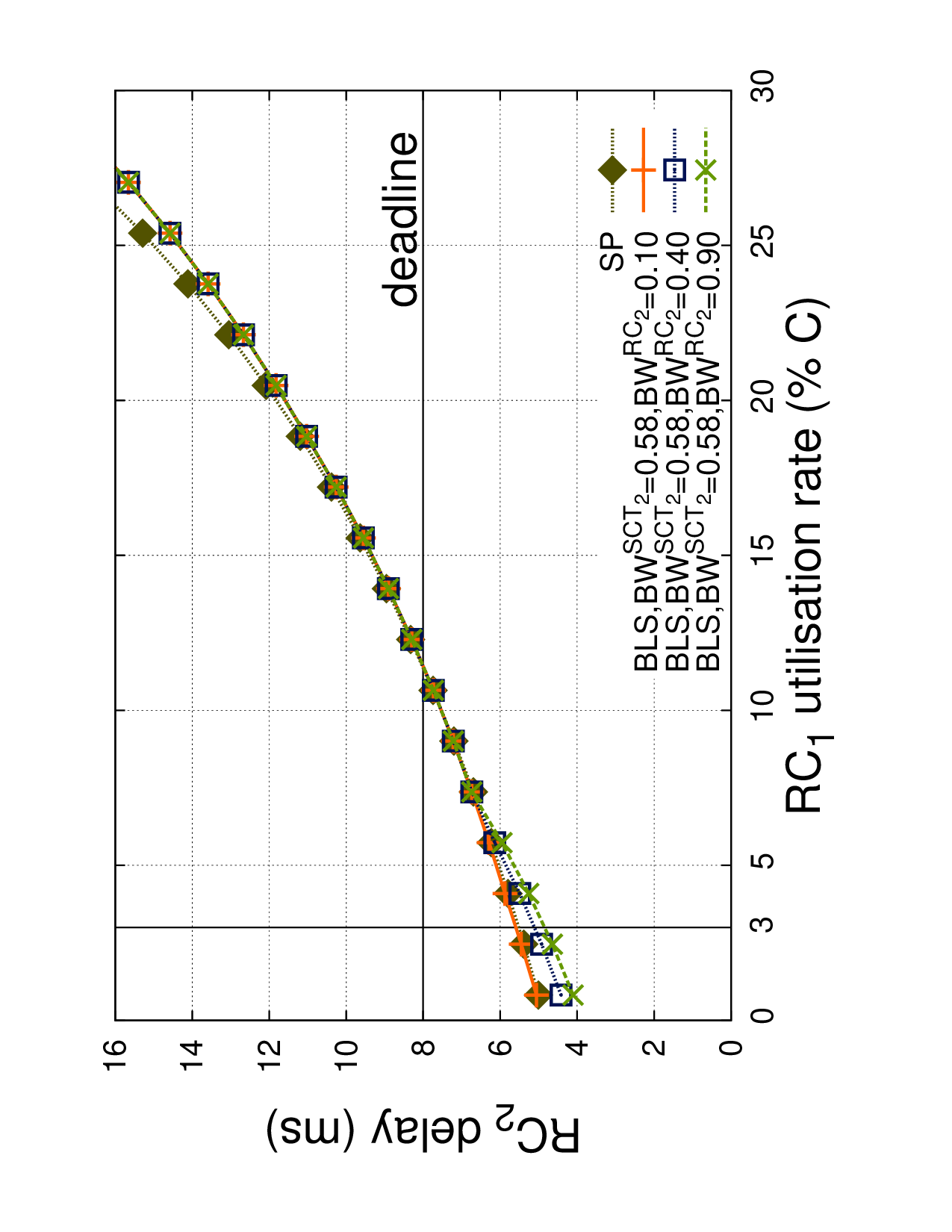}}
		
		\footnotesize \caption{Impact of $RC_1$ utilisation rate on: (a) $SCT_2$ delays; (b) $RC_1$ delays; (c)  $RC_2$ delays}
		\label{fig:P3Impact}
	\end{figure}
	
	\hfill\\
	\textbf{Impact of the variation of $RC_2$ traffic}	
	
	In this 3rd scenario, we vary $RC_2$ traffic and the reserved bandwidth of $RC_2$,  denoted $BW^{RC_2}$. The results are presented in Fig. \ref{fig:P4Impact}. As the variation of $BW^{RC_2}$ does not impact the delay bounds of $SCT_2$ and $RC_1$ traffic, we present in Fig. \ref{fig:P4Impact}(a) and (b) the impact of the variation of $RC_1$ on $SCT_2$ and $RC_1$ for any $BW^{RC_2}$. In Fig. \ref{fig:P2Impact}(b) we present the impact of the variations of $BW^{RC_2}$ on $RC_2$.	
	In this scenario, as before we set $BW^{SCT_2}=0.58$ so that $SCT_2$ delay bound is just below its 4ms deadline in Fig. \ref{fig:P4Impact}(a).

	\newpage
	\noindent
	\textit{$RC_1$ delay bounds}
	
	In Fig. \ref{fig:P4Impact}(b), the $RC_1$ delay bound is much decreased under the extended AFDX compared to under the current AFDX (SP), i.e., it is divided by 2.1.
	
	\hfill\\
	\textit{$RC_2$ delay bounds}
		
	In Fig. \ref{fig:P4Impact}(c), the $RC_2$ delay bound is positively impacted by the extended AFDX, with a maximum decreased of the $RC_2$ delay bounds of 22\%.  For example, at the current utilization rate of the AFDX, i.e., $UR_{RC_2}=3\%$, the $RC_2$ delay bound under the extended AFDX (BLS) is decreased by 16\% compared to the delay bounds under the current AFDX (SP).

	\hfill\\
	\textit{$RC_2$ schedulability}
	
	As before the extended AFDX (BLS) increases the schedulability compared to the current AFDX (SP).
	In fact in In Fig. \ref{fig:P4Impact}(b), with SP, the $RC_1$ delay bound is always greater than its deadline. On the contrary, with extended AFDX the $RC_1$ deadline is always fulfilled and the limitation to the schedulability is due to the $RC_2$ deadline, which is crossed at $UR_{RC_2}=12\%$.

	\textbf{These results confirm the positive impact of the extended AFDX (BLS) when $RC_2$ varies, i.e., the $RC_1$ delay bounds is divided by 2.1 and the $RC_2$ is decreased up to 22\% compared to the current AFDX (SP). Additionally, the schedulability of $RC_2$ is enhanced from $UR_{RC_2}=0$ under the current AFDX (SP), to $UR_{RC_2}=12\%$ under the extended AFDX (BLS)}.
	
	\begin{figure}[htbp]
		%\begin{minipage}[b]{0.4\linewidth}
		\centering
		\subfigure[for any $BW^{RC_2}$]{\includegraphics[width=0.33\textwidth, angle=270]{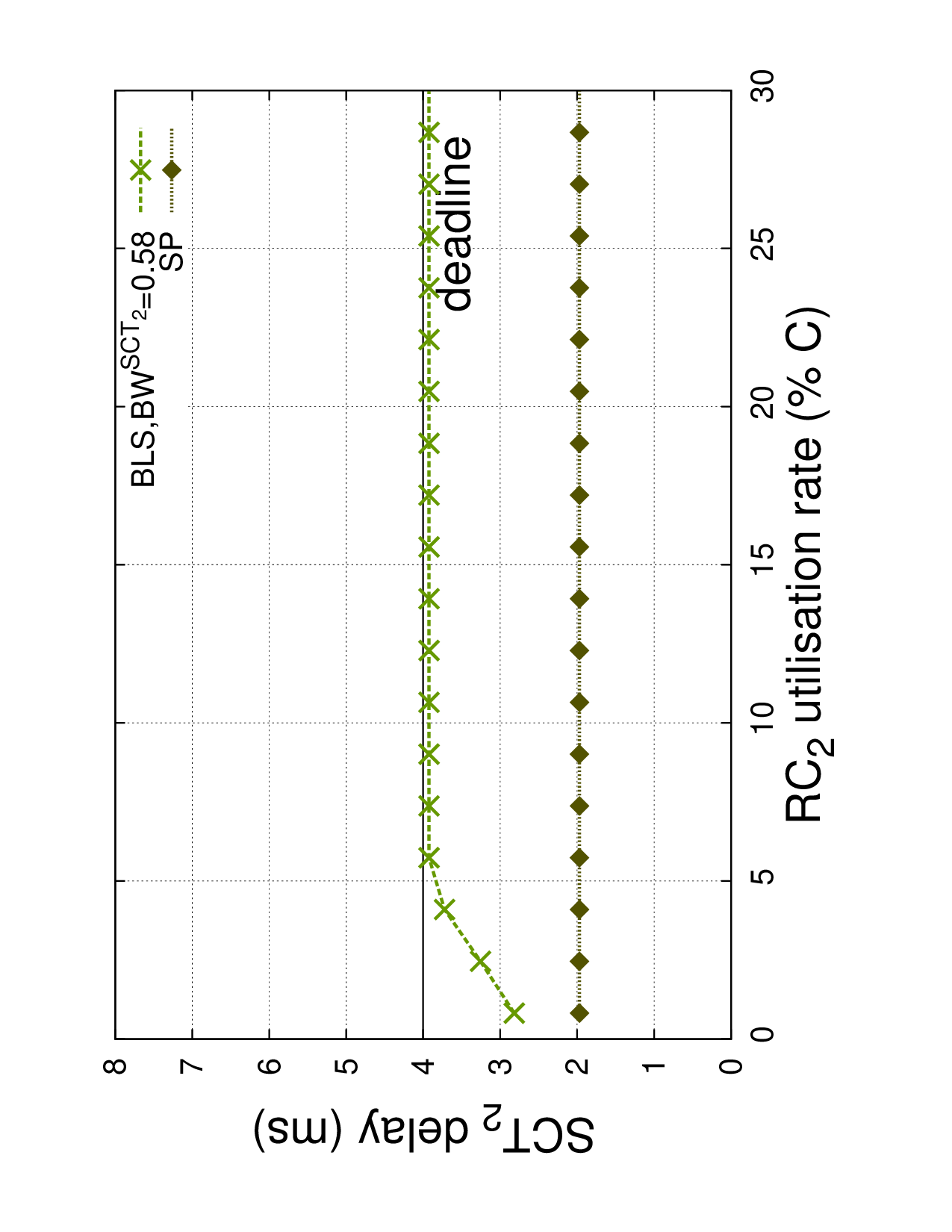}}
		\subfigure[for any $BW^{RC_2}$]{\includegraphics[width=0.33\textwidth, angle=270]{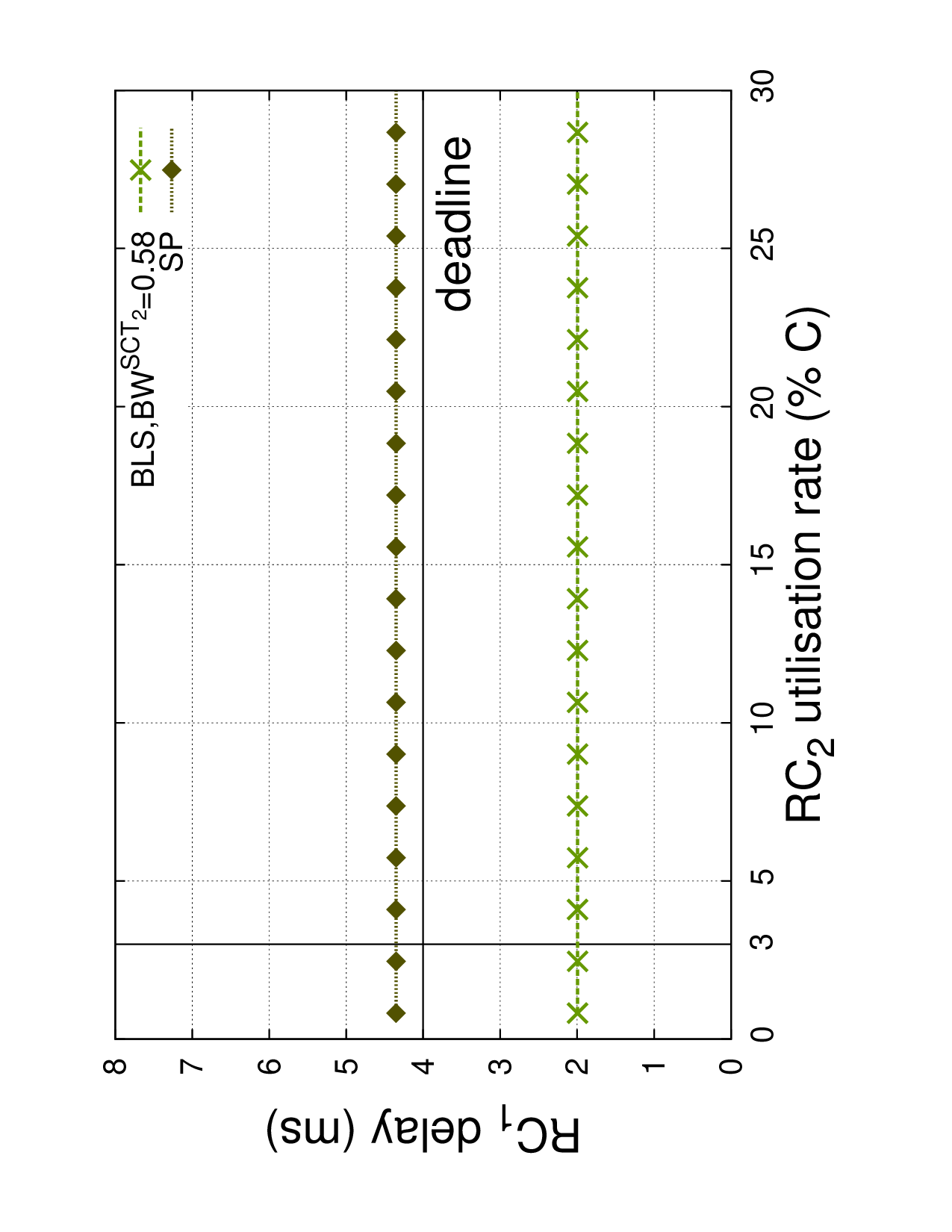}}
		\centering
		\subfigure[]{\includegraphics[width=0.4\textwidth, angle=270]{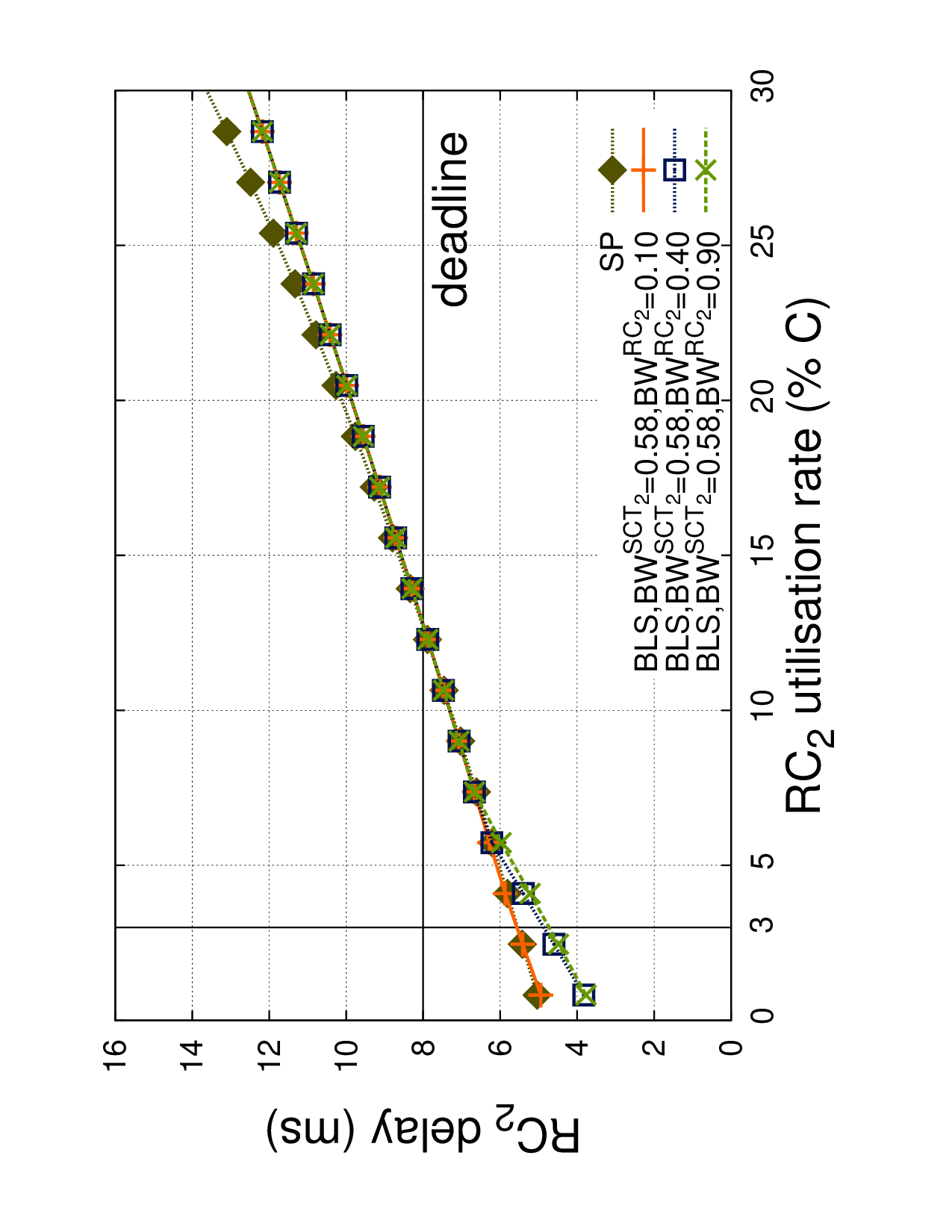}}
		
		\footnotesize \caption{Impact of $RC_2$ utilisation rate on: (a) $SCT_2$ delays; (b) $RC_1$ delays; (c) $RC_2$ delays}
		\label{fig:P4Impact}
	\end{figure}

\subsection{Use-case 3: adding A350 Flight Control to the AFDX}

The flight control traffic is the most important on an aircraft. It is currently on private MIL-STD-1553B networks to keep it isolated from other traffics. In this part, we study the possibility of adding the Flight Control traffic to the AFDX using our proposed 3-classes extended AFDX illustrated in \ref{fig:BLSshaper2}.

\hfill\\
\textbf{Defining a new network architecture}

The AFDX fulfills  the highest avionics requirements and can be used for Safety Critical traffic, such as flight Control traffic. However, the main concern is the way the Flight Control devices can be connected to the AFDX to guarantee the avionics requirements, particularly the safety rule stating that a single failure must not cause the loss of a function. After a reverse-engineering process of the current A350 flight control architecture illustrated in Fig. \ref{A350}, we have proposed a new architecture connecting the flight control calculators and actuators based on the extended AFDX technology, as illustrated in Fig. \ref{A350new}.

\begin{figure}[htbp]
	%\begin{minipage}[b]{0.4\linewidth}
	\centering
	\includegraphics[width=0.9\textwidth]{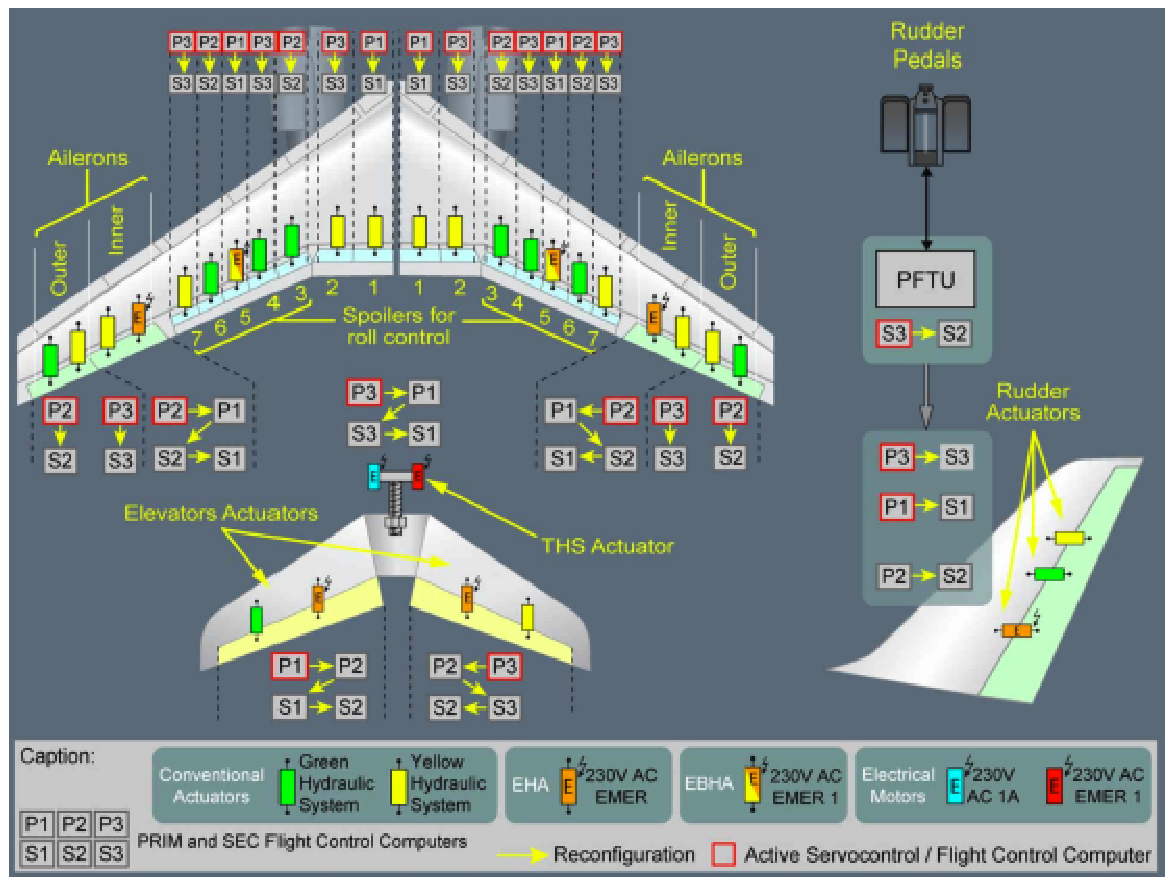}
	\footnotesize \caption[A350 flight control architecture]{A350 flight control architecture, image downloaded from https://www.quora.com/How-are-the-Airbus-A350-and-A330-different}
	\label{A350}
\end{figure}

First, to connect the calculators, we use 2 switches (SS1 and SS2), so that we do not lose a full network due to one switch loss. To obtain similar timing results for all actuators, we use the remaining 5 switches (L1, L2, C, R1, R2) to connect the actuators, taking into account the energy supplier network, the type of actuator to fulfill the safety rule, and trying to homogenize the number of connections on each switch.
The result is visible in Fig. \ref{A350new}, we obtain a diamond-like structure with central switches (L1, L2, C, R1, R2) connected to 6 or 7 actuators.

\begin{figure}[htbp]
	%\begin{minipage}[b]{0.4\linewidth}
	\centering
	\includegraphics[width=0.88\textwidth]{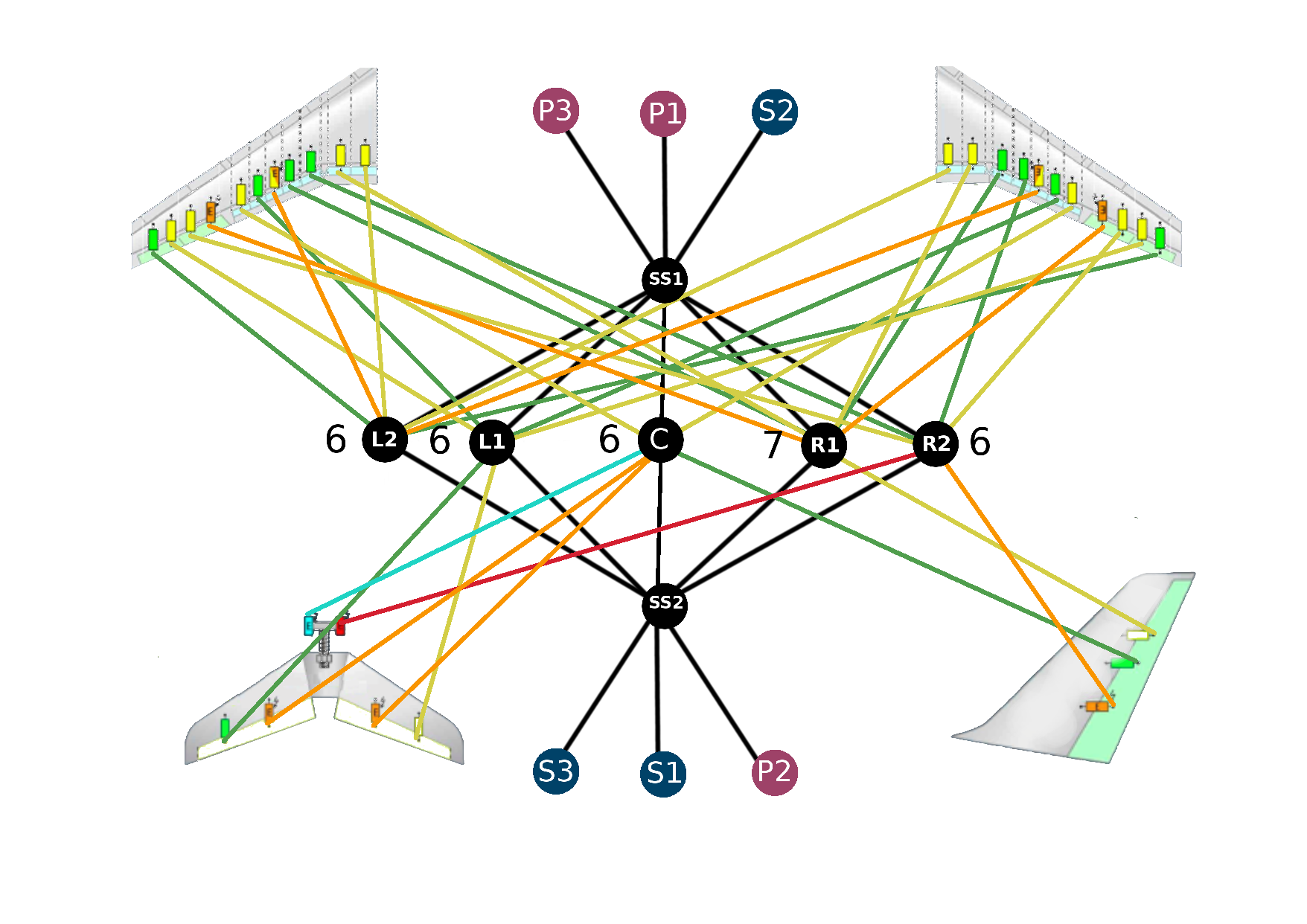}
	\footnotesize \caption{Primary network: new A350 flight control architecture}
	\label{A350new}
\end{figure}

In this architecture, the BLS is incorporated within the output ports from SS1/SS2 to the central switches, from the central switches to SS1/SS2, and from SS1/SS2 to the calculators. However, for the output port linking the central switches to the actuators, the BLS is not useful since the flight control traffic is the only type of traffic in these output ports. Hence, we obtain the network presented in Fig. \ref{outputportLayout}, where there is only one BLS along the path of each flow from a calculator to an actuator.

\begin{figure}[htbp]
	%\begin{minipage}[b]{0.4\linewidth}
	\centering
	\includegraphics[width=0.75\textwidth]{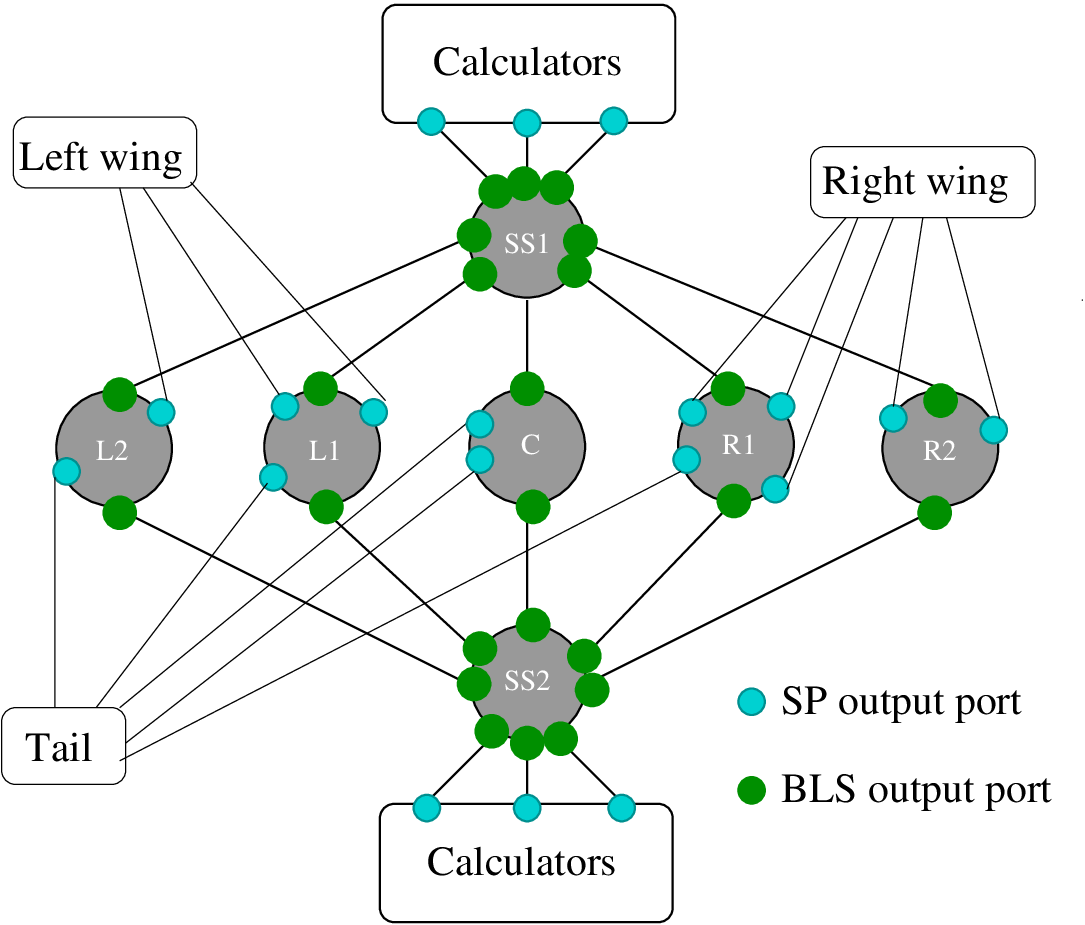}
	\footnotesize \caption{Output port type layout for the extended AFDX}
	\label{outputportLayout}
\end{figure}

\newpage
\noindent
\textbf{Timing analysis of the proposed solution}

In this section, we will consider many different scenarios to assess the performance of the extended AFDX solution in terms of delay bounds. The SCT traffic consists of  Flight Control frames and the considered scenarios are described in Table~\ref{FCscenarios}.

We use the architecture presented in Fig. \ref{A350new}. Hence, each calculator generates 17 flows with a periodicity $BAG_{SCT}$ and a frame size $MFS_{SCT}$. Then, the frames arrive in SS1 or SS2. The output port with the heaviest load receives 7 flows from each of the 3 calculators connected to the considered switch, thus a total of 21 flows. Finally, the flows arrive in the central switches. Each central switch output ports receives 6 flows, one from each calculator.

The current heaviest load on the current 100Mbps AFDX is 30\%. With the considered 1Gbps network, the heaviest load is 3\%. Hence, we consider that in each output port there are $n_{RC}$ flows defined by $MFS_{RC}$ and $BAG_{RC}$ to obtain an utilization rate of 3\%:

\noindent \hfil$n_{RC}=\ceil[\bigg]{\frac{3}{100}\cdot C\cdot \frac{BAG_{RC}}{MFS_{RC}}}$

Hence, we obtain:

\noindent \hfil $\alpha_ {RC}(t)=n_{RC}\cdot \left( \frac{MFS_{RC}}{BAG_{RC}}\cdot t +MFS_{RC}\right) $

We use this $\alpha_ {RC}$ in both the calculators and the first switch (either SS1 or SS2). The delay bounds are computed with the CCbA modelisation. In Section~\ref{Use-case2}, we highlighted the importance of good parameterization. Hence, in this section, the BLS parameters are computed with the tuning method proposed in\cite{finzituning}.

\begin{table}[h!]
	\footnotesize
	\begin{center}
		\begin{tabular} {|c|c|c|c|c|c|c|}
			\hline
			scenario &$MFS_{SCT}$ & $BAG_{SCT}$ & $Deadline_{SCT}$ & $MFS_{RC}$ & $BAG_{RC}$ & $Deadline_{RC}$\\
			&(bytes) & ($\mu$s) & ($\mu$s) & (bytes) & ($\mu$s) & (ms) \\
			\hline
			1 & 64 &  2000 & 1000 & 320 &  2000 & 2000\\
			\hline
			2 & 64 &  1000 & 1000 & 320 &  2000 & 2000\\
			\hline
			3 & 64 &  1000 & 500 & 320 &  2000 & 2000\\
			\hline
			4 & 64 &  1000 & 1000 & 320 &  4000 & 4000\\
			\hline
			5 & 64 &  1000 & 1000 & 640 &  2000 & 2000\\
			\hline
			6 & 64 &  1000 & 1000 & 1280 &  2000 & 2000\\
			\hline
			7 & 128 &  1000 & 1000 & 320 &  2000 & 2000\\
			\hline
			8 & 256 &  1000 & 1000 & 320 &  2000 & 2000\\
			\hline
		\end{tabular}
	\end{center}
	\footnotesize \caption{Flight Control application: scenarios}
	\label{FCscenarios}
\end{table}

The computed SCT and RC delay bounds under the different scenarios for the extended and current AFDX are detailed in Table \ref{FCsresults}. We consider two performance  measurements. First, the end-to-end SCT delay bounds between a calculator and an end-system, denoted $delay_{SCT}^{end2end,m}$ for either the extended AFDX ($m =BLS$)  or the current AFDX ($m =SP$). The goal is to verify that the end-to-end SCT deadline is fulfilled. Secondly, 
%as we consider a maximum impact of RC rather than end-to-end RC traffics, we do not compute end-to-end RC delay bounds, only the delay in a switch with an active BLS. Hence, we consider 
the RC delay bounds in the switches SS1 and SS2 under BLS.  We denote $delay_{RC}^{SW1,m}$,
the delay bound of the RC traffic in the first switch in the considered path,  with $SW1$ being SS1 or SS2 depending on the considered flow, with $m\in\{BLS,SP\}$. It is worth noting that due to the symmetry of the network the delay bound in SS1 is identical to the one in SS2.

\textbf{From the results in Table \ref{FCsresults}, we can see that for very diverse configurations, SCT delay bounds still fulfill the deadlines. Additionally, the RC delay bounds with the extended AFDX is always lower than the ones with the current AFDX, up to 49.9\% of improvement in SW1. This confirms the efficiency of our proposal to handle mixed-criticality traffic in a realistic situation.}

\begin{table}[h!]
	\footnotesize
	\begin{center}
		\begin{tabular} {|c|c|c|c|c|}
			\hline
			scenario & $delay_{SCT}^{end2end,BLS}$ & $delay_{SCT}^{end2end,SP}$ & $delay_{RC}^{SW1,BLS}$ & $delay_{RC}^{SW1,SP}$\\
			&($\mu$s) & ($\mu$s) &  ($\mu$s) & ($\mu$s)\\
			\hline
			1 & 120.00 & 130.27 & 44.89 & 55.16\\
			\hline
			2 & 121.96 & 132.7 & 44.89 & 55.63 \\
			\hline
			3 & 121.96 & 132.7 & 44.89 & 55.63 \\
			\hline
			4 & 184.54 & 195.73 & 84.73 & 95.92 \\
			\hline
			5 & 129.91 & 140.70 &  47.46 & 58.25  \\
			\hline
			6 & 145.80 & 156.68 & 52.60 & 63.48 \\
			\hline
			7 & 176.53 & 198.74 & 45.55 & 67.76 \\
			\hline
			8 & 298.62 & 345.21 & 46.79 & 93.38 \\
			\hline
		\end{tabular}
	\end{center}
	\footnotesize \caption{Flight Control application: results}
	\label{FCsresults}
\end{table}

\section{Conclusion}

To homogenize the avionics communication architecture for heterogeneous traffic, we have proposed a new modelisation of an extension of the AFDX incorporating several TSN/BLS.

First, we have studied the existing BLS models. In particular, we have showed that the existing CPA model~\cite{thiele2016formal} can lead to optimistic bounds. Then, after comparing the two existing 3-classes Network Calculus BLS models, i.e. WbA~\cite{Finzi-sies-18} and CCbA~\cite{Finzi-wfcs-18}, we have concluded that the Continuous Credit-based Approach (CCbA) is the most promising. Hence, we have generalized it to multiple classes and multiple BLS.

Secondly, we have presented the system model to specify the considered network and traffic models. In particular, we have detailed an extended AFDX switch output port able to support multiple BLS.

Thirdly, we have generalized the CCbA and proposed a formal timing analysis of the extended AFDX to formally prove that the hard real-time requirements are fulfilled. We have started by detailing the schedulability conditions. Then, we have detailed the modelisation of the BLS and of the extended AFDX output port multiplexer, before presenting the end-to-end delay bound computation. This has led to a discussion on the nature of the BLS showing the BLS is not a greedy shaper, and that it is better to consider the association of the BLS and NP-SP as a scheduler.

Finally, we have done a performance analysis of the extended AFDX. First, sensitivity and  tightness analyses have shown the good properties of our modelisation, in particular concerning the impact of $L_R$ with CCbA compared to WbA. Then, the comparison to the CPA model has shown that our model is less complex and has solved the optimistic and  pessimistic issues of the CPA modelisation of the BLS. A second case study with six classes and two BLS has highlighted the good properties of the BLS and its ability to reduce the RC classes delay bounds (delays divided up to 5.4 times) and enhanced the schedulability of any class (e.g. from 0 to 12\%), compared to a Static Priority Scheduler.
We have finished with a concrete application, i.e., adding the A350 flight control to the AFDX. Results show that the extended AFDX decreases the delay bounds of the existing AFDX traffic compared to a standard AFDX (up to 49.9\%), hence mitigating the impact of the added traffic on existing traffic.

\newpage
\appendix

\section{Computing Achievable Worst-Cases}
\label{AWCs}

We consider the 3-classes case study, where the SCT class is shaped by a BLS, presented in Fig. \ref{fig:BLSshaper2}  As there is only one shaped class: SCT, we use  $k=\emptyset$ to simplify, for the non-ambiguous notations, such as $L_M^k$ or $I_{send}^k$. Our aim is to compute  Achievable Worst-Cases for SCT and RC delays, i.e., realistic worst-cases.

In this section, we consider several hypothesis :
\begin{itemize}
	\item  traffics are packetized, i.e., we need to integrate the non-preemption impact;
	\item the same frame size for each class SCT, RC and BE, i.e., homogeneous traffic within each traffic class;
\end{itemize}

We use the four curves presented in Fig. \ref{fig:SCTbeta10} and Fig. \ref{fig:SCTgamma10} to compute two Achievable Worst-Cases for each traffic class, i.e., SCT and RC, for  the single-hop network defined in Section \ref{Use-case1}. It is worth noting there is not strict order between the different cases. Based on these scenarios, we will be able to calculate so called Achievable Worst-Case delays to have an idea on the tightness of both RC and SCT delay bounds.

As illustrated in Fig. \ref{fig:SCTbeta10} and Fig. \ref{fig:SCTgamma10}, there is an alternation of sending windows (when SCT traffic is sent) and idle windows (when RC traffic is sent). We call a \textit{cycle} a sending window followed by an idle window (or an idle window followed by a sending window). A so called maximum-sized cycle is made of so called realistic maximum sending and idle windows: $\Delta_{send}^{real}$, $\Delta_{idle}^{real}$.

\subsection{SCT achievable worst-cases}
\label{AWCSCT}
We start by presenting the methodology, before considering the two BLS behaviours described in Fig. \ref{fig:SCTbeta10}.

\subsubsection{Methodology}

\begin{figure}[h]
	\centering	
	\includegraphics[width=0.5\columnwidth]{SCTbeta1}
	%\resizebox{0.5\columnwidth}{!}{\input{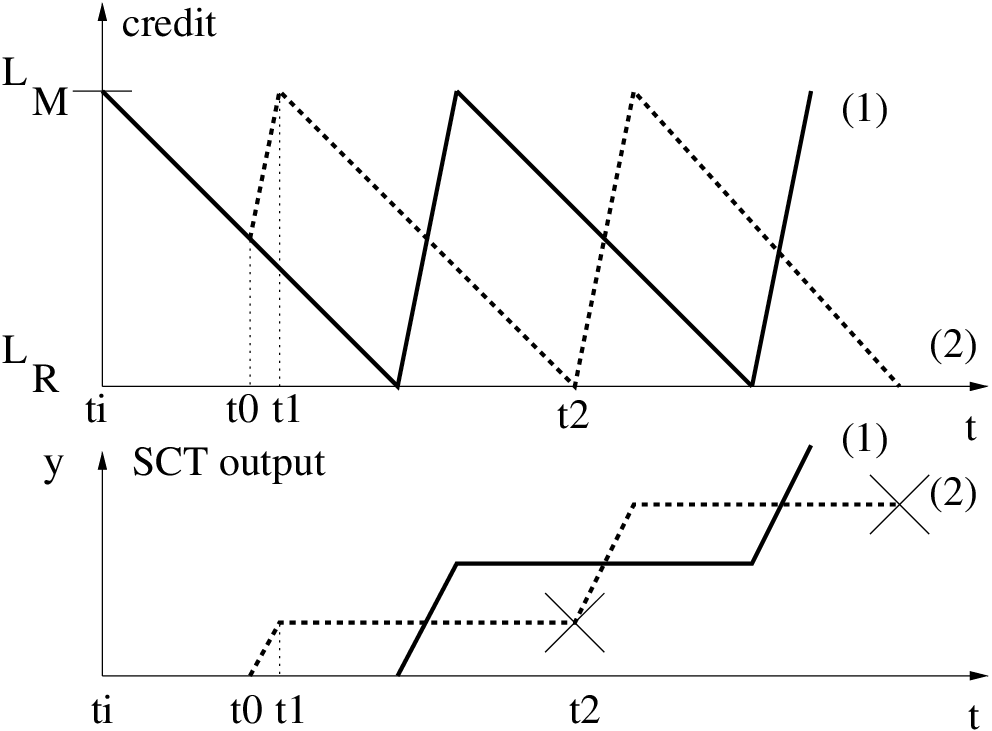}}
	\caption{Two examples of worst-case BLS behaviours}
	\label{fig:SCTbeta10}
\end{figure}

To compute the worst-case delay of the SCT class, $delay_{SCT}^{max}$, we need to take into account of the following effects:

\begin{itemize}
	\item BE class impact due to the non-preemption feature. We need to consider the transmission of a maximum-sized BE frame that may be transmitted before a SCT frame;
	\item Transmission time of SCT burst: it is the time needed for the output port multiplexer to transmit the maximum SCT burst $b_{SCT}=n_{SCT}^{in}\cdot MFS_{SCT}$, with a transmission capacity $C$, when taking into account the shaping effect of the upstream links. 
	Each one of these link has a capacity $C$, resulting in the following transmission time: 
	$$\frac{n_{SCT}^{in}\cdot MFS_{SCT}}{C}-\frac{n_{SCT}^{in}\cdot MFS_{SCT}}{n_{SCT}^{links}\cdot C}$$
	%The transmission of $b_{SCT}$ occurs in $sending$ $windows$, separated by  $idle$ $windows$;
	\item RC class blocking effect $\Delta_{RC}^{blocking}$: it is the blocking effect of the shaper, which enforces the presence of $idle$ $windows$ (resp. $sending$ $windows$) to send the RC (resp. SCT) traffic.
	
\end{itemize}

Thus, we need to compute $\Delta_{RC}^{blocking}$ in each case to obtain the Achievable worst-cases.

The blocking effect depends the number of $realistic$ idle windows $\Delta_{idle}^{real}$ used by the RC traffic, denoted $Ncyle_{RC}^{used}$.
The computation of $Ncyle_{RC}^{used}$ is based on both:

i) $Ncycle_{RC}^{needed}$, the number of cycles $needed$ to send the RC traffic  during $realistic$ idle windows $\Delta_{idle}^{real}$;

ii) $Ncycle_{RC}^{available}$, the number of cycles $available$ to RC while the SCT traffic is being transmitted during $realistic$ sending windows $\Delta_{send}^{real}$. If a sending window is started, it means that a full idle window  $\Delta_{idle}^{real}$ is available to RC. We denote $Ncyle_{SCT}^{used}$ the number of $\Delta_{send}^{real}$ used by the SCT traffic. Hence, we have:
$$N{cycle}_{RC}^{available}= \ceil[\Bigg]{Ncyle_{SCT}^{used}}$$

Thus, we obtain: $$Ncyle_{RC}^{used}=\min(Ncycle_{RC}^{available},Ncycle_{RC}^{needed})$$

The computation of the number of cycles $Ncycle_{SCT}^{used}$ necessary to compute  $\Delta_{RC}^{blocking}$ is based on:

i) the SCT and RC traffics;

ii) realistic windows, which depend on the chosen BLS behaviour and will be computed in each specific case.

So first, we assess the SCT and RC traffics.
A strong hypothesis we make while computing the SCT traffic is that we do not consider the SCT traffic that may arrive while the SCT burst is being transmitted. Considering this additional traffic leads to the need of computing a fixed-point problem so we discard it here as it causes acceptable optimism rather than unacceptable pessimism. 
Hence, the maximum amount of considered SCT traffic is: $$B_{SCT}^{max}=b_{SCT}=n_{SCT}^{in}\cdot MFS_{SCT}$$

However, not considering the amount of RC traffic arriving while SCT traffic is waiting leads to a large optimism. Hence, to compute the impact of RC, we need to compute the the maximum amount of RC traffic that arrives while the SCT burst is being sent, i.e., during $delay_{SCT}^{max}$:
$$B_{RC}^{max}(delay_{SCT}^{max})=n_{RC}^{in}\cdot MFS_{RC}\cdot (1+\frac{delay_{SCT}^{max}}{BAG_{RC}})$$

Hence, the SCT delay is as follows:

\begin{equation}
delay_{SCT}^{max}=\frac{MFS_{BE}}{C}+\Delta_{RC}^{blocking}(delay_{SCT}^{max})+\frac{n_{SCT}^{in}\cdot MFS_{SCT}}{C}-\frac{n_{SCT}^{in}\cdot MFS_{SCT}}{n_{SCT}^{links}\cdot C}
\end{equation}

As $\Delta_{RC}^{blocking}(delay_{SCT}^{max})$ depends on $delay_{SCT}^{max}$, $delay_{SCT}^{max}$ can be computed by \textbf{solving this fixed point problem}. We consider $\frac{n_{SCT}^{in}\cdot MFS_{SCT}}{C}$ to be a good starting point.

\subsubsection{SCT Achievable Worst-Case 1}

%To compute $\Delta_{RC}^{blocking}(delay_{SCT}^{max})$, we need to compute the numbers of cycles necessary to 

We start by computing an Achievable Worst-Case  for the SCT class, denoted SCT AWC-1 using the plain line curve (1) in Fig. \ref{fig:SCTbeta10}.

\hfill\\
\textbf{1. Computing the RC blocking delay $\Delta_{RC}^{blocking}(delay_{SCT}^{max})$}

The RC blocking delay is defined by:
$$\Delta_{RC}^{blocking}(delay_{SCT}^{max})=Ncycle_{RC}^{used}\cdot \Delta_{idle}^{real}$$\\

So, to compute the RC blocking delay, we need to compute the number of $\Delta_{idle}^{real}$ windows used by RC, $Ncycle_{RC}^{used}$.

To compute the number of cycles, we first need the realistic windows.

\hfill\\
\textbf{2. Computing realistic sending and idle windows}

They are computed as the upper integer value of maximum  number of frame that can be sent during a minimum window multiplied by the transmission time of a frame. Concerning the sending window, we consider that the window starts at $L_R^{min}=\max(0,L_R-I_{idle}\cdot\frac{MFS_{RC}}{C})$. This last hypothesis may be slightly optimistic as the window can in fact starts between $L_R$ and $L_R^{min}$ depending on the frames transmissions and sizes.

$$\Delta_{send}^{real}=\ceil[\Bigg]{ \frac{\frac{{L_M-L_R^{min}}}{I_{send}}}{\frac{MFS_{SCT}}{C}}}\cdot \frac{MFS_{SCT}}{C}$$

$$\Delta_{idle}^{real}=\ceil[\Bigg]{ \frac{\frac{{L_M-L_R}}{I_{idle}}}{\frac{MFS_{RC}}{C}}}\cdot \frac{MFS_{RC}}{C}$$

\hfill\\
\textbf{3. Computing $Ncycle_{RC}^{needed}$}

We compute the number of cycles necessary to send the RC traffic $B_{RC}^{max}(delay_{SCT}^{max})$:
$$Ncycle_{RC}^{needed}= \frac{B_{RC}^{max}(delay_{SCT}^{max})}{C\cdot \Delta_{idle}^{real}}$$

\hfill\\
\textbf{4. Computing $N{cycle}_{RC}^{available}$}

Finally, the number of windows available to RC is:
$$N{cycle}_{RC}^{available}= \ceil[\Bigg]{\frac{n_{SCT}^{in}\cdot MFS_{SCT}}{C\cdot \Delta_{send}^{real}}}$$

%\newpage

\subsubsection{SCT Achievable Worst-Case 2}

For the second achievable worst-case, denoted SCT AWC-2, we use the dotted curve (2) in Fig. \ref{fig:SCTbeta10}.

\hfill\\
\textbf{1. Computing the RC blocking delay $\Delta_{RC}^{blocking}(delay_{SCT}^{max})$}

To compute the RC blocking delay, we need to take into account the RC and SCT traffics sent between ti and t1. So, we compute:

i) the SCT traffic sent between t0 and t1;

ii) the RC traffic sent between ti and t0, and the corresponding window $\Delta_{tit0-idle}^{real}$.

Then, we can compute $N{cycle}_{RC}^{used}$, the number of maximum-sized cycles, i.e., $\Delta_{send}^{real}+\Delta_{idle}^{real}$ used to send the RC burst remaining after $t1$, i.e., $B_{RC}^{max}(delay_{SCT}^{max})-\Delta_{tit0-idle}^{real}\cdot C$.

Thus, we obtain the following RC blocking delay:
$$\Delta_{RC}^{blocking}(delay_{SCT}^{max})=\Delta_{tit0-idle}^{real} +Ncycle_{RC}^{used}\cdot \Delta_{idle}^{real}$$

To compute the number of cycles, we first need the realistic windows between ti and t0, and between t0 and t1.

\hfill\\
\textbf{2. Computing realistic sending and idle windows}

We consider that during the first idle window, there is no RC traffic backlogged when the credit reaches $\frac{L_M}{2}$. So SCT traffic is sent until there is again RC traffic, i.e., when the credit reaches $L_R$ (see Fig. \ref{fig:SCTbeta10}).

First, we will use the same realistic maximum sending and idle windows as for AWC-1. We will also compute the realistic sending and idle windows from $L_R$ to $\frac{L_M}{2}$ and from $\frac{L_M}{2}$ to $L_R^{min}$. 

$$\Delta_{t0t1-send}^{real}=\ceil[\Bigg]{ \frac{{\frac{L_M}{2}-L_R^{min}}}{I_{send}}\cdot{\frac{C}{MFS_{SCT}}}}\cdot \frac{MFS_{SCT}}{C}$$

$$\Delta_{tit0-idle}^{real}=\ceil[\Bigg]{ {\frac{{\frac{L_M}{2}-L_R}}{I_{idle}}}\cdot{\frac{C}{MFS_{RC}}}}\cdot \frac{MFS_{RC}}{C}$$ 

\hfill\\
\textbf{3. Computing $N{cycle}_{RC}^{needed}$}

The number of maximum-sized cycles necessary to send the RC traffic after t1 is the numbers of cycles necessary to send the RC burst minus the traffic sent during $\Delta_{tit0-idle}^{real}$:
$$Ncycle_{RC}^{needed}= \frac{B_{RC}^{max}(delay_{SCT}^{max})-\Delta_{tit0-idle}^{real}\cdot C}{C\cdot \Delta_{idle}^{real}}$$

\hfill\\
\textbf{4. Computing $N{cycle}_{RC}^{available}$}

To compute the number of cycle available to RC after t1, we consider the remaining SCT burst after t1. Thus, we have:

$$N{cycle}_{RC}^{available}= \ceil[\Bigg]{\frac{n_{SCT}^{in}\cdot MFS_{SCT}-\Delta_{t0t1-send}^{real}\cdot C}{C\cdot \Delta_{send}^{real}}}$$

\subsection{RC achievable worst-cases}

We start by presenting the methodology, before considering the two BLS behaviours described in Fig. \ref{fig:SCTgamma10}.

\begin{figure}[h]
	\centering
	\includegraphics[width=0.6\columnwidth]{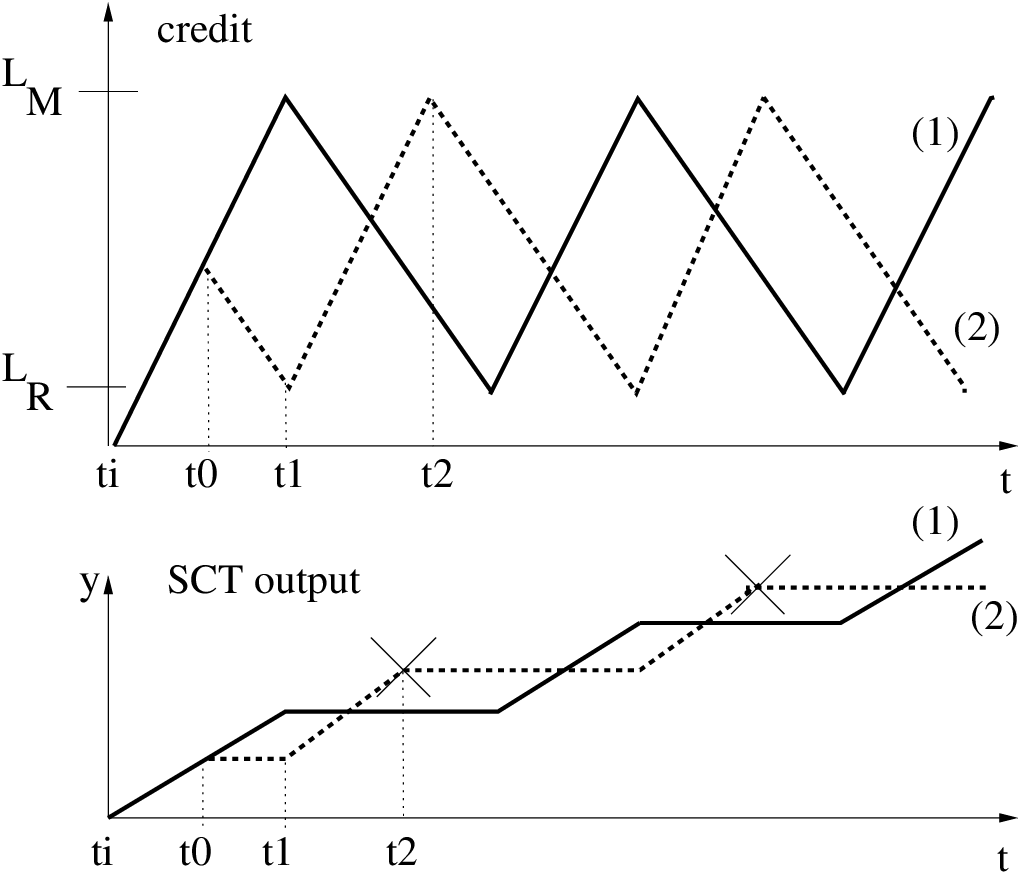}
	%\resizebox{0.5\columnwidth}{!}{\input{figures/.eps}}
	\caption{Two examples of best-case BLS behaviours}
	\label{fig:SCTgamma10}
\end{figure}

\subsubsection{Methodology}

To compute the worst-case delay of the RC class, $delay_{RC}^{max}$, we need to do an account of the following effects:

\begin{itemize}
	\item BE class impact due to the non-preemption feature. We need to consider the transmission of a maximum-sized BE frame that may be transmitted before a RC frame;
	\item Transmission time of RC burst: it is the time needed for the output port multiplexer to transmit the maximum RC burst $b_{RC}=n_{RC}^{in}\cdot MFS_{RC}$, with a transmission capacity $C$, when taking into account the shaping effect of the upstream links. 
	Each one of these link has a capacity $C$, resulting in the following transmission time: 
	$$\frac{n_{RC}^{in}\cdot MFS_{RC}}{C}-\frac{n_{RC}^{in}\cdot MFS_{RC}}{n_{RC}^{links}\cdot C}$$
	%The transmission of $b_{RC}$ occurs in $sending$ $windows$, separated by  $idle$ $windows$;
	\item SCT class blocking effect $\Delta_{SCT}^{blocking}$: it is the blocking effect of the shaper, which enforces the presence of $idle$ $windows$ (resp. $sending$ $windows$) to send the RC (resp. SCT) traffic.
	
\end{itemize}

Thus, we need to compute $\Delta_{SCT}^{blocking}$ in each case to obtain the Achievable worst-cases.

The blocking effect depends the number of $realistic$ sending windows $\Delta_{send}^{real}$ used by the SCT traffic, denoted $Ncyle_{SCT}^{used}$.
The computation of $Ncyle_{SCT}^{used}$ is based on both:

i) $Ncycle_{SCT}^{needed}$, the number of cycles $needed$ to send the SCT traffic  during $realistic$ sending windows $\Delta_{send}^{real}$;

ii) $Ncycle_{SCT}^{available}$, the number of cycles $available$ to SCT while the RC traffic is being transmitted during $realistic$ idle windows $\Delta_{idle}^{real}$. If an idle window is started, it means that a full sending window  $\Delta_{send}^{real}$ is available to SCT. We denote $Ncyle_{RC}^{used}$ the number of $\Delta_{idle}^{real}$ used by the RC traffic. Hence, we have:
$$N{cycle}_{SCT}^{available}= \ceil[\Bigg]{Ncyle_{RC}^{used}}$$

Thus, we obtain: $$Ncyle_{SCT}^{used}=\min(Ncycle_{SCT}^{available},Ncycle_{SCT}^{needed})$$

The computation of the number of cycles  $Ncycle_{SCT}^{used}$ necessary to compute  $\Delta_{SCT}^{blocking}$ is based on:

i) the SCT and RC traffics;

ii) realistic windows, which depend on the chosen BLS behaviour and will be computed in each specific case.

So first, we assess the SCT and RC traffics.
A strong hypothesis we make while computing the RC traffic is that we do not consider the RC traffic that may arrive while the RC burst is being transmitted. Considering this additional traffic leads to the need of computing a fixed-point problem so we discard it here as it causes acceptable optimism rather than unacceptable pessimism. 
Hence, the maximum amount of considered RC traffic is: $$B_{RC}^{max}=b_{RC}=n_{RC}^{in}\cdot MFS_{RC}$$

However, not considering the amount of SCT traffic arriving while RC traffic is waiting leads to a large optimism. Hence, to compute the impact of SCT, we need to compute the the maximum amount of SCT traffic that arrives while the RC burst is being sent, i.e., during $delay_{RC}^{max}$:
$$B_{SCT}^{max}(delay_{RC}^{max})=n_{SCT}^{in}\cdot MFS_{SCT}\cdot (1+\frac{delay_{RC}^{max}}{BAG_{SCT}})$$

Hence, the RC delay is as follows:
\begin{equation}
delay_{RC}^{max}=\frac{MFS_{BE}}{C}+\Delta_{SCT}^{blocking}(delay_{RC}^{max})+\frac{n_{RC}^{in}\cdot MFS_{RC}}{C}-\frac{n_{RC}^{in}\cdot MFS_{RC}}{n_{RC}^{links}\cdot C}
\end{equation}

As $\Delta_{SCT}^{blocking}(delay_{RC}^{max})$ depends on $delay_{RC}^{max}$, $delay_{RC}^{max}$ can be computed by \textbf{solving this fixed point problem}. We consider $\frac{n_{RC}^{in}\cdot MFS_{RC}}{C}$ to be a good starting point.

\subsubsection{RC Achievable Worst-Case 1}

%To compute $\Delta_{RC}^{blocking}(delay_{SCT}^{max})$, we need to compute the numbers of cycles necessary to 

We compute an Achievable Worst-Case  for the RC class, denoted RC AWC-1 using the plain line curve (1) in Fig. \ref{fig:SCTgamma10}.

\hfill\\
\textbf{1. Computing the SCT blocking delay $\Delta_{SCT}^{blocking}(delay_{RC}^{max})$}

An important difference between SCT and RC is the presence of an initial maximum sending windows starting at 0, denoted $\Delta_{send,0}^{real}$. It differs from the usual maximum sending windows which start at $L_R$. Hence, the initial SCT burst sent during this  $\Delta_{send,0}^{real}$ must be taken into account throughout this computation of the SCT blocking delay.

To compute the SCT blocking delay, we need to compute:  

i) $N{cycle}_{SCT}^{available}$, the number of cycles available to SCT after t0, taking into account the impact of $\Delta_{send,0}^{real}$;

ii) $Ncycle_{SCT}^{needed}$, the number of cycles needed to send the SCT traffic remaining after t1, i.e., the SCT burst minus the traffic sent during $\Delta_{send,0}^{real}$.

%When $Ncycle_{SCT}^{needed}(delay_{RC}^{max})<N{cycle}_{SCT}^{available}$,  the blocking effect of the SCT traffic is at its maximum: the full $B_{SCT}^{max}(delay_{RC}^{max})= \left( Ncycle_{SCT}^{needed}(delay_{RC}^{max})\cdot \Delta_{send}^{real} + \Delta_{send,0}^{real}\right) \cdot C $ is sent before the last RC frame is sent. Else, only $Ncycle_{SCT}^{available}\cdot \Delta_{send}^{real}+\Delta_{send,0}^{real}$ SCT traffic is sent. 

Finally, we can compute $\Delta_{SCT}^{blocking}$, the interfering SCT traffic delay due to the shaper blocking effect:

$$\Delta_{SCT}^{blocking}(delay_{RC}^{max})=\min\left(Ncycle_{SCT}^{needed},Ncycle_{SCT}^{available}\right)\cdot \Delta_{send}^{real} + \Delta_{send,0}^{real}$$

%
%\begin{equation}
%\Delta_{SCT}^{blocking}(\delta)=\left\{ \begin{array}{ll}
%\text{if } n_{SCT}^{cycle}(\delta)<\ceil[\Big]{ n_{RC}^{cycle}-1}: \frac{B_{SCT}^{max}(\delta)}{C}\nonumber\\
%\text{otherwise: }\Delta_{send,0}^{real}+ \ceil[\Big]{ n_{RC}^{cycle}-1}\cdot \Delta_{send}^{real}\nonumber
%\end{array} \right.
%\end{equation}

To compute the number of cycles, we first need the realistic windows.

\hfill\\
\textbf{2. Computing realistic sending and idle windows}

They are the same as Section~\ref{AWCSCT}, except for the fact that we consider for $\Delta_{send}^{real}$  that the window starts at $L_R$ (instead of $L_R^{min}$) which may again be slightly optimistic as the window can in fact starts between $L_R$ and $L_R^{min}$ depending on the frames transmissions and sizes. Additionally, we consider the initial sending window $\Delta_{send,0}^{real}$.
$$\Delta_{send}^{real}=\ceil[\Bigg]{ \frac{\frac{{L_M-L_R}}{I_{send}}}{\frac{MFS_{SCT}}{C}}}\cdot \frac{MFS_{SCT}}{C}$$

$$\Delta_{idle}^{real}=\ceil[\Bigg]{ \frac{\frac{{L_M-L_R}}{I_{idle}}}{\frac{MFS_{RC}}{C}}}\cdot \frac{MFS_{RC}}{C}$$ 

$$\Delta_{send,0}^{real}=\ceil[\Bigg]{ \frac{\frac{{L_M-0}}{I_{send}}}{\frac{MFS_{SCT}}{C}}}\cdot \frac{MFS_{SCT}}{C}$$

\newpage
\noindent
\textbf{3. Computing $Ncycle_{SCT}^{needed}$}

Next, we compute the number of window cycles necessary to send the SCT traffic remaining after t0. We must consider the first sending window, $\Delta_{send,0}^{real}$ which starts at $0$:

The amount of traffic sent during this window is:
$$b_{SCT,0}^{max}=\Delta_{send,0}^{real}\cdot C$$

Finally, the number of cycles necessary to send the remaining SCT traffic after t0 is:
$$Ncycle_{SCT}^{needed}= \frac{B_{SCT}^{max}(delay_{RC}^{max})-b_{SCT,0}^{max}}{C\cdot \Delta_{send}^{real}}$$

\hfill\\
\textbf{3. Computing $N{cycle}_{SCT}^{available}$}

tThe first available window is $\Delta_{send,0}^{real}$, and is taken into account directly in $\Delta_{SCT}^{blocking}(delay_{RC}^{max})$. So, We must remove 1 from $N{cycle}_{RC}^{used}$. Finally, we have:
$$N{cycle}_{SCT}^{available}= \ceil[\Bigg]{\frac{n_{RC}^{in}\cdot MFS_{RC}}{C\cdot \Delta_{idle}^{real}}}-1$$

\subsubsection{RC Achievable Worst-Case 2}

For the second achievable worst-case, denoted RC AWC-2, we use the dotted curve (2) in Fig. \ref{fig:SCTgamma10}.

\hfill\\
\textbf{1. Computing the SCT blocking delay $\Delta_{SCT}^{blocking}(delay_{RC}^{max})$}

To compute the SCT blocking delay, we need to take into account the RC and SCT traffics sent between ti and t1. So, we  compute:

i) the RC traffic sent between t0 and t1;

ii) the SCT traffic sent between ti and t0, and the corresponding window $\Delta_{tit0-send}^{real}$.

Then, we can compute $N{cycle}_{SCT}^{used}$, the number of maximum-sized cycles, i.e., $\Delta_{send}^{real}+\Delta_{idle}^{real}$ used to send the SCT burst remaining after $t1$, i.e., $B_{SCT}^{max}(delay_{RC}^{max})-\Delta_{t0t1-send}^{real}\cdot C$.

Thus, we obtain the following SCT blocking delay:
$$\Delta_{SCT}^{blocking}(delay_{RC}^{max})=\Delta_{tit0-send}^{real} +Ncycle_{SCT}\cdot \Delta_{send}^{real}$$

To compute the number of cycles, we first need the realistic windows between ti and t0, and between t0 and t1.

\hfill\\
\textbf{2. Computing realistic sending and idle windows}

We consider that during the first sending window, there is no SCT traffic backlogged when the credit reaches $\frac{L_M}{2}$. So RC traffic is sent until there is again SCT traffic, i.e., when the credit reaches $L_R$ (see Fig. \ref{fig:SCTgamma10}).

First, we will use the same realistic maximum sending and idle windows as for AWC-1. We will also compute the realistic sending and idle windows from $0$ to $\frac{L_M}{2}$ and from $\frac{L_M}{2}$ to $L_R$. 

$$\Delta_{tit0-send}^{real}=\ceil[\Bigg]{ \frac{{\frac{L_M}{2}-0}}{I_{send}}\cdot{\frac{C}{MFS_{SCT}}}}\cdot \frac{MFS_{SCT}}{C}$$

$$\Delta_{t0t1-idle}^{real}=\ceil[\Bigg]{ \frac{{\frac{L_M}{2}-L_R}}{I_{idle}}\cdot{\frac{C}{MFS_{RC}}}}\cdot \frac{MFS_{RC}}{C}$$

\newpage
\noindent
\textbf{3. Computing $N{cycle}_{SCT}^{needed}$}

The number of maximum-sized cycles necessary to send the SCT traffic is the numbers of cycles necessary to send the SCT burst minus the traffic sent during $\Delta_{tit0-send}^{real}$:
$$N{cycle}_{SCT}^{needed}= \frac{B_{SCT}^{max}(delay_{RC}^{max})-\Delta_{tit0-send}^{real}\cdot C}{C\cdot \Delta_{send}^{real}}$$

\hfill\\
\textbf{4. Computing $N{cycle}_{SCT}^{available}$}

We consider the remaining RC burst after t1. Thus, we have:
$$N{cycle}_{SCT}^{available}= \ceil[\Bigg]{\frac{n_{RC}^{in}\cdot MFS_{RC}-\Delta_{t0t1-idle}^{real}\cdot C}{C\cdot \Delta_{idle}^{real}}}$$

\section{Generalized Continuous-Credit-based Approach (gCCbA) model proofs}

In this section, we detail the proofs of the strict minimum and maximum service curves. Both proofs are based on three lemmas  presented in the next section.
\subsection{Continuous-credit Lemmas}

We denote $R^*_{k} (t)$ the output cumulative function of the class $k$ traffic, and $\Delta R^*_{k}(\delta)$ its variation during an interval $\delta$.

The BLS credit tries to keep an accurate accounting of the traffic sent. There are two situations when it loses track due to non-preempted transmissions: 
\begin{enumerate}
	\item  when the credit reaches $L_M^k$ and the current class $k$ frame has not finished its transmission; 
	\item when the credit reaches $0$ and the current frame is still being transmitted. 
\end{enumerate}

We call this the saturation of the credit, either at $L_M^k$ by class $k$ traffic, or at $0$ by other traffics.  The saturation at $L_M^k$ can only occur when a class $k$ frame is being transmitted, while the saturation at 0 can not occur when a class $k$ frame is being transmitted.

Hence, we call $\Delta R^{*}_{L_M^k, sat}(\delta)$ (resp.$\Delta R^{*}_{0, sat}(\delta)$) the part of $\Delta R^{*}_{k}(\delta)$ (resp. $\delta\cdot C-\Delta R^{*}_{k}(\delta)$), that can be sent during any interval $\delta$ while the credit is saturated at $L_M^k$ (resp. at 0).

We present here three lemmas  linked to the credit saturation and necessary to the service curve proofs. First in Lemma \ref{minmaxsum}, we show how to bound the sum of the credit consumed and the credit gained, depending on the credit saturations. Then, we detail the bounds of the credit saturations at $L_M^k$ in Lemma~\ref{lemmasat0}, and at $0$ in Lemma ~\ref{lemmasatlm}.

\begin{lemma}[Continuous credit bounds]\label{minmaxsum}
	We consider a shaped class $k$, with a maximum credit level $L_M^k$. $\forall \delta$, computing the sum of the credit consumed and gained give the following inequations:	
	
	\begin{eqnarray}
	L_M^k  \geqslant & 	\left( \begin{aligned}
	\Delta R^*_{k}(\delta)-\frac{\Delta R^{*}_{L_M^k,sat}(\delta)}{C}\cdot I_{send}^k\\-(\delta - \frac{\Delta R^{*}_{0,sat}(\delta)}{C})\cdot I_{idle}^k
	\end{aligned}\right)  & \geqslant  -L_M^k\nonumber 
	\end{eqnarray}
\end{lemma}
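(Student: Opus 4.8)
The plan is to track the credit level as a continuous function of time over an arbitrary interval $\delta$ and express the net credit change as the difference between credit consumed (during transmission of class-$k$ frames) and credit gained (during idle periods), then use the fact that the credit is confined to $[0,L_M^k]$ to bound this net change by $\pm L_M^k$. The central idea is that the credit is continuous and bounded, so the credit at the end of the interval minus the credit at the start lies in $[-L_M^k, L_M^k]$; the whole task is to write that net change correctly in terms of the output traffic $\Delta R^*_k(\delta)$ while accounting for the two saturation phenomena.

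First I would recall the credit dynamics from Section~\ref{fctBLS}: while a class-$k$ frame is being transmitted the credit increases at rate $I_{send}^k$, and otherwise it decreases at rate $I_{idle}^k$. If there were no saturation, the credit consumed over $\delta$ would be exactly $\frac{\Delta R^*_k(\delta)}{C}\cdot I_{send}^k$ (transmission time of class-$k$ output times the send slope) and the credit gained would be $(\delta - \frac{\Delta R^*_k(\delta)}{C})\cdot I_{idle}^k$ (idle time times the idle slope). The correction comes from the saturation intervals: during the time the credit sits saturated at $L_M^k$ a class-$k$ frame is still transmitting but \emph{no} credit is actually consumed, and during the time it sits saturated at $0$ the link is busy with other traffic but \emph{no} credit is gained. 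Using the definitions $\Delta R^{*}_{L_M^k,sat}(\delta)$ and $\Delta R^{*}_{0,sat}(\delta)$ introduced just above the lemma, the \emph{effective} transmission time that consumes credit is $\frac{\Delta R^*_k(\delta)-\Delta R^{*}_{L_M^k,sat}(\delta)}{C}$ and the effective idle time that gains credit is $\frac{(\delta\cdot C - \Delta R^*_k(\delta)) - \Delta R^{*}_{0,sat}(\delta)}{C}$. Hence the net credit change over $\delta$ equals
\[
\left(\Delta R^*_{k}(\delta)-\Delta R^{*}_{L_M^k,sat}(\delta)\right)\frac{I_{send}^k}{C}
-\left(\delta - \frac{\Delta R^{*}_{0,sat}(\delta)}{C}\right)I_{idle}^k,
\]
which, after substituting $I_{send}^k = C - I_{idle}^k$ (Eq.~\eqref{isend}) and simplifying, is precisely the bracketed expression in the statement.

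Finally I would invoke continuity and the confinement of the credit to $[0,L_M^k]$: since the credit starts at some value $c_0\in[0,L_M^k]$ and ends at $c_1\in[0,L_M^k]$, the net change $c_1-c_0$ satisfies $-L_M^k \leqslant c_1-c_0 \leqslant L_M^k$, which gives the desired two-sided inequality. The main obstacle, and the step deserving the most care, is the bookkeeping of the saturation corrections — in particular verifying that the algebraic identity $\left(\Delta R^*_{k}-\Delta R^{*}_{L_M^k,sat}\right)\frac{I_{send}^k}{C}-\left(\delta-\frac{\Delta R^{*}_{0,sat}}{C}\right)I_{idle}^k$ really collapses to the stated bracket after using $I_{send}^k+I_{idle}^k=C$, and that the $L_M^k$-saturation term is correctly attributed to consumed credit while the $0$-saturation term is attributed to gained credit. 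I would handle this by partitioning $\delta$ into maximal sub-intervals of four types (transmitting/non-transmitting $\times$ saturated/non-saturated) and summing the contributions, relying on the observations stated before the lemma that saturation at $L_M^k$ occurs only during class-$k$ transmission and saturation at $0$ only outside it.
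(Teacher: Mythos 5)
Your proposal follows essentially the same route as the paper's proof: the consumed credit is the non-saturated class-$k$ transmission time multiplied by $I_{send}^k$, the gained credit is the non-saturated idle time multiplied by $I_{idle}^k$, and the two-sided bound follows from the continuity of the credit and its confinement to $[0,L_M^k]$. One transcription slip to fix: in your displayed net-change formula the idle-time factor should be $\delta-\frac{\Delta R^*_{k}(\delta)+\Delta R^{*}_{0,sat}(\delta)}{C}$, consistent with your own verbal description of the effective idle time; with the factor $\delta-\frac{\Delta R^{*}_{0,sat}(\delta)}{C}$ as written, the expression does not collapse to the bracket in the statement, since the term $\Delta R^*_{k}(\delta)\cdot\frac{I_{idle}^k}{C}$ goes missing when you apply $I_{send}^k+I_{idle}^k=C$.
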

\begin{proof}
	
	In an interval $\delta$, the accurate consumed credit is the time it takes to send the non-saturating traffic $\frac{\Delta R^*_{k}(\delta)-\Delta R^{*}_{L_M^k,sat}(\delta)}{C}$ multiplied by the sending slope:
	$$credit_{consumed}^k=\left( \frac{\Delta R^*_{k}(\delta)-\Delta R^{*}_{L_M^k,sat}(\delta)}{C}\right) \cdot I_{send}^k$$ And conversely, the accurate gained credit is the remaining time $\delta - \frac{\Delta R^*_{k}(\delta)}{C}$ minus the saturation time $\frac{\Delta R^{*}_{0,sat}(\delta)}{C}$,  multiplied by the signed idle slope:	 
	$$credit_{gained}^k=\left( \delta - \frac{\Delta R^*_{k}(\delta)+\Delta R^{*}_{0,sat}(\delta)}{C}\right) \cdot (-I_{idle}^k)$$

	Thus $\forall \delta\in \mathds R^+$, using the fact that $I_{send}^k+I_{idle}^k=C$,  the sum of the gained credit  and the consumed credit is:
	
	\begin{eqnarray}
	credit_{consumed}^k+credit_{gained}^k\nonumber&=&
	(\frac{\Delta R^*_{k}(\delta)-\Delta R^{*}_{L_M^k,sat}(\delta)}{C})\cdot (I_{send}^k)\nonumber\\&&
	+(\delta - \frac{\Delta R^*_{k}(\delta)+\Delta R^{*}_{0,sat}(\delta)}{C})\cdot (-I_{idle}^k)\nonumber	\\
	&=&\Delta R^*_{k}(\delta)-\frac{\Delta R^{*}_{L_M^k,sat}(\delta)}{C}\cdot I_{send}^k\nonumber\\&&-(\delta - \frac{\Delta R^{*}_{0,sat}(\delta)}{C})\cdot I_{idle}^k\nonumber
	\end{eqnarray}
	
	We know that the credit is a continuous function with a lower bound: 0 and an upper bound $L_M^k$. So the sum of credit consumed and gained is always bounded by $-L_M^k$ and $+L_M^k$.
	
	\begin{eqnarray}
	L_M^k  \geqslant & credit_{consumed}^k+credit_{gained}^k& \geqslant  -L_M^k \nonumber \\
	L_M^k  \geqslant & 	\left( \begin{aligned}
	\Delta R^*_{k}(\delta)-\frac{\Delta R^{*}_{L_M^k,sat}(\delta)}{C}\cdot I_{send}^k\\-(\delta - \frac{\Delta R^{*}_{0,sat}(\delta)}{C})\cdot I_{idle}^k
	\end{aligned}\right)  & \geqslant  -L_M^k\nonumber 
	\end{eqnarray}
\end{proof}
\begin{lemma}[credit saturation at $0$]\label{lemmasat0}
	We consider a shaped class $k$, with the aggregate traffic of priority strictly higher than $p_H(k)$ $\alpha_h$-constrained with $\alpha_h=r_h\cdot t +b_h $. 
	
	$\forall \delta$, the amount of traffic sent while the traffic is saturated at $0$ is such as:
	\begin{eqnarray}
	0	\leqslant\Delta R^{*}_{0,sat}(\delta)
	\leqslant 	\left( \begin{aligned} &\sum\limits_{h\in HC(k)} r_h\cdot \delta+b_h\\&+
	MFS_{MC(k)}^{sat}\cdot\left(  \frac{\delta}{\Delta^{k,\beta}_{inter}}+1\right)\end{aligned}\right)  \nonumber
	\end{eqnarray}
	with: 				
	\begin{eqnarray}
	&& MFS_{MC(k)}^{sat}=\max(\max_{j\in MC(k)}MFS_{j}-\frac{C}{I_{idle}^k}\cdot L_R^k,0)\nonumber\\
	&&\Delta^{k,\beta}_{inter}= \frac{\max_{j\in MC(k)}MFS_{j}}{C}+\frac{L_M^k-L_R^{k,min}}{I_{send}^k}+\frac{L_M^k-L_R^k}{I_{idle}^k}\nonumber\\
	&& L_R^{k,min}=\max(L_R^k-\frac{\max_{j\in MC(k)}MFS_{j}}{C}\cdot I_{idle}^k,0)\nonumber
	\end{eqnarray}  
\end{lemma}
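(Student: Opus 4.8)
The plan is to bound separately the two kinds of traffic that can occupy the link while the credit sits at $0$: the classes in $HC(k)$, whose priority exceeds $p_H(k)$ and which can therefore always preempt class $k$, and the classes in $MC(k)$, whose priority lies strictly between $p_L(k)$ and $p_H(k)$ and which are served only when the class-$k$ priority is low. The lower bound $\Delta R^{*}_{0,sat}(\delta)\geqslant 0$ is immediate, since the quantity counts an amount of transmitted bits.

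For the $HC(k)$ contribution, I would simply observe that the total amount of $HC(k)$ traffic served during any window of length $\delta$ — whether the credit is saturated or not — is bounded by its aggregate arrival curve, which gives the term $\sum_{h\in HC(k)}(r_h\cdot\delta+b_h)$ directly.

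The delicate part is the $MC(k)$ contribution, and this is where I expect the main obstacle. First I would explain why the credit can reach $0$ at all while class $k$ is backlogged: when the credit descends to $L_R^k$ the priority switches from $p_L(k)$ to $p_H(k)$, but a non-preemptible $MC(k)$ frame already in transmission must finish. During that residual transmission, of duration at most $\max_{j\in MC(k)}MFS_j/C$, the credit keeps draining at rate $I_{idle}^k$, and once it hits $0$ it saturates. A short computation then shows that the $MC(k)$ bits sent while the credit is pinned at $0$ during a single such event are at most $MFS_{MC(k)}^{sat}=\max(\max_{j\in MC(k)}MFS_j-\frac{C}{I_{idle}^k}\cdot L_R^k,0)$, and that the credit level reached right after the frame completes is exactly $L_R^{k,min}$.

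Next I would bound the number of saturation-at-$0$ events inside a window of length $\delta$ by establishing a minimum inter-event duration. Reconstructing one full worst-case credit cycle — the climb from $L_R^{k,min}$ to $L_M^k$ at slope $I_{send}^k$, the descent from $L_M^k$ back to $L_R^k$ at slope $I_{idle}^k$, and the residual $MC(k)$ frame — yields the minimum separation $\Delta^{k,\beta}_{inter}$. Hence at most $\delta/\Delta^{k,\beta}_{inter}+1$ such events fit in $\delta$, the $+1$ covering a partial leading cycle, so the $MC(k)$ share is bounded by $MFS_{MC(k)}^{sat}\cdot(\delta/\Delta^{k,\beta}_{inter}+1)$. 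Summing the $HC(k)$ and $MC(k)$ contributions then gives the claimed upper bound. The crux is justifying that $\Delta^{k,\beta}_{inter}$ is genuinely a lower bound on the cycle length in every backlogged scenario, since one must argue that no cycle can be compressed below this value regardless of the arrival pattern of the competing $MC(k)$ and $HC(k)$ traffic.
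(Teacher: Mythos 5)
Your proposal follows essentially the same route as the paper's proof: the lower bound is immediate, the $HC(k)$ share is bounded directly by its aggregate arrival curve, and the $MC(k)$ share is bounded by multiplying the per-event saturated amount $MFS_{MC(k)}^{sat}$ by the number of non-preempted $MC(k)$ frames, itself bounded via the minimum inter-event duration $\Delta^{k,\beta}_{inter}$ obtained from the same worst-case credit cycle (residual frame down to $L_R^{k,min}$, climb to $L_M^k$, descent to $L_R^k$). The argument is correct and matches the paper's decomposition and key counting step.
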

\begin{proof}

	First, we know that $\Delta R^{*}_{0,sat}(\delta)\geqslant0$.
	Secondly, we consider the impact of MC(k) and HC(k), the impact of LC(k) being taken into account in $\beta_{k}^{sp}(t)$ to compute an upper bound.

	\textit{Impact of MC(k) on $\Delta R^{*}_{0,sat}(\delta)$}
	
	In the presence of class $k$ frames, the saturation of the credit at 0 can occur if an additional frame is sent while the credit is decreasing and about to reach $L_R^k$. Due to non-preemption, the frame finishes its transmission even though the class $k$ priority is now higher.
	
	To be able to compute the largest impact of the non-preemption of MC(k) frames on class $k$ traffic, we must find the highest number of non-preempted frames that can be sent during a time interval $\delta$. Then, we must compute the part of the non-preempted frame sent while the credit is saturated.
	
	So first, we must compute the smallest duration between two occurrences of the phenomenon. Fig. \ref{fig:minComputesketch1} illustrates the following explanation. 
	
	\begin{figure}[h]
		\centering
		\includegraphics[width=0.6\linewidth]{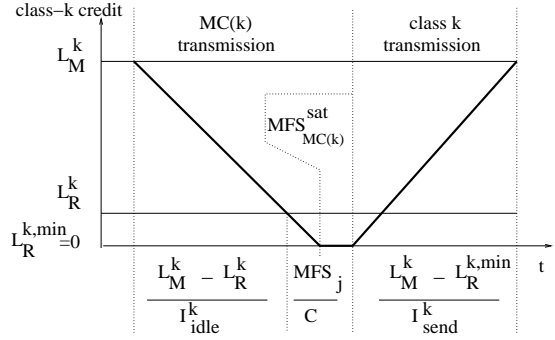}		
		\caption{Computing $\beta_{k}^{bls}(t)$}		
		\label{fig:minComputesketch1}
	\end{figure}
	
	After the first non-preempted MC(k) frame has been sent, the priority of the class $k$ queue is high. So, in presence of class $k$ traffic, no MC(k) traffic can be sent until a priority change: $L_M^k$ must be reached between two non-preempted MC(k) frames.	
	
	Thus, we study the intervals of time between the start of two transmissions of non-preempted MC(k) frames starting their transmission just before $L_R^k$ is reached. The smallest duration of such an interval is equal to the sum of 
	\begin{enumerate}
		\item the transmission time of the non-preempted MC(k) frame , such as at the end of the transmission the credit reaches $L_R^{k,min}=\max(L_R^k-\frac{\max_{j\in MC(k)}MFS_{j}}{C}\cdot I_{idle}^k,0)$;
		\item the duration $ \frac{L_M^k-L_R^{k,min}}{I_{send}^k}
		%=\frac{L_M^k-L_R^k}{I_{send}^k}+\min(\frac{MFS_{MC(k)}}{C} \cdot \frac{I_{idle}^k}{I_{send}^k},\frac{L_R^k}{I_{send}^k})
		$ because  class-$k$ traffic has to be sent continuously in order for the credit to reach $L_M^k$ in the minimum duration ; 
		\item finally $\frac{L_M^k-L_R^k}{I_{idle}^k}$ because MC(k) traffic has to be sent continuously in order for the credit to return in the minimum duration to $L_R^k$. 
	\end{enumerate}
	
	In total, the minimum duration between the start of the transmission of two non-preempted MC(k) frames (each starting just before $L_R^k$ is reached), is $$\Delta^{k,\beta}_{inter}= \frac{\max_{j\in MC(k)}MFS_{j}}{C}+\frac{L_M^k-L_R^{k,min}}{I_{send}^k}+\frac{L_M^k-L_R^k}{I_{idle}^k}$$ 
	
	Thus during $\delta$, the number of time a non-preempted MC(k) frame can be sent is upper bounded by $\lceil\frac{\delta}{\Delta^{k,\beta}_{inter}}\rceil$.

	Secondly, we need to compute the maximum amount data sent while the credit remains at 0 during the transmission of one non-preempted maximum-sized MC(k) frame as illustrated in Fig. \ref{fig:minComputesketch1}.  This is equal to the maximum size of a MC(k) frame, minus the amount of data transmitted while the credit decreases from $L_R^k$ to 0: $$MFS_{MC(k)}^{sat}=\max(\max_{j\in MC(k)}MFS_{j}-\frac{C}{I_{idle}^k}\cdot L_R^k,0)$$

	%	\begin{figure}[h]
	%		\centering
	%		\includegraphics[width=0.5\linewidth]{minCurvesat}		
	%		\caption{Computing $\Delta^{k,\beta}_{inter}$ and $MFS_{MC(k)}^{sat}$}		
	%		\label{fig:minCompute}
	%	\end{figure}	

	\textit{Impact of HC(k) on $\Delta R^{*}_{0,sat}(\delta)$}
	
	The second way credit can saturate at 0 happens if traffic from HC(k) is sent while the credit remains at 0. We denoted $\alpha_h(t)$ the aggregate traffic of HC(k), arriving at a rate of $r_h$, with a burst $b_h$, such as $\alpha_h(t)=r_h\cdot \delta+b_h$.

	As a result, the amount of MC(k) and HC(k) traffic sent while the credit is saturated is such as:	
	\begin{eqnarray}
	\Delta R^{*}_{0,sat}(\delta)\leqslant \sum\limits_{h\in HC(k)} r_h\cdot \delta+b_h+ MFS_{MC(k)}^{sat}\cdot\lceil\frac{\delta}{\Delta^{k,\beta}_{inter}}\rceil\nonumber\\
	\leqslant \sum\limits_{h\in HC(k)} r_h\cdot \delta+b_h+
	MFS_{MC(k)}^{sat}\cdot\left(  \frac{\delta}{\Delta^{k,\beta}_{inter}}+1\right) \nonumber
	\end{eqnarray}

	%		
	%		This gives: 
	%		\begin{equation}\label{deltaR0sat}
	%		\lim\limits_{\delta\rightarrow\infty}\frac{ \Delta R^{*,max}_{0,sat}(\delta)}{\delta}\leqslant \sum\limits_{h\in HC(k)} r_h+ \frac{MFS_{MC(k)}^{sat}}{\Delta^{k,\beta}_{inter}}
	%		\end{equation}
\end{proof}
\begin{lemma}[credit saturation at $L_M^k$]\label{lemmasatlm}
	We consider a shaped class $k$, with the aggregate traffic of priority strictly higher than $p_H(k)$ $\alpha_h$-constrained with $\alpha_h=r_h\cdot t +b_h $. 
	
	$\forall \delta$, the amount of traffic sent while the traffic is saturated at $L_M^k$ is such as:
	\begin{eqnarray}
	0\leqslant\Delta R^{*}_{L_M^k, sat}(\delta)\leqslant
	MFS_{k}\cdot\left(  \frac{\delta}{\Delta^{k,\gamma}_{inter}}+1\right)	\nonumber
	\end{eqnarray} 
	with:
	$$\Delta^{k,\gamma}_{inter}= \frac{MFS_{k}}{C}+\frac{L_M^k-L_R^k}{I_{idle}^k}+\frac{L_M^k-L_R^k}{I_{send}^k}$$
	
\end{lemma}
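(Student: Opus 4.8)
The plan is to establish this bound as the exact dual of Lemma~\ref{lemmasat0}, exploiting the asymmetry stated in the set-up: a saturation at $L_M^k$ can occur \emph{only} while a class-$k$ frame is in transmission. The lower bound $\Delta R^{*}_{L_M^k, sat}(\delta)\geqslant 0$ is immediate, since it denotes an amount of transmitted traffic. For the upper bound I would bound separately (i) the traffic sent per saturation event and (ii) the number of such events occurring during $\delta$, and then multiply the two.

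For (i), observe that when the credit, rising at rate $I_{send}^k$ during a class-$k$ transmission, reaches $L_M^k$, the priority of class $k$ drops to low, but by non-preemption the current frame finishes with the credit pinned at $L_M^k$. This in-progress frame is the only class-$k$ frame sent during one saturation event: once it completes, class $k$ is at low priority, the backlogged $MC(k)$ traffic is served, and the credit immediately leaves $L_M^k$. Hence at most $MFS_k$ is transmitted while saturated per event. Note that, contrary to the saturation-at-$0$ case of Lemma~\ref{lemmasat0}, the $HC(k)$ traffic contributes nothing here, precisely because saturation at $L_M^k$ never coincides with the transmission of a non-class-$k$ frame; this is why no $\alpha_h$ term survives in the bound.

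For (ii) I would compute the minimum duration $\Delta^{k,\gamma}_{inter}$ between the starts of two consecutive saturating class-$k$ frames. After a saturation the priority is low, so the credit must first descend from $L_M^k$ back to the resume level $L_R^k$; since it can descend no faster than rate $I_{idle}^k$, this lasts at least $\frac{L_M^k-L_R^k}{I_{idle}^k}$, the minimum being attained when non-class-$k$ traffic is sent continuously. Once $L_R^k$ is reached the priority returns to high, and the credit must climb back to $L_M^k$ at rate at most $I_{send}^k$, i.e. during at least $\frac{L_M^k-L_R^k}{I_{send}^k}$, attained under continuous class-$k$ transmission. Adding the transmission time $\frac{MFS_k}{C}$ of the saturating frame itself yields the claimed $\Delta^{k,\gamma}_{inter}$. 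Consequently the number of saturation events in any interval $\delta$ is at most $\lceil \delta/\Delta^{k,\gamma}_{inter}\rceil\leqslant \delta/\Delta^{k,\gamma}_{inter}+1$, and multiplying by the per-event bound $MFS_k$ gives $\Delta R^{*}_{L_M^k, sat}(\delta)\leqslant MFS_k\left(\frac{\delta}{\Delta^{k,\gamma}_{inter}}+1\right)$.

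The main obstacle is the rigorous justification of the minimum inter-saturation period, and in particular the claim that a single saturating frame is sent per event: this relies on $MC(k)$ being backlogged so that the credit genuinely leaves $L_M^k$ after each saturating frame (otherwise class $k$ could be served continuously at rate $C$ with the credit pinned at $L_M^k$). I would therefore carry out the counting argument within an $MC(k)$-backlogged period, which is exactly the regime in which the non-trivial maximum service curve of Theorem~\ref{Th:k-maxGene} is used, and verify that each of the two credit-excursion phases is indeed minimized by the continuous-transmission scenarios invoked above.
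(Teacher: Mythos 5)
Your proposal is correct and follows essentially the same route as the paper's proof: the lower bound is immediate, the per-event contribution is bounded by $MFS_k$ via non-preemption, and the number of events in $\delta$ is bounded by $\lceil \delta/\Delta^{k,\gamma}_{inter}\rceil$ after computing the minimum inter-saturation period as the sum of the saturating frame's transmission time, the credit descent $\frac{L_M^k-L_R^k}{I_{idle}^k}$, and the credit ascent $\frac{L_M^k-L_R^k}{I_{send}^k}$. Your remarks that $HC(k)$ cannot contribute to saturation at $L_M^k$ and that the count should be carried out within an $MC(k)$-backlogged period are consistent with the paper's own framing of the argument.
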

\begin{proof}
	First, we know that $\Delta R^{*}_{L_M^k,sat}(\delta)\geqslant0$.
	Secondly for the upper bound, in the presence of MC(k) frames, the saturation of the credit at $L_M^k$ can only occur if an additional frame is sent while the credit is increasing and about to reach $L_M^k$. Due to non-preemption, the frame finishes its transmission even though the class $k$ priority is now lower.
	
	To be able to compute the largest impact of the non-preemption of class $k$ frames, we must find the highest number of non-preempted frames that can be sent during a time interval $\delta$. So we must compute the smallest duration between two occurrences of the phenomenon. Fig. \ref{fig:maxComputesketch1} illustrates the following explanation. After the first non-preempted class-$k$ frame has been sent, the priority of the class $k$ queue is low, so in presence of MC(k) traffic, no class $k$ traffic can be sent until a priority change: $L_R^k$ must be reached between two non-preempted class-$k$ frames.		
	\begin{figure}[h]
		\centering
		\includegraphics[width=0.6\linewidth]{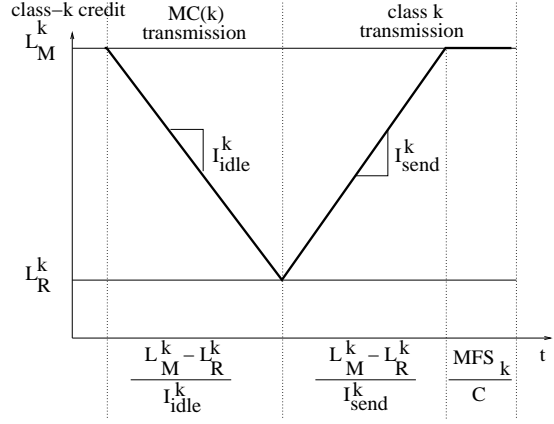}		
		\caption{Computing $\gamma_{k}^{bls}(t)$}		
		\label{fig:maxComputesketch1}
	\end{figure}
	Thus, we study the intervals of time between the start of two transmissions of non-preempted class-$k$ frames starting their transmission just before $L_M^k$ is reached. The smallest duration of such an interval is equal to the sum of:
	\begin{enumerate}
		\item the transmission time of the non-preempted class-$k$ frame (at the end of the transmission the credit is equal to $L_M^k$); 		
		\item the duration $ \frac{L_M^k-L_R^k}{I_{idle}^k}$ because  MC(k) traffic has to be sent continuously in order for the credit to reach $L_R^k$ in the minimum duration ;  		
		\item finally $\frac{L_M^k-L_R^k}{I_{send}^k}$ because class $k$ traffic has to be sent continuously in order for the credit to return in the minimum duration to $L_M^k$. 
	\end{enumerate}
	
	In total, the minimum duration between the start of the transmission of two non-preempted class-$k$ frames (each starting just before $L_M^k$ is reached), is $$\Delta^{k,\gamma}_{inter}= \frac{MFS_{k}}{C}+\frac{L_M^k-L_R^k}{I_{idle}^k}+\frac{L_M^k-L_R^k}{I_{send}^k}$$ 
	Thus during $\delta$, the number of time a non-preempted class-$k$ frame can be sent is upper bounded by $\lceil\frac{\delta}{\Delta^{k,\gamma}_{inter}}\rceil$.
	
	%	\begin{figure}[h]
	%		\centering
	%		\includegraphics[width=0.5\linewidth]{maxCurve}		
	%		\caption{Computing $\Delta^{k,\gamma, send}_{inter}$}		
	%		\label{fig:maxCompute}
	%	\end{figure}	
	
	As a result, the amount class $k$ traffic sent while the credit is saturated is such as:	
	$$ \Delta R^{*}_{L_M^k, sat}(\delta)\leqslant MFS_{k}\cdot\lceil\frac{\delta}{\Delta^{k,\gamma}_{inter}}\rceil
	\leqslant
	MFS_{k}\cdot\left(  \frac{\delta}{\Delta^{k,\gamma}_{inter}}+1\right)$$		
	
\end{proof}

\subsection{Th.\ref{Th:k-minGene}: gCCbA strict minimum service curve}

\label{proofGenemin}

We search a strict minimum service curve offered to a class $k$ defined by a Rate-Latency curve, i.e.,  $\beta_{k}^{bls}(t)=\rho\cdot (t-\tau)^+$ with rate $\rho$ and initial latency $\tau$. 

The impact of other classes, are separated into three parts: the impact of $LC(k)$, $MC(k)$, $HC(k)$.
%We call $HC(k)$ the classes with a priority strictly higher than $p_H$. 
%We call LC(k) the classes with a priority strictly lower than $p_L$.
%We call MC(k) the classes with a priority between $p_L$ and $p_H$.	 

%	Consider $R^*_{k} (t)$ the output cumulative function of the class $k$ traffic, and $\forall s\geqslant0$, $\Delta R^*_{k}(\delta,s)=R^*_{k} (\delta+s)-R^*_{k} (s)$, the variation of the output cumulative function from $s$ to $s+\delta$. We define $\Delta R^*_{k}(\delta)$ such as: $$\Delta R^*_{k}(\delta)=\inf_{s\geqslant0}\Delta R^*_{k}(\delta,s)$$ Hence, $\Delta R^*_{k}(\delta)$ is the amount of traffic sent by the class $k$ during $\delta$ and $\delta\cdot C -\Delta R^*_{k}(\delta)$ is the amount that can be sent by other classes during $\delta$.

According to the definition of the strict minimum service curve, $\forall$ backlogged period $\delta$:

\begin{equation}
\Delta R^*_{k}(\delta) \geqslant \beta_{k}^{bls}(\delta) = \rho\cdot(\delta-\tau)^+	\label{propbeta}
\end{equation}

For any duration lower than $\tau$, the variation of the output is lower bounded by 0. 
$$\forall\delta\leqslant \tau,\Delta R^*_{k}(\delta)\geqslant 0$$
Thus, the best $\tau$ for our strict service curve is the largest duration during which no class $k$ traffic can be sent. So, when considering the impacts of the different classes we have:

\begin{enumerate}
	\item for traffic of Lower Classes $LC(k)$, the impact is the one computed with Static Priority: it is due to the non-preemption and is taken into account in the Static Priority model;	
	
	\item for traffic of Medium Classes $MC(k)$: the worst-case occurs if the credit starts at $L_M^k$, MC(k) frames are transmitted until $L_R^k$ is reached and due to non-preemption an additional MC(f) frame is sent. We denote this duration  $\Delta_{idle}^{k,\beta}$. So, we have: $$\Delta_{idle}^{k,\beta}=\frac{L_M^k-L_R^k}{I_{idle}^k}+ \frac{\max_{j\in MC(k)}MFS_{j}}{C}$$
	
	\item for  Higher Classes $HC(k)$, the impact is already computed with Static Priority and is not taken into account here.
\end{enumerate}

So finally, we have: 	
$$\tau=\Delta_{idle}^{k,\beta}$$	

Concerning the $\rho$, we search for a strictly positive rate. We use the definition of $\beta_{k}^{bls}$ as a Rate-Latency strict service curve and Eq. (\ref{propbeta}) to deduce a property of $\rho$.
We notice  the limit toward infinity of $\Delta R^*_{k}(\delta)$ over $\delta$ will be greater than $\rho$: $$\lim_{\delta\rightarrow +\infty} \frac{\Delta R^*_{k}(\delta)}{\delta}\geqslant \lim_{\delta\rightarrow +\infty}\rho\cdot \left( 1 - \frac{\tau}{\delta}  \right) =\rho.$$

So we look for a $x>0$ fulfilling the following condition: $$\lim_{\delta\rightarrow +\infty} \frac{\Delta R^*_{k}}{\delta} \geqslant x .$$

We now use the continuity property of the BLS credit to determine $x$.
From Lemma~\ref{minmaxsum}, we know that:
\begin{eqnarray}
\left( \begin{aligned}
\Delta R^*_{k}(\delta)-\frac{\Delta R^{*}_{L_M^k,sat}(\delta)}{C}\cdot I_{send}^k\\-(\delta - \frac{\Delta R^{*}_{0,sat}(\delta)}{C})\cdot I_{idle}^k
\end{aligned}\right) \nonumber  \geqslant  -L_M^k
\end{eqnarray}
Thus:	
$$\Delta R^*_{k}(\delta)\geqslant -L_M^k+\frac{\Delta R^{*}_{L_M^k,sat}(\delta)}{C}\cdot I_{send}^k+(\delta - \frac{\Delta R^{*}_{0,sat}(\delta)}{C})\cdot I_{idle}^k$$	

To find $x$, we must find a lower bound of $\lim\limits_{\delta\rightarrow+\infty}\frac{\Delta R^*_{k}(\delta)}{\delta}$, so we have:	
$$\frac{\Delta R^*_{k}(\delta)}{\delta}\geqslant \frac{-L_M^k}{\delta}+\frac{\Delta R^{*}_{L_M^k,sat}(\delta)}{\delta\cdot C}\cdot I_{send}^k+(1 - \frac{\Delta R^{*}_{0,sat}(\delta)}{\delta\cdot C})\cdot I_{idle}^k$$

\begin{eqnarray}\label{sumCredit1Gene}
\lim\limits_{\delta\rightarrow+\infty}\frac{\Delta R^*_{k}(\delta)}{\delta}\geqslant& \lim\limits_{\delta\rightarrow+\infty}&\frac{-L_M^k}{\delta}+\frac{\Delta R^{*}_{L_M^k,sat}(\delta)}{\delta\cdot C}\cdot I_{send}^k\nonumber\\&&+(1 - \frac{\Delta R^{*}_{0,sat}(\delta)}{\delta\cdot C})\cdot I_{idle}^k
\end{eqnarray}

%We need to find bounds for $\Delta R^{*}_{0,sat}(\delta)$ and $\Delta R^{*}_{L_M^k,sat}(\delta)$. More precisely, w
We need the lower bound of $\Delta R^{*}_{L_M^k,sat}(\delta)$, and the upper bound of $\Delta R^{*}_{0,sat}(\delta)$. We use Lemmas~\ref{lemmasat0} and \ref{lemmasatlm} to compute the bounds. This gives:

\begin{equation}\label{deltaRLMsat}
\lim\limits_{\delta\rightarrow\infty}\frac{ \Delta R^{*,max}_{L_M^k,sat}(\delta)}{\delta}\geqslant 0
\end{equation}

\begin{equation}\label{deltaR0sat}
\lim\limits_{\delta\rightarrow\infty}\frac{ \Delta R^{*,max}_{0,sat}(\delta)}{\delta}\leqslant \sum\limits_{h\in HC(k)} r_h+ \frac{MFS_{MC(k)}^{sat}}{\Delta^{k,\beta}_{inter}}
\end{equation}

Thus, from Eq. (\ref{sumCredit1Gene}), Eq. (\ref{deltaRLMsat}), and Eq. (\ref{deltaR0sat}), we deduce:	
\begin{eqnarray}\label{sumCredit2Gene}
\lim\limits_{\delta\rightarrow+\infty}\frac{\Delta R^*_{k}(\delta)}{\delta}
&\geqslant& \lim\limits_{\delta\rightarrow+\infty}(1 - \frac{\Delta R^{*,max}_{0, sat}(\delta)}{\delta\cdot C})\cdot I_{idle}^k\nonumber=\left( C -\sum\limits_{h\in HC(k)} r_h-\frac{MFS_{MC(k)}^{sat}}{\Delta^{k,\beta}_{inter}}\right)\cdot \frac{I_{idle}^k}{C}\nonumber
\end{eqnarray}

Finally, we have found a  suitable $\rho$ such as: $\lim\limits_{\delta\rightarrow+\infty}\frac{\Delta R^*_{k}(\delta)}{\delta}\geqslant \rho$ with $$\rho=\left( C -\sum\limits_{h\in HC(k)} r_h-\frac{MFS_{MC(k)}^{sat}}{\Delta^{k,\beta}_{inter}}\right)\cdot \frac{I_{idle}^k}{C}.$$

\subsection{Th.\ref{Th:k-maxGene}: gCCbA maximum service curve}

\label{proofGenemax}
We search a maximum service curve offered to a class $k$ defined by a leaky-bucket curve, i.e.,  $\gamma_{k}^{bls}(t)=r\cdot t+b$ with rate $r$ and burst $b$.

%The impact of other classes, be separated into three: the impact of $LC(k)$, $MC(k)$, $HC_(k)$.
%We call $HC(k)$ the classes with a priority strictly higher than $p_H$. 
%We call LC(k) the classes with a priority strictly lower than $p_L$.
%We call MC(k) the classes with a priority between the low BLS priority $p_L$ and the high BLS priority $p_H$.	 
%In this proof, only  MC(k) has an impact on the the maximum service curve.	
%	Consider $R^*_{k} (t)$ the output cumulative function of the class $k$ traffic, and $\forall s\geqslant 0$, $\Delta R^*_{k}(\delta,s)=R^*_{k} (\delta+s)-R^*_{k} (s)$, the variation of the output cumulative function from $s$ to $s+\delta$. We define $\Delta R^*_{k}(\delta)$ such as: $$\Delta R^*_{k}(\delta)=\sup_{s\geqslant0}\Delta R^*_{k}(\delta,s)$$ 
%	Hence, $\Delta R^*_{k}(\delta)$ is the amount of traffic sent by the class $k$ during $\delta$ and $\delta\cdot C -\Delta R^*_{k}(\delta)$ is the amount that can be sent by other classes during $\delta$.

%According to the definition of the maximum service curve, $\forall$ $\delta$, 

From \cite{leboudecthiran12}, we know that $\Delta R^*_{k} (t-s)\leqslant B(s) +\gamma(t-s)$, with $B(s)$ the backlog at s. We search for $z$ such as $\Delta R^*_{k} (t-s)\leqslant z$. As $B(s)\geqslant 0$, we obtain:
$$\Delta R^*_{k} (t-s)\leqslant z \leqslant B(s)+z$$

Hence, with $\delta=t-s$, we select $\gamma(\delta)= z$, which gives:		

\begin{equation}\label{defgamma}
\Delta R^*_{k}(\delta) \leqslant \gamma_{k}^{bls}(\delta)
\end{equation}

In the absence of other traffic, class $k$ can use the full capacity of the link, so  $\Delta R^*_{k}(\delta) \leqslant C\cdot t$. Thus, we deduce that:		
$$\gamma_{k}^{bls} (t) = C\cdot t.$$		
In a MC(k) backlogged period, we use the definition of $\gamma_{k}^{bls}$ as a leaky-bucket maximum service curve to deduce a property of $r$ using Eq. (\ref{defgamma}).

\textit{Computing $r$}

We notice  the limit toward infinity of $\Delta R^*_{k}$ over $\delta$ will be lower than $r$: $$\lim_{\delta\rightarrow +\infty} \frac{\Delta R^*_{k}}{\delta}\leqslant \lim_{\delta\rightarrow +\infty}r+ \frac{b}{\delta}  =r$$ 

So we search for a strictly positive rate $x$, equal or lower than the link output rate $C$ fulfilling the following condition: $$\lim_{\delta\rightarrow +\infty} \frac{\Delta R^*_{k}}{\delta} \leqslant x $$

We use the continuity property of the BLS credit to determine $x$. From Lemma~\ref{minmaxsum}, we know that:
$$\Delta R^*_{k}(\delta)-\frac{\Delta R^{*}_{L_M^k,sat}(\delta)}{C}\cdot I_{send}^k-(\delta - \frac{\Delta R^{*}_{0,sat}(\delta)}{C})\cdot I_{idle}^k\leqslant L_M^k$$	

Thus,	
$$\Delta R^*_{k}(\delta)\leqslant L_M^k+\frac{\Delta R^{*}_{L_M^k,sat}(\delta)}{C}\cdot I_{send}^k+(\delta - \frac{\Delta R^{*}_{0,sat}(\delta)}{C})\cdot I_{idle}^k$$		
To find $x$, we must find a lower bound of $\lim\limits_{\delta\rightarrow+\infty}\frac{\Delta R^*_{k}(\delta)}{\delta}$, so we have:		
\begin{eqnarray}
\frac{\Delta R^*_{k}(\delta)}{\delta}\leqslant\nonumber \frac{L_M^k}{\delta}+\frac{\Delta R^{*}_{L_M^k,sat}(\delta)}{\delta\cdot C}\cdot I_{send}^k+(1 - \frac{\Delta R^{*}_{0,sat}(\delta)}{\delta\cdot C})\cdot I_{idle}^k\nonumber
\end{eqnarray}	
\begin{eqnarray}\label{sumCredit10Gene}
\lim\limits_{\delta\rightarrow+\infty}\frac{\Delta R^*_{k}(\delta)}{\delta}\leqslant \lim\limits_{\delta\rightarrow+\infty} \frac{L_M^k}{\delta}+\frac{\Delta R^{*}_{L_M^k,sat}(\delta)}{\delta\cdot C}\cdot I_{send}^k\nonumber+(1 - \frac{\Delta R^{*}_{0,sat}(\delta)}{\delta\cdot C})\cdot I_{idle}^k
\end{eqnarray}

%We need to find bounds for $\Delta R^{*}_{0,sat}(\delta)$ and $\Delta R^{*}_{L_M^k,sat}(\delta)$. More precisely, w
We need the lower bound of $\Delta R^{*}_{0, sat}(\delta)$, and the upper bound of $\Delta R^{*}_{L_M^k, sat}(\delta)$.
We use Lemmas~\ref{lemmasat0} and \ref{lemmasatlm} to compute the bounds. This gives:	
\begin{equation}\nonumber
\lim\limits_{\delta\rightarrow\infty}\frac{ \Delta R^{*,max}_{L_M^k, sat}(\delta)}{\delta}\geqslant 0
\end{equation}
\begin{equation}\nonumber
\lim\limits_{\delta\rightarrow\infty}\frac{ \Delta R^{*,max}_{L_M^k, sat}(\delta)}{\delta}\leqslant \frac{MFS_{k}}{\Delta^{k,\gamma}_{inter}}
\end{equation}

Thus, from Eq. (\ref{sumCredit10Gene}), we deduce:	

\begin{eqnarray}\label{sumCredit20Gene}
\lim\limits_{\delta\rightarrow+\infty}\frac{\Delta R^*_{k}(\delta)}{\delta}\leqslant \lim\limits_{\delta\rightarrow+\infty}I_{idle}^k + \frac{\Delta R^{*,max}_{L_M^k, sat}(\delta)}{\delta\cdot C}\cdot I_{send}^k\nonumber= I_{idle}^k +\frac{MFS_{k}}{\Delta^{k,\gamma}_{inter}}\cdot \frac{I_{send}^k}{  C} \nonumber
\end{eqnarray}

We call $\Delta^{k,\gamma, send}_{inter}$ the interval during which class $k$ frames are sent, and $\Delta^{k,\gamma, idle}_{inter}$ the interval during which MC(k) frames are sent such as $\Delta^{k,\gamma}_{inter}=\Delta^{k,\gamma}_{send}+\Delta^{k,\gamma}_{idle}$.
$$\Delta^{k,\gamma}_{send}= \frac{MFS_{k}}{C}+\frac{L_M^k-L_R^k}{I_{send}^k}$$		
$$\Delta^{k,\gamma}_{idle}=\frac{L_M^k-L_R^k}{I_{idle}^k}$$

Using the definitions of the different expressions, we deduce that:
$$I_{idle}^k +\frac{MFS_{k}}{\Delta^{k,\gamma}_{inter}}\cdot \frac{I_{send }^k}{  C}=\frac{\Delta^{k,\gamma}_{send}}{\Delta^{k,\gamma}_{inter}}\cdot C <C$$
Finally, we have found a  suitable $r$,                                                                                                                                                                                      such as: $\lim\limits_{\delta\rightarrow+\infty}\frac{\Delta R^*_{k}(\delta)}{\delta}\leqslant r$ with $r=\frac{\Delta^{k,\gamma}_{send}}{\Delta^{k,\gamma}_{inter}}\cdot C$

Now that we have found $r$, we need to find $b$ such as $\forall$ MC(k) backlogged period $\delta$: $$\Delta{R^*_{k}}(\delta)\leqslant \frac{\Delta^{k,\gamma}_{send}}{\Delta^{k,\gamma}_{inter}}\cdot C\cdot \delta +b$$

\textit{Computing $b$}

We use the largest class $k$ burst that can be sent with the BLS.

% (\textbf{NOTE} : besoin de plus de justifications? meme question que precedement sur le probleme avec l'ensemble de definition de la somme de credit? ) 

In the presence of MC(k) traffic, the largest period of time during which class $k$ traffic can be sent continuously occurs if the credit started at 0. Then, class $k$ traffic is sent continuously until $L_M^k$ is reached and the priority is changed to its low value $p_L$. If a new class $k$ frame started its transmission just before the credit reached $L_M^k$ due to non-preemption, it will finish its transmission before the waiting MC(k) traffic can be sent. Thus, with a link capacity $C$ the largest class $k$ burst is  $b^{max}_{k}=\frac{C}{I_{send}^k}\cdot L_M^k+MFS_{k}$.
This gives:

$$\Delta R^*_{k}(\frac{b^{max}_{k}}{C}) \leqslant  b^{max}_{k} = \frac{\Delta^{k,\gamma}_{send}}{\Delta^{k,\gamma}_{inter}}\cdot
b^{max}_{k}+b$$

$$\Rightarrow b=b^{max}_{k} \cdot \frac{\Delta^{k,\gamma}_{idle}}{\Delta^{k,\gamma}_{inter}}$$

So this gives:
$\forall \delta\geqslant \frac{b^{max}_{k}}{C}$, $\Delta R^*_{k}(\delta)  \leqslant \frac{\Delta^{k,\gamma}_{send}}{\Delta^{k,\gamma}_{inter}}\cdot C \cdot\delta + b^{max}_{k} \cdot \frac{\Delta^{k,\gamma}_{idle}}{\Delta^{k,\gamma}_{inter}}$.

Additionally, we have:
$\forall \delta\leqslant \frac{b^{max}_{k}}{C}$, $\Delta R^*_{k}(\delta)  \leqslant C\cdot \delta\leqslant \frac{\Delta^{k,\gamma}_{send}}{\Delta^{k,\gamma}_{inter}}\cdot C \cdot\delta + b^{max}_{k} \cdot \frac{\Delta^{k,\gamma}_{idle}}{\Delta^{k,\gamma}_{inter}}$.

So, we have proved that $\forall \delta\in \mathds R^+$, $$\Delta R^*_{k}(\delta)  \leqslant \frac{\Delta^{k,\gamma}_{send}}{\Delta^{k,\gamma}_{inter}}\cdot C \cdot\delta + b^{max}_{k} \cdot\frac{\Delta^{k,\gamma}_{idle}}{\Delta^{k,\gamma}_{inter}}.$$

	\bibliographystyle{plain}
	\bibliography{JournalCCbA} 	

\begin{thebibliography}{10}

\bibitem{ARINC664}
{Airlines Electronic Engineering Committee}.
\newblock {{Aircraft Data Network Part 7, Avionics Full Duplex Switched
  Ethernet (AFDX) Network, ARINC Specification 664}}.
\newblock Aeronautical Radio, 2002.

\bibitem{bouillard2009service}
Anne Bouillard, Laurent Jouhet, and Eric Thierry.
\newblock {Service curves in Network Calculus: dos and don'ts}.
\newblock Research report, {INRIA}, 2009.

\bibitem{boyer2010halfmodeling}
Marc Boyer.
\newblock {Half-modeling of shaping in FIFO net with network calculus}.
\newblock In {\em {18th International Conference on Real-Time and Network
  Systems}}, pages 59--68, Toulouse, France, November 2010.

\bibitem{globefinzi17}
Ana{\"i}s Finzi, Emmanuel Lochin, Ahlem Mifdaoui, and Fabrice Frances.
\newblock {Improving RFC5865 Core Network Scheduling with a Burst Limiting
  Shaper}.
\newblock In {\em IETF}, 2017.

\bibitem{erts2finzi17}
Ana{\"i}s Finzi, Ahlem Mifdaoui, Emmanuel Lochin, and Fabrice Frances.
\newblock {Mixed-Criticality on the AFDX Network: Challenges and Potential
  Solutions}.
\newblock In {\em ERTS}, 2017.

\bibitem{Finzi-wfcs-18}
Ana{\"i}s Finzi, Ahlem Mifdaoui, Emmanuel Lochin, and Fabrice Frances.
\newblock {{Incorporating TSN/BLS in AFDX for Mixed-Criticality Applications:
  Model and Timing Analysis}}.
\newblock WFCS, 2018.

\bibitem{Finzi-sies-18}
Ana{\"i}s Finzi, Ahlem Mifdaoui, Emmanuel Lochin, and Fabrice Frances.
\newblock {{Network Calculus-based Timing Analysis of AFDX networks with Strict
  Priority and TSN/BLS Shapers}}.
\newblock SIES, 2018.

\bibitem{finzituning}
Ana{\"i}s Finzi, Ahlem Mifdaoui, Emmanuel Lochin, and Fabrice Frances.
\newblock {{Performance Enhancement of Extended AFDX via Bandwidth Reservation
  for TSN/BLS Shapers}}.
\newblock RTN, 2018.

\bibitem{gavrilut2017fault}
Voica~Maria Gavrilut, Bahram Zarrin, Paul Pop, and Soheil Samii.
\newblock Fault-tolerant topology and routing synthesis for ieee time-sensitive
  networking.
\newblock In {\em 25th International Conference on Real-Time Networks and
  Systems}, 2017.

\bibitem{Gotz2012}
Franz-Josef Gotz.
\newblock {Traffic Shaper for Control Data Traffic (CDT)}.
\newblock {\em IEEE 802 AVB Meeting}, 2012.

\bibitem{grieu2004analyse}
J{\'e}r{\^o}me Grieu.
\newblock {\em {Analyse et {\'e}valuation de techniques de commutation Ethernet
  pour l'interconnexion des syst{\`e}mes avioniques}}.
\newblock PhD thesis, INPT, 2004.

\bibitem{diemer2012formal}
D.~Thiele J.Diemer and R.~Ernst.
\newblock {Formal worst-case timing analysis of Ethernet topologies with
  strict-priority and AVB switching}.
\newblock In {\em SIES}, 2012.

\bibitem{leboudecthiran12}
J.Y. Le~Boudec and P.~Thiran.
\newblock {\em {Network calculus: a theory of deterministic queuing systems for
  the internet}}.
\newblock Springer-Verlag, 2001.

\bibitem{CAN}
{R. Bosch GmbH}.
\newblock {{CAN specification Version 2,0}}.
\newblock Technical report, 1991.

\bibitem{Kerschbaum2013}
F.-J.~Gotz S.~Kerschbaum and F.~Chen.
\newblock {Towards the Calculation of Performance Guarantees for BLS in
  Time-Sensitive Networks}.
\newblock {\em IEEE 802.1 TSN Meeting}, 2013.

\bibitem{schneele2012comparison}
Stefan Schneele and Fabien Geyer.
\newblock {Comparison of IEEE AVB and AFDX}.
\newblock In {\em Digital Avionics Systems Conference (DASC), 2012 IEEE/AIAA
  31st}, pages 7A1--1. IEEE, 2012.

\bibitem{Perathoner08}
Perathoner Simon, Wandeler Ernesto, Thiele Lothar, and et~al.
\newblock Influence of different abstractions on the performance analysis of
  distributed hard real-time systems.
\newblock {\em Design Automation for Embedded Systems}, 2009.

\bibitem{specht2016urgency}
Johannes Specht and Soheil Samii.
\newblock {Urgency-based scheduler for time-sensitive switched ethernet
  networks}.
\newblock In {\em Real-Time Systems (ECRTS), 2016 28th Euromicro Conference
  on}, pages 75--85. IEEE, 2016.

\bibitem{thangamuthu2015analysis}
Sivakumar Thangamuthu, Nicola Concer, Pieter~JL Cuijpers, and Johan~J Lukkien.
\newblock Analysis of ethernet-switch traffic shapers for in-vehicle networking
  applications.
\newblock In {\em Design, Automation \& Test in Europe Conference \& Exhibition
  (DATE), 2015}, pages 55--60. IEEE, 2015.

\bibitem{thiele2016formal}
D.~Thiele and R.~Ernst.
\newblock {Formal worst-case timing analysis of Ethernet TSN's burst-limiting
  shaper}.
\newblock In {\em DATE}, 2016.

\bibitem{hua2012scheduling}
Hua Yu and Liu Xue.
\newblock {Scheduling heterogeneous flows with delay-aware deduplication for
  avionics applications}.
\newblock {\em IEEE Transactions on Parallel and Distributed Systems}, 2012.

\bibitem{tianran2012design}
Tianran ZHOU and Xiong Huagang.
\newblock {Design of energy-efficient hierarchical scheduling for integrated
  modular avionics systems}.
\newblock {\em Chinese Journal of Aeronautics}, 2012.

\end{thebibliography}
	
\end{document}